\tikzset{vertex/.style={circle,fill=black,inner sep=2pt},
	ctVertex/.style={diamond,fill=white,draw,inner sep=2.2pt},
	bigvertex/.style={circle,fill=black,inner sep=3pt},
	E/.append style={fill=white,draw},
	probeEP/.style={circle,fill=black,draw,inner sep=2pt,
		prefix after command= {\pgfextra{\tikzset{every pin/.style = {pin edge={decorate,decoration={snake,amplitude=2pt,segment length =4pt}}}}}}
	},
	bareProbeEP/.style={rectangle,fill=black,draw,inner sep=3pt,
		prefix after command= {\pgfextra{\tikzset{every pin/.style = {pin edge={decorate,decoration={snake,amplitude=2pt,segment length =4pt}}}}}}
	},
	nuEP/.style={circle,fill=white,draw, inner sep=2pt},
	linelabel/.style={sloped,above,very near start, inner sep=1pt,execute at begin node=$\scriptstyle,execute at end node=$},
	baseline=(current  bounding  box.center),doubled/.style={double distance= 1pt,line width=1.5pt}
}
\newcommand{\tn}[1]{{\left\vert\kern-0.25ex\left\vert\kern-0.25ex\left\vert #1 
		\right\vert\kern-0.25ex\right\vert\kern-0.25ex\right\vert}}
\definecolor{BeauBlue}{rgb}{0, 0.2, .9}
\definecolor{BeauOrange}{rgb}{.8, .1, 0}
\def\red#1{{\color{red}#1}}
\numberwithin{equation}{section}
\newtheorem{theorem}{Theorem}[section] 
\newtheorem{proposition}[theorem]{Proposition}
\newtheorem{corollary}[theorem]{Corollary}
\newtheorem{lemma}[theorem]{Lemma}
\theoremstyle{definition}
\newtheorem{definition}[theorem]{Definition}  
\newtheorem{remark}[theorem]{Remark}
\newtheorem{assumption}[theorem]{Assumption}
\DeclareMathOperator{\Tr}{Tr} 
\newcommand{\numberset}{\mathbb}
\newcommand{\N}{\numberset{N}} 
\newcommand{\Z}{\numberset{Z}} 
\newcommand{\R}{\numberset{R}} 
\newcommand{\C}{\numberset{C}}
\newcommand{\ul}{\underline}
\newcommand{\D}{\mathbb D_{\beta,L,\mathfrak a}^{\mathrm a}}
\newcommand{\hProp}{\hat{g}^{(\leq N)}_{\omega,\varepsilon,\mathfrak{a}}}
\newcommand{\Prop}{{g}^{(\leq N)}_{\omega,\varepsilon,\mathfrak{a}}}
\newcommand{\NProp}{{g}^{(\le N)}_{\omega,\varepsilon,\mathfrak{a}}}
\newcommand{\La}{\Lambda_{\beta,L,\mathfrak a}}
\newcommand{\Ma}{\mathfrak{a}}
\newcommand{\Dp}{\mathbb D_{\beta,L}^{\mathrm{p}}}
\newcommand{\Den}{\mathcal D_{\omega,\Ma}}
\newcommand{\DenI}{\mathcal D_{\omega_1,\Ma}}
\title{Large Scale Dynamical Response of Interacting 1$d$ Fermi Systems}
\author[1]{Marcello Porta}
\author[2]{Giuseppe Scola}
\author[1]{Harman Preet Singh}
\affil[1]{Mathematics Area, SISSA, Via Bonomea 265, 34136 Trieste, Italy}
\affil[2]{Department of Mathematics and Computer Science, Universit\`a della Calabria, Ponte P. Bucci - Cubo 30B, 87036 Arcavacata di Rende, Italy}
\date{\today}
\begin{document}

\maketitle

\begin{abstract} We consider the dynamics of a class of weakly interacting, gapless $1d$ fermionic systems, in presence of small external perturbations slowly varying in space and in time. We consider the evolution of the expectation values of the charge density and of the current density, in the thermodynamic limit and for low enough temperatures. We prove the validity and the asymptotic exactness of linear response in the limit of vanishing space-time variation of the perturbation, and we provide the explicit expression of the response of the system. The proof relies on the representation of the real time Duhamel expansion in terms of Euclidean correlation functions, for which we provide sharp estimates using rigorous renormalization group methods. The asymptotic exactness of linear response holds thanks to a cancellation for the scaling limit of the correlations that is reminiscent of bosonization, and which is derived rigorously using emergent chiral Ward identities.
\end{abstract}

\tableofcontents 

\section{Introduction}

We consider a system of interacting fermions on a one-dimensional lattice $\Lambda_{L}$ of length $L$, with periodic boundary conditions, $\Lambda_{L} = \mathbb{Z} / L \mathbb{Z}$. At equilibrium, the Gibbs state of the system is:
\begin{equation}
\rho_{\beta,L} = \frac{e^{-\beta (\mathcal{H} - \mu \mathcal{N})}}{\Tr_{\mathcal{F}_{L}} e^{-\beta (\mathcal{H} - \mu \mathcal{N})}}
\end{equation}
where $\mathcal{H}$ is the many-body Hamiltonian acting on the fermionic Fock space $\mathcal{F}_{L}$ (to be specified below), and $\mathcal{N}$ is the number operator. The parameters $\beta$ and $\mu$ are respectively the inverse temperature and the chemical potential; we will consider the case in which $\beta,L$ will eventually be taken to infinity, and no uniform spectral gap separates $\mu$ from the spectrum of $\mathcal{H}$.

From a physics viewpoint, we are interested in describing the transport properties of $1d$ metallic systems. In order to generate a nontrivial charge transport, we shall evolve the equilibrium state of the system with a time-dependent Hamiltonian $\mathcal{H}(\eta t) = \mathcal{H} + e^{\eta t} \mathcal{P}$:
\begin{equation}\label{eq:dyn}
i\partial_{t} \rho(t) = [ \mathcal{H}(\eta t), \rho(t) ]\;,\qquad t\leq 0
\end{equation}
where $\eta > 0$ is the adiabatic parameter, to be sent to zero after the limit $\beta, L\to \infty$. Given a local observable $\mathcal{O}$, a simple application of Duhamel formula yields the following expression for the variation of the expectation value at first order in the perturbation:
\begin{equation}\label{eq:kubo}
\Tr \mathcal{O} \rho(0) - \Tr \mathcal{O} \rho(-\infty) = -i\int_{-\infty}^{0} dt\, e^{\eta t} \langle [ \mathcal{O}, \tau_{t}(\mathcal{P}) ] \rangle_{\beta,L} + \text{higher orders in $\mathcal{P}$}
\end{equation}
where $\tau_{t}(\mathcal{P}) = e^{i \mathcal{H} t} \mathcal{P} e^{-i\mathcal{H}t}$. The first term in the right-hand side of (\ref{eq:kubo}) is called Kubo formula, and it expresses the linear response of the system. Kubo formula is widely used in condensed matter physics, and it is a nontrivial mathematical challenge to justify its validity: that is, to show that the higher order terms are indeed subleading, uniformly in the system's size and as $\eta \to 0^{+}$.

For gapped many-body systems, that is in cases in which $\mu$ belongs to a spectral gap of $\mathcal{H}$ uniformly in the system's size, Kubo formula has been rigorously justified at zero temperature and in the thermodynamic limit, as a consequence of the many-body adiabatic theorem \cite{BDRF, MT, Teu}, see \cite{HT} for a review and further references. The method applies to general, $d$-dimensional lattice models, for a large class of local many-body Hamiltonians with a spectral gap above the ground state. In cases relevant for the quantum Hall effect, one can actually prove that Kubo formula is exact \cite{BDRFL, Teu2}: all power law corrections to linear response are zero, which contributes to the explanation of the astonishing precision of the observed quantization of the Hall conductivity. A different approach to the validity of Kubo formula, based on a complex deformation from real to imaginary times, and on cluster expansion methods, has been obtained in \cite{GLMP}. This work will play an important role in the present paper.

Much less is known about the validity of linear response for gapless many-body systems. There, Kubo formula is still widely used in applications, however with less control on its rigorous validity. In contrast to the gapped case, no adiabatic theorem is available for the dynamics of the ground state of the system. However, one might still have that Kubo formula holds true, in a physically relevant scaling regime. In the case of a finite system coupled to external leads at thermal equilibrium, the related problem of proving convergence to a non-equilibrium steady state (and proving the validity of Kubo formula) has been studied in the literature. Rigorous results have been proved under suitable spectral assumptions, for non-interacting \cite{AH, AP, Ne} and also interacting \cite{FMU, JOP, MCP, CMP2} systems.

Here we shall focus on extended, closed systems. In the absence of many-body interactions, the work \cite{PS} proved the validity of Kubo formula for $1d$ systems and for the edge modes of $2d$ systems, for time-dependent perturbations that are weak and slowly varying in space and in time. The observable $\mathcal{O}$ is chosen to be the density or the current density operator, which are related by a lattice continuity equation. Restricting for simplicity to the $1d$ case, the time-dependent Hamiltonian is:
\begin{equation}\label{eq:H0}
\mathcal{H}_{0}(\eta t) = \sum_{x,y \in \Lambda_{L}} \sum_{\rho,\rho' \in S_{M}} a^{*}_{x,\rho} H_{\rho\rho'}(x,y) a_{y,\rho'} + e^{\eta t}\sum_{x\in \Lambda_{L}} \sum_{\rho \in S_{M}} \theta \mu(\theta x) a^{*}_{x,\rho} a_{x,\rho}\;,
\end{equation}
where $H(x,y) \equiv H(x-y)$ is finite-ranged, $\mu(x)$ is a smooth, fast decaying function, and we allow for $M$ internal degrees of freedom, whose set of labels is $S_{M} = \{1,\ldots, M\}$. It is assumed that, at the Fermi level, the energy bands of the Hamiltonian $H$ cross the Fermi energy transversally, at the Fermi points $k_{F}^{\omega}$, $\omega = 1,\ldots, N_{f}$.

The scaling of the perturbation is chosen so that its position space $\ell^{1}$ norm is finite. Still, due to the fact that the system is effectively exposed to the perturbation for a time-scale of order $1/\eta$, proving the validity of Kubo formula is a nontrivial task. To achieve this goal, the work \cite{PS} used a mapping of the real-time Duhamel series into an imaginary-time cumulant expansion, proved in \cite{GLMP}. The advantage of this representation is that the oscillatory behavior of real-time correlations translates into (mild) decay properties in imaginary time, which are easier to exploit to control time integrals and spatial sums. In order to prove the validity and asymptotic exactness of Kubo formula, the work \cite{PS} relied on a cancellation for the density-density cumulants of the scaling limit of the model, which describes massless relativistic fermions in $1+1$ dimensions. This cancellation is related to bosonization, and it becomes easier to detect once the time-dependent perturbation theory is formulated in imaginary times. Furthermore, the work \cite{PS} proved convergence of the Duhamel expansion without a smallness assumption on $\mu(\cdot)$: without this improvement, the $m$-th order expansion in the Duhamel series (\ref{eq:kubo}) would be estimated proportionally to $\theta^{m-1} \times \| \hat \mu \|^{m-1}_{\ell^{1}}\times (1 / \eta)^{2-m}$ (with $\hat \mu(p)$ the Fourier transform of $\mu(x)$), which implies convergence for $\theta / \eta$ bounded and for $\| \hat \mu \|_{\ell_{1}}$ small enough. Instead, the improved analysis of \cite{PS} allows to extract a factor $1/m!$ in the estimate of the correlator, and a factor $\eta^{\gamma}$ with $\gamma>0$ due to the cancellation for the scaling limit. Taking into account these improvements, \cite{PS} allows to prove that the higher order contributions to linear response are summable and vanishingly small as $\eta,\theta \to 0$, with $0< |\theta/\eta| \leq C\left|\log \eta\right|$.

The goal of the present paper is to extend \cite{PS} to the case of weakly interacting fermions. We focus on one-dimensional fermions, but we believe that, with some extra effort, the analysis could be extended to study edge modes of weakly interacting $2d$ systems. The time-dependent Hamiltonian is:
\begin{equation}
\mathcal{H}(\eta t) = \mathcal{H}_{0}(\eta t) + \lambda \sum_{x,y \in \Lambda_{L}}\sum_{\rho,\rho' \in S_{M}} a^{*}_{x,\rho} a_{x,\rho} w(x-y) a^{*}_{y,\rho'} a_{y,\rho'}\;,
\end{equation}
where $w(\cdot)$ is a finite-ranged two-body potential, and $\mathcal{H}_{0}(\eta t)$ is as in (\ref{eq:H0}). The single-particle Hamiltonian $H$ is taken as in \cite{PS}, with the further requirement that, in the $2$-body scattering, the conservation of Fermi momenta only holds with momenta equal in pairs (elastic scattering); this is generically true for spinless fermions in the absence of special symmetries. An example of $H$ that fits the setting is the case of the lattice Laplacian in $1d$, for spinless fermions, for chemical potential $\mu$ in the spectrum away from band edges.

We consider the adiabatic protocol (\ref{eq:dyn}). The main result of the present paper is the proof of validity of Kubo formula for the expectation of the density and of the current density operator, as $\eta \to 0^{+}$, taken after $\beta, L\to \infty$, as in \cite{PS}. More precisely, we prove that, denoting by $j_{0,x}$ the density operator and by $j_{1,x}$ the current density operator (see Eqs. (\ref{eq:currFou}), (\ref{eq:Jnudef})), for $\theta / \eta$ bounded and for $|\lambda|$ small enough:
\begin{equation}\label{eq:resp}
\Tr j_{\nu,x} \rho(0) - \Tr j_{\nu,x} \rho_{\beta, L} = -i\int_{-\infty}^{0} dt\, e^{\eta t} \langle [ j_{\nu,x}, \tau_{t}(\mathcal{P}) ] \rangle_{\beta,L} + O(\theta \eta^{\gamma}) + O\bigg( \frac{\theta}{\eta^{3} \beta} \bigg)\;,
\end{equation}
where $\gamma>0$ and where all error terms are bounded uniformly in $L$. Since the first term is of order $\theta$, Eq. (\ref{eq:resp}) proves the exactness of linear response as $\beta,L \to \infty$ followed by $\eta,\theta \to 0$. It is important to emphasize that $|\theta|, \eta, |\lambda|$ are chosen small uniformly in $\beta, L$; however, a technical limitation of the proof is that the range of allowed coupling constants $\lambda$ shrinks to zero as $|\theta| / \eta \to \infty$.

Furthermore, the linear response can be explicitly evaluated, in terms of the renormalized Fermi velocities of the quasi-particles; see Theorem \ref{thm:main} for the explicit expression. For a general number of Fermi points, to the best of our knowledge these expressions are rigorously derived in the present paper for the first time.

The proof follows the strategy of \cite{PS}, however in a much more difficult context, due to the presence of the many-body interaction. The strategy of \cite{PS} was based on an explicit analysis of the Euclidean correlation functions of the system, performed using the quasi-freeness of the state. Here, the correlation functions of the system are not explicit, since the many-body system is interacting. In order to deal with this issue, we use rigorous renormalization group methods, that allow to express the Euclidean correlation functions of the system at equilibrium in terms of a convergent renormalized expansion. These RG methods have been applied to a wide class of statistical mechanics and condensed matter systems in the last three decades, see \cite{Mabook} for a review. The class of models considered in the present paper belongs to the (multichannel) Luttinger liquid universality class: the Luttinger liquid is a $1+1$ dimensional QFT that describes interacting massless chiral fermions, and it has special integrability features \cite{ML}. The rigorous RG analysis of $1d$ quantum systems belonging to this universality class has been pioneered in \cite{BGPS}, for spinless fermions, and in \cite{BoM} for spinful fermions. These works relied on the exact solvability of the Luttinger model, which represented a limitation of the approach. Such restriction has been removed in \cite{BMdensity, BMchiral}, via a different strategy, based on the combination of Ward identities and Schwinger-Dyson equations, that ultimately allows to control the flow of all RG couplings without relying on the exact solution. The outcome is a construction of a nontrivial RG fixed point, characterized by anomalous scaling exponents. In the applications to condensed matter physics, the method has been used to compute response functions for interacting $1d$ systems, and to prove the validity of the Haldane relations connecting anomalous exponents and transport coefficients, \cite{BFM2, BFM3}. More recently, these techniques have been used to compute edge transport coefficients for interacting $2d$ topological insulators, and to prove the quantization of the edge conductance, \cite{AMP, MPmulti}. 

The technical core of the present paper is a renormalization group analysis of the $m$-point density-density and current-density Euclidean correlation functions for the interacting $1d$ system at equilibrium. Similar analyses have been performed {\it e.g.} in \cite{BFM1}, for the Thirring model, and in \cite{GGM}, for the interacting Ising model. With respect to these works, here we obtain an improved estimate for the Euclidean correlations for the special external momenta compatible with our adiabatic protocol (after the complex deformation to imaginary times, the temporal component of all independent incoming momenta of the Euclidean correlations is equal to $\eta$), which ultimately allows to prove the summability of the Duhamel expansion without any smallness requirement on the perturbation $\mu(\cdot)$. Furthermore, we extract the scaling limit contribution from the general $m$-point correlation, and we prove that it is vanishing. This step was done via an explicit computation in \cite{PS}, and it is here derived via the RG analysis of the scaling limit of the model. A similar cancellation has been derived in the past for the Thirring model or the Luttinger model \cite{BFM1, GMT}. Here we extend the cancellation to the multichannel Luttinger liquid universality class. As a technical note, we only need to establish the cancellation for momentum-space correlations with momenta compatible with the adiabatic protocol, which is easier than working in configuration space as in \cite{BFM1, GMT}. Finally, the explicit form of the susceptibility and of the conductance is determined using lattice Ward identities, following a strategy similar to \cite{MPmulti}.

As a future perspective, it would be very interesting to connect the results and the methods used in the present paper with the derivation of hydrodynamic models for the non-equilibrium dynamics of integrable systems, a subject on which a lot of progress has been obtained in recent years starting from \cite{BCNF, CDY}, see \cite{Doyrev, Essrev} for reviews.

The paper is structured as follows. In Section \ref{sec:1d} we define the model, in Fock space, and we state our main result, Theorem \ref{thm:main}. In Section \ref{sec:duha} we review the Wick rotation of the Duhamel series, proved in \cite{GLMP}.  In Section \ref{sec:RGLatticeModel} we discuss the renormalization group analysis of the model, which takes as starting point the Grassmann integral representation of the theory. Until Section \ref{sec:dimest}, the analysis is similar to previous applications of rigorous RG methods \cite{AMP, MPmulti}. Then, in Section \ref{sec:dimest} we discuss the RG analysis of the multi-point correlation functions, and their improved bounds, which allow to prove summability of the Duhamel expansion. These bounds rely on the Gallavotti-Nicolò tree expansion, reviewed in Section \ref{sec:GNtrees}. In Section \ref{Sec:RefMod} we define the reference model, and we prove the Ward identity for the $m$-point density correlations, which allows to prove the vanishing of all correlations with $m\geq 3$. In Section \ref{sec:final} we prove our main result, by comparing the renormalized expansions of lattice and reference model, extracting the cancellation of the reference model, and estimating the remainder taking into account the improvement of Section \ref{sec:dimest}. Also, we compute the susceptibility and the charge conductance, using the explicit form of the scaling limit, and lattice Ward identities to fix the form of the non-universal part. Finally, in Appendix \ref{app:ref} we collect the details of the proof of the vanishing of the density cumulants of the reference model of order higher than two.

\paragraph{Acknowledgments.} H. P. S. thanks F. Caragiulo, S. Fabbri and L. Goller for useful comments. We acknowledge support by the European Research Council through the grant ERC StG MaMBoQ, n. 802901. M. P. acknowledges support from the MUR, PRIN 2022 project MaIQuFi cod. 20223J85K3. Part of this work has been carried out at the University of Z\"urich, which we thank for the hospitality. This work has been carried out under the auspices of the GNFM of INdAM.

\section{Lattice fermions}\label{sec:1d}

\subsection{Hamiltonian and Gibbs state}

Let $L \in 2\mathbb{N}+1$. We consider fermions on a one-dimensional lattice 
\[
\Gamma_{L} = \left[-\left\lfloor\frac{L}{2}\right\rfloor,\left\lfloor\frac{L}{2}\right\rfloor\right]\cap \Z
\]
endowed with periodic boundary conditions. We shall allow for internal degrees of freedom, {\it e.g.} the spin, and we shall denote by $S_{M} = \{ 1,\ldots, M \}$ the set of their labels. We shall use the notation $\Lambda_{L} = \Gamma_{L} \times S_{M}$, which we can view as a decorated $1d$ lattice. Given $x\in \Gamma_{L}$ and $\rho \in S_{M}$, we shall denote by ${\bf x} = (x,\rho)$ the corresponding point in $\Lambda_{L}$. Further, we will make use of the following distance on $\Gamma_{L}$:
\begin{equation}
| x- y |_{L} = \min_{n\in \mathbb{Z}} | x - y + n L |\;.
\end{equation}
%
Let $H$ be a single-particle Hamiltonian on $\Lambda_{L}$. That is, $H$ is a self-adjoint matrix, with entries given by $H({\bf x}; {\bf y}) \equiv H_{\rho\rho'}(x;y)$. We shall suppose that the Hamiltonian is finite-ranged, and we shall assume that the Hamiltonian $H$ is the periodization of an Hamiltonian $H^{\infty}$ defined over $\ell^{2}(\mathbb{Z} \times S_{M})$:
\begin{equation}
H_{\rho\rho'}(x; y) = \sum_{n \in \mathbb{Z}} H^{\infty}_{\rho\rho'}( x + nL; y)\qquad \text{for $x,y\in \Gamma_{L}$.}
\end{equation}
Furthermore, we will assume $H^{\infty}$, and hence $H$,  to be translation-invariant:
\begin{equation}\label{eq:periodiz}
H_{\rho\rho'}^{\infty}( x; y ) = H_{\rho\rho'}^{\infty}( x + z; y + z)\qquad \text{for all $z\in \mathbb{Z}$}.
\end{equation}
A natural example is the lattice Laplacian:
\begin{equation}
(H^{\infty} \psi)(x) = t(\psi_{x-1} + \psi_{x+1} - 2\psi_{x})\;,
\end{equation}
with $t \in \mathbb{R}$ being the hopping parameter.


Translation-invariance allows to conveniently express the Hamiltonian $H$ in momentum space. To this end, let us introduce the set of quasi-momenta compatible with the periodic boundary conditions:
\begin{equation}\label{eq:BLdef}
B_{L} = \Big\{ k = \frac{2\pi}{L} n \, \Big| \, 0\leq n \leq L-1 \Big\}\;.
\end{equation}
The set $B_{L}$ is the (discretized) Brillouin zone. Given a function $f$ on $\Gamma_{L}$, we define its Bloch transform as:
\begin{equation}\label{eq:fk}
\hat f(k) = \sum_{x\in \Gamma_{L}} e^{-ikx} f(x)\qquad \text{for all $k\in B_{L}$.}
\end{equation}
This relation can be inverted, as:
\begin{equation}
f(x) = \frac{1}{L} \sum_{k\in B_{L}} e^{ikx} \hat f(k)\qquad \text{for all $x\in \Gamma_{L}$.}
\end{equation}
As a function on $\frac{2\pi}{L} \mathbb{Z}$, the function $\hat f(k)$ is periodic with period $2\pi$. Thus, it is natural to introduce the following norm for quasi-momenta:
\begin{equation}
| k |_{\mathbb{T}} = \min_{n\in \mathbb{Z}} | k + 2\pi n |\qquad \text{for all $k\in B_{L}$.}
\end{equation}
Next, we define the Bloch Hamiltonian $\hat H(k)$ as the $M\times M$ Hermitian matrix with entries:
\begin{equation}\label{eq:bloch1d}
\hat H_{\rho\rho'}(k) = \sum_{x\in \Gamma_{L}} e^{-ikx} H_{\rho\rho'}(x; 0)\;.
\end{equation}
%
%
Observe that, because of equation (\ref{eq:periodiz}), $\hat H(k)$ is the restriction to $B_{L}$ of $\hat H^{\infty}(k)$, defined on the circle $\mathbb{T}$ of length $2\pi$.

We shall make the following assumptions on the spectrum of the Bloch Hamiltonian and on the chemical potential $\mu \in \mathbb{R}$.

\begin{assumption}[Low energy spectrum.]\label{ass:A} There exists $\Delta > 0$ such that the following points are true.
\begin{itemize}
\item[(i)] There exist $N_{f} \in \mathbb{N}$ disjoint sets $I_{\omega} \subset \mathbb T$, labelled by $\omega = 1, \ldots, N_{f}$, and strictly monotone smooth functions $e_{\omega}: I_{\omega} \to \mathbb{R}$ such that:
\begin{equation}
\sigma(H^{\infty}) \cap (\mu - \Delta, \mu + \Delta) = \bigcup_{\omega = 1}^{N_{f}} \mathrm{Ran}(e_{\omega})\;.
\label{eq:spectrum}
\end{equation}
The functions $e_{\omega}$ are the eigenvalue branches of the Bloch Hamiltonian $\hat H^{\infty}$, crossing the Fermi level $\mu$.
\item[(ii)] We introduce the $\omega$-Fermi point $k_{F}^{\omega} \in I_{\omega}$ and the $\omega$-Fermi velocity $v_{\omega}$ as the real numbers satisfying:
\begin{equation}
e_{\omega}(k_{F}^{\omega}) = \mu\;,\qquad v_{\omega} = \partial_{k} e_{\omega}(k_{F}^{\omega})\;. 
\end{equation}
Notice that $v_{\omega} \neq 0$, by the strict monotonicity of $e_{\omega}$. The set of Fermi points $\{k_{F}^{\omega}\}_{\omega = 1,\ldots, N_{f}}$ will be referred to as the (discrete) \emph{Fermi surface} of the system.
\item[(iii)] For any $\omega=1,\dots, N_{f}$, and $k\in I_{\omega}$, $e_{\omega}(k)$ is a non-degenerate eigenvalue of $\hat H^{\infty}(k)$. In particular, the Fermi points $k_{F}^{\omega}$, $\omega = 1, \ldots, N_{f}$ are all different.
\item[(iv)] For any quadruple $(\omega_{1},\omega_{2}, \omega_{3}, \omega_{4})$, we have that
\begin{equation}\label{eq:elasca}
k^{\omega_{1}}_{F} - k^{\omega_{2}}_{F} = k^{\omega_{3}}_{F} - k^{\omega_{4}}_{F}\qquad \mod 2\pi
\end{equation}
only if $\omega_{1}=\omega_{2}$ and $\omega_{3}=\omega_{4}$, or if $\omega_{1}=\omega_{3}$ and $\omega_{2}=\omega_{4}$.
\end{itemize}
\end{assumption}
\begin{remark}
\begin{itemize}
\item[(i)] A paradigmatic example satisfying the assumptions is (minus) the Laplacian on $\mathbb{Z}$, for spinless fermions. The spectrum is described by:
\begin{equation}
\varepsilon(k) = 2t(1-\cos(k))\;,
\end{equation}
with $t>0$ the hopping parameter, so that $N_{f}=2$ for any $\mu\in(0,4t)$.
\item[(ii)] In general, it is a well-known fact that short-ranged one-dimensional lattice models present necessarily an even number $N_{f}$ of Fermi points, with net chirality equal to zero: $\sum_{\omega =1 }^{N_{f}} v_{\omega} / |v_{\omega}| = 0$.
\item[(iii)] Eq. (\ref{eq:elasca}) can be viewed as an elastic scattering assumption for the Fermi momenta of the system. It is generically true for spinless fermions, in the absence of special symmetry properties of the spectrum.
\end{itemize}
\end{remark}
Next, we will introduce the class of many-body interacting systems we will consider in this paper, in the grand-canonical setting. To this end, it is convenient to lift the many particle systems to the fermionic Fock space $\mathcal{F}$,
\begin{equation}
\mathcal{F} = \mathbb{C} \oplus \bigoplus_{n\geq 1} \ell^{2}(\Lambda_{L})^{\wedge n}\;,
\end{equation}
where $\wedge$ is the antisymmetric tensor product. We shall use the notation ${\bf x} = (x, \rho)$ for points in $\Lambda_{L}$, with $x\in \Gamma_{L}$ and $\rho \in S_{M}$, and we shall denote by $a^{*}_{{\bf x}}, a_{{\bf x}}$ the usual fermionic creation and annihilation operators, satisfying the canonical anticommutation relations:
\begin{equation}
\{ a_{{\bf x}}, a^{*}_{{\bf y}} \} = \delta_{{\bf x}, {\bf y}}\;,\qquad \{ a_{{\bf x}}, a_{{\bf y}} \} = \{ a^{*}_{{\bf x}}, a^{*}_{{\bf y}} \} = 0\;.
\end{equation}
Later, it will also be convenient to switch to Fourier space. For any $k \in B_{L}$, we define the Bloch transform of the fermionic operators as:
\begin{equation}\label{eq:foua}
\hat a_{(k, \rho)} = \sum_{x\in \Gamma_{L}} a_{(x, \rho)} e^{-ikx}\;,\qquad \hat a^{*}_{(k, \rho)} = \sum_{x\in \Gamma_{L}} a^{*}_{(x,\rho)} e^{ikx}\;.
\end{equation}
Equations (\ref{eq:foua}) can be inverted as, for $(x,\rho)\in \Lambda_{L}$:
\begin{equation}\label{eq:foua2}
a_{(x,\rho)} = \frac{1}{L} \sum_{k \in B_{L}} e^{ikx} \hat a_{(k, \rho)}\;,\qquad a^{*}_{(x,\rho)} = \frac{1}{L} \sum_{k \in B_{L}} e^{-ikx} \hat a^{*}_{(k, \rho)}\;.
\end{equation}
In terms of the creation and annihilation operators, we define the many-body Hamiltonian $\mathcal H$ acting on $\mathcal{F}$ as:
\begin{equation}\label{eq:H1d}
\mathcal{H} = \sum_{{\bf x}, {\bf y} \in \Lambda_{L}} a^{*}_{{\bf x}} H({\bf x}; {\bf y}) a_{{\bf y}}+\lambda\sum_{{\bf x}, {\bf y} \in \Lambda_{L}}\Big(n_{{\bf x}}-\frac{1}{2}\Big)w({\bf x}; {\bf y})\Big(n_{{\bf y}}-\frac{1}{2}\Big)\;,
\end{equation}
where: $\lambda\in\mathbb R$ is the coupling constant, $n_{{\bf x}} = a^{*}_{{\bf x}} a_{{\bf x}}$ is the density operator; and $w({\bf x}; {\bf y})$ is the two-body potential. As for the single-particle Hamiltonian, we will suppose that $w$ is translation-invariant, compatible with the periodic boundary conditions, and finite-ranged. Futhermore, for simplicity we will assume that $w({\bf x}; {\bf y}) \equiv w(x;y)$. That is:
\begin{equation}
w({\bf x}; {\bf y}) = \frac{1}{L} \sum_{k\in B_{L}} e^{ik(x-y)} \hat w(k)\;,
\end{equation}
where $\hat w(k)$ is the restriction to $B_{L}$ of a real-analytic function on $\mathbb{T}$.
\begin{remark} The factors $-1/2$ in (\ref{eq:H1d}) amount to a shift of the many-body Hamiltonian by a constant term and by a term proportional to the number operator, which can be reabsorbed in the definition of the chemical potential of the interacting Gibbs state. These factors are introduced to simplify the Grassmann representation of the Gibbs state, discussed later.
\end{remark}
Finally, the Gibbs state of the system at inverse temperature $\beta>0$ is:
\begin{equation}
\langle \mathcal{O} \rangle_{\beta,L} = \Tr \mathcal{O} \rho_{\beta, \mu, L}\;,\qquad \rho_{\beta, L} = \frac{e^{-\beta (\mathcal{H} - \mu \mathcal{N})}}{\Tr e^{-\beta (\mathcal{H} - \mu \mathcal{N})}}\;,
\end{equation}
with $\mu \in \mathbb{R}$ the chemical potential and $\mathcal{N}$ the number operator. We shall assume that Assumption \ref{ass:A} holds, for our choice of the chemical potential $\mu$. Notice that the Gibbs state depends also on the parameters $\lambda,\mu$, but in order to simplify the notation we have chosen to write explicitly only the dependence on the variables $\beta,L$.

\subsection{Dynamics and linear response}



\paragraph{Perturbing the system.} We will be interested in the response properties of the system, after exposing it to a time-dependent and slowly varying perturbation. We shall consider time-dependent Hamiltonians of the form:
\begin{equation}\label{eq:tdep}
\mathcal{H}(\eta t) = \mathcal{H} + e^{\eta t} \mathcal{P}\;,
\end{equation}
for $\eta > 0$ and $t\leq 0$. We will consider external perturbations that couple to the density of the system:
\begin{equation} \label{eq:formPert}
\mathcal{P} =  \theta \sum_{{\bf x} \in \Lambda_{L}} \mu(\theta x) a^{*}_{{\bf x}} a_{{\bf x}}\;,
\end{equation}
where: $\mu(\cdot)$ (see Figure \ref{fig:perturb1dint}) is the restriction to $\Gamma_{L}$ of the periodization of a function $\mu_{\infty}(x)$ in $\mathbb{R}$ such that $\hat \mu_{\infty}(p)$ has compact support:
\begin{equation}\label{eq:permu}
\mu(x) = \sum_{n \in \mathbb{Z}} \mu_{\infty}(x + nL)\;,\qquad \text{for $x\in \Gamma_{L}$.}
\end{equation}
Thus, $\hat \mu(p)$ is $\hat \mu_{\infty}(p)$ restricted to $p\in (2\pi/L) \mathbb{Z}$. The compact support restriction could be replaced by fast decay in momentum space, but we prefer to work in a slightly more restrictive setting to avoid extra technicalities. Observe that the parameter $\theta$ in (\ref{eq:formPert}) defines simultaneously the strength and the space-variation of the perturbation, and will be suitably chosen later on. 
\begin{figure}
    \centering
    \includegraphics[scale=0.65]{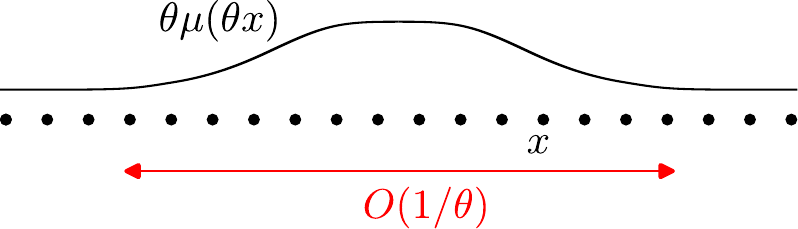}
    \caption{Qualitative representation of the spatial lattice $\Gamma_{L}$ and typical form of the rescaled local potential entering the definition of $\mathcal P$, equation (\ref{eq:formPert}).}
    \label{fig:perturb1dint}
\end{figure}

Let $t,s\leq 0$, and let $\mathcal{U}(t;s)$ be the unitary group generated by $\mathcal{H}(\eta t)$:
\begin{equation}
i\partial_{t} \mathcal{U}(t;s) = \mathcal{H}(\eta t) \mathcal{U}(t;s),\qquad \mathcal{U}(s;s) = \mathbbm{1}\;.
\end{equation}
Let us consider the evolution of the Gibbs state $\rho_{\beta, L}$,
\begin{equation}
\rho(t) = \lim_{T\to +\infty} \mathcal{U}(t;-T) \rho_{\beta, L} \mathcal{U}(t;-T)^{*}.
\end{equation}
We will be interested in the variation of physical observables under this perturbation protocol,
\begin{equation}\label{eq:diffO}
\Tr \mathcal{O} \rho(t) - \Tr \mathcal{O} \rho_{\beta, L}\;.
\end{equation}
We will be interested in the following order of limits: first $L\to \infty$, then $\beta \to \infty$ and then $\eta,\theta \to 0^{+}$. 
Time-dependent perturbation theory is generated by the Duhamel series, that describes the expansion around the unperturbed evolution in the interaction picture. Let $\tau_{t}(\mathcal{A})$ be the Heisenberg evolution of the observable $\mathcal{A}$:
\begin{equation}
\tau_{t}(\mathcal{A}) = e^{i\mathcal{H} t} \mathcal{A} e^{-i\mathcal{H} t}\;.
\end{equation}
We have:
\begin{equation}\label{eq:duhamel}
\begin{split}
\Tr \mathcal{O} \rho(t) - \Tr \mathcal{O} \rho_{\beta, L} &= \sum_{n=1}^{\infty} (-i)^{n} \int_{-\infty \leq s_{n} \leq \ldots \leq s_{1} \leq t} d \underline{s}\, e^{\eta(s_{1} + \ldots s_{n})} \\&\quad \cdot \langle [ \cdots [[ \tau_{t}(\mathcal{O}), \tau_{s_{1}}(\mathcal{P})], \tau_{s_{2}}(\mathcal{P})] \cdots \tau_{s_{n}}(\mathcal{P}) ] \rangle_{\beta,L}\;.
\end{split}
\end{equation}
It is known that, for $\eta>0$ and for finite $L$, the expansion in the right-hand side is absolutely convergent. The linear response approximation amounts to the truncation of the above expansion to first order in $\mathcal{P}$:
\begin{equation}\label{eq:diffO2}
\Tr \mathcal{O} \rho(t) - \Tr \mathcal{O} \rho_{\beta, L} = -i \int_{-\infty}^{t} ds\, e^{\eta s} \Tr \big[ \tau_{t}(\mathcal{O}), \tau_{s}(\mathcal{P}) \big] \rho_{\beta, \mu, L} + \text{h.o.t.}\;.
\end{equation}
Proving that the error term is small uniformly in $\beta$ and $L$ is in general a very difficult problem, in particular for gapless interacting models, as in the present work.
\paragraph{Density and current operators.} We will focus on the dynamics of the expectation values of density and current operators, defined as follows. Let:
\begin{equation}
n_{x} = \sum_{\rho \in S_{M}} n_{(x, \rho)}\equiv \sum_{\rho \in S_{M}} a^{*}_{(x, \rho)}a_{(x,\rho)}\;.
\end{equation}
be the (total) density at $x\in \Gamma_{L}$. In order to introduce the current operator, let us consider the time-variation of the density operator:
\begin{equation}\label{eq:conteq}
\begin{split}
\partial_{t} \tau_{t}( n_{{\bf x}} ) &= i \tau_{t} \big( [ \mathcal{H}, n_{{\bf x}} ]\big) \\
&= \sum_{{\bf y} \in \Lambda_{L}} \tau_{t}( i a^{*}_{{\bf y}} H({\bf y}; {\bf x}) a_{{\bf x}} - i a^{*}_{{\bf x}} H({\bf x}; {\bf y}) a_{{\bf y}}  ) \\
&\equiv \sum_{{\bf y} \in \Lambda_{L}} \tau_{t}( j_{{\bf y}, {\bf x}})\;,
\end{split}
\end{equation}
where $j_{{\bf y}, {\bf x}} = i a^{*}_{{\bf y}} H({\bf y}; {\bf x}) a_{{\bf x}} - i a^{*}_{{\bf x}} H({\bf x}; {\bf y}) a_{{\bf y}}$ is the bond current between ${\bf x}$ and ${\bf y}$. Let us derive the conservation law for $n_{x}$. To this end, in (\ref{eq:conteq}) we sum over the internal label $\rho$ in $\mathbf x=(x,\rho)$, and we introduce the discrete derivative $\text{d}_{x}$ as $(\text{d}_{x}f)(x) = f(x) - f(x-1)$ to obtain the continuity equation
\begin{equation}\label{eq:cons1d}
\partial_{t} \tau_{t}( n_{x} ) = -\text{d}_{x} \tau_{t}(j_{x})\;,
\end{equation}
where we introduced the lattice current $j_{x}$:
\begin{equation}\label{eq:defj1d}
\mathrm d_{x}j_{x}:= \sum_{y\in \Gamma_{L}}\sum_{\rho,\rho'\in S_{M}} j_{(x,\rho),(y,\rho')}\;.
\end{equation}
Eq. (\ref{eq:cons1d}) will have far reaching consequences in our analysis. 

It is convenient to rewrite density and current operators in Fourier space. Let $j_{\nu,x}$, $\nu = 0, 1$, be defined as $j_{0,x}= n_{x}$, $j_{1, x} = j_{x}$. In Fourier space:
\begin{equation}\label{eq:currFou}
j_{\nu,x}=\frac{1}{L}\sum_{p\in B_{L}} e^{ipx} \hat \jmath_{\nu, p}\;,\qquad \hat\jmath_{\nu,p}=\frac{1}{L}\sum_{k\in B_{L}} (a^{*}_{k-p}, \hat J_{\nu}(k,p) a_{k})\;,
\end{equation}
where we used the shorthand notation $(a^{*}, J a)=\sum_{\rho,\rho'} a^{*}_{\rho}J_{\rho\rho'} a_{\rho'}$. An explicit computation gives:
\begin{equation}\label{eq:Jnudef}
\hat J_{0}(k,p) = 1\;, \qquad \hat J_{1}(k,p)=i\frac{\hat H(k)-\hat H(k-p)}{1-e^{-ip}}\;.
\end{equation}
\begin{remark}
If $H$ has range 1, then $j_{1,x}$ takes a particularly simple form:
\begin{equation}
j_{1,x}\equiv j_{x}=\sum_{\rho,\rho'\in S_{M}} j_{(x,\rho),(x+1,\rho')}\;.
\end{equation}
\end{remark}
Finally, we define a smeared version of the density and of the current operators as:
\begin{equation}\label{eq:jmuphi}
j_{\nu}(\mu_{\theta}) = \sum_{x\in \Gamma_{L}} j_{\nu,x} \theta \mu(\theta x)\;.
\end{equation}
\paragraph{Imaginary time evolution and Euclidean correlations.}  For $t\in [0,\beta)$, we define the imaginary time evolution of an observable $\mathcal{O}$ as:
\begin{equation}\label{eq:gammatdef}
\gamma_{t}(\mathcal{O}) := e^{t ( \mathcal{H} - \mu \mathcal{N} )} \mathcal{O} e^{-t(\mathcal{H} - \mu \mathcal{N})}\;.
\end{equation}
Observe that if $[\mathcal{O}, \mathcal{N}] = 0$, the following identity holds:
\begin{equation}
\gamma_{t}(\mathcal{O}) = \tau_{-it}(\mathcal{O})\;.
\end{equation}
The Gibbs state and the imaginary time evolution satisfy the following important identity, which in our finite-dimensional setting is a simple consequence of the cyclicity of the trace:
\begin{equation}\label{eq:KMS}
\langle \gamma_{t}(\mathcal{A}) \gamma_{s}(\mathcal{B})  \rangle_{\beta,L} = \langle \gamma_{t+\beta}(\mathcal{B}) \gamma_{s}(\mathcal{A})  \rangle_{\beta,L}\;.
\end{equation}
Eq. (\ref{eq:KMS}) is called the Kubo-Martin-Schwinger (KMS) identity, and will play an important role in what follows.

Let $0\leq t_{i} < \beta$, for $i=1,2,\ldots, n$, such that $t_{i} \neq t_{j}$ for $i\neq j$. We define the time-ordering of $\gamma_{t_{1}}(a^{\sharp_{1}}_{{\bf x}_{1}}),\ldots, \gamma_{t_{n}}(a^{\sharp_{n}}_{{\bf x}_{n}})$ as:
\begin{equation}\label{eq:time}
{\bf T} \gamma_{t_{1}}(a^{\sharp_{1}}_{{\bf x}_{1}}) \cdots \gamma_{t_{n}}(a^{\sharp_{n}}_{{\bf x}_{n}}) =  (-1)^{\pi} \gamma_{t_{\pi(1)}}(a^{\sharp_{\pi(1)}}_{{\bf x}_{\pi(1)}}) \cdots \gamma_{t_{\pi(n)}}(a^{\sharp_{\pi(n)}}_{{\bf x}_{\pi(n)}})\;,
\end{equation}
where $\pi$ is the permutation needed in order to bring the times in a decreasing order, from the left, with sign $(-1)^{\pi}$. In case two or more times are equal, the definition (\ref{eq:time}) is extended via normal ordering. The above definition extends to all operators on the fermionic Fock space by linearity. In particular, for $\mathcal{O}_{1}, \ldots, \mathcal{O}_{n}$ normal ordered and even in the number of creation and annihilation operators, we have:
\begin{equation}\label{eq:Tord1}
{\bf T} \gamma_{t_{1}}(\mathcal{O}_{1}) \cdots \gamma_{t_{n}}(\mathcal{O}_{n}) = \gamma_{t_{\pi(1)}}(\mathcal{O}_{\pi(1)}) \cdots \gamma_{t_{\pi(n)}}(\mathcal{O}_{\pi(n)})\;.
\end{equation}
We define the time-ordered Euclidean correlation function as:
\begin{equation}\label{eq:Tord2}
\langle {\bf T} \gamma_{t_{1}}(\mathcal{O}_{1}) \cdots \gamma_{t_{n}}(\mathcal{O}_{n}) \rangle_{\beta,L}\;.
\end{equation}
From the definition of fermionic time-ordering, and from the KMS identity, it is not difficult to check that:
\begin{equation}\label{eq:extension}
\langle {\bf T} \gamma_{t_{1}}(\mathcal{O}_{1}) \cdots \gamma_{\beta}(\mathcal{O}_{k}) \cdots \gamma_{t_{n}}(\mathcal{O}_{n}) \rangle_{\beta,L} = (\pm 1) \langle {\bf T} \gamma_{t_{1}}(\mathcal{O}_{1}) \cdots \gamma_{0}(\mathcal{O}_{k}) \cdots \gamma_{t_{n}}(\mathcal{O}_{n}) \rangle_{\beta,L}\;;
\end{equation}
in the special case in which the operators involve an even number of creation and annihilation operators, which is the relevant one for our analysis, the overall sign is $+1$. The property (\ref{eq:extension}) is used to extend in a periodic (sign $+1$) or antiperiodic (sign $-1$) way the correlation functions to all times $t_{i} \in \mathbb{R}$. From now on, we shall always assume that this extension has been taken.

%
%
%
%
%
%
%
Next, we define the connected time-ordered Euclidean correlation functions, or time-ordered Euclidean cumulants, as:
\begin{equation}\label{eq:Tcumul}
\langle {\bf T} \gamma_{t_{1}}(\mathcal{O}_{1}); \cdots; \gamma_{t_{n}}(\mathcal{O}_{n}) \rangle_{\beta,L} = \frac{\partial^{n}}{\partial \zeta_{1} \cdots \partial \zeta_{n}} \log \Big\{ 1 + \sum_{I \subseteq \{1, 2,\ldots, n\}} \zeta(I) \langle {\bf T} \mathcal{O}(I)  \rangle_{\beta,L} \Big\}\Big|_{\zeta_{i} = 0}
\end{equation}
where $I$ is a non-empty ordered subset of $\{1, 2,\ldots, n\}$, $\zeta(I) = \prod_{i\in I} \zeta_{i}$ and $\mathcal{O}(I) = \prod_{i\in I} \gamma_{t_{i}}(\mathcal{O}_{i})$. For $n=1$, this definition reduces to $\langle {\bf T} \gamma_{t_{1}}(\mathcal{O}_{1}) \rangle \equiv \langle \gamma_{t_{1}}(\mathcal{O}_{1}) \rangle = \langle \mathcal{O}_{1} \rangle$, while for $n=2$ one gets $\langle {\bf T} \gamma_{t_{1}}(\mathcal{O}_{1}); \gamma_{t_{2}}(\mathcal{O}_{2}) \rangle = \langle {\bf T} \gamma_{t_{1}}(\mathcal{O}_{1}) \gamma_{t_{2}}(\mathcal{O}_{2}) \rangle - \langle {\bf T} \gamma_{t_{1}}(\mathcal{O}_{1}) \rangle \langle {\bf T} \gamma_{t_{2}}(\mathcal{O}_{2}) \rangle$. Recall the following relation between correlation functions and connected correlation functions:
\[
\langle {\bf T} \gamma_{t_{1}}(\mathcal{O}_{1}) \cdots \gamma_{t_{n}}(\mathcal{O}_{n}) \rangle_{\beta,L} = \sum_{P} \prod_{J\in P} \langle {\bf T} \gamma_{t_{j_{1}}}(\mathcal{O}_{j_{1}}); \cdots; \gamma_{t_{j_{|J|}}}(\mathcal{O}_{j_{|J|}}) \rangle_{\beta,L}\;,
\]
where $P$ is the set of all partitions of $\{1, 2, \ldots, n\}$ into ordered subsets, and $J$ is an element of the partition $P$, $J = \{ j_{1}, \ldots, j_{|J|} \}$. 

The next result establishes the connection between the real time perturbation theory generated by the Duhamel series and the Euclidean correlation functions, for special choices of the adiabatic parameter $\eta$.
\begin{proposition}[Wick rotation]\label{prop:wick} Let $\eta_{\beta} \in \frac{2\pi}{\beta} \mathbb{N}_{+}$. Let $\mathcal{O}$ and $\mathcal{P}$ be such that $[\mathcal{O}, \mathcal{N}] = [\mathcal{P}, \mathcal{N}] =  0$. Then, the following identity holds true:
\begin{equation}\label{eq:wick}
\begin{split}
&\int_{-\infty \leq s_{n} \leq \ldots \leq s_{1} \leq t} d \underline{s}\, e^{\eta_{\beta}(s_{1} + \ldots s_{n})} \langle [ \cdots [[ \tau_{t}(\mathcal{O}), \tau_{s_{1}}(\mathcal{P})], \tau_{s_{2}}(\mathcal{P})] \cdots \tau_{s_{n}}(\mathcal{P}) ] \rangle_{\beta,L} \\
&= \frac{(-i)^{n} e^{n \eta_{\beta} t}}{n!} \int_{[0,\beta)^{n}} d\underline{s}\, e^{-i\eta_{\beta}(s_{1} + \ldots s_{n})}\langle {\bf T} \gamma_{s_{1}}(\mathcal{P}); \gamma_{s_{2}}(\mathcal{P}); \cdots; \gamma_{s_{n}}(\mathcal{P});\mathcal{O}\rangle_{\beta,L}\;.
\end{split}
\end{equation}
\end{proposition}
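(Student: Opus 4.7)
The plan is to perform a contour deformation turning each real time integral $\int_{-\infty}^{s_{i-1}} ds_i\, e^{\eta_\beta s_i}(\cdots)$ into an imaginary time integral over $[0,\beta)$, then to show that the sum of the $2^n$ terms arising from expansion of the nested commutators, after applying the KMS identity \eqref{eq:KMS}, reassembles into the fully time-ordered connected correlator on the right-hand side of \eqref{eq:wick}.

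First I would expand the $n$-fold nested commutator into a signed sum of $2^n$ ordered products: for each $\epsilon\in\{0,1\}^n$ the operator $\tau_{s_i}(\mathcal{P})$ sits either to the left or to the right of $\tau_t(\mathcal{O})$, with a sign $(-1)^{|\epsilon|}$. Using cyclicity of the trace together with $[\mathcal{P},\mathcal{N}]=[\mathcal{O},\mathcal{N}]=0$, each term becomes a standard correlator $\langle \mathcal{A}_1(s_1)\cdots \mathcal{A}_n(s_n)\mathcal{A}_0(t)\rangle_{\beta,L}$ with some fixed ordering of the operators. Then for each such term I would analytically continue the real times via the substitution $s_i = t-iu_i$ with $u_i\in[0,\beta)$. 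Analyticity in the strip $\{-\beta<\mathrm{Im}\,s_i<0\}$ is clear from the finite-dimensional spectral representation of $e^{i\mathcal{H}s_i}$, while the contour at $\mathrm{Re}\,s_i\to -\infty$ gives no contribution because the Boltzmann weight combined with the adiabatic factor $e^{\eta_\beta s_i}$ produces exponential decay. Each substitution contributes a factor $-i$ (giving the overall $(-i)^n$) and converts $e^{\eta_\beta s_i}$ into $e^{\eta_\beta t}\, e^{-i\eta_\beta u_i}$, yielding both the $e^{n\eta_\beta t}$ prefactor and the phase $e^{-i\eta_\beta(u_1+\cdots+u_n)}$.

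Next I would reassemble the $2^n$ deformed contours using the KMS identity. Under $u\mapsto u+\beta$ the phase $e^{-i\eta_\beta u}$ is invariant precisely because $\eta_\beta\in(2\pi/\beta)\mathbb{N}_+$; hence KMS identifies terms from different $\epsilon$'s that differ by shifting individual arguments by $\pm\beta$. After this identification, the $2^n$ ordered contributions, together with the $n!$ orderings of the $u_i$'s, cover exactly once each placement of the vertices $\gamma_{u_i}(\mathcal{P})$ and $\mathcal{O}$ on the thermal circle, producing the symmetric integral $\tfrac{1}{n!}\int_{[0,\beta)^n}du\,\langle \mathbf{T}\,\gamma_{u_1}(\mathcal{P})\cdots\gamma_{u_n}(\mathcal{P})\mathcal{O}\rangle_{\beta,L}$. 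The upgrade from time-ordered moment to time-ordered cumulant on the right-hand side comes from the commutator structure of the Duhamel series: any partition of the cumulant decomposition that places some $\gamma_{u_i}(\mathcal{P})$'s in a block not containing $\mathcal{O}$ gives, after integration, an integrand of the form $e^{-i\eta_\beta u}$ times a $\beta$-periodic function of $u$, which vanishes upon integration over $[0,\beta)$ since $\eta_\beta$ is a nonzero Matsubara frequency. Only the term in which all $\gamma_{u_i}(\mathcal{P})$ are in the same connected block as $\mathcal{O}$ survives, matching the right-hand side of \eqref{eq:wick}.

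The main obstacle is the bookkeeping in the reassembly step: one has to track simultaneously the signs from the nested commutators, the orderings coming from the different $\epsilon$'s, and the $\beta$-shifts supplied by KMS, and to verify that they combine exactly into the symmetric time-ordered cumulant. The role of $\eta_\beta$ being a Matsubara frequency is doubly crucial, both for closing the contour deformation without boundary contributions at $\mathrm{Im}\,s_i=-\beta$ and for killing all disconnected contributions to the time-ordered moment; a non-Matsubara $\eta$ would leave residual boundary terms that obstruct the clean identification with the Euclidean cumulant.
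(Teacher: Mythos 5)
Your strategy identifies the correct ingredients: expanding the nested commutator into $2^n$ ordered products, deforming each real-time integral across a KMS strip, reassembling via the KMS relation \eqref{eq:KMS}, and using the Matsubara condition $\eta_\beta\in\frac{2\pi}{\beta}\mathbb{N}_{+}$ both to discard the $\mathrm{Re}\,s_i\to-\infty$ boundary and to annihilate the disconnected blocks of the cumulant decomposition. That last observation — a block $J$ not containing $\mathcal O$ depends, by translation invariance, only on time differences, so after passing to a center-of-mass variable it contributes $\int_0^\beta du\,e^{-i\eta_\beta|J|u}=0$ — is exactly right and explains why the time-ordered \emph{cumulant}, not the moment, appears on the right-hand side. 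Note that the paper itself does not reprove this proposition but cites \cite{GLMP}, so there is no in-paper argument to compare against.

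The genuine gap is that what you set aside as ``the main obstacle'' is, in fact, the proposition. Your sketch asserts, but does not verify, that the $2^n$ deformed commutator terms, restricted to the simplex $s_n<\cdots<s_1<t$, reassemble into the symmetric $\frac{(-i)^n}{n!}\int_{[0,\beta)^n}$ with the correct signs; this is precisely where the $(-i)^n$, the $1/n!$, and the time-ordered structure emerge, and none of it is clerical. The natural way to close this gap is induction on $n$: deform only the innermost integral $\int_{-\infty}^{s_{n-1}}ds_n\,e^{\eta_\beta s_n}[\,\cdot\,,\tau_{s_n}(\mathcal P)]$ as in the $n=1$ case, then invoke the inductive hypothesis for the remaining order-$(n-1)$ commutator; the simultaneous global deformation your sketch envisions would require a combinatorial identity that you neither state nor prove. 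A second point to watch when you carry this out: the strip into which a given ordered correlator extends analytically depends on whether $\tau_{s_i}(\mathcal P)$ sits to the left or to the right of $\tau_t(\mathcal O)$ in that term ($\mathrm{Im}\,s_i\in(-\beta,0)$ for the former, $(0,\beta)$ for the latter), so the uniform substitution $s_i=t-iu_i$ is not correct across all $2^n$ terms; the deformation direction must be chosen term by term, with KMS supplying the match of boundary values at $\mathrm{Im}\,s_i=\pm\beta$.
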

The identity (\ref{eq:wick}) has been proved in \cite{GLMP}. It is the starting point of the analysis of the present paper: the right-hand side of the identity can be studied with field-theoretic methods. In particular, for gapless cases, like the ones considered in the present work, the right-hand side of (\ref{eq:wick}) can be analyzed via rigorous renormalization group methods. Observe that equation (\ref{eq:wick}) only holds for special values of the adiabatic parameters, corresponding to bosonic Matsubara frequencies. For $\beta$ large enough, an approximation argument based on Lieb-Robinson bounds allows to study the dynamics associated with a general $\eta > 0$ with these special adiabatic parameters, up to subleading error terms (see below).

\subsection{Main result}
For $x\in\Gamma_{L}$, let us define the full response of the density and of the current operators as:
\begin{equation}\label{eq:resp1d}
\chi^{\beta, L}_{\nu}(x;\eta,\theta) := \frac{1}{\theta} \Big( \Tr j_{\nu,x} \rho(0) - \Tr j_{\nu,x} \rho_{\beta, L}\Big)\qquad \nu = 0, 1.
\end{equation}
The quantity $\chi^{\beta,L}_{0}$ is the susceptibility, while the quantity $\chi^{\beta,L}_{1}$ is the conductance. The next theorem determines the structure of these response functions, as $\beta, L \to \infty$, and for $\theta, \eta,\lambda$ small enough. In the following, we shall denote by $\eta_{\beta}$ the best approximation of $\eta$ in $\frac{2\pi}{\beta}\mathbb{N}$, such that $\eta_{\beta} \geq \eta$.
\begin{theorem}[Main result]\label{thm:main} Let $\mathcal{H}$ be of the form (\ref{eq:H1d}), and assume that $H$ satisfies Assumption \ref{ass:A}. Let $\mathcal{H}(\eta t)$ be as in (\ref{eq:tdep})-(\ref{eq:permu}). Let $a = |\theta| / \eta$.\\

\noindent{\underline{\emph{Validity of linear response.}}}
There exist constants $\gamma, \eta_{0},\bar\lambda > 0$ such that, for all $0\leq\eta<\eta_{0}$ and for all $|\lambda|<\bar\lambda$, linear response is valid for $\beta,L$ large enough, \emph{i.e.}
\[
\chi^{\beta,L}_{\nu}(x;\eta,\theta)= \chi^{\beta,L,\mathrm{lin}}_{\nu}(x;\eta,\theta) + O(\eta^{\gamma}) + O\bigg(\frac{1}{\eta^{3}\beta}\bigg)\;,
\]
with $\chi^{\beta,L,\mathrm{lin}}_{\nu}$ is the first order term in the Duhamel expansion (\ref{eq:duhamel}), after replacing $\eta$ with $\eta_{\beta}$.\\

\noindent{\underline{\emph{Structure of linear response.}}} The linear response can be written as follows:
\begin{equation}\label{eq:chilin}
\begin{split}
\chi^{\beta,L,\mathrm{lin}}_{\nu}(x;\eta,\theta) &= \chi^{\mathrm{lin}}_{\nu}(x;\eta,\theta) + O(\beta^{-1})+O(L^{-1})\\
\chi^{\mathrm{lin}}_{\nu}(x;\theta,\eta)&= -\sum_{\omega,\omega'=1}^{N_{f}}\int_{\R} \hat\mu_{\infty}(q)\, e^{i\theta qx}\,\mathcal K^{\nu}_{\omega\omega'}(q;\lambda) \frac{dq}{2\pi} + O(\eta^{\tilde\gamma})\;,
\end{split}
\end{equation}
for some $\tilde\gamma>0$. The matrix-valued response function $\mathcal K^{\nu}$ is given as follows. There exist an analytic function $v:(-\bar\lambda,\bar\lambda)\to\mathrm{GL}(N_{f})$, called the matrix of renormalized Fermi velocities, such that
\[
v_{\omega\omega'}(\lambda)=\delta_{\omega\omega'} v_{\omega}(\lambda)\;,\qquad\qquad v_{\omega}(\lambda) = v_{\omega} + O(\lambda)\;,
\]
and an analytic function $\Lambda:(-\bar\lambda,\bar\lambda)\to\mathrm{GL}(N_{f})$ with
\[
\Lambda^{T}(\lambda) = \Lambda(\lambda)\;, \qquad\qquad \Lambda_{\omega\omega'}(\lambda)=\begin{cases}
0 & \omega=\omega'\\
\lambda + O(\lambda^{2}) & \omega\neq\omega'\;,
\end{cases}
\]
such that the matrix $\mathcal K^{\nu}$ is given by
\[
\mathcal K^{\nu}(q;\lambda) = \widetilde{\mathcal K}(q;\lambda) \frac{v^{\nu}(\lambda)}{2\pi|v(\lambda)|}\frac{v(\lambda)q}{-i/a+v(\lambda)q} \mathfrak K^{\nu}(\lambda)\;,
\]
with $v^{\nu}:=\delta_{\nu,0}\mathbbm 1 + \delta_{\nu,1} v$, and $\widetilde{\mathcal K}$ and $\mathfrak K^{\nu}$ defined as
\begin{equation*}
\widetilde{\mathcal K}(q) = \left(1-\frac{1}{4\pi|v|}\Lambda\right)\left(1+\frac{1}{4\pi|v|}\frac{i/a+vq}{-i/a+vq}\Lambda\right)^{-1}\;,
\end{equation*}
\begin{equation*}
\begin{split}
\quad\quad
\mathfrak K^{0} \equiv \mathbbm 1_{N_{f}}\;,\qquad\qquad \mathfrak K^{1}:= \mathrm{sgn\,}v\,\frac{1+(4\pi|v|)^{-1}\Lambda}{1-(4\pi|v|)^{-1}\Lambda}\,\mathrm{sgn\,}v\;.
\end{split}
\end{equation*}
\end{theorem}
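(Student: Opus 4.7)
The starting point is the Wick rotation identity of Proposition \ref{prop:wick}, which lets us replace the real-time Duhamel series (\ref{eq:duhamel}) by a time-ordered Euclidean cumulant expansion, provided we first substitute the adiabatic parameter $\eta$ by its bosonic Matsubara approximant $\eta_{\beta}$. The cost of this substitution is controlled via Lieb--Robinson bounds, producing the error $O(1/(\eta^{3}\beta))$; the remaining work is then entirely about Euclidean correlations. The plan is to represent the equilibrium expectation $\langle {\bf T} \gamma_{s_{1}}(\mathcal{P});\dots;\gamma_{s_{n}}(\mathcal{P}); j_{\nu,x}\rangle_{\beta,L}$ as a Grassmann integral, passing to momentum space so that each perturbation $\gamma_{s_{i}}(\mathcal{P})$ injects an external frequency equal to $\eta_{\beta}$. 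This special choice of external frequencies, compatible with the adiabatic protocol, is what will eventually yield an improved bound on the correlation.

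The bulk of the argument is a multiscale renormalization group analysis of this Grassmann integral, organized along Gallavotti--Nicol\`o trees (Sections \ref{sec:RGLatticeModel}--\ref{sec:GNtrees}). First I localize to the Fermi points $k_{F}^{\omega}$ using Assumption \ref{ass:A}, introduce quasi-particle fields indexed by $\omega$, and integrate out momentum shells iteratively down to the infrared cutoff set by $\max(\eta,\beta^{-1},L^{-1})$. The relevant and marginal couplings --- the wave function renormalizations, the Fermi velocities $v_{\omega}(\lambda)$, the vertex renormalizations, and the off-diagonal four-fermion couplings $\Lambda_{\omega\omega'}(\lambda)$ --- run with the scale, and the assumption (\ref{eq:elasca}) of elastic Fermi momentum conservation is crucial to ensure that no additional relevant couplings are generated. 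Convergence of the tree expansion, uniform in $\beta,L$ and for $|\lambda|<\bar\lambda$, is obtained as in \cite{BGPS,BMdensity,BMchiral,AMP,MPmulti}, and yields renormalized bounds on all $m$-point density and current correlations. The key quantitative input specific to our problem is an \emph{improved} bound: the evaluation of the $m$-point correlator at external frequencies $\eta_{\beta}$ at every independent leg gains a factor of $\eta^{\gamma}$ together with a combinatorial factor $1/m!$. Together with the explicit $(-i)^{n}/n!$ in (\ref{eq:wick}), this makes the Duhamel series summable in $m\geq 2$ without any smallness on $\hat\mu$, producing the $O(\theta\eta^{\gamma})$ remainder.

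The next step is to extract the scaling limit from the convergent renormalized expansion and compare it with the \emph{reference model}, a multichannel chiral Luttinger-type QFT with the same relevant/marginal couplings (Section \ref{Sec:RefMod}). For the reference model I derive emergent chiral Ward identities from the $U(1)^{N_{f}}$ invariances of the Gaussian reference action; these identities fix the form of the two-point density and current correlations in terms of $v(\lambda)$ and $\Lambda(\lambda)$ alone, and --- crucially --- force the vanishing of all $m$-point density cumulants with $m\geq 3$ at the momenta we need (Appendix \ref{app:ref}). This supplies the cancellation that underlies the asymptotic exactness of linear response. The difference between lattice and reference correlators is shown, via the RG tree bounds, to be higher-order in the scaling parameters, contributing $O(\eta^{\tilde\gamma})$ to $\chi^{\mathrm{lin}}_{\nu}$.

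Finally, the explicit form of $\mathcal K^{\nu}(q;\lambda)$ is assembled by plugging the Ward-identity-determined reference two-point functions into the first-order Duhamel term. The non-universal prefactors (wave function and vertex renormalizations) are fixed by matching against \emph{lattice} Ward identities derived from charge conservation (\ref{eq:cons1d}), in the spirit of \cite{MPmulti}: this matching replaces the naive quasi-particle charges by the analytic functions $v^{\nu}(\lambda)$ and produces the factors $\mathfrak{K}^{\nu}$ and $\widetilde{\mathcal K}$ in the statement. The main obstacle is the joint control of the RG flow together with the improved bound at the Matsubara frequency $\eta_{\beta}$: one must show that the $\eta^{\gamma}$ gain survives the dressing by all scales down to the infrared cutoff, and that it combines coherently with the $1/m!$ factor from the tree combinatorics; this is what ultimately allows the Duhamel expansion to be summed uniformly in $\beta,L$ and without a small-$\mu$ hypothesis.
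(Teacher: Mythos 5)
Your plan reproduces the paper's strategy faithfully: Wick rotation (Propositions \ref{prop:wick}, \ref{prp:LRint}), RG/Gallavotti--Nicol\`o analysis of the Grassmann representation with the running couplings controlled uniformly in $\beta,L$, an improved chain estimate exploiting the aligned Matsubara frequency $\eta_{\beta}$ on every independent leg to recover summability in $m$ without smallness of $\hat\mu$, the reference-model chiral Ward identity forcing the $m\geq 3$ density cumulants to vanish, and the lattice continuity-equation Ward identity fixing the non-universal factors in $\mathcal K^{\nu}$. This is essentially the argument of Sections \ref{sec:duha}--\ref{sec:final}.

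One small imprecision worth flagging: the improved bound at the aligned Matsubara frequencies yields only the combinatorial gain (roughly $1/\lfloor m/4\rfloor!$, Propositions \ref{prop:SumBoundLow}--\ref{thm:small}), not the $\eta^{\gamma}$ gain; the latter arises exclusively from matching with the reference model and the vanishing of its higher cumulants (Corollary \ref{cor:vanishingcorr} and Proposition \ref{prop:CorrComparison}). Conflating the two hides the fact that without the reference-model cancellation the sum of the higher-order Duhamel terms would be of the same order as the linear response, as noted after Proposition \ref{thm:small}. You should also separate out the ``off-diagonal'' $A$-endpoint contributions (vertex renormalizations $\mathsf Z^{\nu}_{h,\omega\omega'}$ with $\omega\neq\omega'$, which have a divergent flow): these only arise for $m\gtrsim 1/|\theta|$ and are controlled by the short-memory estimate in the proof of Proposition \ref{prop:CorrDimBound}, but they are an obligatory complication when $N_f>1$.
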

\begin{remark}
\begin{itemize}
\item[(i)] The above theorem justifies the validity of linear response for interacting one-dimensional systems in the absence of a spectral gap. At zero temperature and as $\eta \to 0^{+}$ the only contribution surviving to the response of the system is the one associated to Kubo formula, which can be explicitly evaluated. 
\item[(ii)] A technical limitation of the proof is that the range of allowed $\lambda$ is not uniform in $a = |\theta| / \eta$. In particular, our current estimates imply that $\bar \lambda$ shrinks to zero as $a\to \infty$.
\item[(iii)] Suppose that $a \to 0$. Then, the above result shows that the response of the system at zero temperature is trivial as $\eta \to 0^{+}$, since: 
\begin{equation}
\chi^{\mathrm{lin}}_{\nu}(x,\eta,\theta) = O(a)\;.
\end{equation}
\item[(iii)] Considering non-interacting systems, \emph{i.e.} for $\lambda=0$, we notice that both $\widetilde{\mathcal K}$ and $\mathfrak K^{\nu}$ become the identity matrix, and the Fermi velocities reduce to the free ones, so that
\begin{equation*}
\mathcal K^{\nu}(q;0) = \frac{v^{\nu}(0)}{2\pi|v(0)|} \frac{v(0)q}{-i/a+v(0)q}\;,
\end{equation*}
which is in agreement with the result obtained in \cite{PS}. Thus, the many-body interaction modifies the response in two concurrent ways: by dressing the velocities of the infrared modes labelled by the Fermi points, and by mixing the contribution of each mode via $\widetilde{\mathcal K}$ and $\mathfrak K^{\nu}$.
\item[(iv)] For non-interacting systems, a similar result has been obtained in \cite{PS}. The analysis of \cite{PS} was based on a careful analysis of the real-time Duhamel expansion after Wick rotation. Here, such direct analysis is not possible, due to the fact that the theory is interacting. We will instead resort to rigorous renormalization group methods, that will allow to prove essentially optimal estimates on the imaginary-time correlators.
\item[(v)] At the price of a straightforward approximation argument, carried out in the non-interacting setting in \cite{PS}, the result could be extended to the case in which $\hat \mu_{\infty}(p)$ decays fast enough, rather that assuming that $\hat \mu_{\infty}(p)$ is compactly supported in $p$. We prefer to avoid this extra technical step, which we believe would not add anything substantial to the main result of this work.
\end{itemize}
\end{remark}

\begin{remark} [The case of two chiralities]
In the case of two opposite Fermi points with counterpropagating velocities the expression in (\ref{eq:chilin}) drastically simplifies. This case is relevant for instance when $H$ is the nearest-neighbour lattice Laplacian. We will show how the general expression (\ref{eq:chilin}) reproduces known rigorous results, see {\it e.g.} \cite{BMdensity, BFM3}. For simplicity, we shall evaluate it for $a=1$, that is $\theta=\eta$; the expression for $a\neq 1$ will be simply given by substituting $q\to aq$ in the response for $a=1$.

The dispersion relation is $\varepsilon(k)=2(1-\cos k)$, and for any choice of $\mu\in(0,4)$ we have $N_{f}=2$ distinct free Fermi points $k_{F}^{+}=-k_{F}^{-}$ with opposite free Fermi velocities $v_{\pm}=\pm v_{0}$, with $v_{0}=2\sin k_{F}^{+}>0$. Since spatial $\mathbb Z_{2}$-parity is a symmetry of the model, in this case the renormalized parameters needed are just two, namely $v_{\star}(\lambda)=v_{0}+ O(\lambda)$, $\lambda_{\star}(\lambda)=\lambda+O(\lambda^{2})$, so that the matrices $v$ and $\Lambda$ introduced in Theorem \ref{thm:main} become
\[
v = v_{\star}\sigma_{3}\qquad\qquad \Lambda = \lambda_{\star}\sigma_{1}
\]
with $\sigma_{1}$, $\sigma_{3}$ the first and the third Pauli matrices.
For simplicity, let us set
\[
b:=\frac{\lambda_{\star}}{4\pi v_{\star}}\quad\qquad \mathfrak b_{\pm}(q):= \frac{i\pm v_{\star}q}{-i\pm v_{\star}q}\quad\qquad B(q):=b\begin{pmatrix}
0 & \mathfrak b_{+}(q)\\
\mathfrak b_{-}(q) & 0
\end{pmatrix}\;.
\]
Then, the response matrices are given by
\begin{equation} \label{eq:RespMat2Flav}
\mathcal K^{0}(q) = \frac{1}{2\pi v_{\star}}(1-b\sigma_{1})\frac{1}{1+B(q)} \frac{v_{\star}q\sigma_{3}}{-i+v_{\star}q\sigma_{3}}, \qquad \mathcal K^{1}(q) = v_{\star}\mathcal K^{0}(q)\frac{1+b\sigma_{1}}{1-b\sigma_{1}}\sigma_{3}\;.
\end{equation}
Let us compute $\mathcal K^{0}$ first. By using the properties of $\sigma_{3}$, and employing the fact that $\det[1+B(q)]= 1-b^{2}\mathfrak b_{+}(q)\mathfrak b_{-}(q)\equiv1-b^{2}$, we have
\begin{equation*}
\frac{1}{1+B(q)} = \frac{1-B(q)}{1-b^{2}}\qquad \qquad \frac{v_{\star}q\sigma_{3}}{-i+v_{\star}q\sigma_{3}}=\frac{v_{\star}q}{1+v^{2}_{\star}q^{2}}(i\sigma_{3}+v_{\star}q)\;,
\end{equation*}
so that
\begin{eqnarray}\label{eq:matprod}
\frac{1}{1+B(q)} \frac{v_{\star}q\sigma_{3}}{-i+v_{\star}q\sigma_{3}}&=&\frac{1}{1-b^{2}}\frac{vq}{1+v^{2}_{\star}q^{2}}
\begin{pmatrix} 1 & -b\mathfrak b_{+}(q)\\ -b\mathfrak b_{-}(q) & 1 \end{pmatrix}
\begin{pmatrix} i+v_{\star}q & 0\\ 0 & -i+v_{\star}q \end{pmatrix}\nonumber\\
&=&\frac{1}{1-b^{2}}\frac{v_{\star}q}{1+v^{2}_{\star}q^{2}} \begin{pmatrix}
i+v_{\star}q & -b(i+v_{\star}q) \\
b(i-v_{\star}q) & -i+v_{\star}q
\end{pmatrix}\;.
\end{eqnarray}
Therefore, inserting (\ref{eq:matprod}) into (\ref{eq:RespMat2Flav}), the susceptibility matrix takes the form
\begin{eqnarray*}
\mathcal K^{0}(q)&=& \frac{1}{2\pi v_{\star}}\frac{v_{\star}q}{1+v^{2}_{\star}q^{2}}\frac{1}{1-b^{2}}\begin{pmatrix} 1 & -b\\ -b & 1 \end{pmatrix}
\begin{pmatrix}
i+v_{\star}q & -b(i+v_{\star}q) \\
b(i-v_{\star}q) & -i+v_{\star}q
\end{pmatrix}\nonumber\\
&=& \frac{1}{2\pi v_{\star}}\frac{v_{\star}q}{1+v^{2}_{\star}q^{2}}\frac{1}{1-b^{2}} \begin{pmatrix}
(1-b^{2})i +(1+b^{2})vq & -2bv_{\star}q\\
-2bv_{\star}q & (b^{2}-1)i + (1+b^{2})v_{\star}q
\end{pmatrix}\;;
\end{eqnarray*}
finally, by summing over all the entries of $\mathcal K_{0}$ we obtain the following simple form for the density response function:
\begin{equation}
\sum_{\omega,\omega'=\pm}\mathcal K^{0}_{\omega\omega'}(q) = \frac{1}{\pi v_{\star}}\frac{4\pi v_{\star}-\lambda_{\star}}{4\pi v_{\star}+\lambda_{\star}}\frac{v^{2}_{\star}q^{2}}{1+v^{2}_{\star}q^{2}}\;.
\end{equation}
This expression for the susceptibility has been already derived rigorously, from a linear response ansatz, see \cite{BFM3}, Theorem 1.1, Eq. (1.15). To compute the current response, we proceed as follows. We write
\begin{equation}
\sum_{\omega,\omega'=\pm}\mathcal K^{1}_{\omega\omega'}(q)\equiv (\vec 1, \mathcal K^{1}(q)\vec 1) = v_{\star}\left(\vec 1, \mathcal K^{0}(q)\frac{1+b\sigma_{1}}{1-b\sigma_{1}}\sigma_{3}\vec 1\right)\end{equation}
with $\vec 1$ the vector with all components equal to 1; now, we have the following identity:
\begin{equation*}
\begin{split}
\frac{1+b\sigma_{1}}{1-b\sigma_{1}}\sigma_{3}\vec 1 &= \sigma_{3}\frac{1-b\sigma_{1}}{1+b\sigma_{1}}\vec 1=\sigma_{3}\frac{(1-b\sigma_{1})^{2}}{1-b^{2}}\vec 1=\frac{(1-b)^{2}}{1-b^{2}}\sigma_{3}\vec 1\;,
\end{split}
\end{equation*}
where in the first step we used $\{\sigma_{1},\sigma_{3}\}=0$, and in the last one we exploited $\sigma_{1}\vec 1=\vec 1$.
This identity allows to write
\begin{equation}
(\vec 1, \mathcal K^{1}(q)\vec 1) = v_{\star}\frac{1-b}{1+b} (\vec 1, \mathcal K^{0}(q)\sigma_{3}\vec 1)
\end{equation}
which, up to the prefactor, corresponds to summing the components of the first column of $\mathcal K^{0}$ and subtracting the components of the second one. Therefore:
\begin{equation} \label{eq:K1}
(\vec 1, \mathcal K^{1}(q)\vec 1) = \frac{1}{\pi}\frac{4\pi v_{\star}-\lambda_{\star}}{4\pi v_{\star}+\lambda_{\star}} \frac{iv_{\star}q}{1+v^{2}_{\star}q^{2}}\;.
\end{equation}
Note that, owing to the oddness of the correlator (\ref{eq:K1}), $\chi^{\mathrm{lin}}_{1}(x)\equiv0$ whenever $\mu_{\infty}$ is even, as expected.
\end{remark}
\section{Duhamel expansion and Wick rotation}\label{sec:duha}
\subsection{Auxiliary dynamics} 
Recall that $\eta_{\beta}$ is the best approximation of $\eta$ in $\frac{2\pi}{\beta} \mathbb{N}$, such that $\eta_{\beta} \geq \eta$. We define the auxiliary Hamiltonian by replacing the adiabatic parameter $\eta$ with $\eta_{\beta}$:
\begin{equation}
\mathcal{H}_{\beta,\eta}(t) = \mathcal{H} +  e^{\eta_{\beta} t} \theta\sum_{{\bf x} \in \Lambda_{L}} \mu(\theta x) a^{*}_{{\bf x}} a_{{\bf x}}\;.
\end{equation}
Let us denote by $\widetilde{\mathcal{U}}(t;s)$ the two-parameter unitary group generated by $\mathcal{H}_{\beta,\eta}(t)$,
\begin{equation}
i\partial_{t} \widetilde{\mathcal{U}}(t;s) = \mathcal{H}_{\beta,\eta}(t) \widetilde{\mathcal{U}}(t;s)\;,\qquad \widetilde{\mathcal{U}}(s;s) = \mathbbm{1}\;.
\end{equation}
The next result allows to control the error introduced by replacing the dynamics generated by $\mathcal{H}(\eta t)$ with the dynamics generated by $\mathcal{H}_{\beta,\eta}(t)$. 
\begin{proposition}[Approximation by the auxiliary dynamics]\label{prp:LRint} Under the same assumptions of Theorem \ref{thm:main}, it follows that:
\begin{equation}\label{eq:LRprop}
\Big\| \widetilde{\mathcal{U}}(t;-\infty)^{*} \mathcal{O}_{X} \widetilde{\mathcal{U}}(t;-\infty) - \mathcal{U}(t;-\infty)^{*} \mathcal{O}_{X} \mathcal{U}(t;-\infty)   \Big\| \leq C\frac{\theta}{\eta^{3} \beta}
\end{equation}
for some $C>0$ independent of $\theta,\eta,\beta$.
\end{proposition}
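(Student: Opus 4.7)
The plan is to compare the two dynamics via Duhamel's formula and to control the resulting integrand by a Lieb-Robinson bound. Introducing a regularising cutoff $T > 0$ and the interpolating quantity
\[
F(s) := \mathcal{U}(s;-T)^{*} \, \widetilde{\mathcal{U}}(0;s)^{*} \mathcal{O}_{X} \widetilde{\mathcal{U}}(0;s) \, \mathcal{U}(s;-T), \qquad s \in [-T,0],
\]
one has $F(0) = \mathcal{U}(0;-T)^{*} \mathcal{O}_{X} \mathcal{U}(0;-T)$ and $F(-T) = \widetilde{\mathcal{U}}(0;-T)^{*} \mathcal{O}_{X} \widetilde{\mathcal{U}}(0;-T)$. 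Differentiating $F$ using the Schrödinger equations of the two propagators, and writing the difference of the generators as $\mathcal{H}(\eta s) - \mathcal{H}_{\beta,\eta}(s) = (e^{\eta s} - e^{\eta_{\beta} s})\mathcal{P}$, I get
\[
F(0)-F(-T) = -i\int_{-T}^{0} ds\,(e^{\eta_{\beta} s}-e^{\eta s})\,\mathcal{U}(s;-T)^{*}\bigl[\mathcal{P},\widetilde{\mathcal{U}}(0;s)^{*}\mathcal{O}_{X}\widetilde{\mathcal{U}}(0;s)\bigr]\mathcal{U}(s;-T).
\]
Letting $T \to \infty$ and taking operator norms (noting that the outer unitary conjugations are isometries), the task reduces to bounding $\int_{-\infty}^{0} |e^{\eta_{\beta} s}-e^{\eta s}| \cdot \|[\mathcal{P}, \widetilde{\mathcal{U}}(0;s)^{*}\mathcal{O}_{X}\widetilde{\mathcal{U}}(0;s)]\|\,ds$.

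I would control the integrand by two independent estimates. For the scalar prefactor, the mean value theorem together with $0 \leq \eta_{\beta} - \eta < 2\pi/\beta$ and $s \leq 0$ gives $|e^{\eta_{\beta} s}-e^{\eta s}| \leq (2\pi/\beta)|s| e^{\eta s}$. For the commutator, I would expand $\mathcal{P} = \theta\sum_{y}\mu(\theta y)\, n_{y}$ and apply a fermionic Lieb-Robinson bound term by term. Since $\mathcal{H}_{\beta,\eta}(t)$ is built from finite-range hopping, a finite-range two-body interaction with small coupling $\lambda$, and an on-site chemical-potential perturbation with uniformly bounded per-site norm $\theta|\mu(\theta y)|$, the standard Lieb-Robinson estimate applies with velocity $v$ that can be chosen uniform in the small parameters, giving
\[
\|[n_{y}, \widetilde{\mathcal{U}}(0;s)^{*}\mathcal{O}_{X}\widetilde{\mathcal{U}}(0;s)]\| \leq C\|\mathcal{O}_{X}\|\, e^{-c(d(y,X) - v|s|)_{+}}.
\]
Summing over $y$ with weights $\theta|\mu(\theta y)|$, the sites inside the light cone $d(y,X) \lesssim v|s|$ dominate, while the tail outside it is controlled by the exponential decay and by the $L^{1}$-summability of $\mu$; overall this yields $\|[\mathcal{P}, \widetilde{\mathcal{U}}(0;s)^{*}\mathcal{O}_{X}\widetilde{\mathcal{U}}(0;s)]\| \leq C\theta\|\mathcal{O}_{X}\|(1+v|s|)$.

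Combining the two bounds and using $\int_{-\infty}^{0} s^{2} e^{\eta s}\,ds = 2/\eta^{3}$ as the leading contribution (the $\int |s|e^{\eta s}\,ds = 1/\eta^{2}$ subleading contribution is of lower order in $1/\eta$), one obtains the claimed $C\|\mathcal{O}_{X}\|\theta/(\eta^{3}\beta)$. The step I would expect to require most care is the verification that the Lieb-Robinson velocity is uniform in $\theta,\eta,\lambda,\beta^{-1}$: this is standard for finite-range lattice fermion Hamiltonians with uniformly bounded local interaction strengths, but must be explicitly checked because $\mathcal{H}_{\beta,\eta}(t)$ depends on all of these parameters through the perturbation term. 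The remaining steps are routine manipulations; it is exactly the linear-in-$|s|$ spreading of the Lieb-Robinson light cone that provides the extra factor of $|s|$ needed to upgrade the naive $O(1/(\eta^{2}\beta))$ estimate (obtained from the trivial uniform bound $\|\mathcal{P}\|=O(1)$) to the sharper $O(\theta/(\eta^{3}\beta))$ stated in the proposition.
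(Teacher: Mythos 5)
Your argument is correct and matches the approach the paper points to: the paper's own proof is a one-line deferral to Proposition 4.1 of \cite{GLMP}, characterized as ``a standard application of Lieb-Robinson bounds,'' and your Duhamel interpolation combined with the light-cone estimate for $\|[\mathcal P,\,\cdot\,]\|$ is precisely that standard application. Your two key inputs are both right --- the mean-value bound $|e^{\eta_\beta s}-e^{\eta s}|\le(2\pi/\beta)|s|e^{\eta s}$ (since $\eta_\beta\ge\eta$ and $s\le 0$) and the commutator bound $O(\theta(1+v|s|))$ from summing the Lieb-Robinson tails against $\theta|\mu(\theta y)|$ (using that $\mu$ is bounded, so the light-cone contribution is $\theta\|\mu\|_\infty\cdot O(v|s|)$) --- and the uniformity of the Lieb-Robinson velocity is indeed routine here since $e^{\eta_\beta t}\le 1$ for $t\le 0$ keeps the per-site interaction strength of $\mathcal H_{\beta,\eta}(t)$ bounded uniformly in $\theta,\eta,\lambda,\beta$.
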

\begin{remark} Let us briefly comment on the reason for introducing the auxiliary dynamics. Let $\mathcal{A}$ be such that $[\mathcal{A}, \mathcal{N}] = 0$, and let $\tilde \tau_{t}(\mathcal{A}) = e^{\eta t} \tau_{t}(\mathcal{A})$, which we can view as a (non-unitary) regularization of the original dynamics, damped in the past. Then, if $\eta \in \frac{2\pi}{\beta} \mathbb{N}$, we observe that $\tilde \tau_{t}(\cdot)$ satisfies the KMS identity:
\begin{equation}\label{eq:KMSreg}
\langle \tilde \tau_{t}(\mathcal{A}) \mathcal{B}  \rangle_{\beta,L} = \langle \mathcal{B} \tilde \tau_{t + i\beta}(\mathcal{A}) \rangle_{\beta,L}\;,
\end{equation}
where we used the trivial but crucial identity $e^{\eta t} = e^{\eta (t + i\beta)}$. The KMS identity for the regularized dynamics (\ref{eq:KMSreg}) plays a key role in the mapping of the real time Duhamel series into an imaginary time expansion, which can be efficiently studied \cite{GLMP}; see Proposition \ref{prop:wick} above. Due to the fact that the difference between any $\eta > 0$ and its best approximation $\eta_{\beta} \in \frac{2\pi}{\beta} \mathbb{N}$ vanishes as $\beta \to \infty$, this approach is suitable to study zero or low temperature systems (lower than some $\eta$-dependent value).
\end{remark}
\begin{proof}[Proof of Proposition \ref{prp:LRint}]
The proof is a standard application of Lieb-Robinson bounds, we refer to Proposition 4.1 of \cite{GLMP} for the details.
\end{proof}
\subsection{Duhamel expansion}
Setting $\tilde \rho(t) = \widetilde{\mathcal{U}}(t;-\infty) \rho_{\beta, \mu, L} \widetilde{\mathcal{U}}(t;-\infty)^{*}$, by Proposition \ref{prp:LRint} we may write
\begin{equation}\label{eq:approxdynint}
\Tr j_{\nu,x} \rho(t) = \Tr j_{\nu,x} \widetilde\rho(t) + \mathcal{E}^{\beta, L}_{\nu} (x,t;\eta, \theta)\;,
\end{equation}
with the error term $\mathcal E^{\beta,L}_{\nu}$ satisfying the bound
\begin{equation}\label{eq:estLRint}
| \mathcal{E}^{\beta, L}_{\nu} (x,t;\eta, \theta) | \leq C\frac{\theta}{\eta^{3} \beta}\;.
\end{equation}
Consider the main term in the right-hand side of (\ref{eq:approxdynint}). By Proposition \ref{prop:wick}:
\begin{equation}\label{eq:duha1int}
\begin{split}
& \Tr j_{\nu,x} \widetilde\rho(t) -  \Tr j_{\nu,x} \rho_{\beta, \mu, L} \\&\quad = \sum_{n\geq 1}\frac{(-1)^{n} e^{n \eta_{\beta} t}}{n!} \int_{[0,\beta)^{n}} d\underline{s}\, e^{-i\eta_{\beta}(s_{1} + \ldots +s_{n})}\langle {\bf T} \gamma_{s_{1}}(\mathcal P); \cdots; \gamma_{s_{n}}(\mathcal P); j_{\nu,x}\rangle_{\beta,L}\;.
 \end{split}
\end{equation}
It is convenient to rewrite the $n$-th order contribution to the expansion in Fourier space. Recalling the definition (\ref{eq:jmuphi}) and the form of the perturbation (\ref{eq:formPert}),
\begin{equation}\label{eq:fouint}
\begin{split}
&\int_{[0,\beta)^{n}} d\underline{s}\, e^{-i\eta_{\beta}(s_{1} + \ldots +s_{n})}\langle {\bf T} \gamma_{s_{1}}(\mathcal P); \cdots; \gamma_{s_{n}}(\mathcal P); j_{\nu,x}\rangle_{\beta,L}\\
&\quad = \int_{[0,\beta)^{n}} d\underline{s}\, e^{-i\eta_{\beta}(s_{1} + \ldots +s_{n})}\langle {\bf T} \gamma_{s_{1}}(j_{0}(\mu_{\theta}));  \cdots; \gamma_{s_{n}}(j_{0}(\mu_{\theta})); j_{\nu,x}\rangle_{\beta,L}\;,
\end{split}
\end{equation}
using that
\begin{equation}\label{eq:currfourierint}
j_{0}(\mu_{\theta}) = \frac{\theta}{L} \sum_{p \in B_{L}} \hat \mu_{\theta}(-p) \hat\jmath_{\nu,p}\;,\qquad \hat\jmath_{\nu,p} = \sum_{{\bf x} \in \Lambda_{L}} e^{-ipx} \hat\jmath_{\nu,{\bf x}}\;,
\end{equation}
with $\hat\mu_{\theta}(p):=\hat\mu(p/\theta)/\theta$, and writing, for $\underline{p} = (p_{0}, p_{1})$ with $p_{1} = p$ and $p_{0} = \eta_{\beta}$,
\begin{equation}\label{eq:densitytimefourierint}
\hat n_{\underline{p}} := \int_{0}^{\beta} ds\, e^{-i\eta_{\beta} s} \gamma_{s}(\hat n_{p})\;,
\end{equation}
we can rewrite the right-hand side of (\ref{eq:fouint}) as:
\begin{equation}\label{eq:fou2int}
\begin{split}
&\int_{[0,\beta)^{n}} d\underline{s}\, e^{-i\eta_{\beta}(s_{1} + \ldots +s_{n})}\langle {\bf T} \gamma_{s_{1}}(j_{0}(\mu_{\theta}));  \cdots; \gamma_{s_{n}}(j_{0}(\mu_{\theta})); j_{\nu,x}\rangle_{\beta,L} \\
&\qquad = \frac{\theta^{n}}{L^{n}} \sum_{\{p_{i}\} \in B_{L}^{n}} \Big[\prod_{i=1}^{n} \hat\mu_{\theta}(-p_{i})\Big] e^{ip_{n+1}x} \frac{1}{\beta L}\langle {\bf T}\, \hat n_{\underline{p}_{1}}\;; \cdots \;; \hat n_{\underline{p}_{n}}\;; \hat\jmath_{\nu,\underline{p}_{n+1}} \rangle_{\beta,L}\;,
\end{split}
\end{equation}
where $\underline{p}_{n+1} = -\underline{p}_{1} - \ldots - \underline{p}_{n}$. In order to derive (\ref{eq:fou2int}), we used the space-time translation invariance of the Gibbs state.

Thus, we are led to the following representation for the Duhamel expansion of the response functions: recalling (\ref{eq:duha1int})-(\ref{eq:fou2int}), we have
\begin{equation}\label{eq:fullrespint}
\begin{split}
\chi^{\beta,L}_{\nu}(x;\eta,\theta) =
-\sum_{m= 2}^\infty& \frac{(-\theta)^{m-2}}{(m-1)!} \frac{1}{L^{m-1}} \sum_{\{p_{i}\} \in B_{L}^{m-1}} \Big[\prod_{j=1}^{m-1} \hat\mu_{\theta}(-p_{j})\Big] e^{ip_{m}x}\\
&\quad \cdot \frac{1}{\beta L}\langle\mathbf T\hat n_{\ul p_{1}};\cdots;\hat n_{\ul p_{m-1}};\hat\jmath_{\nu,\ul p_{m}}\rangle_{\beta,L} + E^{\beta,L}_{\nu}(x;\eta,\theta)\;,
\end{split}
\end{equation}
where: $\underline{p}_{i} := (\eta_{\beta}, p_{i})$ and $\ul p_{m} := -\sum_{i=1}^{m-1} \ul p_{i}$; the error term $E^{\beta,L}_{\nu}$ satisfies, by (\ref{eq:estLRint}):
\begin{equation}\label{eq:errnuint}
| E^{\beta,L}_{\nu}(x;\eta,\theta)|=\frac{|\mathcal E^{\beta,L}_{\nu}(x,0;\eta,\theta)|}{\theta}\leq  \frac{C}{\eta^{3} \beta}\;.
\end{equation}
Starting from identity (\ref{eq:fullrespint}), the remaining part of the proof (and of the paper) will be devoted to the proof of suitable bounds for the Euclidean density-current correlations appearing in the higher order ($m\geq 3$) terms, and to the explicit evaluation of the linear ($m=2$) term.
\begin{remark}
Observe that we have written (\ref{eq:fullrespint}) in terms of the number of points of the correlation functions ($m$), instead of the Duhamel order ($n=m-1$).
\end{remark}

\section{RG analysis of the lattice model}\label{sec:RGLatticeModel}
\subsection{Grassmann integral representation of the model}
In this section, we will introduce a Grassmann  integral formulation that will allow us to rewrite the density Euclidean $m$-point functions as derivatives of a suitable generating functional, that can be analyzed via multiscale methods.

Let $\chi : \mathbb{R}\to [0,1]$ be a smooth, even cut-off function such that:
\begin{equation}\label{eq:cutoff}
\chi(t)=
\begin{cases}
1 & |t|\leq1\\
0 & |t|\geq \frac{3}{2}
\end{cases}
\end{equation}
and, for $N_{0}\in \mathbb N$ set $\chi_{N_{0}}(\cdot)=\chi(2^{-N_{0}}\cdot)$. We define the set of space-time momenta $\mathbb D_{\beta,L,N_{0}}$ as:
\begin{equation}
\mathbb D_{\beta,L,N_{0}} = \left\{\ul k = (k_{0},k)\in \frac{2\pi}{\beta}\Big(\mathbb Z+\frac{1}{2}\Big)\times B_{L}\ \Big|\ \chi_{N_{0}}(k_{0})>0\right\}\;.
\end{equation}
Observe that the set $\mathbb D_{\beta,L,N_{0}}$ is finite. Then, consider the Grassmann $\mathbb C$-algebra generated by the set of anti-commuting symbols $\{\hat\Psi^{\pm}_{\ul k,\rho}\ |\ \ul k\in \mathbb D_{\beta,L,N_{0}}, \rho\in S_{M}\}$. On such Grassmann algebra we define a Gaussian Grassmann measure $P_{N_{0}}[d\Psi]$ as follows. First, we set the Berezin integral $d\Psi$ to be the linear functional which is non-zero only on monomials of maximal degree, and such that its evaluation on such top degree monomials is:
\begin{equation}
\int d\Psi \prod_{\ul k\in\mathbb D_{\beta,L,N_{0}}}\prod_{\rho\in S_{M}} \hat\Psi^{-}_{\ul k,\rho} \hat\Psi^{+}_{\ul k,\rho} = 1.
\end{equation}
Let us define the free propagator as:
\begin{equation}
\hat g_{\rho\rho'}^{(\leq N_{0})}(\ul k) = \left(\frac{\chi_{N_{0}}(k_{0})}{ik_{0}+ \hat H(k)-\mu}\right)_{\rho\rho'}.
\end{equation}
We define the associated Grassmann Gaussian integration $P_{N_{0}}[d\Psi]$ as:
\begin{equation}
\int P_{N_{0}}[d\Psi] (\cdot) = \frac{1}{\mathcal Z^{0}_{\beta,L,N_{0}}}\int d\Psi \exp\Big\{-\frac{1}{\beta L}\sum_{\ul k\in\mathbb D_{\beta,L,N_{0}}} (\hat\Psi^{+}_{\ul k}, \hat g^{(\leq N_{0})}(\ul k)^{-1}\hat\Psi^{-}_{\ul k})\Big\}\ \Big(\cdot\Big)
\end{equation}
with the short-hand $(\phi,A\psi)=\sum_{\rho,\rho'} \phi_{\rho}A_{\rho\rho'}\psi_{\rho'}$ and normalization
\begin{equation}
\mathcal Z^{0}_{\beta,L,N_{0}} = \prod_{\ul k\in\mathbb D_{\beta,L,N_{0}}} [\beta L] \det[\hat g^{(\leq N_{0})}(\ul k)].
\end{equation}
Given this definition for $P_{N_{0}}[d\hat\Psi]$, it is easy to check that, for a given monomial $\prod_{i=1}^{n} \hat\Psi_{\ul k_{i}, \rho_{i}}^{\epsilon_{i}}$ its action yields zero if $|\{i : \epsilon_{i}=+\}|\neq |\{i : \epsilon_{i}=-\}|$; otherwise,
\begin{equation}\label{eq:wickGra}
\int P_{N_{0}}[d\Psi] \prod_{i=1}^{m}\hat \Psi^{-}_{\ul k_{i},\rho_{i}} \hat\Psi^{+}_{\ul p_{i}, \rho'_{i}} = \det[K(\ul k_{i},\rho_{i};\ul p_{j}, \rho'_{j})]_{i,j=1}^{m}
\end{equation}
with $K(\ul k, \rho;\ul p, \rho')=\beta L\delta_{\ul k,\ul p}\,\hat g_{\rho\rho'}^{(\leq N_{0})}(\ul k)$. Eq. (\ref{eq:wickGra}) is the Wick's rule, for Grassmann Gaussian fields.

Next, let us introduce the real-space fields as:
\begin{equation}
\Psi^{\pm}_{\ul x,\rho} = \frac{1}{\beta L}\sum_{\ul k\in \mathbb D_{\beta,L,N_{0}}} e^{\mp i\ul k\cdot(\underline x-\underline y)} \hat \Psi^{\pm}_{\ul k,\rho}\;.
\end{equation}
One can check that:
\begin{equation}
\int P_{N_{0}}[d\Psi] \Psi^{-}_{\ul{x},\rho}\Psi^{+}_{\ul{y},\rho'} = g_{\rho\rho'}^{(\leq N_{0})}(\ul {x}; \ul{y}),
\end{equation}
where the real-space propagator is:
\begin{equation}
g_{\rho\rho'}^{(\leq N_{0})}(\ul {x}; \ul{y}) = \sum_{\ul k\in \mathbb D_{\beta,L,N_{0}}} e^{ i\ul k\cdot(\underline x-\underline y)} \hat g_{\rho\rho'}^{(\leq N_{0})}(\ul k).
\end{equation}
It is well-known that, as $\underline{x} \neq \underline{y}$, the $N_{0}\to \infty$ limit of the real-space two-point function  converges to the Euclidean two-point function of the non-interacting model,
\begin{equation}
\lim_{N_{0}\to \infty} g_{\rho\rho'}^{(\leq N_{0})}(\ul {x}; \ul{y}) = \langle {\bf T} \gamma_{x_{0}} (a_{x,\rho}) \gamma_{y_{0}} (a^{*}_{y,\rho'}) \rangle_{\beta,L}^{0} \equiv  g_{\rho\rho'}(\ul {x}; \ul{y})\;.
\end{equation}
Let us now define the interacting Grassmann field theory. The Grassmann interaction is:
\begin{equation} \label{eq:quarticInt}
V(\Psi):=\lambda \int_{\mathbb T^{2}_{\beta}} dx_{0}dy_{0} \sum_{\mathbf x,\mathbf y\in\Lambda_{L}} n_{\ul{\mathbf x}} \delta(x_{0}-y_{0})w(x,y) n_{\ul{\mathbf y}}
\end{equation}
with $\ul{\mathbf x}=(\ul x,\rho)$ and with $n_{\ul{\mathbf x}}=\Psi^{+}_{\ul{\mathbf x}}\Psi^{-}_{\ul{\mathbf x}}$. It should be noted that the factors $-1/2$ are absent in this Grassmann representation; this is related to the fact that:
\begin{equation}
\lim_{N_{0}\to \infty} g_{\rho\rho'}^{(\leq N_{0})}((x_{0},x); (x_{0},y)) = g_{\rho\rho'}((x_{0},x); (x_{0},y)) - \frac{\delta_{{\mathbf x},{\mathbf y}}}{2}\;.
\end{equation}
We can rewrite expression (\ref{eq:quarticInt}) in momentum space as:
\begin{equation}
V(\Psi):=\frac{1}{\beta L} \sum_{\ul p\in \mathbb M_{\beta,L,N_{0}}} \lambda\hat w(p)\,\hat n_{-\ul p} \hat n_{\ul p}\;,
\end{equation}
where $\mathbb M_{\beta,L,N_{0}}:=\{\ul p=(p_{0},p)\in \frac{2\pi}{\beta}\mathbb Z\times B_{L}\ |\ \chi_{N_{0}}(p_{0})>0\}$ and:
\begin{equation}
\hat n_{\ul p} = \int_{0}^{\beta} dx_{0} \sum_{\mathbf x\in\Lambda_{L}} e^{-i \ul p\cdot \ul x}\, n_{\ul{\mathbf x}} = \frac{1}{\beta L} \sum_{\ul k\in \mathbb D_{\beta,L,N_{0}}} (\hat \Psi^{+}_{\ul k-\ul p}, \hat J_{\nu}(k,p) \hat \Psi^{-}_{\ul k})\;.
\end{equation}
At last, we introduce the partition function of this Grassmann model as:
\begin{equation} \label{eq:partitionfunction}
\mathcal Z_{\beta,L,N_{0}}:= \int P_{N_{0}}[d\Psi]\, e^{-V(\Psi)}\;.
\end{equation}
We are now ready to define the generating functional of the Grassmann field theory. Let us start by introducing the source terms. Using the short-hand notation:
\begin{equation}
\int_{\beta,L} d\ul x := \int_{\mathbb T_{\beta}} dx_{0}\sum_{x\in\Gamma_{L}},
\end{equation}
and considering two Grassmann fields $\phi^{\pm}_{\ul x,\rho}$ and a real valued vector field $A^{\nu}_{\ul x}$, we set
\begin{equation} \label{eq:sources}
B(\Psi; \phi) := \int_{\beta,L} d\ul x\, (\phi^{+}_{\ul x},\Psi^{-}_{\ul x}) + (\Psi^{+}_{\ul x}, \phi^{-}_{\ul x})\;,\qquad 
C(\Psi; A) := \sum_{\nu = 0,1}\int_{\beta,L} d\ul x\, A^{\nu}_{\ul x}\, j_{\nu,\ul x}\;,
\end{equation}
with $j_{\nu,\ul x}$ the Grassmann counterpart of the lattice 2-current defined in (\ref{eq:currFou}):
\begin{equation}\label{eq:currGrassFou}
j_{\nu,x}=\frac{1}{\beta L}\sum_{\ul p\in \mathbb M_{\beta,L,N}} e^{i\ul p\cdot\ul x} \hat \jmath_{\nu, \ul p}\qquad \hat\jmath_{\nu,\ul p}=\frac{1}{\beta L}\sum_{\ul k\in \mathbb D_{\beta,L,N}} (\hat \Psi^{+}_{\ul k-\ul p}, \hat J_{\nu}(k,p) \hat \Psi^{-}_{\ul k})\;.
\end{equation}
In order to make sure that the field $\phi^{\pm}$ also generates a finite dimensional algebra, we will suppose that it is supported for finitely many Fourier modes.

The generating functional of the correlation functions $\mathcal{W}_{\beta,L,N_{0}}(\phi,A)$ is then given by:
\begin{equation} \label{eq:genfcn}
\mathcal{W}_{\beta,L,N_{0}}(\phi,A) = \log \int P_{ N_{0}} [d\Psi] e^{-V(\Psi) + B(\Psi; \phi) + C(\Psi; A)}\;.
\end{equation}
A corollary of the construction that we are going to describe is that, for $\|A\|_{\infty}, |\lambda|$ small enough, the argument of the logarithm is non-zero, and hence the generating functional is well defined at finite $\beta, L, N_{0}$.

It is well-known that correlation functions of the Gibbs state of the model in Fock space at non-coinciding space-time points can be represented as functional derivatives of the generating functional (\ref{eq:genfcn}), in the limit $N_{0}\to\infty$; see \emph{e.g.}  \cite{GMP} for a discussion on this point. In particular, we are interested in the following type of momentum-space correlation functions:
\begin{equation}\label{eq:equivss}
\langle\mathbf T\hat n_{\ul p_{1}};\,\cdots; \hat n_{\ul p_{m-1}}; \hat \jmath_{\nu,\ul p_{m}} \rangle_{\beta,L} = \lim_{N_{0}\to\infty} (\beta L)^{m}\left.\frac{\partial^{m}\mathcal{W}_{\beta,L,N_{0}}(\phi,A)}{\partial \hat A^{0}_{-\ul p_{1}}\cdots\,\partial \hat A^{0}_{-\ul p_{m-1}}\partial \hat A^{\nu}_{-\ul p_{m}}}\right|_{A=\phi=0}\;.
\end{equation}
The identity (\ref{eq:equivss}) will allow us to use the properties of Grassmann Gaussian fields in order to investigate the Euclidean correlation functions of our model. Equation (\ref{eq:equivss}) holds in the analyticity domain of the Grassmann theory, which is contained in the analyticity domain of the Fock space model.
\subsection{Integration of the ultraviolet degrees of freedom}
As in all applications of renormalization group to condensed matter problems, the starting point is the integration of the ultraviolet degrees of freedom. After this step, which will be briefly described in this Section (referring to other works for the technical details), the problem will be reduced to the analysis of a suitable, effective infrared model. 


%
%

Let $\chi(\cdot)$ be a smooth cutoff function as in (\ref{eq:cutoff}). We decompose $\hat g^{(\leq N_{0})}$ into the sum of an ultraviolet (UV) and infrared (IR) part as:
\begin{equation}\label{eq:gsplit}
\hat g^{(\leq N_{0})}(\underline{k}) = \hat g^{\text{UV}}(\underline{k}) + \hat g^{\text{IR}}(\underline{k})\;,\qquad 
\hat g^{\text{UV}}(\underline{k}) = \frac{\chi_{[0,N_{0}]}(k_{0})}{ik_{0}+ \hat H(k)-\mu}
\end{equation}
where $\chi_{[0,N_{0}]}(k_{0}) = \chi_{N_{0}}(k_{0}) - \chi(k_{0})$. By the addition principle of Grassmann Gaussian variables, we split the Grassmann field as:
\begin{equation}
\hat \Psi^{\pm}_{\underline{k}} = \hat \Psi^{\text{UV},\pm}_{\underline{k}} + \hat \Psi^{\text{IR}}_{\underline{k}}\;,
\end{equation} 
where $\hat \Psi^{\text{UV},\pm}_{\underline{k}}$, $\hat \Psi^{\text{IR}}_{\underline{k}}$ are independent Grassmann fields, with propagators give by, respectively, $\hat g^{\text{UV}}(\underline{k})$ and $\hat g^{\text{IR}}(\underline{k})$. Thanks to the fact that the cutoff function $\chi_{[0,N_{0}]}(k_{0})$ is vanishing for $|k_{0}| \leq 1$, the propagator $\hat g^{\text{UV}}(\underline{k})$ is non-singular, uniformly in $\beta, L$. Let:
\begin{equation}
V(\Psi;A,\phi) = -V(\Psi) + B(\Psi; \phi) + C(\Psi; A)\;,
\end{equation}
recall (\ref{eq:genfcn}). We then rewrite the generating functional of correlations as:
\begin{equation}
e^{\mathcal{W}_{\beta,L,N_{0}}(\phi, A)} = \int P_{\text{IR}}[d\Psi^{\text{IR}}] \int P_{\text{UV}}[d\Psi^{\text{UV}}] e^{V(\Psi^{\text{IR}} + \Psi^{\text{UV}};A,\phi)}\;,
\end{equation}
and we proceed with the integration of the ultraviolet field. A standard multiscale analysis for the ultraviolet degrees of freedom yields, see {\it e.g.} \cite{GMP}:
\begin{equation}
\int P_{\text{UV}}[d\Psi^{\text{UV}}] e^{V(\Psi^{\text{IR}} + \Psi^{\text{UV}};A,\phi)} = e^{\mathcal{W}^{\text{IR}}_{\beta,L,N_{0}}(\phi, A) + V^{\text{IR}}(\Psi^{\text{IR}}; \phi,A)}\;,
\end{equation}
where the argument of the exponential have the form:
\begin{equation}
\label{eq:kernelexpIR}
 V^{\text{IR}}(\Psi^{\text{IR}}; \phi,A) = \sum_{\Gamma=(\Gamma_{\Psi}, \Gamma_{\phi}, \Gamma_{A})} \int_{\beta,L} d\ul X d\ul Y d\ul Z \,\Psi^{\text{IR}}_{\Gamma_{\Psi}}(\ul X) \phi_{\Gamma_{\phi}}(\ul Y) A_{\Gamma_{A}}(\ul Z) W^{\text{IR}}_{\Gamma} (\ul X, \ul Y, \ul Z)\;,
\end{equation}
and:
\begin{equation}\label{eq:kerngen}
\mathcal{W}^{\text{IR}}_{\beta,L,N_{0}}(\phi, A) = \sum_{\Gamma=(\Gamma_{\phi}, \Gamma_{A})} \int_{\beta,L} d\ul Y d\ul Z \, \phi_{\Gamma_{\phi}}(\ul Y) A_{\Gamma_{A}}(\ul Z) W^{\text{IR}}_{\Gamma} (\ul Y, \ul Z)\;;
\end{equation}
let us explain the meaning of the various objects entering these expressions.
\begin{enumerate}
\item The labels $\Gamma_{\Psi}, \Gamma_{\phi}, \Gamma_{A}$ denote the collection of all labels of the fields, integrated and external. That is, if $n$ is the order of the $\Psi$-monomial, $\Gamma_{\Psi} = \{ (\rho_{1}, \varepsilon_{1}), \ldots, (\rho_{n}, \varepsilon_{n}) \}$, and:
\begin{equation}
\Psi_{\Gamma_{\Psi}}(\underline{X}) = \prod_{i=1}^{n} \Psi^{\varepsilon_{i}}_{\underline{x}_{i}, \rho_{i}}
\end{equation}
with the understanding that $\underline{X} = (\underline{x}_{1},\ldots, \underline{x}_{n})$. If $r$ is the order of the $\phi$-monomial, $\Gamma_{\phi} = \{ (\rho_{1}, \varepsilon_{1}), \ldots, (\rho_{r}, \varepsilon_{r}) \}$, and:
\begin{equation}
\Phi_{\Gamma_{\Phi}}(\underline{Y}) = \prod_{i=1}^{r} \phi^{\varepsilon_{i}}_{\underline{y}_{i}, \rho_{i}}\;,
\end{equation}
with $\underline{Y} = (\underline{y}_{1}, \ldots, \underline{y}_{r})$. If $m$ is the order of the $A$-monomial, $\Gamma_{A} = (\nu_{1}, \ldots, \nu_{m})$, and:
\begin{equation}
A_{\Gamma_{A}}(\underline{Z}) = \prod_{i=1}^{m} A^{\nu_{i}}_{\underline{z}_{i}}\;,
\end{equation}
with $\underline{Z} = (\underline{z}_{1}, \ldots, \underline{z}_{m})$.
\item The integral runs over all the space-time labels of the monomials determined by $\Gamma$.
\item The functions $W^{\text{IR}}_{\Gamma} (\ul X, \ul Y, \ul Z)$, with the understanding that $W^{\text{IR}}_{(\emptyset, \Gamma_{\phi}, \Gamma_{A})} (\ul X, \ul Y, \ul Z) \equiv W^{\text{IR}}_{(\Gamma_{\phi}, \Gamma_{A})} (\ul Y, \ul Z)$, also called kernels, are the outcome of the integration of the ultraviolet field. They are analytic in $\lambda$ for $|\lambda|$ small enough, they are invariant under simultaneous equal translation of all space-time coordinates, and they satisfy the following bound:
\begin{equation} \label{eq:locboundIR}
\int_{\beta,L} D\ul Q \prod_{i< j} \|\ul q_{i}-\ul q_{j}\|_{\beta,L}^{n_{ij}} | W^{\text{IR}}_{\Gamma} (\ul Q)| \leq \beta L C_{\Gamma;\{n_{ij}\}}\quad \forall n_{ij}\in\mathbb N,
 \end{equation}
 with $\underline{Q} = (\underline{X}, \underline{Y}, \underline{Z}) = (\underline{q}_{1}, \ldots, \underline{q}_{n+m+r})$, where $\|\cdot\|_{\beta,L}$ denotes the periodic space-time norm:
\begin{equation}
\| \underline{q} \|_{\beta,L} = \inf_{(n_{0}, n_{1}) \in \mathbb{Z}^{2}} \| \underline{q} + (n_{0}\beta, n_{1} L) \|\;.
\end{equation}
\end{enumerate}
The kernels $W^{\text{IR}}_{\Gamma} (\ul Q)$ play the role of new effective interactions for the infrared field. The analyticity of the effective interactions is a nontrivial result, and it follows from the application of the Brydges-Battle-Federbush (BBF) formula, for fermionic truncated expectations. This formula provides a convergent representation of fermionic perturbation theory; see {\it e.g.} \cite{GM, GMR} for a review. Observe that the expressions (\ref{eq:kernelexpIR}), (\ref{eq:kerngen}) actually involve finitely many monomials $\Psi^{\text{IR}}$ and $\phi$, due to the fact that these Grassmann fields involve finitely many Fourier modes, and hence generate a finite-dimensional Grassmann algebra. Instead, the sum over the $A$ monomials can be proved to be convergent, for $\|A\|_{\infty}$ small enough. Thus, both (\ref{eq:kernelexpIR}), (\ref{eq:kerngen}) are well defined.

We are left with the integration of the field $\Psi^{\text{IR}\pm}$. Its propagator is:
\begin{equation}
\hat g^{\text{IR}}(\underline{k}) = \frac{\chi(k_{0})}{ik_{0} + \hat H(k) - \mu}\;,
\end{equation}
which is now singular as $\beta,L\to \infty$, for $k_{0} = 0$ and for $k$ equal to one of the Fermi points $k_{F}^{\omega}$. Let $k_{F}^{\omega}(\lambda)$ be points in $B_{L}$ such that $|k_{F}^{\omega} - k_{F}^{\omega}(\lambda)| \leq C|\lambda|$, to be fixed later on; they play the role of renormalized Fermi points. It is convenient to rewrite the infrared propagator as:
\begin{equation}\label{eq:12}
\hat g^{\text{IR}}(\underline{k}) = \hat g^{\text{IR},1}(\underline{k}) + \hat g^{\text{IR},2}(\underline{k})
\end{equation}
where, for $\underline{k}_{F}^{\omega}(\lambda) = (0, k_{F}^{\omega}(\lambda))$:
\begin{equation}
\hat g^{\text{IR},1}(\underline{k}) = \sum_{\omega = 1}^{N_{f}} \frac{\chi_{0}(\|\underline{k} - \underline{k}_{F}^{\omega}(\lambda)\|_{\omega})}{ik_{0} + \hat H(k) - \mu}\;,
\end{equation}
with: $\chi_{0}(\cdot) = \chi(\delta (\cdot))$ for $\delta>0$ small enough; $\|\underline{q}\|_{\omega}^{2} = |q_{0}|^{2} + v_{\omega}^{2} |q_{1}|_{\mathbb{T}}^{2}$; and $\hat g^{\text{IR},2}(\underline{k})$ is a smooth, bounded function of $\underline{k} = (k_{0}, k_{1})$. The number $\delta$ is chosen so that the cutoffs $\chi_{0}(\|\underline{k} - \underline{k}_{F}^{\omega}(\lambda)\|_{\omega})$ have disjoint supports, and so that $\chi_{0}(\|\underline{k} - \underline{k}_{F}^{\omega}(\lambda)\|_{\omega}) \neq 0$ implies $\chi(k_{0}) = 1$. The advantage of the rewriting (\ref{eq:12}) is that the cutoffs entering the massless propagator  $\hat g^{\text{IR},1}(\underline{k})$ are symmetric in $k_{0}$ and in $v_{\omega}[k_{1} - k_{F}^{\omega}(\lambda)]$.

We use the addition principle to write:
\begin{equation}
\hat \Psi^{\text{IR},\pm}_{\underline{k}} = \hat \Psi^{\text{IR}1,\pm}_{\underline{k}} + \hat \Psi^{\text{IR}2,\pm}_{\underline{k}}\;,
\end{equation}
and to integrate the field $\hat \Psi^{\text{IR}2,\pm}_{\underline{k}}$ via fermionic cluster expansion, thus producing a new effective action for the field $\hat \Psi^{\text{IR}1,\pm}_{\underline{k}}$, determined by kernels $W^{\text{IR}1}_{\Gamma} (\cdot)$ satisfying the same bounds as $W^{\text{IR}}_{\Gamma} (\cdot)$, with different numerical constants.

Let us now focus on the massless propagator $\hat g^{\text{IR},1}(\underline{k})$. Thanks to Assumption \ref{ass:A}, we know that the spectrum in a window of size $\Delta$ around $\mu$ is characterised by the local band functions $\{e_{\omega}:I_{\omega}\to\mathbb R\}_{\omega=1,\dots, N_{f}}$, so that the IR part may be rewritten to be
\begin{equation} \label{eq:IRgsplit}
\hat g^{\text{IR}1}(\underline{k}) = \sum_{\omega=1}^{N_{f}} \hat g^{\text{IR}}_{\omega}(\ul k) P_{\omega}(k)\;,
\end{equation}
with $P_{\omega}(k)$ the rank-one projector onto the eigenvector associated with the eigenvalue $e_{\omega}(k)$, namely
\begin{equation}
P_{\omega}(k):= |\xi^{\omega}(k) \rangle \langle \xi^{\omega}(k)|\qquad\qquad\hat H(k)\xi^{\omega}(k)=e_{\omega}(k)\xi^{\omega}(k)\;,
\end{equation}
and where the propagators $\hat g^{\text{IR}}_{\omega}(\underline{k})$ are given by:
\begin{equation}\label{eq:go}
\hat g^{\text{IR}}_{\omega}(\ul k) = \frac{\chi_{0}(\|\underline{k} - \underline{k}_{F}^{\omega}(\lambda)\|_{\omega})}{ik_{0} + e_{\omega}(k)-\mu}\;.
\end{equation}
By the addition principle, we can represent the field $\hat \Psi^{\text{IR}1,\pm}_{\underline{k}}$ as:
\begin{equation}\label{eq:deca}
\hat \Psi^{\text{IR}1,\pm}_{\underline{k},\rho} = \sum_{\omega = 1}^{N_{f}} \hat \Psi^{\text{IR}1,\pm}_{\underline{k},\omega,\rho}
\end{equation}
where the field $\hat \Psi^{\text{IR}1,\pm}_{\underline{k},\omega,\rho}$ has propagator given by $\hat g^{\text{IR}}_{\omega}(\ul k) P_{\omega,\rho,\rho'}(k)$. This Gaussian Grassmann field can be conveniently represented as, for all $\underline{k}$ in the support of $\chi_{0}(\|\underline{k} - \underline{k}_{F}^{\omega}\|_{\omega})$:
\begin{equation}\label{eq:decb}
\hat \Psi^{\text{IR}1,-}_{\underline{k},\omega,\rho} = \psi^{\text{IR},-}_{\underline{k}, \omega} \xi^{\omega}_{\rho}(k)\;,
\end{equation}
where the new Grassmann field $\psi^{\text{IR},\pm}_{\underline{k}, \omega}$ has propagator given by $\hat g^{\text{IR}}_{\omega}(\ul k)$ in (\ref{eq:go}). The fields $\{\psi^{\text{IR},\pm}_{\underline{k}, \omega}\}$ are quasi-particle fields, and they characterize the low-energy excitations of the system.

Let us now represent the effective action in terms of these new fields. Let:
\begin{equation}
\check{\xi}^{\omega}_{\rho}(x) = \frac{1}{L} \sum_{k\in B_{L}} e^{ikx} \chi\big( |k - k_{F}^{\omega}(\lambda)|_{\mathbb{T}}/2\big) \hat \xi_{\rho}^{\omega}(k)\;.
\end{equation}
Observe that, from (\ref{eq:deca}), (\ref{eq:decb}):
\begin{equation}\label{eq:conv}
\Psi^{\text{IR}1,-}_{\ul x,\rho} = \sum_{\omega} (\check\xi^{\omega}_{\rho}*\psi^{\text{IR},-}_{\omega})_{\ul x}\;,\quad \Psi^{\text{IR}1,+}_{\ul x,\rho} =\sum_{\omega} (\overline{\check\xi^{\omega}_{\rho}}*\psi^{\text{IR},+}_{\omega})_{\ul x}
\end{equation}
with $*$ denoting spatial convolution:
\begin{equation}\label{eq:checkxi}
(\check\xi^{\omega}_{\rho}*\psi^{\text{IR},-}_{\omega})_{(x_{0},x)} = \sum_{z\in\Gamma_{L}} \check\xi^{\omega}_{\rho}(x-z) \psi^{\text{IR},-}_{(x_{0},z),\omega}\;.
\end{equation}
Since the field $\psi^{\text{IR},\pm}_{\underline{k}, \omega}$ has the same support properties of $\hat g^{\text{IR}}_{\omega}(\ul k)$, in Eq. (\ref{eq:decb}) we can freely multiply the Bloch function $\xi^{\omega}_{\rho}(k)$ by a cutoff function which is equal to $1$ in the support of the field; this yields (\ref{eq:conv}) with $\check\xi^{\omega}_{\rho}$ as in (\ref{eq:checkxi}). A standard integration by parts argument, combined with the smoothness of the cutoff function and of $\hat \xi^{\omega}(k)$ around the $\omega$-Fermi point, shows that:
\begin{equation}
| \check \xi_{\rho}(x) | \leq \frac{C_{n}}{1 + |x|_{L}^{n}}\qquad \text{for all $n\in \mathbb{N}$.}
\end{equation}
Thus, we obtained the following rewriting of the generating functional of the correlations:
\begin{equation}
\begin{split}
e^{\mathcal{W}_{\beta,L,N_{0}}(\phi, A)} &= e^{\mathcal{W}^{\text{IR}1}_{\beta,L,N_{0}}(\phi, A)} \int P_{\text{IR}1}[d\Psi^{\text{IR}1}] e^{V^{\text{IR}1}(\Psi^{\text{IR}1}; \phi,A)} \\
&= e^{\mathcal{W}^{\text{IR}1}_{\beta,L,N_{0}}(\phi, A)} \int P_{\text{IR}}[d\psi^{\text{IR}}] e^{V^{\text{IR}1}(\psi^{\text{IR}}*\xi; \phi,A)}\;,
\end{split}
\end{equation}
where in the last step we used the representation (\ref{eq:conv}) of the field $\Psi^{\text{IR}1}$. Let $\{\nu_{\omega}\}$ be real numbers, such that $|\nu_{\omega}| \leq C|\lambda|$, to be determined later on. We further add and subtract a quadratic term in the effective interaction, up to a redefinition of the Gaussian integration:
\begin{equation}\label{eq:resum}
\begin{split}
&e^{\mathcal{W}^{\text{IR}1}_{\beta,L,N_{0}}(\phi, A)} \int P_{\text{IR}}[d\psi^{\text{IR}}] e^{V^{\text{IR}1}(\psi^{\text{IR}}*\xi; \phi,A)} \\
&= e^{\mathcal{W}^{\text{IR}1}_{\beta,L,N_{0}}(\phi, A)} \int P_{\text{IR}}[d\psi^{\text{IR}}] e^{\sum_{\omega} \nu_{\omega} \int_{\beta,L} d\underline{x}\, \psi^{\text{IR},+}_{\underline{x},\omega} \psi^{\text{IR},-}_{\underline{x},\omega}} e^{V^{\text{IR}1}(\psi^{\text{IR}}*\xi; \phi,A) - \sum_{\omega} \nu_{\omega} \int_{\beta,L} d\underline{x}\, \psi^{\text{IR},+}_{\underline{x},\omega} \psi^{\text{IR},-}_{\underline{x},\omega}} \\
&\equiv e^{\mathcal{W}^{(0)}_{\beta,L,N_{0}}(\phi, A)}  \int P_{0}[d\psi^{(\leq 0)}] e^{V^{(0)}(\psi^{(\leq 0)}; \phi,A)}\;,
\end{split}
\end{equation}
where, setting $\underline{k}' = \underline{k} - \underline{k}_{F}^{\omega}(\lambda)$,  the new field $\psi^{(\leq 0)\pm}_{\underline{k}',\omega}$ has propagator given by:
\begin{equation}\label{eq:newg}
\hat g_{\omega}^{(\leq 0)}(\underline{k}') = \frac{\chi_{0}(\|\underline{k}'\|_{\omega})}{ik_{0} + e_{\omega}(k' + k_{F}^{\omega}(\lambda)) - \mu + \nu_{\omega} \chi_{0}(\|\underline{k}'\|_{\omega}) }\;.
\end{equation}
The variable $\underline{k}'$ expresses the relative shift of the Euclidean quasi-momentum with respect to the Fermi points. Thus, the configuration-space counterpart of the new field is:
\begin{equation}
e^{\mp i \underline{k}^{\omega}_{F}(\lambda) \cdot \underline{x}} \psi^{(\leq 0)\pm}_{\underline{x}, \omega}\qquad \text{with}\quad \psi^{(\leq 0)\pm}_{\underline{x}, \omega} = \frac{1}{\beta L} \sum_{\underline{k}'} e^{\mp i \underline{k}' \cdot \underline{x}} \psi^{(\leq 0)\pm}_{\underline{k}',\omega}\;.
\end{equation} 
From (\ref{eq:newg}), observe that the new Gaussian integration takes into account the first exponential factor in (\ref{eq:resum}). The change of normalization of the Grassmann integration is taken into account by $\mathcal{W}^{(\leq 0)}_{\beta,L,N_{0}}(\phi, A)$, while the new effective interaction $V^{(0)}(\psi^{(\leq 0)}; \phi,A)$ takes into account all the remaining contributions appearing at the exponent. It has the form:
\begin{equation}
V^{(0)}(\psi^{(\leq 0)}; \phi,A) = \sum_{\Gamma=(\Gamma_{\psi}, \Gamma_{\phi}, \Gamma_{A})} \int_{\beta,L} d\ul X d\ul Y d\ul Z \,\psi^{(\leq 0)}_{\Gamma_{\psi}}(\ul X) \phi_{\Gamma_{\phi}}(\ul Y) A_{\Gamma_{A}}(\ul Z) W^{(0)}_{\Gamma} (\ul X, \ul Y, \ul Z)\;,
\end{equation}
where now $\Gamma_{\psi} = \{(\omega_{1}, \varepsilon_{1}), \ldots, (\omega_{n}, \varepsilon_{n})\}$.

Thanks to the decay properties of $\check{\xi}$, and to the properties of the kernels of the field $\Psi^{\text{IR}1}$, the new kernels are: analytic in $\lambda$ for $|\lambda|$ small enough; invariant under translation of all space-time arguments; they satisfy the decay estimate:
\begin{equation}\label{eq:bdkernels}
\int_{\beta,L} D\ul Q \prod_{i< j} \|\ul q_{i}-\ul q_{j}\|_{\beta,L}^{n_{ij}} | W^{(0)}_{\Gamma} (\ul Q)| \leq \beta L \widetilde{C}_{\Gamma;\{n_{ij}\}}\quad \forall n_{ij}\in\mathbb N,
\end{equation}
which is the analogue of (\ref{eq:locboundIR}), with $C_{\{n_{ij}\}} < \widetilde{C}_{\{n_{ij}\}}\leq 2C_{\{n_{ij}\}}$ for $|\lambda|$ small enough. Observe that, if $\Gamma_{\psi} \neq \emptyset$, the bound is proportional to $|\lambda|$; actually, it is $|\lambda|^{|\Gamma_{\psi}|/4}$ if $|\Gamma_{\psi}| \geq 4$. All the bounds are uniform in the ultraviolet cutoff $N_{0}$; from now on, we will not carry on the dependence on $N_{0}$ (the limit $N_{0}\to\infty$ will be eventually taken at the end).

Finally, let us comment on the necessity for ``adding and subtracting'' the quadratic terms proportional to $\nu_{\omega}$. These quadratic terms will play the role of counterterms for the renormalization group iteration, discussed in the next Sections. In particular, they will be chosen to control the flow of the relevant terms, which are the quadratic terms in the fermionic fields. The shifted Fermi points $k^{\omega}_{F}(\lambda)$ will be chosen in order to take into account the shift  of the singularity of (\ref{eq:newg}), due to the presence of $\nu_{\omega}$ as the denominator. 

More precisely, the pair $(k_{F}^{\omega}(\lambda), \nu_{\omega})$ will satisfy the condition, as $\beta,L\to \infty$:
\begin{equation}\label{eq:cond}
e_{\omega}(k_{F}^{\omega}(\lambda)) - \mu + \nu_{\omega} = 0\;,
\end{equation}
and with $|k_{F}^{\omega}(\lambda) - k_{F}^{\omega}(0)|\leq C|\lambda|$, $|\nu_{\omega}| \leq C|\lambda|$. For $L$ finite, equation (\ref{eq:cond}) may not have a solution, due to the condition $k_{F}^{\omega}(\lambda) \in B_{L}$. To accommodate this, we will instead require that the condition (\ref{eq:cond}) holds with $0$ on the right-hand side replaced by $O(1/L)$.

To conclude, we obtained the following rewriting for the generating functional of the correlations:
\begin{equation} \label{eq:RGstart}
\mathcal W_{\beta,L}(\phi,A) = \mathcal W^{(0)}_{\beta,L}(\phi,A) + \log \int P_{0}[d\psi^{(\leq 0)}] e^{V^{(0)}(\psi^{(\leq 0)}; \phi,A)}\;,
\end{equation}
where $\mathcal W^{(0)}$, $V^{(0)}$ have good analyticity and locality properties (kernels that decay faster than any power in Euclidean space). Renormalization group allows to further integrate the massless field $\psi^{(\leq 0)}$, via a renormalized, convergent expansion. This will be the content of the next section.
\subsection{Renormalization group analysis}\label{sec:sketchRG}
In this section we discuss the multiscale integration of the massless fields $\psi^{(\leq 0)}_{\omega}$. This is done via a renormalization group strategy, that has been carried out in a number of earlier references; see {\it e.g.} \cite{AMP, MPmulti} for recent, closely related works, and references therein. In this section we will review the construction. This will be needed to set-up the rest of the paper, since the method will be used to derive sharp estimates on the multi-point current-current correlators, which will play a key role in the asymptotic validity of linear response. 

The starting point is the outcome of the integration of the ultraviolet and high-frequency degrees of freedom, Eq. (\ref{eq:RGstart}). Due to the singularity of the covariance of the field at the Fermi points, it turns out that the perturbation theory obtained expanding the theory around the Gaussian case is divergent at all orders, in the limit $\beta,L\to \infty$. In order to obtain a meaningful expansion, we will perform a multiscale analysis, by decomposing the field into a sum of single scale fields. The single scale fields will be integrated in a progressive way, starting from the high energy scales until the lowest energy scales. The covariance and the effective action at a given scale will be defined inductively, via the combination of suitable {\it localization} and {\it renormalization} operations. This will ultimately allow to exploit {\it cancellations} in the naive expansion in a systematic way, and to prove analyticity of the correlation functions for $|\lambda| < \lambda_{0}$ for some $\lambda_{0}$ independent of $\beta, L$.
\begin{remark}
In the rest of the paper, we will use the following short-hand notations:
\begin{equation}
\int_{\beta,L} d\ul k\, (\cdots) := \frac{1}{\beta L}\sum_{\ul k\in \mathbb D_{\beta,L}} (\cdots)\;, \qquad \int_{\beta,L} d\ul p\, (\cdots) := \frac{1}{\beta L} \sum_{\ul p\in \mathbb M_{\beta,L}} (\cdots)
\end{equation}
with
\begin{equation}
\mathbb D_{\beta,L} = \Big\{\ul k\ |\ k_{0}\in\frac{2\pi}{\beta}\Big(\mathbb Z + \frac{1}{2}\Big),\ k\in B_{L}\Big\}\;,\quad \mathbb M_{\beta,L} = \Big\{\ul p\ |\ p_{0}\in\frac{2\pi}{\beta}\mathbb Z,\ p\in B_{L}\Big\}.
\end{equation}
For $\beta,L$ finite, the summands that we will consider will be supported on a finite number of momenta.
\end{remark}
\subsubsection{Setting up the multiscale integration}
Let $h_{\beta} := \lfloor \log_{2} (\pi/\delta\beta) \rfloor$, and let $h \in \mathbb{Z}_{-}$, $0\geq h \geq h_{\beta}$. Suppose that the generating functional can be written as:
\begin{equation}\label{eq:leqh}
\mathcal{W}_{\beta,L}(\phi,A) = \mathcal{W}^{( h)}_{\beta,L}(\phi,A) + \log \int P_{ h}[d\psi^{(\leq h)}] e^{V^{( h)}(\sqrt{Z_{h}}\psi^{(\leq h)}; \phi,A)}
\end{equation}
with the following meaning of the various objects entering (\ref{eq:leqh}).
\begin{itemize}
\item[1.] The notation $\sqrt{Z_{h}}\psi^{(\leq h)}$ is a short-hand for $\sqrt{Z_{h,\omega}}\psi^{(\leq h)}_{\omega}$, meaning that each field $\psi^{(\leq h)}_{\omega}$ appearing in $V^{(h)}$ is multiplied by a factor $\sqrt{Z_{h,\omega}}$.
\item[2.] The Grassmann Gaussian measure $P_{ h}(d\psi^{(\leq h)})$ has covariance given by:
\begin{equation}
\begin{split}
\int P_{ h}(d\psi^{(\leq h)}) \psi^{(\leq h)-}_{\underline{x}, \omega} \psi^{(\leq h)+}_{\underline{y}, \omega'} &= \delta_{\omega\omega'} \int_{\beta,L} d\underline{k}'\, e^{i \underline{k}'\cdot (\underline{x} - \underline{y})} \hat g^{(\leq h)}_{\omega}(\underline{k}')\\
\hat g^{(\leq h)}_{\omega}(\underline{k}') &= \frac{1}{Z_{h,\omega}} \frac{\chi_{h,\omega}(\underline{k}')}{ik_{0} + v_{h,\omega} k'}(1 + r_{h,\omega}(\underline{k}'))\;,
\end{split}
\end{equation}
where
\begin{equation}
\chi_{h,\omega}(\ul k') = \chi\Big(\sqrt{k_{0}^{2} + v_{h,\omega}^{2} {|k'|_{\mathbb T}^{2}}}/(2^{h}\delta)\Big)\;;
\end{equation}
$\hat g^{(\leq h)}_{\omega}(\underline{k}')$ is the {\it renormalized propagator} on scale $h$. The quantities $r_{h,\omega}$, $Z_{h,\omega}$, $v_{h,\omega}$ are analytic functions of $\lambda$ for $|\lambda| < \lambda_{0}$ such that, for $a>0$, and for $\underline{k}'$ in the support of $\chi_{h,\omega}(\ul k')$:
\begin{equation}
\label{eq:rcc}
\big| \text{d}_{k_{0}}^{n_{0}} \text{d}_{k_{1}}^{n_{1}} r_{h,\omega}(\underline{k}') \big|\leq C2^{h(a - n_{0} - n_{1})}\;,\qquad  \big|v_{h,\omega} - v_{\omega}\big|\leq C|\lambda|\;,\qquad \Bigg| \frac{Z_{h+1,\omega}}{Z_{h,\omega}} \Bigg| \leq e^{c |\lambda|}\;.
\end{equation}
Furthermore, $Z_{h,\omega}$, $v_{h,\omega}$ are real. The constants $Z_{h,\omega}$ are called the wavefunction renormalizations, and while the constants $v_{h,\omega}$ are called the effective Fermi velocities, on scale $h$.

\item[3.] The effective interaction on scale $h$ takes the form (suppressing the $(\leq h)$ label on the fermionic fields):
\begin{equation} \label{eq:Vh}
V^{(h)}(\psi; \phi,A) = \sum_{\Gamma=(\Gamma_{\psi},\Gamma_{\phi},\Gamma_{A})} \int_{\beta, L} d \underline{X}\,\psi_{\Gamma_{\psi}}(\underline{X}_{\psi}) \phi_{\Gamma_{\phi}}({\ul X_{\phi}}) A_{\Gamma_{A}}({\ul X_{A}})  W_{\Gamma}^{(h)}(\ul X)
\end{equation}
with translation-invariant kernels $W^{(h)}_{\Gamma}$ analytic for $|\lambda| < \bar\lambda$; $\mathcal W^{(h)}_{\beta,L}$ admits an analogous expansion, with the constraint $\Gamma_{\psi}=\varnothing$. The kernels satisfy the following estimate, for $P_{\sharp} = |\Gamma_{\sharp}|$:
\begin{equation}\label{eq:boundeffpot}
\int_{\beta,L} d\underline{X}\, \prod_{i< j} \|\ul x_{i}-\ul x_{j}\|_{\beta,L}^{n_{ij}} \Big| W_{\Gamma}^{(h)}(\ul X)\Big| \leq \beta L C_{\{n_{ij}\}} 2^{h (2 - (1/2)P_{\psi} - P_{A} - (3/2) P_{\phi}) - c|\lambda| P_{A})}\;.
\end{equation}
\end{itemize}
Our goal will be to determine a recursion relation for the kernels, as functions of the scale parameter $h$, that allows in particular to check the estimate (\ref{eq:boundeffpot}) on all scales, until $h_{\beta}$. To conclude the Section, we remark that all inductive assumptions are true on scale $h=0$, as a consequence of the cluster expansion that has been used to integrate the ultraviolet degrees of freedom.
\subsubsection{Localization and renormalization}\label{sec:loc}
In this section we define a localization operator, that allows to extract from the effective interaction on scale $h$ the relevant and marginal contributions; these are terms that a priori might expand under the renormalization group iteration, and that will be studied separately. Let us define:
\begin{equation} \label{eq:Vhpieces}
V^{(h)}_{n,m,l}(\psi; \phi, A) = \sum_{\substack{\Gamma\\|\Gamma_{\psi}|=n,|\Gamma_{\phi}|=m,|\Gamma_{A}|=l}} \int_{\beta, L} d \underline{X}\,\psi_{\Gamma_{\psi}}(\underline{X}_{\psi}) \phi_{\Gamma_{\phi}}({\ul X_{\phi}}) A_{\Gamma_{A}}({\ul X_{A}})  W_{\Gamma}^{(h)}(\ul X)\;.
\end{equation}
We will be interested in the following contributions to the effective interaction:
\begin{equation}
\begin{split}
V^{(h)}_{2,0,0}(\psi; \phi, A) &= \sum_{\omega}\int_{\beta,L} d\ul k'\, \hat\psi^{+}_{\ul k',\omega}\hat\psi^{-}_{\ul k',\omega} \widehat W^{(h)}_{2,0,0;\omega}(\ul k')\\
V^{(h)}_{4,0,0}(\psi; \phi, A) &= \sum_{\ul \omega}\int_{\beta,L} d\ul k'_{1}d\ul k'_{2}d\ul k'_{3}\, \hat\psi^{+}_{\ul k'_{1},\omega_{1}}\hat\psi^{-}_{\ul k'_{2},\omega_{2}}\hat\psi^{+}_{\ul k'_{3},\omega_{3}}\hat\psi^{-}_{\ul k'_{1}+\ul k'_{3}-\ul k'_{2},\omega_{4}} \widehat W^{(h)}_{4,0,0;\ul \omega}(\ul k'_{1},\ul k'_{2},\ul k'_{3})\\
V^{(h)}_{1,1,0}(\psi; \phi, A) &= \sum_{\omega,\rho} \int_{\beta,L} d\ul k' \Big(\hat \phi^{+}_{\ul k' + \underline{k}_{F}^{\omega},\rho} \hat \psi^{-}_{\ul k',\omega} \widehat W^{(h)}_{1,1,0;\rho,\omega}(\underline{k'}) +  \hat \psi^{+}_{\ul k',\omega} \hat \phi^{-}_{\ul k' + \underline{k}_{F}^{\omega},\rho} \widehat W^{(h)}_{1,1,0;\omega,\rho}(\underline{k'})\Big)\\
V^{(h)}_{2,0,1}(\psi; \phi, A) &= \sum_{\ul \omega ,\nu} \int_{\beta,L} d\ul k' d\ul p\,  \hat\psi^{+}_{\ul k+\ul p,\omega_{1}}\hat\psi^{-}_{\ul k,\omega_{2}}\hat A^{\nu}_{-\ul p} \widehat W^{(h)}_{2,0,1;\ul\omega,\nu}(\ul k',\ul p)\;,
\end{split}
\end{equation}
for suitable kernels. Let $\ul 0^{\pm}_{\beta} := (\pm \pi/\beta,0)$. We set:
\begin{equation}
F(\ul 0_{\beta}):= \frac{1}{2}\sum_{\sigma=\pm} F(\ul 0^{\sigma}_{\beta})\;.
\end{equation}
\begin{definition} \label{def:loc}
The localization operator $\mathfrak L$ is a linear operator that acts as follows on $V^{(h)}_{n,m,l}$:
\begin{equation}
\begin{split}
\mathfrak L V^{(h)}_{n,m,l}(\psi,\phi,A) &= 0 \qquad\text{if}\qquad (n,m,l)\neq (2,0,0),(4,0,0), (2,0,1) \\
\mathfrak L V^{(h)}_{n,m,l}(\psi,\phi,A) &= V^{(h)}_{n,m,l} \qquad\text{if}\qquad (n,m,l) = (1,1,0)\;,
\end{split}
\end{equation}
and for the remaining cases, its action is defined by the following identities for the momentum space kernels:
\begin{equation} \label{eq:kernloc}
\begin{split}
\mathfrak L \widehat W^{(h)}_{2,0,0;\omega}\,(\ul k') &:=\widehat W^{(h)}_{2,0,0;\omega}(\ul 0_{\beta}) + \ul k'\cdot \ul{\mathrm d}\widehat W^{(h)}_{2,0,0;\omega}(\ul 0_{\beta})\\
\mathfrak L \widehat W^{(h)}_{4,0,0;\ul\omega}(\underline{k}'_{1}, \underline{k}'_{2}, \underline{k}'_{3}) & := \widehat W^{(h)}_{4,0,0;\ul \omega}(\ul 0_{\beta},\ul 0_{\beta},\ul 0_{\beta})\\
\mathfrak L \widehat W^{(h)}_{2,0,1;\ul\omega,\nu}(\underline{k}',\underline{p}) &:= \widehat W^{(h)}_{2,0,1;\ul\omega,\nu}(\ul 0_{\beta},\ul k_{F}^{\omega_{1}}(\lambda) - \ul k_{F}^{\omega_{2}}(\lambda))\;,
\end{split}
\end{equation}
with $\text{d}_{0}, \text{d}_{1}$ the discrete derivatives. Observe that, by momentum conservation and by assumption (\ref{eq:elasca}):
\begin{equation} \label{eq:bos}
\widehat W^{(h)}_{4,0,0;\ul\omega}(\ul 0_{\beta},\ul 0_{\beta},\ul 0_{\beta}) =0 \qquad \text{unless $\omega_{1}=\omega_{2},\,  \omega_{3}=\omega_{4}$}\ \text{or}\ \omega_{1}=\omega_{4},\omega_{2}=\omega_{3}\;.
\end{equation}
Furthermore, one may check that $ \widehat W^{(h)}_{\omega_{1}\omega_{2}\omega_{3}\omega_{4}}(\ul k_{1},\ul k_{2},\ul k_{3}) = - \widehat W^{(h)}_{\omega_{3}\omega_{2}\omega_{1}\omega_{4}}(\ul k_{3},\ul k_{2},\ul k_{1}) $, which implies
\begin{equation} \label{eq:crossingsym}
\widehat W^{(h)}_{4,0,0;\omega\omega'\omega'\omega}(\ul 0_{\beta},\ul 0_{\beta},\ul 0_{\beta}) = - \widehat W^{(h)}_{4,0,0;\omega'\omega'\omega\omega}(\ul 0_{\beta},\ul 0_{\beta},\ul 0_{\beta})
\end{equation}
so that the case $\omega_{1}=\omega_{4}$, $\omega_{2}=\omega_{3}$ in (\ref{eq:bos}) can be reduced to the case $\omega_{1}=\omega_{2}$, $\omega_{3}=\omega_{4}$.
Next, keeping (\ref{eq:kernloc})-(\ref{eq:crossingsym}) in mind, let:
\begin{equation}
\begin{split}
z_{h,0,\omega}&:=-i\mathrm d_{0}\widehat W^{(h)}_{2,0,0;\omega}(\ul 0_{\beta})\qquad \qquad z_{h,1,\omega}:=\mathrm d_{1}\widehat W^{(h)}_{2,0,0;\omega}(\ul 0_{\beta})\\
\tilde \nu_{h,\omega} &:= 2^{-h}\widehat W^{(h)}_{2,0,0;\omega}(\ul 0_{\beta})\qquad\qquad \tilde \lambda_{h,\omega\omega'} := 2\widehat W^{(h)}_{4,0,0;\omega\omega\omega'\omega'}(\ul 0_{\beta}, \ul 0_{\beta}, \ul 0_{\beta})\;,
\end{split}
\end{equation}
so that
\begin{equation}
\begin{split}
\mathfrak L V^{(h)}_{2,0,0}(\psi; \phi, A) &= \sum_{\omega}\int_{\beta,L} d\ul k'\, [2^{h}\tilde \nu_{h,\omega} + i z_{h,0,\omega} k_{0} + z_{h,1,\omega} k'_{1}]\hat\psi^{+}_{\ul k',\omega}\hat\psi^{-}_{\ul k',\omega}\\
\mathfrak L V^{(h)}_{4,0,0}(\psi; \phi, A) &= \sum_{\omega,\omega'}\int_{\beta,L}d\ul p\, \tilde \lambda_{h,\omega\omega'}\hat n_{-\ul p,\omega}\hat n_{\ul p,\omega'}\\
\mathfrak L V^{(h)}_{2,0,1}(\psi; \phi, A) &= \sum_{\omega,\omega',\nu} \int_{\beta,L} d\ul p d\ul k'\,  \widehat W^{(h)}_{2,0,1;\omega\omega',\nu}(\ul 0_{\beta}, \ul k_{F}^{\omega}(\lambda) - \ul k_{F}^{\omega'}(\lambda)) \hat A^{\nu}_{-\ul p} \hat \psi^{+}_{\underline{k}' + \underline{p},\omega} \psi^{-}_{\underline{k}',\omega'}\;.
\end{split}
\end{equation}
One may check that $\mathfrak L$ is idempotent, \emph{i.e.} $\mathfrak L^{2}=\mathfrak L$; defining the \emph{renormalization} operator $\mathfrak R := 1 - \mathfrak L$, one immediately has that $\mathfrak R^{2} =\mathfrak R$ as well.
\end{definition}
\begin{remark}
The constants $\tilde\nu_{h,\omega}, \tilde\lambda_{h,\omega\omega'}, z_{h,0,\omega}, z_{h,1,\omega}$ are real numbers, since the functional integral defining the partition function $\mathcal Z_{\beta,L,N_{0}}$ in (\ref{eq:partitionfunction}) is invariant under the transformation
\begin{equation}
\Psi^{-}\to \Psi^{+}\qquad \Psi^{+}\to -\Psi^{-}\qquad c\to \bar c\qquad k_{0}\to -k_{0}\;,
\end{equation}
where $c$ is a generic constant entering in the action. It should be noted that this symmetry can be implemented for any finite $N_{0}$, since the cut-off function $\chi_{N_{0}}$ is even in $k_{0}$.
\end{remark}
Next, we absorb the $\ul k$-dependent part of $\mathfrak L V^{(h)}_{2,0,0}$ in the covariance of the Gaussian measure $P_{h}$, obtaining:
\begin{equation} \label{eq:tildeV2}
P_{h}[d \psi^{(\leq h)}] e^{V^{(h)}(\sqrt{Z_{h}}\psi^{(\leq h)};\phi,A)} = e^{\beta L t_{h}}\widetilde{P}_{h}[d\psi^{(\leq h)}]e^{\widetilde{V}^{(h)}(\sqrt{Z_{h}}\psi^{(\leq h)};\phi,A)}\;;
\end{equation}
here $t_{h}$ is a normalization constant, due to the change of covariance of the Grassmann integration. The new covariance is given by
\begin{equation}
\tilde{g}^{(\leq h)}_{\omega}(\underline{k}') = \frac{1}{\widetilde{Z}_{h-1,\omega}(\underline{k}')} \frac{\chi_{h,\omega}(\underline{k}')}{ik_{0} + \widetilde{v}_{h-1,\omega}(\underline{k}') k'}[1 + \tilde r_{h,\omega}(\underline{k}')]\;,
\end{equation}
for a suitable new error term $\tilde r_{h,\omega}(\underline{k}')$, and where:
\begin{equation}\label{eq:Zvh-1}
\begin{split}
\widetilde{Z}_{h-1,\omega}(\underline{k}') &= Z_{h,\omega}[1+ z_{h,0,\omega} \chi_{h,\omega}(\underline{k}')]\\
\widetilde{Z}_{h-1,\omega}(\underline{k}') \widetilde{v}_{h-1,\omega}(\underline{k}') &= Z_{h,\omega}[v_{h,\omega} + z_{h,1,\omega} \chi_{h,\omega}(\underline{k}')]\;.
\end{split}
\end{equation}
The effective potential $\widetilde V^{(h)}$ has the form
\begin{equation} \label{eq:scalehstructuretilde}
\widetilde V^{(h)}(\psi; \phi,A) = \mathfrak{L} \widetilde V^{(h)} (\psi) + B^{(h)}(\psi;\phi) + C^{(h)}(\psi; A) + \mathfrak{R} V^{(h)} (\psi; \phi,A)\;,
\end{equation}
with:
\begin{equation}\nonumber
\begin{split}
\mathfrak{L} \widetilde{V}^{(h)}(\psi) &= \sum_{\omega}2^{h}\tilde \nu_{h,\omega}\int_{\beta,L}d\ul k'\,  \hat\psi^{+}_{\ul k',\omega}\hat\psi^{-}_{\ul k',\omega} + \mathfrak LV^{(h)}_{4,0,0}(\psi; 0,0)\\
B^{(h)}(\psi;\phi) &= \sum_{\omega,\rho} \int_{\beta,L} d\ul k' \Big(\hat \phi^{+}_{\ul k' + \underline{k}_{F}^{\omega}(\lambda),\rho} \hat \psi^{-}_{\ul k',\omega} \widehat W^{(h)}_{1,1,0;\rho,\omega}(\underline{k'}) +  \hat \psi^{+}_{\ul k',\omega} \hat \phi^{-}_{\ul k' + \underline{k}_{F}^{\omega}(\lambda),\rho} \widehat W^{(h)}_{1,1,0;\omega,\rho}(\underline{k'})\Big)\\
C^{(h)}(\psi; A) &=  \sum_{\omega,\omega',\nu} \int_{\beta,L} d\ul k' d\ul p\,  \hat\psi^{+}_{\ul k'+\ul p,\omega}\hat\psi^{-}_{\ul k',\omega'}\hat A^{\nu}_{-\ul p} \widehat W^{(h)}_{2,0,1;\omega\omega',\nu}(\ul 0_{\beta},\ul k_{F}^{\omega}(\lambda) - \ul k_{F}^{\omega'}(\lambda))\;.
\end{split}
\end{equation}
Finally, we set $Z_{h-1,\omega}:=\widetilde Z_{h-1,\omega}(\ul 0_{\beta})$ and we rescale the fields:
\begin{equation}
e^{\beta L t_{h}}\widetilde{P}_{h}[d\psi^{(\leq h)}]e^{\widetilde{V}^{(h)}(\sqrt{Z_{h}}\psi^{(\leq h)};\phi,A)}  = e^{\beta L t_{h}}\widetilde{P}_{h}[d\psi^{(\leq h)}]e^{\widehat{V}^{(h)}(\sqrt{Z_{h-1}}\psi^{(\leq h)};\phi,A)}\;,
\end{equation}
where $\widehat V^{(h)}$ can be represented as follows:
\begin{equation} \label{eq:scalehstructurehat}
\widehat V^{(h)}(\psi; \phi,A) = \mathfrak L\widehat V^{(h)} (\psi) + \widehat B^{(h)}(\psi;\phi) + \widehat C^{(h)}(\psi; A) + \mathfrak R \widehat V^{(h)} (\psi; \phi,A)\;,
\end{equation}
where
\begin{equation}
\begin{split}
\mathfrak L\widehat V^{(h)} (\psi) &= \sum_{\omega}2^{h} \nu_{h,\omega}\int_{\beta,L}d\ul k'\,  \hat\psi^{+}_{\ul k',\omega}\hat\psi^{-}_{\ul k',\omega} + \sum_{\omega,\omega'} \lambda_{h,\omega\omega'} \int_{\beta,L}d\ul p\, \hat n_{-\ul p,\omega} \hat n_{\ul p,\omega'}\\
\widehat B^{(h)}(\psi;\phi) &= \sum_{\omega} \int_{\beta,L} d\ul k' \Big[ (\hat\phi^{+}_{\ul k' + \ul k_{F}^{\omega}(\lambda)}, \mathsf Q^{+}_{h,\omega}(\ul k'))\hat\psi^{-}_{\ul k', \omega} + \hat\psi^{+}_{\ul k', \omega} (\mathsf Q^{-}_{h,\omega}(\ul k'), \hat\phi^{-}_{\ul k' + \ul k_{F}^{\omega}(\lambda)})\Big]\\
\widehat C^{(h)}(\psi; A) &= \sum_{\omega,\omega',\nu} \int_{\beta,L} d\ul p d\ul k'\, \hat A^{\nu}_{-\ul p} \mathsf{Z}^{\nu}_{h,\omega\omega'} \hat\psi^{+}_{\ul k'+\ul p,\omega}\hat\psi^{-}_{\ul k',\omega'}\;,
\end{split}
\end{equation}
with effective couplings
\begin{equation}\label{eq:scalehcouplings}
\nu_{h,\omega} := \frac{Z_{h,\omega}}{Z_{h-1,\omega}} \tilde \nu_{h,\omega}\qquad \qquad \lambda_{h,\omega\omega'} := \frac{Z_{h,\omega}Z_{h,\omega'}}{Z_{h-1,\omega}Z_{h-1,\omega'}}\tilde \lambda_{h,\omega\omega'}\\
\end{equation}
and
\begin{equation}
\begin{split}
\mathsf Q^{+}_{h,\omega,\rho}(\ul k') &= \sqrt{\frac{Z_{h,\omega}}{Z_{h-1,\omega}}} \widehat W^{(h)}_{1,1,0;\rho,\omega}(\ul k')\;,\qquad \mathsf Q^{-}_{h,\omega,\rho}(\ul k') = \sqrt{\frac{Z_{h,\omega}}{Z_{h-1,\omega}}} \widehat W^{(h)}_{1,1,0;\omega,\rho}(\ul k') \\
\mathsf{Z}^{\nu}_{h,\omega\omega'} &= \frac{\sqrt{Z_{h,\omega} Z_{h,\omega'}} }{\sqrt{Z_{h-1,\omega} Z_{h-1,\omega'}} } \widehat W^{(h)}_{2,0,1;\omega\omega',\nu}(\ul 0_{\beta},\ul k_{F}^{\omega}(\lambda) - \ul k_{F}^{\omega'}(\lambda))\;.
\end{split}
\end{equation}
In the following, we shall use the notation:
\begin{equation}
\mathsf{Z}^{\nu}_{h,\omega} \equiv \mathsf{Z}^{\nu}_{h,\omega\omega}\;.
\end{equation}
\begin{remark}
Notice that, since $\mathfrak L \widehat V^{(h)}_{4,0,0}$ can be written in real space as
\begin{equation}
\sum_{\omega,\omega'} \lambda_{h,\omega\omega'} \int_{\beta,L} d\ul x\, n_{\ul x,\omega} n_{\ul x,\omega'}\;,
\end{equation}
we can suppose without loss of generality that:
\begin{equation}
\lambda_{h,\omega\omega} = 0\;,\qquad \lambda_{h,\omega\omega'} = \lambda_{h,\omega'\omega}\;.
\end{equation}
\end{remark}
\subsubsection{Single-scale integration}
We now perform the single-scale integration that will yield $\mathcal W^{(h-1)}, V^{(h-1)}$. In order to do so, we split the propagator as:
\begin{equation}\label{eq:propsh}
\begin{split}
\tilde g^{(\leq h)}_{\omega} &= \hat g^{(\leq h-1)}_{\omega} + g^{(h)}_{\omega}\\
g^{(h)}_{\omega}(\ul k') &= \frac{1}{\widetilde{Z}_{h-1,\omega}(\underline{k}')} \frac{f_{h,\omega}(\underline{k}')}{ik_{0} + \widetilde{v}_{h-1,\omega}(\underline{k}') k'}[1 + r_{h-1,\omega}(\underline{k}')]\\
\hat g^{(\leq h-1)}_{\omega}(\ul k') &= \frac{1}{Z_{h-1,\omega}} \frac{\chi_{h-1,\omega}(\underline{k}')}{ik_{0} + v_{h-1,\omega} k'}[1 + r_{h-1,\omega}(\underline{k}')]\;,
\end{split}
\end{equation}
with $f_{h,\omega}(\underline{k}') = \chi_{h,\omega}(\underline{k}')-\chi_{h-1,\omega}(\ul k')$. In the last identity, we used the fact that, for $\underline{k}'$ in the support of $\chi_{h-1,\omega}(\ul k')$:
\begin{equation}
\widetilde{Z}_{h-1,\omega}(\underline{k}') = \widetilde{Z}_{h-1,\omega}(\underline{0}_{\beta}) \equiv Z_{h-1,\omega}\;,\qquad \widetilde{v}_{h-1,\omega}(\underline{k}') = \widetilde{v}_{h-1,\omega}(\underline{0}_{\beta}) \equiv v_{h-1,\omega}\;.
\end{equation}
recall equation (\ref{eq:Zvh-1}). Denoting with $\widetilde P_{h}(d\psi^{(h)})$ the Grassmann Gaussian integration associated with $g^{(h)}$, we use the addition principle to write:
\begin{equation}
\begin{split}
\int &\widetilde P_{h}[d\psi^{(\leq h)}] e^{\widehat V^{(h)}\big(\sqrt{Z_{h-1}}\psi^{(\leq h)};\phi,A\big)}\\
&\quad=\int P_{h-1}[d\psi^{(\leq h-1)}] \int \widetilde P_{h}[d\psi^{(h)}] e^{\widehat V^{(h)}\big(\sqrt{Z_{h-1}}(\psi^{(\leq h-1)}+\psi^{(h)});\phi,A\big)}\;,
\end{split}
\end{equation}
where $P_{h-1}[d\psi^{(\leq h-1)}]$ is the Grassmann Gaussian integration associated with $\hat g^{(\leq h-1)}_{\omega}(\ul k')$. This identity allows to define the effective potential $V^{(h-1)}$ and the effective generating functional $\mathcal W^{(h-1)}$ as:
\begin{equation} \label{eq:RecursionEffPot}
\begin{split}
&\mathcal W^{(h-1)}_{\beta,L}(\phi,A) + V^{(h-1)}\Big(\sqrt{Z_{h-1}}\psi^{(\leq h-1)};\phi,A\Big)\\
&\quad := \mathcal W^{(h)}_{\beta,L}(\phi,A) + \beta L t_{h}+\log\int \widetilde P_{h}[d\psi^{(h)}] e^{\widehat V^{(h)}\big(\sqrt{Z_{h-1}}(\psi^{(\leq h-1)}+\psi^{(h)});\phi,A\big)}\;.
\end{split}
\end{equation}
As reviewed in Section \ref{sec:GNtrees}, equation (\ref{eq:RecursionEffPot}) is the starting point for the Gallavotti-Nicol\`o tree expansion for the effective potential, which will allow to control the flow of the effective potential. A key ingredient for this graphical representation are the bounds for the running coupling constants, whose flow is discussed in the next Section.
\subsection{Flow of the effective couplings} \label{sec:flow}
The above iteration gives rise to a discrete dynamical system for the running coupling parameters, which has the following form:
\begin{eqnarray}\label{eq:betaflow0}
\frac{Z_{h-1,\omega}}{Z_{h,\omega}} &=& 1 + z_{h,0,\omega} =: 1 + \beta^{z}_{h,\omega} \nonumber\\
v_{h-1,\omega} &=& \frac{Z_{h,\omega}}{Z_{h-1,\omega}} (v_{h,\omega} + z_{h,1,\omega}) =: v_{h} + \beta^{v}_{h,\omega}\nonumber\\
2^{h} \nu_{h,\omega} &=& \frac{Z_{h,\omega}}{Z_{h-1,\omega}} \widehat{W}^{(h)}_{2;\omega}(\underline{0}_{\beta}) =: 2^{h+1} \nu_{h+1,\omega} + 2^{h+1}\beta^{\nu}_{h+1,\omega}\nonumber\\
\lambda_{h,\omega\omega'} &=& \frac{Z_{h,\omega}Z_{h,\omega'}}{ Z_{h-1,\omega}Z_{h-1,\omega'}} \widehat{W}^{(h)}_{4;\omega\omega\omega'\omega'}(\underline{0}_{\beta},\underline{0}_{\beta},\underline{0}) =: \lambda_{h+1,\omega\omega'} + \beta^{\lambda}_{h+1,\omega\omega'}\nonumber\\
\mathsf Z^{\nu}_{h,\omega\omega'} &=& \mathsf Z^{\nu}_{h+1,\omega\omega'} + \beta^{\mathsf z,\nu}_{h+1,\omega\omega'}\;.
\end{eqnarray}
with initial data given by
\begin{equation}
\begin{split}
Z_{0,\omega} &= 1\;,\qquad v_{0,\omega} = v_{\omega} + O(\lambda)\;,\qquad \nu_{0,\omega} = \nu_{\omega} + O(\lambda)\;,\\ \lambda_{0,\omega\omega'} &= -\lambda + O(\lambda^{2})\;,\qquad |\mathsf Z^{\nu}_{0,\omega\omega'}|\leq C
\end{split}
\end{equation}
where the higher-order corrections in $\lambda$ are due to the integration of the ultraviolet degrees of freedom. The vector field:
\begin{eqnarray}
\beta_{h} &=& \beta_{h}\big( \{ Z_{k}, v_{k}, \nu_{k}, \lambda_{k}, \mathsf Z^{\nu}_{k} \}_{k=h+1}^{0} \big)  \nonumber\\
&=& ( \beta^{z}_{h,\omega}, \beta^{v}_{h,\omega}, \beta^{\nu}_{h,\omega}, \beta^{\lambda}_{h,\ul \omega}, \beta^{\mathsf z,\nu}_{h,\omega})_{\ul \omega}
\end{eqnarray}
defined by (\ref{eq:betaflow0}) is called the {\it beta function} of the theory, and it drives the flow of the couplings. Precise estimates on this object are needed in order to safely iterate the equations in (\ref{eq:betaflow0}) in $h$. The control of the running coupling constants is the content of the following proposition. The result has been obtained for lattice fermions with two Fermi points in \cite{BMWI, BMchiral}, and it has been generalized to multiple Fermi points in \cite{MPmulti}. We state the result, without discussing its proof.
\begin{proposition}[Bounds for the running coupling constants.]\label{prp:flow} There exists $\bar\lambda>0$ such that for $|\lambda| < \bar \lambda$ the following is true. There exists a choice of real parameters $\{\nu_{\omega}\}$, with $\nu_{\omega}\equiv \nu_{\omega}(\lambda)$, $|\nu_{\omega}| \leq C|\lambda|$, such that:
\begin{equation} \label{eq:BdsEffCoup}
\begin{split}
\left|\frac{Z_{h-1,\omega}}{Z_{h,\omega}}\right| &\leq e^{c|\lambda|}\;,\qquad |v_{h,\omega}-v_{\omega}|\leq C|\lambda|\;,\qquad |\lambda_{h,\omega\omega'}-\lambda|\leq C|\lambda|^{2}\\
|\mathsf Z^{\nu}_{h,\omega}| &\leq C|\mathsf Z^{\nu}_{1,\omega}|\;,\qquad \left|\frac{ \mathsf Z^{\nu}_{h-1,\omega\omega'}}{\mathsf Z^{\nu}_{h,\omega\omega'}}\right| \leq e^{c|\lambda|}
\end{split}
\end{equation}
for some positive constants $c,C$ independent of $\beta, L$, for all $\omega,\omega'=1,\dots, N_{f}$, $\nu=0,1$, and for all $h\geq h_{\beta}:=\lfloor\log(\pi/\beta)\rfloor$.
\end{proposition}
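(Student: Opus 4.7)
The plan is to prove all bounds in \eqref{eq:BdsEffCoup} simultaneously, by induction on the scale $h$, descending from $h=0$ to $h=h_\beta$. At each step, assume the bounds for $h'\geq h+1$ and estimate each component of the beta function in \eqref{eq:betaflow0} using the Gallavotti--Nicolò tree expansion of Section~\ref{sec:GNtrees}. The inductive hypothesis, together with the standard dimensional bound for GN trees, yields the naive estimates $|\beta_h^{z}|,|\beta_h^{v}|,|\beta_h^{\lambda}|,|\beta_h^{\mathsf z,\nu}|\leq C|\lambda|^{2}$ and $|\beta_h^{\nu}|\leq C|\lambda|(|\nu_{h+1}|+|\lambda|)$, uniformly in $h$. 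These bounds alone are not sufficient: summed over scales, the marginal components would allow logarithmic divergences, while for the relevant coupling $\nu$ the flow would grow as $2^{-h}$ unless the initial condition is fine-tuned.

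For the relevant coupling $\nu$, the plan is to exploit the freedom in the choice of the counterterms $\{\nu_\omega\}$ introduced in \eqref{eq:resum}. Introduce the Banach space $\mathcal B=\{\underline{\nu}=(\nu_{h,\omega})_{h_\beta\leq h\leq 0,\,\omega}\,:\,\|\underline\nu\|_\infty\leq C_{0}|\lambda|\}$, with $C_{0}$ to be chosen, and define $\mathcal T:\mathcal B\to\mathcal B$ by integrating the $\nu$-recursion backward from the terminal condition $\nu_{h_\beta-1,\omega}=0$, with $\beta_h^{\nu}$ evaluated via the tree expansion at $\underline{\nu}$ (and at the other effective couplings, which are treated inductively). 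The naive bound above, combined with the short-memory factor $2^{\theta(h-h')}$ from each irrelevant endpoint on scale $h'>h$, makes $\mathcal T$ a contraction in sup norm for $|\lambda|$ small enough; the unique fixed point identifies $\nu_{0,\omega}(\lambda)$, and tracing back through the UV integration determines the function $\nu_\omega(\lambda)$ with $|\nu_\omega|\leq C|\lambda|$.

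For the marginal components, the key improvement is the \emph{vanishing of the beta function of the reference model}. I would split $\beta_h^{\sharp}=\beta_h^{\sharp,\mathrm{ref}}(\{v_{k},\lambda_{k},Z_{k},\mathsf Z^{\nu}_{k}\}_{k\geq h})+r_h^{\sharp}$, where $\beta_h^{\sharp,\mathrm{ref}}$ is the analogous beta function of the multichannel chiral Luttinger reference model of Section~\ref{Sec:RefMod}, evaluated at the current values of the running couplings, and $r_h^{\sharp}$ collects irrelevant contributions from the lattice model (momentum dependence of the Bloch functions, curvature of the dispersion $e_\omega$, non-linearity of the propagator near each $k_F^\omega$). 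The remainder satisfies a short-memory bound $|r_h^{\sharp}|\leq C|\lambda|^{2}\,2^{\theta h}$ for some $\theta>0$, and is therefore summable in $h$. By a combination of emergent chiral Ward identities and the Schwinger--Dyson equation of the reference model one shows $\beta_h^{\lambda,\mathrm{ref}}=\beta_h^{v,\mathrm{ref}}=0$, while $\beta_h^{z,\mathrm{ref}},\beta_h^{\mathsf z,\nu,\mathrm{ref}}$ are real and uniformly $O(|\lambda|)$; telescoping then produces the bounds on $\lambda_h,v_h,Z_{h-1}/Z_h,\mathsf Z^{\nu}_{h-1}/\mathsf Z^{\nu}_h$, the nonvanishing contributions to the $Z$'s being the source of the anomalous scaling exponents. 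The diagonal estimate $|\mathsf Z^{\nu}_{h,\omega}|\leq C|\mathsf Z^{\nu}_{1,\omega}|$ then follows by pairing the ratio bound above with an additional chiral Ward identity that ties $\mathsf Z^\nu_{h,\omega}$ to the wavefunction renormalization $Z_{h,\omega}$ up to scale-independent factors.

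The main obstacle lies in establishing $\beta_h^{\lambda,\mathrm{ref}}=\beta_h^{v,\mathrm{ref}}=0$ in the \emph{multichannel} setting: one chiral Ward identity has to be written at each Fermi point, with anomalous coefficients depending on the matrix $\Lambda$ of quartic couplings, and the Schwinger--Dyson equation then closes the system but couples all channels through the full matrix structure of $\mathsf Z^{\nu}$ and $\Lambda$. Following \cite{MPmulti}, this is accomplished by formulating the reference model on a separate momentum-space regularization (chosen so that chiral symmetry is realized exactly) and extracting the chiral anomaly with the desired matrix structure. A secondary difficulty is the uniform control of the short-memory remainder $r_h^{\sharp}$ along the whole flow, which requires preserving the $2^{\theta h}$ dimensional gain from each irrelevant operator; this in turn relies on the smoothness of the cutoff functions $\chi_{h,\omega}$ around the renormalized Fermi points $k_F^\omega(\lambda)$ and on the decay of the Bloch profiles $\check\xi^\omega_\rho$, cf. \eqref{eq:checkxi}.
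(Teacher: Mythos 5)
The paper states this result without proof, referring to \cite{BMWI, BMchiral, MPmulti}, so there is no in-paper argument to compare against. Your sketch does reconstruct the strategy of those references faithfully: a backward-iteration/fixed-point argument selects the counterterm $\nu_\omega(\lambda)$ so that the relevant coupling stays $O(\lambda)$; the marginal couplings are controlled by comparing the lattice beta function with that of the reference model and exploiting its asymptotic vanishing (a consequence of chiral Ward identities plus the Schwinger--Dyson equation), with the lattice-vs-reference deviation summable by short memory; and the boundedness of the diagonal vertex renormalization is a separate Ward-identity input, distinct from the ratio bound. This is the right architecture.

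Two points of imprecision are worth flagging. First, you write $\beta_h^{\lambda,\mathrm{ref}}=\beta_h^{v,\mathrm{ref}}=0$, but the reference model with a compactly supported momentum cutoff does \emph{not} have an identically vanishing quartic/velocity beta function; what the Ward identity and Schwinger--Dyson analysis give is asymptotic vanishing, $|\beta_h^{\lambda,\mathrm{ref}}|\leq C\lambda^2 2^{\theta h}$, and this cutoff-dependent remainder must be absorbed, alongside your $r_h^\sharp$, into the short-memory part that is telescoped. The decomposition you propose assigns all the short-memory corrections to lattice effects, which is not quite how the argument closes. Second, the phrase that the chiral Ward identity "ties $\mathsf Z^\nu_{h,\omega}$ to the wavefunction renormalization $Z_{h,\omega}$ up to scale-independent factors" is misleading: $Z_{h,\omega}\sim 2^{\xi_\omega h}$ diverges anomalously, so if $\mathsf Z^\nu_{h,\omega}$ tracked it the diagonal vertex would diverge as well, contradicting $|\mathsf Z^\nu_{h,\omega}|\leq C|\mathsf Z^\nu_{1,\omega}|$. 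What the Ward identity actually establishes is that the anomalous exponents of the vertex and of $Z_{h,\omega}$ cancel in the diagonal channel, so that the driving term in the flow of $\mathsf Z^\nu_{h,\omega}$ has zero leading order and the diagonal vertex converges; the off-diagonal vertex $\mathsf Z^\nu_{h,\omega\omega'}$, $\omega\neq\omega'$, is subject only to the weaker ratio bound and may diverge with its own anomalous exponent, as the paper remarks. Neither imprecision invalidates the plan, but both need to be stated correctly for the telescoping argument to close.
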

\begin{remark}
It should be noted that the independence of these bounds from $\beta$ allows to control the dynamical system also in the zero temperature limit $h_{\beta}\to-\infty$.
\end{remark}
The proof of this result uses the structure of the scaling limit of the system, which describes relativistic, chiral $1+1$ dimensional massless fermions. The model describing this system, called the reference model, will be reviewed later on. The boundedness of the flow of the running coupling constants is a deep consequence of the vanishing of the beta function for the reference model, which can be viewed as an instance of the integrability of the scaling limit.

Observe that the wave function renormalization $Z_{h,\omega}$, has a divergent flow. We have:
\begin{equation}
Z_{h,\omega} \sim 2^{\xi_{\omega} h}\;,\qquad \text{with $\xi_{\omega} >0$, $\xi_{\omega} = O(\lambda^{2})$.}
\end{equation}
As it is evident from the form of the single-scale propagator (\ref{eq:propsh}), the divergence of the wave function renormalization changes the scaling of the two-point function, and introduces an interaction dependent anomalous exponent. This is the reason for the rescaling of the fields in the multiscale analysis. Moreover, the vertex renormalizations $\mathsf Z^{\nu}_{h,\omega\omega'}$ for $\omega \neq \omega'$ might also diverge with an anomalous power law. The fact that $\mathsf Z^{\nu}_{h,\omega}$ stays bounded is non-trivial, and follows from the emergent (anomalous) chiral gauge symmetry of the model \cite{BMdensity}.

Finally, proceeding as in {\it e.g.} Section 3.4 of \cite{GiuM}, it is possible to prove that the external $\phi^{\pm}$ source term kernels $\mathsf Q^{\pm}_{h,\omega}$ satisfy the recursion relation:
\begin{eqnarray} \label{eq:Q+}
\mathsf Q^{+}_{h,\omega}(\underline{k}') &=& \mathsf Q^{+}_{h+1,\omega}(\underline{k}') - \widehat{W}_{2,0,0; \omega\omega}^{(h)}(\underline{k}') \sum_{h'=h+1}^{1} \hat g^{(h')}_{\omega}(\underline{k}') \mathsf Q^{+}_{h',\omega}(\underline{k}')\nonumber\\
\mathsf Q^{-}_{h,\omega}(\underline{k}') &=& \overline{\mathsf Q^{+}_{h,\omega}(\underline{k}')}\;.
\end{eqnarray}
In particular, for $\ul k'\in\mathrm{supp}\, \hat g^{(h)}_{\omega}$, equation (\ref{eq:Q+}) reduces to
\begin{equation} \label{eq:Qh}
\mathsf Q^{+}_{h,\omega}(\ul k') = \mathsf Q^{+}_{1,\omega}(\ul k')[1-\widehat W^{(h)}_{2,0,0;\omega\omega}(\ul k')\hat g^{(h)}_{\omega}(\ul k')]\;,
\end{equation}
which, thanks to the estimates (\ref{eq:BdsEffCoup}), and the bounds for $\widehat W^{(h)}_{2,0,0;\omega\omega}$ which will be derived in Section \ref{sec:GNtrees}, implies that $\mathsf Q^{+}_{h,\omega}=\mathsf Q^{+}_{1,\omega}(1+O(\lambda)) = \xi^{\omega}(1 + O(\lambda))$.
\subsection{Tree expansion for kernels} \label{sec:GNtrees}
In this Section, we recall the \emph{Gallavotti-Nicolò} (GN) {\it tree} representation of the kernels $W^{(h)}_{\Gamma}$ contributing to the effective potential $V^{(h)}$ and of the generating functional $\mathcal W^{(h)}$. Although this is by now a standard construction (see \cite{GM} for a review), we prefer to recall some of its basic elements because they will be useful in deriving the improved estimates in Section \ref{sec:dimest}. We will discuss how to obtain a tree expansion for the effective potential in terms of trees that extend up to scale $0$, which takes as input the effective potential generated after the integration of the ultraviolet degrees of freedom; this is the infrared construction, and it is the technically most demanding one. The effective potential on scale $0$ is in turn the outcome of an ultraviolet multiscale analysis, which is much easier to carry out and will be recalled in a second moment.

Employing the fact that the logarithm of a Grassmann Gaussian expectation may be written in terms of cumulants, we may transform equation (\ref{eq:RecursionEffPot}), upon replacing $h-1$ with $h$, by as
\begin{equation} \label{eq:ET}
\begin{split}
\mathcal{W}^{( h)}_{\beta,L}(\phi,A) &+ V^{(h)}(\sqrt{Z_{h}}\psi^{(\leq h)}; \phi,A)\\
&=\mathcal{W}^{(h+1)}_{\beta,L}(\phi,A) + \beta L t_{h} + \sum_{n\geq 0} \frac{(-1)^{n}}{n!}\mathbb{E}_{h+1} [\underbrace{\widehat{V}^{(h+1)}\,; \cdots \,; \widehat{V}^{(h+1)}}_{\text{$n$ times}}]
\end{split}
\end{equation}
where $\mathbb{E}_{h}[\,\cdot\,;\cdots;\,\cdot\,]$ denotes the {\it truncated expectation} (or cumulant) with respect to the single-scale Gaussian Grassmann measure $\widetilde{P}_{h}$. Equation (\ref{eq:ET}) can be iterated over all scales $h+1,\, h+2,\, \ldots$, until $h=0$ and the result can be graphically expressed as a sum over GN trees.

Let us first introduce some definitions.
\begin{enumerate}
\item An {\it unlabelled} tree is a connected, acyclic\footnote{That is, it has no loops.} finite graph connecting a point $r$, called the root, with an ordered set of $n\geq 1$ points, the endpoints of the tree, so that $r$ is not a branching point. The number $n$ will be called the {\it order} of the unlabelled tree; in particular, the root $r$ is not counted as a vertex. We will denote by $V(\tau)$ the set of vertices of the unlabelled tree $\tau$.
The unlabelled trees have a natural partial ordering $\prec$, that is: orienting the tree from the root, we set $v\prec v'$ if there is an oriented path from $r$ to $v'$ that passes through $v$.

We denote by $V_{f}(\tau)\subset V(\tau)$ the set of endpoints, also called \emph{leaves} or \emph{final vertices}, of $\tau$. For each vertex $v$, we will denote by $s_{v}$ its branching number: in particular, vertices with $s>1$ will be called the {\it non-trivial vertices}. Notice that $s_{v}=0$ if $v\in V_{f}(\tau)$.

Two unlabelled trees are identified if they can be superposed by a continuous deformation, which is compatible with the partial ordering of the non-trivial vertices. It is then easy to see that the number of unlabelled trees with $n$ endpoints and only non-trivial vertices is bounded by $4^{n}$.

We shall also consider {\it labelled trees} (or just trees, in the following), see Fig. \ref{fig:GN}; they are defined by associating suitable labels to the vertices of unlabelled trees, as explained in the following points.
\item Given a labelled tree $\tau$, to each vertex $v$ of the labelled tree $\tau$ we associate a scale label $h_{v}$ in $\{h+1, \ldots , 0\}$, as in Fig. \ref{fig:GN}. The scale of the root $r$ is set to be $h$. Note that if $v_{1}\prec v_{2}$, then $h_{v_{1}} < h_{v_{2}}$.
\item There is only one vertex immediately following the root, which will be denoted by $v_{0}$ and cannot be an endpoint. Its scale is $h_{v_{0}} = h+1$.
\item With each endpoint $v$ on scale $h_{v} = 0$ we associate one of the monomials contributing to $\widehat{V}^{(0)}(\sqrt{Z_{-1}}\psi^{(\leq 0)};\phi,A)$. Instead, with each leaf $v$ on scale $h_{v}<0$ we associate one of the monomials contributing to $\mathfrak{L} \widehat{V}^{(h_{v})}(\sqrt{Z_{h_{v} -1}} \psi^{(\leq h_{v})};\phi,A)$. In particular, a leaf $v\in V_{f}(\tau)$ will be called:
\begin{itemize}
\item[i.] of type $\lambda$ if it is associated with $\mathcal L \widehat V^{(h_{v})}_{4,0,0}$;
\item[ii.] of type $\nu$ if it is associated with a $\nu_{h_{v}}$ coupling constant;
\item[iii.] of type $\phi$ if it is associated with a $\widehat B^{(h_{v})}$ term;
\item[iv.] of type $A$ if it is associated with a $\widehat C^{(h_{v})}$ term.
\end{itemize}
\item We introduce a {\it field label} $f$ to distinguish the field and external source variables appearing in the monomials associated with the endpoints. The set of field labels associated with the endpoint $v$ will be called $I_{v} = \{ f_{1}, \ldots, f_{|I_{v}|} \}$ and it will admit a partitioning $I_{v}=I^{\psi}_{v}\cup I^{\phi}_{v}\cup I^{A}_{v}$.

If $f$ labels a field $\psi$, we denote by $\underline{x}(f)$ its space-time position, by $\omega(f)$ its quasi-particle label, and by $\epsilon(f)$ its particle-hole label. We have an analogous labelling for the external sources $\phi,A$ associated to the endpoint $v$, so that the monomial associated to it is given by
\begin{equation}
\prod_{f\in I^{\psi}_{v}} \psi^{(\leq h_{v}),\epsilon(f)}_{\ul x(f),\omega(f)}\prod_{f\in I^{\phi}_{v}}\phi^{\epsilon(f)}_{\ul x(f),\rho(f)} \prod_{f\in I^{A}_{v}} A^{\nu(f)}_{\ul x(f)}\;.
\end{equation}
If $v$ is not an endpoint, we shall call $I_{v}$ the set of field labels associated with the endpoints following the vertex $v$, namely
\begin{equation}
I_{v} := \coprod_{\substack{v'\in V_{f}(\tau)\\ v\prec v'}} I_{v'}\;.
\end{equation}
\end{enumerate}

	\begin{figure}[H]
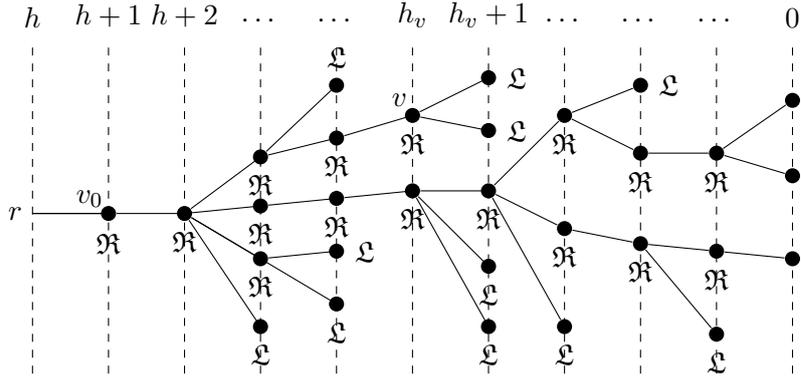

		\begin{center}
			\label{fig:GN}
			\begin{tabular}{rcl}
				&\tikz[baseline=-3pt]{ 
				\coordinate (v001) at (-2,0) node[left] at (-2,0){$r$};
				\draw (v001) -- ++ (1,0);
					\draw(v001)--(-1,0)node[vertex](v00){};
					\draw(-0.8,0.2)node[label=left:{$v_0$}]{};
					\draw(-1,0)node[label=below:{$\mathfrak{R}$}]{};
					\draw(v00)--(0,0)node[vertex,label=below:{$\mathfrak{R}$}](v0){}--(1,0.1)node[vertex,label=below:{$\mathfrak{R}$}](v01){};
					\draw(v01)--(2,0.2)node[vertex,label=below:{$\mathfrak{R}$}](v02){};
					\draw(v02)--(3,0.3)node[vertex,label=below:{$\mathfrak{R}$}](v1){};
					\draw(v1)--(4,0.3)node[vertex,label=below:{$\mathfrak{R}$}](S){};
					\draw(5,1.3)node[vertex,label=below:{$\mathfrak{R}$}](S1){};
					\draw(5,-0.2)node[vertex,label=below:{$\mathfrak{R}$}](S2){};
					\draw(5,-1.5)node[vertex,label=below:{$\mathfrak{L}$}](S3){};
					\draw(S)--(S1);
					\draw(S)--(S2);
					\draw(S)--(S3);
					\draw(6,1.7)node[vertex,label=right:{$\mathfrak{L}$}](T1){};
					\draw(6,0.8)node[vertex,label=below:{$\mathfrak{R}$}](T){};
					\draw(S1)--(T1);
					\draw(S1)--(T);
					\draw(6,-0.4)node[vertex,label=below:{$\mathfrak{R}$}](P){};
					\draw(S2)--(P);
					\draw(7,0.8)node[vertex,label=below:{$\mathfrak{R}$}](B1){};
					\draw(T)--(B1);
					\draw(7,-0.5)node[vertex,label=below:{$\mathfrak{R}$}](D){};
					\draw(7,-1.6)node[vertex,label=below:{$\mathfrak{L}$}](D1){};
					\draw(P)--(D);
					\draw(P)--(D1);
					\draw(8,1.5)node[vertex](G){};
					\draw(8,0.5)node[vertex](G1){};
					\draw(B1)--(G);
					\draw(B1)--(G1);
					\draw(8,-0.6)node[vertex](H){};
					\draw(D)--(H);
					\draw[dashed](-2,2.2)node[label=above:{$h$}]{}--(-2,-2.2);
					\draw[dashed](-1,2.2)node[label=above:{$h+1$}]{}--(-1,-2.2);
					\draw[dashed](0,2.2)node[label=above:{$h+2$}]{}--(0,-2.2);
					\draw[dashed](1,2.2)node[label=above:{$\cdots$}]{}--(1,-2.2);
					\draw[dashed](2,2.2)node[label=above:{$\cdots$}]{}--(2,-2.2);
					\draw[dashed](3,2.2)node[label=above:{$h_v$}]{}--(3,-2.2);
					\draw[dashed](4,2.2)node[label=above:{$h_{v}+1$}]{}--(4,-2.2);
					\draw[dashed](5,2.2)node[label=above:{$\cdots$}]{}--(5,-2.2);
					\draw[dashed](6,2.2)node[label=above:{$\cdots$}]{}--(6,-2.2);
					\draw[dashed](7,2.2)node[label=above:{$\cdots$}]{}--(7,-2.2);
					\draw[dashed](8,2.2)node[label=above:{$0$}]{}--(8,-2.2);
					\draw(v0)--(1,-0.6)node[vertex,label=below:{$\mathfrak{R}$}](v){};
					\draw(v)--(2,-0.5)node[vertex,label=right:{$\mathfrak{L}$}](v126){};
					\draw(v0)--(1,-1.5)node[vertex,label=below:{$\mathfrak{L}$}]{};
					\draw(v0)--(2,-1.2)node[vertex,label=below:{$\mathfrak{L}$}]{};
					\draw(v0)--(1,0.75)node[vertex,label=below:{$\mathfrak{R}$}](w){};
					\draw(w)--(2,1.7)node[vertex,label=above:{$\mathfrak{L}$}]{};
					\draw(w)--(2,1)node[vertex,label=below:{$\mathfrak{R}$}](w1){};
					\draw(3.2,1.5)node[label=left:{$v$}](w2){};
					\draw(w1)--(3,1.3)node[vertex,label=below:{$\mathfrak{R}$}](w2){};
					\draw(w2)--(4,1.8)node[vertex,label=right:{$\mathfrak{L}$}]{};
					\draw(w2)--(4,1.1)node[vertex,label=right:{$\mathfrak{L}$}]{};
					\draw(v1)--(4,-0.7)node[vertex,label=below:{$\mathfrak{L}$}]{};
					\draw(v1)--(4,-1.5)node[vertex,label=below:{$\mathfrak{L}$}]{};
				}
			\end{tabular}
		\end{center}
		\caption{Example of a Gallavotti-Nicolò tree: each endpoint is associated to a monomial contributing to $\mathfrak L\widehat V$, while each internal vertex carries an action of $\mathfrak R$.}
	\end{figure}
We shall denote as $\mathcal T^{\,m}_{h,n}$ the set of GN rooted, scale-labelled trees with root scale $h$, $n$ endpoints of type in $\{\lambda,\nu\}$, and $m$ leaves of type in $\{\phi,A\}$. Then, we may rewrite equation (\ref{eq:ET}) as a sum over GN trees, that is
\begin{equation} \label{eq:expGNtree}
\mathcal{W}^{( h)}_{\beta,L}(\phi,A) + V^{(h)}(\sqrt{Z_{h-1}}\psi^{(\leq h)}; \phi,A)= \mathcal W^{(h+1)}_{\beta,L}(\phi,A) +\sum_{n,m=0}^{\infty} \sum_{\tau\in \mathcal{T}^{\,m}_{h,n}}  V^{(h)}[\tau]\;;
\end{equation}
where the contribution subordinated to $\tau$ is defined iterating on the vertices: if $v_{1},\dots,v_{s_{v_{0}}}$ are the successors of $v_{0}$, and if we denote with $\tau_{v_{i}}$ the (unique) sub-tree with root $v_{0}$ and first vertex $v_{i}$, $i=1,\dots,s_{v_{0}}$, we have
\begin{equation} \label{eq:iter2}
V^{(h)}[\tau] = \frac{(-1)^{s_{v_{0}}+1}}{s_{v_{0}}!} \mathbb{E}_{(h+1)} \big[ \overline{V}^{(h+1)}[\tau_{v_{1}}]\,;\cdots\,; \overline{V}^{(h+1)}[ \tau_{v_{s_{v_{0}}}}] \big]\;,
\end{equation}
with
\begin{itemize}
\item[i)] $\overline{V}^{(h+1)}[\tau_{v_{i}}] = \mathfrak{L} \widehat{V}^{(h +1)}(\sqrt{Z_{h}} \psi^{(\leq h + 1)};\phi,A)$ if $\tau_{v_{i}}$ is trivial\footnote{A tree $\tau$ is said to be trivial if it contains only one edge (and hence only one leaf).} and $h+1 <0$;
\item[ii)] $\overline{V}^{(h+1)}[\tau_{v_{i}}] = \widehat{V}^{(0)}(\sqrt{Z_{-1}} \psi^{(\leq 0)})$ if $\tau_{i}$ is trivial and $h+1 =0$;
\item[iii)] $\overline{V}^{(h+1)}[\tau_{v_{i}}] = \mathfrak{R} \widehat{V}^{(h+1)}(\sqrt{Z_{h}} \psi^{(\leq h+1)})$ if $\tau_{v_{i}}$ is not trivial.
\end{itemize}
Using this inductive definition, the right-hand side of Eq. (\ref{eq:iter2}) can be further expanded, and in order to describe the resulting expansion we introduce the last ingredient to the labellings of our GN tree $\tau$.
\begin{itemize}
\item[6.] To any vertex $v\in V(\tau)$, we associate a subset $\Gamma^{\psi}_{v}\subseteq I^{\psi}_{v}$ of fields \emph{external} to the cluster $v$, with the following constraints:
\begin{itemize}
\item[i.] if $v\in V_{f}(\tau)$, then $\Gamma^{\psi}_{v}\equiv I^{\psi}_{v}$;
\item[ii.] if $v_{1},\dots,v_{s_{v}}$ are the vertices immediately following $v$ on $\tau$, then
\[
\Gamma^{\psi}_{v}\subseteq\bigcup_{i=1}^{s_{v}}\Gamma^{\psi}_{v_{i}}\;.
\]
\end{itemize}
We will denote by $Q_{v_{i}}:=\Gamma^{\psi}_{v}\cap \Gamma^{\psi}_{v_{i}}$ the set of fields external to $v$ subordinated on $v_{i}$; the union of the subsets $\Gamma^{\psi}_{v_{i}}\setminus Q_{v_{i}}$ will constitute the set of fields \emph{internal} to $v$, and each one of these subsets has to be non-empty if $s_{v}>1$.\\
For external sources, we simply set $\Gamma^{\phi}_{v}=I^{\phi}_{v}$ and $\Gamma^{A}_{v}=I^{A}_{v}$.
\end{itemize}
We shall denote with $\mathfrak L_{\tau}$ the set of all possible choices of external fields subordinated to the tree $\tau$, and we shall indicate its elements, \emph{i.e.} given labellings, with
\[
\mathbf \Gamma =(\Gamma_{v})_{v}\equiv (\Gamma^{\psi}_{v},\Gamma^{\phi}_{v},\Gamma^{A}_{v})_{v}\;.
\]
\begin{remark}
In order to better explain the outcome of the iteration of (\ref{eq:iter2}) by employment of the external fields labellings $\mathbf\Gamma$, let us suppose for a moment that $\mathfrak R=1$: we shall take its action into account in a second moment.
\end{remark}
Given an internal vertex $v\notin V_{f}(\tau)$, let us call $\tau_{v}$ the (unique) sub-tree having $v$ as first vertex, so that the scale of the root of $\tau_{v}$ is $h_{v}-1$; we set
\begin{equation} \label{eq:labelexp}
\begin{split}
V^{(h_{v}-1)}[\tau_{v}]&=\sum_{\mathbf \Gamma_{v}\in \mathcal L_{\tau_{v}}} V^{(h_{v}-1)}_{\mathbf \Gamma_{v}}[\tau_{v}]\\
V^{(h_{v}-1)}_{\mathbf \Gamma_{v}}[\tau_{v}] &= \int_{\beta,L} d\ul X_{\Gamma_{v}} \widetilde\psi^{(\leq h_{v}-1)}_{\Gamma^{\psi}_{v}} \phi_{\Gamma^{\phi}_{v}} A_{\Gamma^{A}_{v}}\, W^{(h_{v}-1)}_{\mathbf \Gamma_{v}}[\ul X_{\Gamma_{v}};\tau_{v}]
\end{split}
\end{equation}
where, as usual, $d\ul X_{\Gamma_{v}}=\prod_{f\in\Gamma_{v}} d\ul x(f)$ and
\begin{equation}\label{eq:iterkern}
\widetilde\psi^{(\leq h)}_{\Gamma} := \prod_{f\in\Gamma} \sqrt{Z_{h,\omega(f)}} \psi^{(\leq h),\epsilon(f)}_{\ul x(f),\omega(f)}\;,\quad \phi_{\Gamma} := \prod_{f\in \Gamma} \phi^{\epsilon(f)}_{\ul x(f),\rho(f)}\;,\quad A_{\Gamma} := \prod_{f\in \Gamma} A^{\nu(f)}_{\ul x(f)}\;.
\end{equation}
\begin{remark}
Strictly speaking, it should be noted that  the kernels $W^{(h_{v}-1)}_{\mathbf\Gamma_{v}}[\ul X_{\Gamma_{v}};\tau_{v}]$ that we have defined depend on all the coordinates $\ul X_{I_{v}}$, not just on the coordinates $\ul X_{\Gamma_{v}}$; with a slight abuse of notation, we explicit only the latter ones because we want to highlight the ones associated to fields external to $v$.
\end{remark}
We shall denote by $\mathbf \Gamma_{v}$ a labelling for the sub-tree $\tau_{v}$; hence $\mathbf \Gamma_{v_{0}}\equiv \mathbf \Gamma$. The recursion (\ref{eq:iter2}), together with the representation (\ref{eq:labelexp}), implies the following recursion for the kernels $W^{(h_{v}-1)}_{\mathbf \Gamma}$:
\begin{equation}
\begin{split}
W^{(h_{v}-1)}_{\mathbf \Gamma_{v}}[\ul X_{\Gamma_{v}};\tau_{v}] = \frac{(-1)^{s_{v}+1}}{s_{v}!} \Big[\prod_{f\in\Gamma^{\psi}_{v}}\sqrt{\frac{Z_{h_{v},\omega(f)}}{Z_{h_{v}-1,\omega(f)}}}\Big] \Big[\prod_{i=1}^{s_{v}} W^{(h_{v_{i}}-1)}_{\mathbf\Gamma_{v_{i}}}[\ul X_{\Gamma_{v_{i}}};\tau_{v_{i}}]\Big]\\
\cdot \mathbb E_{(h_{v})}\Big[\widetilde\psi^{(h_{v})}_{\Gamma^{\psi}_{v_{1}}\setminus Q_{v_{1}}};\cdots; \widetilde\psi^{(h_{v})}_{\Gamma^{\psi}_{v_{s_{v}}}\setminus Q_{v_{s_{v}}}}\Big]
\end{split}
\end{equation}
where $h_{v_{i}}=h_{v}+1$ by construction, and $\mathbf \Gamma_{v_{i}}$ is the restriction of $\mathbf\Gamma_{v}$ to the sub-tree $\tau_{v_{i}}$. This iteration stops when $v$ is a leaf, in which case we will have that $W^{(h_{v}-1)}_{\mathbf\Gamma_{v}}\equiv W^{(h_{v}-1)}_{\Gamma_{v}}$ is equal to one the kernels of the monomials contributing to $\mathfrak L \widehat V^{(h_{v})}(\sqrt{Z_{h_{v}-1}}\psi^{(\leq h_{v})})$; in momentum space, we get
\begin{equation}
\widehat W^{(h_{v}-1)}_{\mathbf \Gamma_{v}}[\ul K_{\Gamma_{v}};\tau_{v}]=
\begin{cases}
\lambda_{h_{v},\omega\omega'} & v\ \lambda\mathrm{-type},\ h_{v}<0\\
\nu_{h_{v},\omega} & v\ \nu\mathrm{-type},\ h_{v}<0\\
\mathsf Q^{\pm}_{h_{v},\omega}(k') & v\ \phi^{\pm}\mathrm{-type},\ h_{v}<0\\
\mathsf Z^{\mu}_{h_{v},\omega\omega'} & v\ A^{\mu}\mathrm{-type},\ h_{v}<0\\
\widehat W^{(0)}_{\Gamma_{v}}[\ul K_{\Gamma_{v}}] & h_{v}=0\;.
\end{cases}
\end{equation}
where as usual $k'=k-k^{\omega}_{F}(\lambda)$, and where $W^{(0)}_{\Gamma}$ indicates a kernel appearing in the monomial appearing in $\widehat V^{(0)}(\sqrt{Z_{-1}}\psi^{(\leq0)};\phi,A)$ parameterized by $\Gamma$.

Therefore, noticing that $\tau_{v_{0}}=\tau$, and writing the overall monomial associated to a given choice of $(\tau,\mathbf \Gamma)$, namely
\begin{equation}
V^{(h)}_{\mathbf \Gamma}[\tau] = \int_{\beta,L} d\ul X_{\Gamma_{v_{0}}} \widetilde\psi^{(\leq h)}_{\Gamma^{\psi}_{v_{0}}} \phi_{\Gamma^{\phi}_{v_{0}}} A_{\Gamma^{A}_{v_{0}}}\, W^{(h)}_{\mathbf \Gamma_{v_{0}}}[\ul X_{\Gamma_{v_{0}}};\tau]\;,
\end{equation}
we get
\begin{equation} \label{eq:kernexptree}
W^{(h)}_{\mathbf \Gamma_{v}}[\ul X_{\Gamma_{v_{0}}};\tau] = (-1)^{n+m}L[\tau]\prod_{v\notin V_{f}(\tau)}\frac{1}{s_{v}!}\sqrt{\frac{Z_{h_{v}}(\Gamma^{\psi}_{v})}{Z_{h_{v}-1}(\Gamma^{\psi}_{v})}}
\mathbb E_{(h_{v})}\Big[\widetilde\psi^{(h_{v})}_{\Gamma^{\psi}_{v_{1}}\setminus Q_{v_{1}}};\cdots; \widetilde\psi^{(h_{v})}_{\Gamma^{\psi}_{v_{s_{v}}}\setminus Q_{v_{s_{v}}}}\Big]
\end{equation}
with the short-hand $Z_{h}(\Gamma):=\prod_{f\in\Gamma}Z_{h,\omega(f)}$ and with $L[\tau]$ taking into account the contribution of the leaves, namely
\begin{equation}
L[\tau]=\prod_{v\in V_{f}(\tau)} W^{(h_{v}-1)}_{\Gamma_{v}}[\ul X_{v};\tau_{v}]\;.
\end{equation}
Comparing (\ref{eq:Vh}) with (\ref{eq:expGNtree}) and (\ref{eq:expGNtree}), we have thus obtained the following GN representation for kernels $W^{(h)}_{\Gamma}$ appearing in $\mathcal W^{(h)}_{\beta,L}$ and $V^{(h)}$:
\begin{equation}
W^{(h)}_{\Gamma}(\ul X) = \sum_{n=1}^{\infty}\sum_{\tau\in\mathcal T^{m}_{h,n}}\sum_{\substack{\mathbf\Gamma\in\mathcal L_{\tau}\\ \Gamma_{v_{0}}=\Gamma}} \int_{\beta,L} d\ul X_{I_{v_{0}}\setminus\Gamma_{v_{0}}} W^{(h)}_{\mathbf\Gamma}[\ul X;\tau]\;.
\end{equation}
This is useful because it allows to prove bounds analogous to (\ref{eq:locboundIR}) for the kernels $W^{(h)}_{\Gamma}$, with an explicit $h$-dependence which is uniform in $\beta,L$, as we now illustrate.
\begin{remark}\label{rmk:BBF}
The \emph{Battle-Brydges-Federbush (BBF) formula}, see \cite{Les, Bry} and references therein, and \cite{GM} for a review, allows to express a connected expectation between clusters as a sum over \emph{spanning} trees over these. Given clusters $\{\Gamma_{i}\}_{i=1}^{s}$ such that $\sum_{i}|\Gamma_{i}|=2N$, we have that
\begin{equation}\label{eq:BBF}
\mathbb{E} [\Psi_{\Gamma_{1}};\cdots;\Psi_{\Gamma_{s}}] = \sum_{T} \sigma_{T}\Big[\prod_{\ell\in T} g_{\ell}\Big] \int d\mu_{T}(\mathbf t)\, \det[G_{T}(\mathbf t)]
\end{equation}
where:
\begin{itemize}
\item the sum is over so-called anchored (or spanning) trees $T$, \textit{i.e.} collections of edges in $E(\bigcup_{i}\Gamma_{i})$ that become a tree if each $\Gamma_{i}$ is collapsed to a point $i$; furthermore, $\sigma_{T}\in{\pm 1}$ is a sign associated to $T$: it will not play any role in the analysis;
\item the product is over the oriented edges $\ell$ composing the anchored tree $T$, the factor being the propagator $g$ evaluated on it;
\item the integration is over parameters $\mathbf t =(t_{ij})_{i,j=1}^{s}$, the probability measure $\mu_{T}$ is supported on a set of $\bf t$ such that $t_{ij}=\vec u_{i}\cdot \vec u_{j}$, for some family of unit-norm vectors $(\vec u_{i})_{i=1}^{N-s+1}\subset\mathbb R^{s}$;
\item $G_{T}(\mathbf t)\in M_{N-s+1}(\mathbb C)$ is a matrix whose entries are given by
\[
(G_{T}(\mathbf t))_{(i,a)(j,b)}=t_{ij}\, g(\ul x(i,a) - \ul x(j,b))
\]
where $i,j$ parameterize the clusters and $a$ (resp. $b$) a half-edge in $i$ (esp. $j$), such that the oriented edge $((i,a),(j,b))$ does not lie in $T$.
\end{itemize}
\end{remark}
Employing the BBF formula for the expectations taken at each internal vertex $v$ of the GN tree $\tau$, we may rewrite equation (\ref{eq:kernexptree}) as
\begin{equation}
\begin{split}
&|W^{(h)}_{\bf \Gamma}[\ul X_{\Gamma_{v_{0}}};\tau]| \\
&\quad = \left|L[\tau]\prod_{v\notin V_{f}(\tau)}\frac{1}{s_{v}!}\sqrt{\frac{Z_{h_{v}}(\Gamma^{\psi}_{v})}{Z_{h_{v}-1}(\Gamma^{\psi}_{v})}} \sum_{T_{v}} \Big[\prod_{\ell\in T_{v}} Z_{h_{v}-1, \omega_{\ell}}g^{(h_{v})}_{\ell}\Big] \int d\mu_{T_{v}}(\mathbf t)\, \det[G^{(h_{v})}_{T_{v}}(\mathbf t)]\right|\;.
\end{split}
\end{equation}
Now, recall the bounds on the effective couplings listed in Proposition \ref{prp:flow}:
\begin{equation} \label{eq:ass}
\left|\frac{Z_{h,\omega}}{Z_{h-1,\omega}}\right|\leq e^{c|\lambda|}\;,\qquad |v_{h-1,\omega} - v_{\omega}|\leq C|\lambda|\;,\qquad |\lambda_{h,\omega\omega'}| \leq C|\lambda|\;,\qquad |\nu_{h, \omega}|\leq C|\lambda|\;.\nonumber
\end{equation}
These bounds can be used to prove, following {\it e.g.} \cite{BM}, Section 3.14:
\begin{eqnarray}
\label{eq:bdGg}
\| \det G^{(h_{v})}_{T_{v}} ({\bf t}) \|_{\infty} &\leq& C^{\sum_{i=1}^{s_{v}} \frac{|\Gamma^{\psi}_{v_{i}}|}{2} - \frac{|\Gamma^{\psi}_{v}|}{2} - (s_{v} - 1) } 2^{h_{v}\big[\sum_{i=1}^{s_{v}} \frac{|\Gamma^{\psi}_{v_{i}}|}{2} - \frac{|\Gamma^{\psi}_{v}|}{2} - (s_{v} - 1)\big]}\nonumber\\
| Z_{h-1, \omega} \partial_{0}^{n_0} \mathrm d_{1}^{n_1} g_{\omega}^{(h)}(\underline{x} - \underline{y}) | &\leq& \frac{ C_{n+ n_{0} + n_{1}} 2^{h(1 + n_0 + n_1)} }{1 + (2^{h} \| \underline{x} - \underline{y} \|_{\beta, L} )^{n}}\;,\qquad \forall n\in \mathbb{N}\;.
\end{eqnarray}
The first bound follows from the Gram-Hadamard inequality for determinants (using that the single-scale propagator admits a Gram representation, Eq. (3.97) of \cite{BM}), while the second follows from the smoothness and  support properties of the single-scale propagator in momentum space: in particular, it implies $\|g^{(h)}\|_{1}\leq c\, 2^{-h}$. Let $V^{od}_{f}(\tau) \subseteq V_{f}(\tau)$ be the set of endpoints of the tree associated with $\mathsf Z^{\mu_{v}}_{h_{v},\omega_{v}\omega'_{v}}$ with $\omega_{v} \neq \omega_{v}'$. Using the fact that
\[
\sum_{T_{v}} 1 \leq s_{v}!\,C^{\sum_{i=1}^{s_{v}} \frac{|\Gamma^{\psi}_{v_{i}}|}{2} - \frac{|\Gamma^{\psi}_{v}|}{2} - (s_{v} - 1)}\;,
\]
and defining the intensive $L^{1}$-norm
\begin{equation}
\|W^{(h)}_{\mathbf\Gamma}[\tau]\|_{1}:= \frac{1}{\beta L} \int_{\beta,L} d\ul X_{I_{v_{0}}} |W^{(h)}_{\mathbf\Gamma}[\ul X_{\Gamma_{v_{0}}};\tau]|\;,
\end{equation}
the bounds (\ref{eq:bdGg}) allow to prove (we refer to {\it e.g.} \cite{GM} for the details) that
\begin{equation}
\begin{split}
\|W^{(h)}_{\mathbf\Gamma}[\tau]\|_{1} &\leq C^{n+m} |\lambda|^{n} \prod_{v \in V^{od}_{f}(\tau)} | \mathsf Z^{\mu_{v}}_{h_{v},\omega_{v}\omega'_{v}} | \\
&\quad \cdot  2^{h\big(2-\frac{|\Gamma^{\psi}_{v_{0}}|+3|\Gamma^{\phi}_{v_{0}}|}{2}-|\Gamma^{A}_{v_{0}}|\big)} \prod_{v\notin V_{f}(\tau)} 2^{(h_{v}-h_{v'})\big(2-\frac{|\Gamma^{\psi}_{v}|+3|\Gamma^{\phi}_{v}|}{2}-|\Gamma^{A}_{v}| + c|\lambda|\big)}\;,
\end{split}
\end{equation}
where $v'$ is the vertex immediately preceding $v$ on the tree $\tau$, and hence $h_{v} - h_{v'} >0$ (in fact, $h_{v} - h_{v'} = 1$).
This bound however {\it does not} imply summability in the scale labels of the nontrivial vertices for all ${\bf \Gamma}$, since the factor multiplying $h_{v}-h_{v'}$ may be positive; this happens when
\begin{equation}
(|\Gamma^{\psi}_{v}|, |\Gamma^{\phi}_{v}|, |\Gamma^{A}_{v}|)=
\begin{cases}
(2,0,0)\\
(4,0,0)\\
(2,0,1)\;.
\end{cases}
\end{equation}
If $(v,\Gamma_{v})$ is such that the first case is true, then it is called a {\it relevant} cluster; for the second or third case, it is called {\it marginal}; otherwise it is called {\it irrelevant}.
\begin{remark}
Another potentially dangerous case would be $(|\Gamma^{\psi}_{v}|, |\Gamma^{\phi}_{v}|, |\Gamma^{A}_{v}|)=(1,1,0)$, but this cannot occur in an internal vertex, since it contributes to the local part, and hence it contributes only to the leaves. Moreover, cases with $|\Gamma^{\psi}_{v}|=0$ also cannot occur in internal vertices other than $v_{0}$.
\end{remark}
We take care of this apparent non-summability by reintroducing the action of the operator $\mathfrak R$ in each single-scale expectation. Equation (\ref{eq:labelexp}) becomes
\begin{equation} \label{eq:labelexpren}
V^{(h_{v}-1)}_{\mathbf \Gamma_{v}}[\tau_{v}] = \int_{\beta,L} d\ul X_{\Gamma_{v}} \widetilde\psi^{(\leq h_{v}-1)}_{\Gamma^{\psi}_{v}} \phi_{\Gamma^{\phi}_{v}} A_{\Gamma^{A}_{v}}\, \mathfrak RW^{(h_{v}-1)}_{\mathbf \Gamma_{v}}[\ul X_{\Gamma_{v}};\tau_{v}]
\end{equation}
for all internal vertices $v$. Then, proceeding as in Section 3.14 of \cite{BM}, one may show that $W^{(h)}_{\mathbf\Gamma}[\tau]$ admits the improved bound
\begin{equation} \label{eq:treebd}
\|W^{(h)}_{\mathbf\Gamma}[\tau]\|_{1}\leq C^{n+m} |\lambda|^{n} \Big[\prod_{v \in V^{od}_{f}(\tau)} |\mathsf Z^{\mu_{v}}_{h_{v},\omega_{v}\omega'_{v}} |\Big] 2^{h D_{v_{0}}} \prod_{v\notin V_{f}(\tau)} 2^{(h_{v}-h_{v'})D_{v}}
\end{equation}
with $D_{v}$ being the (renormalized) \emph{scaling dimension} of cluster $(v,\Gamma_{v})$, defined as
\begin{equation}\label{eq:dvir}
D_{v}:= 2 - \frac{|\Gamma^{\psi}_{v}|+3|\Gamma^{\phi}_{v}|}{2} - |\Gamma^{A}_{v}| - z_{v} + c|\lambda|
\end{equation}
with the improvement $z_{v}$ given by
\[
z_{v}=\begin{cases}
2 & \text{if}\ (|\Gamma_{v}^{\psi}|, |\Gamma_{v}^{\phi}|, |\Gamma_{v}^{A}|)=(2,0,0)\\
1 & \text{if}\ (|\Gamma_{v}^{\psi}|, |\Gamma_{v}^{\phi}|, |\Gamma_{v}^{A}|)=(4,0,0)\\
1 & \text{if}\ (|\Gamma_{v}^{\psi}|, |\Gamma_{v}^{\phi}|, |\Gamma_{v}^{A}|)=(2,0,1)\\
0 & \text{otherwise}\;.
\end{cases}
\]
This ensures that, for $|\lambda|$ small enough, $D_{v}<0$ for all internal vertices $v$, so that summation over $(n,\tau,\mathbf\Gamma)$ yields (see \cite{GM} for details, specifically Lemma A3 and Section A6.1)
\begin{equation}
\|W^{(h)}_{\Gamma}\|_{1}\leq C^{m} 2^{h\big(2-\frac{|\Gamma^{\psi}|+3|\Gamma^{\phi}|}{2}-|\Gamma^{A}| - c|\lambda| |\Gamma^{A}| \big)}\sum_{n\geq0} \tilde C^{n}|\lambda|^{n}
\end{equation}
for some constants $C,\tilde C$ uniform in $\beta,L$, and with $m=|\Gamma^{\phi}|+|\Gamma^{A}|$.
In a similar way, one can prove that:
\begin{equation} \label{eq:bdWh3}
\begin{split}
&\frac{1}{\beta L} \int_{\beta, L} d \underline{X}\, \prod_{i,j\in\Gamma}\| \underline{x}_{i} - \underline{x}_{j} \|_{\beta,L}^{n_{ij}} | W^{(h)}_{\Gamma}(\underline{X})| \\&\qquad \leq C_{\{n_{ij}\}}^{m} 2^{h\big(2-\frac{|\Gamma^{\psi}|+3|\Gamma^{\phi}|}{2}-|\Gamma^{A}| - c|\lambda| |\Gamma^{A}| - \sum_{i,j} n_{ij}\big)}\sum_{n\geq0} \tilde C^{n}|\lambda|^{n}\;.
\end{split}
\end{equation}
Analyticity in $\lambda$ follows from analyticity of all the effective couplings on scales $\geq h$, and from the convergence of the GN tree expansion. Finally, the bound (\ref{eq:bdWh3}), combined with equation (\ref{eq:Qh}), also allows to prove that $\mathsf Q^{+}_{h, \omega}(\underline{k}') = \mathsf Q^{+}_{1,\omega}(\underline{k}')(1 + O(\lambda))$, as anticipated at the end of section \ref{sec:flow}.
We conclude this section with two important properties of GN trees, both of which will be employed in the next sections.
\begin{remark}{(On the external dimension).} The term $-c|\lambda| |\Gamma^{A}|$ in the estimate (\ref{eq:bdWh3}) takes into account the (pessimistic) situation in which all $A$-endpoints are associated with off-diagonal vertex renormalizations, which diverge with an anomalous exponent; recall (\ref{eq:BdsEffCoup}). In the application that follows we will be interested in bounding momentum-space correlations where the choice of the external momenta will imply that at most one $A$-endpoint is off-diagonal.
\end{remark}
\begin{remark}{{(The short memory property).}}\label{rem:sm}
Thanks to the presence of the $z_{v}$ factors in equation (\ref{eq:treebd}), each branch between two non-trivial vertices $v'\prec v$ comes with a factor $2^{(h_{v} - h_{v'})D_{v}}$, where $D_{v}\leq D = -1 + c|\lambda|$. This implies that {\em long trees are exponentially suppressed}: if one restricts the sum $\sum_{\tau\in \mathcal{T}_{h, n}}$ to trees having at least one vertex on scale $k>h$, then the final bound is improved by a factor $2^{a(h - k)}$, for $a \in (0, |D|)$. Clearly, the constants appearing will depend on the choice of $a$. This property is usually referred to as the {\em short memory property of GN trees}.
\end{remark}
\begin{remark}{{(The continuity property).}}\label{rem:cont} Consider a tree $\tau \in \mathcal{T}^{m}_{h, n}$ and denote its value with $\mathrm{Val}(\tau)$. Suppose that $|\mathrm{Val}(\tau)|\leq B_{D}(\tau)$, where $B_{D}(\tau)$ is the dimensional bound of the tree, obtained as in (\ref{eq:treebd}), with $D_{v}\leq D=-1 + c|\lambda|$ for all $v$. Consider a propagator $\hat g^{(h_{v})}_{\omega}$ arising in the truncated expectation associated to a given vertex $v\in \tau$. Suppose that $\hat g^{(h_{v})}_{\omega}(\underline{k}')$ is replaced by $\hat g^{(h_{v})}_{\omega} + \delta \hat g^{(h_{v})}_{\omega}$, with $|\delta \hat g^{(h_{v})}_{\omega}(\underline{k}')| \leq C2^{-h_{v}} 2^{a h_{v}}$ for some $1/2>a >0$, and admitting a Gram representation. Let $\mathrm{Val}(\tau) + \delta \mathrm{Val}(\tau)$ be the new value of the tree. The contribution $\delta \mathrm{Val}(\tau)$ can be bounded as follows. Let $\gamma_{r\to v}$ be the path on $\tau$ that connects the root $r$ to $v$. Then, writing $2^{a h_{v}} = 2^{a h} \prod_{v\in \gamma_{r\to v_{0}}} 2^{a}$ and using that $D_{v} \leq D$, we have:
\begin{equation}
\label{eq:dV}
|\delta\mathrm{Val}(\tau)|\leq 2^{a h} B_{D + a}(\tau)\;.
\end{equation}
in a similar fashion, suppose that $Z_{h_{v},\omega}/ Z_{h_{v} - 1,\omega}$ is replaced by $Z_{h_{v}, \omega}/ Z_{h_{v} - 1, \omega} + \delta z_{h_{v}, 0,\omega}$, with $|\delta z_{h_{v}, 0,\omega}|\leq C2^{a h_{v}}$, or suppose that the effective coupling $\lambda_{h_{v}, \omega\omega'}$ is replaced by $\lambda_{h_{v}, \omega\omega'} + \delta \lambda_{h_{v}, \omega\omega'}$, with $|\delta\lambda_{h_{v}, \omega\omega'}|\leq C 2^{a h_{v}}$. Then, the new value is $\mathrm{Val}(\tau) + \delta \mathrm{Val}(\tau)$, with $\delta \mathrm{Val}(\tau)$ again satisfying the bound (\ref{eq:dV}). We shall refer to this property as the {\em continuity property of GN trees}.
\end{remark}

Finally, the representation of the effective potential can be made even more explicit by using the tree expansion for the effective potential on scale $0$. This tree expansion is similar to the one described above, but much simpler (the ultraviolet regime is superrenormalizable). It is completely standard and largely model independent, see {\it e.g.} \cite{GMPcond, GMP}. Since we will use this more explicit representation later on, let us briefly describe it. 

In the ultraviolet regime, the single-scale propagator $\hat g^{(h)}$ is supported for momenta such that $2^{h-1} \leq |k_{0}| \leq 2^{h+1}$, $h=0,1,\ldots, N_{0}$. In configuration space, it satisfies the estimate:
\begin{equation}
\| g^{(h)}(\underline{x} - \underline{y}) \| \leq \frac{C_{m}}{1 + 2^{h m}|x_{0} - y_{0}|^{m}_{\beta} + |x-y|_{L}^{m}}\;.
\end{equation}
The single-scale propagator is not renormalized; in particular, there is no rescaling of the fields, in contrast to the infrared regime. The trees have endpoints on scale $\leq N_{0}$. The endpoints on scale $N_{0}$ correspond to the bare quartic interaction or to the bare source terms. The endpoints on scale $h<N_{0}$ have $|\Gamma_{v}| = |\Gamma^{\psi}_{v}| = 2$; they can actually be ``unfolded'' in trivial trees with root on scale $h$ and endpoint on scale $N_{0}$. The value of the corresponding kernel is the tadpole graph, evaluated with the propagator $g^{[h,N_{0}]}:=\sum_{h'=h}^{N_{0}} g^{(h')}$; it is finite uniformly in $h\geq 0$ and $N_{0}$ and of order $\lambda$. Finally, the scaling dimension associated with a vertex $v$ that is not an endpoint is:
\begin{equation}\label{eq:dvuv}
D_{v} = (1 - n_{v})
\end{equation}
where $n_{v}$ is the number of endpoints following the vertex $v$ on the tree. This can be written as $n_{v} = n_{v}^{\psi} + n_{v}^{A}$, where $n_{v}^{A} = |\Gamma_{v}^{A}|$ and $n_{v}^{\psi}$ takes into account the other two types of endpoints (quadratic or quartic in the fermions, and independent of $A$). Observe that by construction $D_{v} \leq -1$: the case $n_{v} = 1$ and $|\Gamma^{A}_{v}| = 0$ cannot take place, since the trivial subtrees are taken into account in the definition of endpoints. With these ingredients, it is clear that the estimate (\ref{eq:treebd}) can be replaced by a similar bound, for trees that are now extended all the way until the scale $N_{0}$, and with scaling dimension equal to (\ref{eq:dvuv}) if the vertex is on scale $h_{v} \geq 0$, or equal to (\ref{eq:dvir}) if the vertex is on scale $h_{v} < 0$. In the following, we shall denote by $\widetilde{\mathcal T}_{h,n}^{m}$ the set of trees that extend in the ultraviolet regimes, with root on scale $h$, with $n$ endpoints independent of $A$, and with $m$ external source terms.

\subsection{The correlation functions} \label{sec:dimest}
Thanks to the previous Section, we may write the GN expansion for the generating functional $\mathcal W_{\beta,L}$. Indeed, we obtain
\begin{equation}
\begin{split}
\mathcal W_{\beta,L}(\phi,A) &= \mathcal W^{\text{(u.v.)}}_{\beta,L}(\phi,A) + \sum_{h=h_{\beta}}^{0} \sum_{n,m\geq 0}\sum_{\tau\in\mathcal T^{m}_{h,n}} \sum_{\substack{\mathbf\Gamma\in\mathcal L_{\tau}\\ \Gamma^{\psi}_{v_{0}}=\varnothing}} \mathcal W^{(h)}_{\mathbf\Gamma}[\phi,A;\tau]\\
\mathcal W^{(h)}_{\mathbf\Gamma}[\phi,A;\tau] &= \int_{\beta,L} d\ul X_{I_{v_{0}}}\, \phi_{\Gamma^{\phi}_{v_{0}}} A_{\Gamma^{A}_{v_{0}}} W^{(h)}_{\mathbf\Gamma}[\ul X_{\Gamma_{v_{0}}};\tau]\;,
\end{split}
\end{equation}
where $\mathcal W^{\text{(u.v.)}}_{\beta,L}(\phi,A)$ collects the contribution of trees with root on positive scales (purely ultraviolet contributions). In particular, since
\begin{equation}
\langle\mathbf T\hat n_{\ul p_{1}};\,\cdots; \hat n_{\ul p_{m-1}}; \hat \jmath_{\mu,\ul p_{m}} \rangle_{\beta,L} = (\beta L)^{m}\left.\frac{\partial^{m}\mathcal{W}_{\beta,L}(\phi,A)}{\partial \hat A^{0}_{-\ul p_{1}}\cdots\,\partial \hat A^{0}_{-\ul p_{m-1}}\partial \hat A^{\mu}_{-\ul p_{m}}}\right|_{A=\phi=0}\;,
\end{equation}
we have:
\begin{equation}
\langle \mathbf T\hat n_{\ul p_{1}};\cdots; \hat \jmath_{\mu,\ul p_{m}}\rangle_{\beta,L} = \langle \mathbf T\hat n_{\ul p_{1}};\cdots; \hat \jmath_{\mu,\ul p_{m}}\rangle_{\beta,L}^{(\text{u.v.})} + \langle \mathbf T\hat n_{\ul p_{1}};\cdots; \hat \jmath_{\mu,\ul p_{m}}\rangle_{\beta,L}^{(\text{i.r.})}
\end{equation}
where the first term takes into account the purely ultraviolet contributions, and the second term all the other trees:
\begin{equation}\label{eq:GNexp}
\langle \mathbf T\hat n_{\ul p_{1}};\cdots; \hat \jmath_{\mu,\ul p_{m}}\rangle_{\beta,L}^{(\text{i.r.})} = \sum_{h=h_{\beta}}^{0}\sum_{n=0}^{\infty}\sum_{\tau\in \mathcal T_{h,n}^{m}} \sum_{\mathbf\Gamma\in \mathcal L_{\tau}} \widehat W^{(h)}_{\bf \Gamma}[\tau]\;,
\end{equation}
where:
\begin{equation} \label{eq:fourierkernel}
\widehat W^{(h)}_{\bf \Gamma}[\tau] := \int_{\beta,L} d\ul X_{I_{v_{0}}}\, W^{(h)}_{\mathbf\Gamma}[\ul X_{\Gamma_{v_{0}}};\tau] \prod_{\substack{v\in V_{f}(\tau)\\ v\ A\text{-type}}} e^{-i\ul p_{v}\cdot \ul x_{v}}\;.
\end{equation}
In equation (\ref{eq:GNexp}), and until the end of this Section, the set $\mathcal T_{h,n}^{m}$ denotes
\[
\mathcal T_{h,n}^{m}=\{\tau\ |\ \text{$\tau$ has: root scale}\ h;\ n\ \lambda, \nu\text{-leaves,}\ 0\ \phi\text{-leaves,}\ m-1\ A^{0}\text{-leaves,}\ 1\ A^{\mu}\text{-leaf}\}
\]
and the set $\mathcal L_{\tau}$ parameterizes labellings of $\tau$ with no fermionic fields $\psi$ at the first vertex $v_{0}\in V(\tau)$ and no external fields $\phi$, namely
\[
\mathcal L_{\tau} := \{\mathbf\Gamma\ \text{labelling of}\ \tau\ |\ \Gamma_{v_{0}}\equiv\Gamma_{v_{0}}^{A}\}\;.
\]
Notice that the labelling $\mathbf \Gamma$ is such that an $A$-type leaf $v$, having label $-\ul p_{v}$, carries a factor $e^{-i\ul p_{v}\cdot \ul x_{v}}$ ($\ul x_{v}$ being the integration coordinate associated to $v$), which allows to see (\ref{eq:fourierkernel}) as a Fourier transform.
\begin{remark}
Recall that $\ul p_{i} = (\eta_{\beta}, p_{i})$ for $i=1,\dots,m$, and that $\sum_{i=1}^{m}\ul p_{i}=0$, hence $\|\ul p_{j}\|\geq \eta\geq 2^{h_{\eta}}$ for all $j=1,\dots m$. Here $h_{\eta}:=\left\lfloor\log_{2}\eta_{\beta}\right\rfloor$, but for $\beta$ large one may think of it as essentially equal to $\left\lfloor\log_{2}\eta\right\rfloor$. In what follows, we shall use this identification freely.
\end{remark}
\subsubsection{Bound for the correlation functions}
We wish to find an $L^{\infty}$-estimate for $\widehat W^{(h)}_{\bf \Gamma}[\tau]$ in terms of $\eta$, summable in $n, h, \tau$ and $\mathbf\Gamma$.
\begin{proposition}[Bound for density correlators] \label{prop:CorrDimBound}
Let $m\geq 3$, and let $\ul p_{i}=(\eta_{\beta}, p_{i})$ for all $i=1,\dots, m-1$, with $|p_{i}| \leq B|\theta|$. Then, there exists positive constants $C, c, \varepsilon$ such that, for $\beta,L$ large enough, for any $\mu\in\{0,1\}$:
\begin{equation}\label{eq:estcorr}
\frac{1}{\beta L}|\langle \mathbf T\hat n_{\ul p_{1}};\cdots; \hat n_{\ul p_{m-1}}; \hat\jmath_{\mu,\ul p_{m}}\rangle_{\beta,L}|\leq m!\, C^{m} \eta^{2-m} (1 + \mathbbm 1_{m\geq m(\theta)} |\theta|^{\varepsilon/2 + c|\lambda|} \eta^{-c|\lambda|})\;,
\end{equation}
with $m(\theta) = K |\theta|^{-1}$ and $C = \kappa \min_{\omega,\omega':\, \omega\neq \omega'} |k_{F}^{\omega} - k_{F}^{\omega'}|$ for some $\kappa > 0$.
\end{proposition}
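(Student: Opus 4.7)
The starting point is the Gallavotti–Nicol\`o tree expansion (\ref{eq:GNexp}) for the connected correlator, combined with the dimensional bound (\ref{eq:bdWh3}) on each kernel $W^{(h)}_{\mathbf\Gamma}[\tau]$ and the running-coupling control of Proposition \ref{prp:flow}. The plan is to refine the straightforward power-counting estimate in three ways: restricting the range of infrared scales, controlling the number of off-diagonal vertex renormalizations, and then summing the combinatorics.

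First I would observe that every external momentum $\ul p_i=(\eta_{\beta},p_i)$ satisfies $|p_i^0|=\eta_{\beta}\geq\eta$. Since the single-scale propagator $g^{(h)}_{\omega}$ is supported on $\{\|\ul k'\|_{\omega}\sim 2^{h}\}$, no subtree on scale $h<h_{\eta}:=\lfloor\log_2\eta\rfloor$ can accommodate an $A$-leaf; hence the sum over root scales in (\ref{eq:GNexp}) is effectively restricted to $h\in[h_{\eta},0]$, and the short-memory property (Remark \ref{rem:sm}) suppresses contributions with any internal vertex below $h_{\eta}$. Applying (\ref{eq:bdWh3}) at the root, where $(|\Gamma^{\psi}_{v_0}|,|\Gamma^{\phi}_{v_0}|,|\Gamma^{A}_{v_0}|)=(0,0,m)$, yields a root factor $2^{h(2-m)}$, while the intermediate vertices contribute convergent powers $2^{(h_v-h_{v'})D_v}$ with $D_v\leq -1+c|\lambda|$. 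Summing over $h\in[h_{\eta},0]$, over tree shapes, and over internal labelings gives the expected leading behavior $\eta^{2-m}$.

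The main subtlety is the treatment of the off-diagonal vertex renormalizations $\mathsf Z^{\nu}_{h,\omega\omega'}$ with $\omega\neq\omega'$, which grow anomalously with $|h|$ and give the pessimistic factor $2^{-c|\lambda|mh}\sim\eta^{-c|\lambda|m}$ in the naive bound. To eliminate this, I would invoke momentum conservation at each $A$-leaf together with the elastic-scattering hypothesis (\ref{eq:elasca}): an off-diagonal $A$-insertion carries a Fermi-momentum jump of size at least $\kappa\min_{\omega\neq\omega'}|k_F^{\omega}-k_F^{\omega'}|=C$, whereas the associated external spatial momentum is bounded by $B|\theta|\ll C$. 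Careful bookkeeping on the tree shows that the number of off-diagonal insertions $\nu(\tau)$ is controlled by the total external spatial momentum, giving $\nu(\tau)=O(1)$ for $m<m(\theta)=K/|\theta|$, and therefore replacing $\eta^{-c|\lambda|m}$ by $\eta^{-c|\lambda|}$, which can be absorbed in the constants. For $m\geq m(\theta)$, the same momentum-conservation argument now \emph{forces} at least one off-diagonal leaf to carry a jump of size $\sim|\theta|$, yielding the additional improvement $|\theta|^{\varepsilon/2+c|\lambda|}$ (with $\varepsilon$ determined by the smoothness of the ultraviolet vertex kernels at scale $h_\theta:=\lfloor\log_2|\theta|\rfloor$), against a payment of $\eta^{-c|\lambda|}$ from the anomalous growth of the remaining intermediate-scale vertex renormalizations.

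Finally, the combinatorial factor $m!$ arises from summing over the ways to distribute $m$ distinguishable $A$-leaves on the topological trees $\tau\in\mathcal T^{m}_{h,n}$, modulo the $1/s_v!$ symmetry factors; the sum over $n$ converges thanks to the $|\lambda|^n$ prefactor in (\ref{eq:treebd}) and the standard bound of $4^n$ on the number of trees of given order. The hardest step, I expect, is Step three: rigorously controlling $\nu(\tau)$ and extracting the $|\theta|^{\varepsilon/2}$ improvement for large $m$ requires a delicate momentum-conservation argument that tracks how off-diagonal Fermi-label jumps propagate through quartic vertices on the tree, and crucially uses Assumption \ref{ass:A}(iv) to rule out ``accidental'' cancellations of such jumps among different flavors.
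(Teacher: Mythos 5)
Your high-level strategy --- GN tree expansion, dimensional power counting, off-diagonal vertex renormalizations constrained by momentum conservation together with the elastic scattering hypothesis --- matches the paper's. However, two steps as stated do not close.

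First, asserting that ``the sum over root scales is effectively restricted to $h\in[h_{\eta},0]$'' by short memory hides the real obstruction. For root scales $h\leq h_{\eta}-5$ the root factor $2^{h(2-m)}$ grows as $h$ decreases, and short memory can only be invoked on top of a \emph{summable} bound over the labelings $\mathbf\Gamma$. The trouble is that, on the path from the first cluster $v^{*}$ enclosing all $A$-leaves down to those leaves, a cluster $v$ with $|\Gamma^{\psi}_v|\in\{2,4\}$ contributes a factor $2^{(h_v-h_{v'})(2-|\Gamma^\psi_v|/2-z_v)}\geq 1$, cf.\ (\ref{eq:star}), which is not summable over scale jumps. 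The missing ingredient is precisely Lemma \ref{lemma:boundedscale}: momentum conservation for such a cluster, together with $|p_{\mathrm{tot},0}|\geq\eta_{\beta}$, forces its scale to satisfy $h_{v}\geq h_{\eta}-4$, contradicting $h_{v}\leq h_{\eta}-5$; only after ruling out these marginal/relevant clusters can one extract the short-memory gain as in (\ref{eq:shortmem}) and make the root-scale sum converge.

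Second, the off-diagonal accounting is off. For $m<m(\theta)$ the number of off-diagonal $A$-insertions is not merely $O(1)$, it is exactly zero: $|p_i|\leq B|\theta|$ for $i\leq m-1$ and $|p_m|\leq B(m-1)|\theta|$ are all too small to cross the Fermi-point gap $\min_{\omega\neq\omega'}|k_F^{\omega}-k_F^{\omega'}|$ when $m<K|\theta|^{-1}$. For $m\geq m(\theta)$ at most one off-diagonal leaf --- the one attached to $\hat A^{\nu_m}_{-\ul p_m}$ --- can appear, and by Remark \ref{rem:od} it must attach at scale $\geq h(m,\eta)=\lfloor\log_2 m\eta_{\beta}\rfloor-2$, which is far above $h_{\eta}$. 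The factor $|\theta|^{\varepsilon/2}$ does not come from smoothness of ultraviolet vertex kernels at a $\theta$-dependent scale; it comes from the short-memory gain $2^{\varepsilon(h-h(m,\eta))}$ accumulated between the root and the off-diagonal leaf, which yields $(1/m)^{\varepsilon/2}$ and hence $|\theta|^{\varepsilon/2}$ when $m\geq K|\theta|^{-1}$. The payment $\eta^{-c|\lambda|}$ comes from the single anomalously growing vertex renormalization $|\mathsf Z^{\nu}_{h,\omega\omega'}|\leq C2^{-c|\lambda|h}$ at that leaf, not from a collection of intermediate-scale ones.
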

\begin{remark}
For $|\theta| \leq K \eta$, as assumed in Theorem \ref{thm:main}, and choosing $\eta$ small enough, the last term in (\ref{eq:estcorr}) is subleading, and the bound reads:
\begin{equation}
\frac{1}{\beta L}|\langle \mathbf T\hat n_{\ul p_{1}};\cdots; \hat n_{\ul p_{m-1}}; \hat\jmath_{\mu,\ul p_{m}}\rangle_{\beta,L}|\leq m!\, C^{m} \eta^{2-m}\;.
\end{equation}
\end{remark}
The proof of the Proposition will use the following lemma.
\begin{lemma}[Existence of a lower bound for the scale of a cluster with few external fermions] \label{lemma:boundedscale}
Let $\tau$ be a tree with $j>0$ A-type leaves carrying momenta $\|\ul p_{i}\|\geq\eta_{\beta}$ and such that $\left|\sum_{i=0}^{j} p_{i,0}\right|\geq \eta_{\beta}$. Consider a vertex $v$ between the root and the last vertex with $\Gamma^{A}_{v}=j$. Then, if $|\Gamma^{\psi}_{v}|=2,4$, we have that $h_{v}\geq h_{\eta} - 4$.
\end{lemma}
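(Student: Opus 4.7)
The plan is to derive the bound from momentum conservation at the sub-cluster rooted at $v$, combined with the scale-dependent support of the cut-offs that constrain the temporal components of the external fermion momenta. Since $v$ lies between the root and the last vertex with $\Gamma^A_v = j$, the sub-cluster at $v$ has external legs consisting of all $j$ $A$-sources (carrying the prescribed momenta $\ul p_i = (\eta_\beta, p_i)$) together with $|\Gamma^\psi_v|\in\{2,4\}$ external fermion lines with momenta $\ul k_\ell = \ul k_F^{\omega_\ell}(\lambda) + \ul k'_\ell$, where each residual momentum $\ul k'_\ell$ is forced by the scale assignment to lie in the support of $\chi_{\leq h_v,\omega_\ell}$.

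First I would impose momentum conservation at this cluster: the total incoming momentum must vanish, so
\[
\sum_{i=1}^j \ul p_i = \pm \sum_{\ell=1}^{|\Gamma^\psi_v|}\bigl(\ul k_F^{\omega_\ell}(\lambda)+\ul k'_\ell\bigr).
\]
Then I would project onto the temporal direction, using the crucial fact that the Fermi points $\ul k_F^\omega(\lambda)$ have vanishing time component by construction, so that
\[
\Bigl|\sum_{i=1}^j p_{i,0}\Bigr| = \Bigl|\sum_{\ell=1}^{|\Gamma^\psi_v|} k'_{\ell,0}\Bigr|.
\]
The right-hand side can then be controlled using the support property (\ref{eq:cutoff}) of the smooth cut-off, which forces $|k'_{\ell,0}|\leq (3/2)\,2^{h_v}$ on the support of $\chi_{\leq h_v,\omega_\ell}$. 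Combining with the hypothesis $\bigl|\sum_i p_{i,0}\bigr|\geq \eta_\beta\geq 2^{h_\eta}$ and the bound $|\Gamma^\psi_v|\leq 4$, the triangle inequality yields $2^{h_\eta}\leq 6\cdot 2^{h_v}$, whence $h_v \geq h_\eta - \log_2 6 > h_\eta - 3$, and in particular $h_v \geq h_\eta - 4$.

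The argument is essentially a momentum-scale pigeonhole: the external fermion lines simply cannot carry away a total temporal momentum of size $\eta_\beta$ if they are all forced to live on scales far below $h_\eta$. I expect no real obstacle beyond bookkeeping of constants; the only point requiring a little care is to make sure that the factor $3/2$ from the support of $\chi$ together with $|\Gamma^\psi_v|\leq 4$ combines to give a slack no worse than $-4$ in the final inequality. All other ingredients (momentum conservation at internal vertices, vanishing temporal component of the Fermi points, scale-support of external legs) are already in place from the RG setup of Section \ref{sec:RGLatticeModel}.
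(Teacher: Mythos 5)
Your proof is correct and follows essentially the same approach as the paper: momentum conservation at the cluster rooted at $v$, projection onto the temporal component (using that $\ul k_F^\omega(\lambda)=(0,k_F^\omega(\lambda))$ has zero time component), and the support constraint of the scale-$h_v$ cutoff on each external fermion leg. The only cosmetic difference is that the paper's treatment of the $|\Gamma^\psi_v|=4$ case routes through an auxiliary loop momentum $\ul q$, whereas your direct triangle inequality over the residual temporal momenta handles both cases uniformly and yields a slightly sharper constant.
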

\begin{proof}
Suppose $|\Gamma_{v}^{\psi}|=2$, then by momentum conservation the monomial associated to $v$ has the general form
\[
\widehat W_{\omega_{v},\omega'_{v}}^{(h_{v})}[\ul k,\ul p_{1},\dots, \ul p_{j};\tau_{v}]\, \hat\psi^{(\leq h_{v}),+}_{\ul k-\ul k^{\omega_{v}}_{F}-\ul p_{\text{tot}},\omega_{v}}\hat\psi^{(\leq h_{v}),-}_{\ul k-\ul k^{\omega'_{v}}_{F},\omega'_{v}}
\]
with $\tau_{v}$ being the sub-tree with first vertex $v$, and where $\ul p_{\text{tot}} = \sum_{i = 1}^{j}\ul p_{i}$. By construction, we have that 
\[
\|\ul k-\ul k^{\omega'_{v}}_{F}\|,\|\ul k-\ul k^{\omega_{v}}_{F}-\ul p_{\text{tot}}\|\leq 2^{h_{v}+1}\;,
\]
and using $\|\ul p_{\text{tot}}-\ul k^{\omega_{v}}_{F}+\ul k^{\omega'_{v}}_{F}\|\geq |p_{\text{tot},0}|\geq 2^{h_{\eta}}$ we obtain
\[
2^{h_{\eta}}-2^{h_{v}+1}\leq\|\ul k\|\leq 2^{h_{v}+1}\qquad \Longrightarrow\qquad h_{v}\geq h_{\eta}-2.
\]
If $|\Gamma_{v}^{\psi}|=4$, we have
\[
 \widehat W_{\ul\omega_{v}}^{(h_{v})}[\ul k,\ul k',\ul q,\ul p_{1},\dots, \ul p_{j};\tau_{v}]\, \hat\psi^{(\leq h_{v}),+}_{\ul k-\ul k^{\omega_{v}}_{F}-\ul p_{\text{tot}},\omega_{v}}\hat\psi^{(\leq h_{v}),-}_{\ul k-\ul k^{\omega'_{v}}_{F}+\ul q,\omega'_{v}}\hat\psi^{(\leq h_{v}),+}_{\ul k'-\ul k^{\omega''_{v}}_{F},\omega''_{v}}\hat\psi^{(\leq h_{v}),-}_{\ul k'-\ul k^{\omega'''_{v}}_{F}-\ul q,\omega'''_{v}}
\]
and reasoning in an analogous way we obtain $\|\ul q\|\geq 2^{h_{\eta}}- 2^{h_{v}+2}$; the computation above implies then that $h_{v}\geq h_{\eta} - 4$.
\end{proof}
\begin{remark}\label{rem:od} The argument of the above proof has the following corollary. Let $j=2$ and suppose that $p_{\text{tot},0} = -m \eta_{\beta}$. Then, the scale $h_{v}$ of the corresponding quadratic monomial is such that $h_{v} \geq \lfloor \log_{2}  m \eta_{\beta} \rfloor - 2$. This provides a lower bound on the scale of the endpoint corresponding to the external field $\hat A^{\mu}_{-\ul p_{m}}$.
\end{remark}
\begin{proof}(of Proposition \ref{prop:CorrDimBound}.) Recall the expansion in equation (\ref{eq:GNexp}). To begin, it is easy to check that the purely ultraviolet contributions is bounded in $\eta$: from the ultraviolet tree expansion, we have:
\begin{equation}\label{eq:uvcorr}
|\langle \mathbf T\hat n_{\ul p_{1}};\cdots; \hat \jmath_{\mu,\ul p_{m}}\rangle_{\beta,L}^{(\text{u.v.})}|\leq m!\,C^{m}\;.
\end{equation}
Consider now the second term in (\ref{eq:GNexp}). Since by assumption $|p_{i}| \leq B|\theta|$ for $i=1,\ldots,m-1$, for $|\theta|$ small enough the marginal couplings associated with the fields $\hat A_{\underline{p}_{i}}^{\nu_{i}}$, $i=1,\ldots,m-1$, correspond to the vertex renormalizations $\mathsf Z^{\mu_{v}}_{h_{v},\omega_{v}\omega_{v}}$, that is with equal fermionic chiralities, by momentum conservation. These running coupling constants have a bounded flow, recall Eq. (\ref{eq:BdsEffCoup}). Observe that this does not apply to $\hat A_{-\underline{p}_{m}}^{\nu_{m}}$: for large enough $m$, this external field might multiply fermions with different chiralities. The corresponding vertex renormalization might have an unbounded flow, (\ref{eq:BdsEffCoup}); this is ultimately the reason for the last term in the right-hand side of (\ref{eq:estcorr}).

Let us write:
\begin{equation}\label{eq:dod}
\langle \mathbf T\hat n_{\ul p_{1}};\cdots; \hat \jmath_{\mu,\ul p_{m}}\rangle_{\beta,L}^{(\text{i.r.})} = \langle \mathbf T\hat n_{\ul p_{1}};\cdots; \hat \jmath_{\mu,\ul p_{m}}\rangle_{\beta,L}^{(\text{i.r.}), (\text{d})} + \langle \mathbf T\hat n_{\ul p_{1}};\cdots; \hat \jmath_{\mu,\ul p_{m}}\rangle_{\beta,L}^{(\text{i.r.}), (\text{od})}
\end{equation}
where the first term takes into account GN trees such that all $A$-endpoints on scale $<0$ multiply multiply fermions with the same chirality (``diagonal endpoints''), while the second term collects the GN trees where $\hat A_{-\underline{p}_{m}}^{\nu_{m}}$ multiplies fermions with different chiralities (``off-diagonal endpoints''). Let us start by considering the first term.

 As a first simple step, we employ equation (\ref{eq:treebd}) to see that
\begin{equation} \label{eq:bdfourier}
\frac{1}{\beta L}|\widehat W^{(h)}_{\bf \Gamma}[\tau]| \leq 2^{h(2-m)} |L[\tau]| \prod_{v\notin V_{f}(\tau)} 2^{(h_{v}-h_{v'})D_{v}},
\end{equation}
where we used $|\Gamma^{A}_{v_{0}}|=m$, and where $L[\tau]$ denotes the contribution of the leaves, namely

\begin{equation}
L[\tau]:= \prod_{v\in V_{f}(\tau)} \text{val}(v)\qquad\qquad \text{val}(v)= \begin{cases} \lambda_{h_{v},\omega\omega'} & v\ \lambda\mathrm{-type},\ h_{v}<0\\
\nu_{h_{v},\omega} & v\ \nu\mathrm{-type},\ h_{v}<0\\
\mathsf Z^{\mu}_{h_{v},\omega} & v\ A^{\mu}\mathrm{-type},\ h_{v}<0\\
\| W^{(0)}_{\Gamma_{v}} \|_{1} & h_{v}=0\;.
\end{cases}
\end{equation}
Therefore, since $D_{v}< 0$ for all $v$, we obtain the thesis easily as long as the root scale $h$ is greater or equal to $h_{\eta}-4$, in which case (\ref{eq:bdfourier}), combined with Proposition \ref{prp:flow}, immediately yields
\begin{equation} \label{eq:bigscales}
\frac{1}{\beta L}\sum_{h=h_{\eta}-4}^{0}\sum_{n=0}^{\infty}\sum_{\tau\in \mathcal T_{h,n}^{m}} \sum_{\mathbf\Gamma\in \mathcal L_{\tau}} |\widehat W^{(h)}_{\bf \Gamma}[\tau]|\leq m!\, C^{m} \eta^{2-m} \sum_{h=h_{\eta}-4}^{0} 2^{(2-m)(h-h_{\eta}+4)}\;.
\end{equation}
Let us now consider the case $h\leq h_{\eta}-5$. Let us call $v^{*}$ the first vertex, encountered starting from the root, such that all external field leaves are in the sub-tree $\tau_{v^{*}}$ whose first vertex is $v^{*}$, and let us denote by $h^{*}$ its scale label; then, for all $v$ in the path $\gamma_{v_{0}\to v^{*}}$ going from $v_{0}$ to $v^{*}$, the number of $A$-fields is constant, so that
\[
D_{v} = 2-m - \frac{\Gamma^{\psi}_{v}}{2} + c|\lambda|
\]
which implies
\begin{equation}\label{eq:vstar}
\frac{1}{\beta L}|\widehat W^{(h)}_{\bf \Gamma}[\tau]| \leq 2^{h^{*}(2-m)} |L[\tau]|  \prod_{v\in\gamma_{v_{0}\to v^{*}}} 2^{-(h_{v}-h_{v'})(|\Gamma^{\psi}_{v}|/2 - c|\lambda|)} \prod_{v\notin\gamma_{v_{0}\to v^{*}}} 2^{(h_{v}-h_{v'})D_{v}}\;.
\end{equation}
If $h^{*}\geq h_{\eta}-5$, we do not need to obtain anything better than equation (\ref{eq:vstar}), since $|\Gamma_{v}^{\psi}|>0$ for all vertices $v\in \gamma_{v_{0}\to v^{*}}$ ensures summability over the labellings of $\tau$.

Let us now consider the case $h^{*} < h_{\eta}-5$. In this case, we need to examine carefully the structure of the GN trees contributing to (\ref{eq:GNexp}). We notice that each $A$-type leaf $v$ with label $(\mu_{v},\omega_{v},-\ul p_{v})$ attached to the tree on scale $<0$ is associated with a monomial (written in momentum space)
\[
\mathsf Z^{\mu_{v}}_{h_{v},\omega_{v}}\int_{\beta,L} d\ul k'\,\hat\psi^{(\leq h_{v}),+}_{\ul k',\omega_{v}}\hat\psi^{(\leq h_{v}),-}_{\ul k'+\ul p_{v},\omega_{v}}\;,
\]
which implies
\begin{equation} \label{eq:leafscale}
2^{h_{v}+2}\geq \|\ul p_{v}\| \geq 2^{h_{\eta}}\quad\Longrightarrow\quad h_{v}\geq h_{\eta}-2\;;
\end{equation}
hence, $A$-type leaves attach at a scale (roughly) at least $h_{\eta}$: at least one of the $\psi$ fields appearing in the labelling of each $A$-type leaf needs to be contracted at a scale greater or equal to (roughly) $h_{\eta}$.

All in all, we may summarize these findings as follows: as graphically represented in Figure \ref{fig:tree}, there exists a collection of $1\leq s\leq m$ disjoint sub-trees with root $v^{*}$ containing all of the $A$-type leaves; if we call them $\{\tau_{i}\}_{i=1}^{s}$, we have
\[
\Gamma^{A}[\tau_{i}]\neq\varnothing\ \ \forall i\;,\qquad\qquad \bigcup_{i=1}^{s} \Gamma^{A}[\tau_{i}] = \Gamma^{A}_{v^{*}} = \Gamma^{A}_{v_{0}}\;.
\]
Since $h^{*}< h_{\eta}-5$, and since all $A$-type endpoints attach at scale at least $h_{\eta}-2$, each sub-tree $\tau_{i}$ will have $1\leq \ell_{i}\leq |\Gamma^{A}[\tau_{i}]|$ internal vertices $\{v^{**}_{i,j}\}_{j=1}^{\ell_{i}}\subseteq V(\tau_{i})$, such that:
\begin{itemize}
\item[1.] the associated subtrees contain all of the $A$-type leaves in $\tau_{i}$, that is
\[
\bigcup_{j=1}^{\ell_{i}} \Gamma^{A}_{v^{**}_{i,j}}=\Gamma^{A}[\tau_{i}]\;;
\]
\item[2.] $h^{**}_{i,j}:=h_{v^{**}_{i,j}}= h_{\eta}-5$.
\end{itemize}
Noticing that condition 1 implies
\[
\bigcup_{i=1}^{s} \bigcup_{j=1}^{\ell_{i}} \Gamma^{A}_{v^{**}_{i,j}} = \Gamma_{v^{*}}\;,
\]
we may number these vertices simply as $v^{**}_{j}$ with $j=1,\dots, \ell$, and $s\leq\ell\leq m$.
\begin{figure}[h]
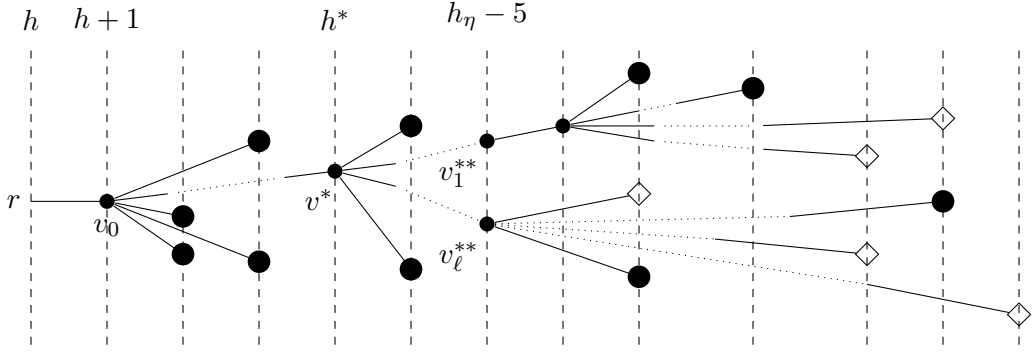

	\begin{center}
		\begin{tabular}{rcl}
			&\tikz[baseline=-3pt]{ 
				\coordinate (A) at (-1,0) node[left] at (-1,0){$r$};
				\draw (A) -- ++ (1,0);
				\draw(0,0)node[vertex,label=below:{$v_0$}](v0){}--(0.8,0.1)node[](v01){};
				\draw[dotted](v01)--(2.2,0.3)node[](v02){};
				\draw(v02)--(3,0.4)node[vertex,label=below:{$v^*\;\;\;\;$}](v1){};
				\draw(v1)--(3.8,0.5)node[](v11){};
				\draw(v1)--(3.8,0.2)node[](v111){};
					\draw[dotted](v11)--(5,0.8)node[vertex,label=below:{$v^{**}_{1}\;\;\;\;\;\;\;$}](v12){};
				\draw(v12)--(6,1)node[vertex](v3){};
				\draw[dotted](v111)--(5,-0.3)node[vertex,label=below:{$v^{**}_{\ell}\;\;\;\;\;\;\;$}](v121){};
					\draw(v121)--(7,-1)node[bigvertex](v4){};
				\draw(v0)--(1,-0.2)node[bigvertex]{};
				\draw(v0)--(1,-0.7)node[bigvertex]{};
				\draw(v0)--(2,-0.8)node[bigvertex]{};
				\draw(v0)--(2,0.8)node[bigvertex]{};
				\draw(v1)--(4,-0.9)node[bigvertex]{};
				\draw(v1)--(4,1)node[bigvertex]{};
				\draw(v3)--(7,1.7)node[bigvertex]{};
				\draw(v3)--(7.2,1)node[](v72){};
				\draw[dotted](v72)--(8.5,1)node[](v92){};
				\draw(v92)--(11,1.1)node[ctVertex]{};
				\draw(v3)--(7.2,.8)node[](v71){};
				\draw[dotted](v71)--(8.5,0.7)node[](v91){};
				\draw(v91)--(10,.6)node[ctVertex]{};
				\draw(8.5,1.5)node[bigvertex]{}--(7.5,1.3){};
				\draw[dotted](7.5,1.3)--(7,1.2){};
				\draw(7,1.2)--(v3){};
				\draw[dotted](v121)--(9,-.2);
				\draw(9,-.2)--(11,0)node[bigvertex]{};
				\draw[dotted](v121)--(8,-.5);
				\draw(8,-.5)--(10,-.7)node[ctVertex]{};
				\draw(10,-1.1)--(12,-1.5)node[ctVertex]{};
				\draw[dotted](v121)--(10,-1.1);
				\draw(v121)--(7,.1)node[ctVertex,label=right:{}]{}; 
				\draw[dashed](-1,2)node[label=above:{$h$}]{}--(-1,-2);
				\draw[dashed](0,2)node[label=above:{$h+1$}]{}--(0,-2);
				\draw[dashed](1,2)--(1,-2);
				\draw[dashed](2,2)--(2,-2);
				\draw[dashed](3,2)node[label=above:{$h^{*}$}]{}--(3,-2);
				\draw[dashed](4,2)--(4,-2);
				\draw[dashed](5,2)node[label=above:{$h_{\eta}-5$}]{}--(5,-2);
				\draw[dashed](6,2)node[label=above:{}]{}--(6,-2);
				\draw[dashed](7,2)--(7,-2);
				\draw[dashed](12,2)--(12,-2);
				\draw[dashed](11,2)--(11,-2);
				\draw[dashed](10,2)--(10,-2);
				\draw[dashed](8.5,2)--(8.5,-2);
			}
		\end{tabular}
	\end{center}
	\caption{A representation of the structure of $\tau$: white diamonds denote $A$-type leaves, and big circles indicate sub-trees involving only $\lambda$ or $\nu$ type leaves.}
	\label{fig:tree}
\end{figure}	

Setting $\gamma_{v^{*}\to v^{**}_{j}}$ to be the path from $v^{*}$ to $v^{**}_{j}$, $\gamma_{*\to**}$ to be the union of all such paths, and by setting $\gamma_{0\to **}:=\gamma_{v_{0}\to v^{*}}\cup\gamma_{*\to**}$,  we are led to the following bound for the second product appearing in the left hand side of equation (\ref{eq:vstar}):
\begin{equation} \label{eq:telescopic}
\begin{split}
\prod_{v\notin\gamma_{v_{0}\to v^{*}}} 2^{(h_{v}-h_{v'})D_{v}}&= \prod_{v\in\gamma_{*\to **}} 2^{(h_{v}-h_{v'})D_{v}} \prod_{v\notin\gamma_{0\to **}} 2^{(h_{v}-h_{v'})D_{v}}\\
& = \prod_{j=1}^{\ell} 2^{(h^{*}-h^{**}_{j})|\Gamma^{A}_{v^{**}_{j}}|}\prod_{v\in\gamma_{*\to **}} 2^{(h_{v}-h_{v'})\big(2-\frac{|\Gamma_{v}^{\psi}|}{2}-z_{v} + c|\lambda|\big)}\prod_{v\notin\gamma_{0\to **}} 2^{(h_{v}-h_{v'})D_{v}}
\end{split}
\end{equation}
where in the second passage we just used telescopic sums to factor out only the external field contribution to $D_{v}$. Keeping in mind that $h^{**}_{j}=h_{\eta}-5$ for all $j=1,\dots,\ell$, and inserting (\ref{eq:telescopic}) in (\ref{eq:vstar}), we obtain
\begin{equation}\label{eq:vstarstar}
\begin{split}
\frac{1}{\beta L}|\widehat W^{(h)}_{\bf \Gamma}[\tau]| \leq  |L[\tau]|\prod_{v\in\gamma_{v_{0}\to v^{*}}}& 2^{-(h_{v}-h_{v'})(|\Gamma^{\psi}_{v}|/2 - c|\lambda|)} \prod_{v\notin\gamma_{0\to **}} 2^{(h_{v}-h_{v'})D_{v}}\,\,\times\\
\times\,\,& 2^{2h^{*}-m(h_{\eta}-5)} \prod_{v\in\gamma_{*\to **}} 2^{(h_{v}-h_{v'})\big(2-\frac{|\Gamma_{v}^{\psi}|}{2}-z_{v} + c|\lambda|\big)}.
\end{split}
\end{equation}
Whereas the first line of (\ref{eq:vstarstar}) is summable over the scale labels, we see that the product over $v\in\gamma_{*\to**}$ is possibly not summable in $h_{v}-h_{v'}$ whenever $\Gamma_{v}^{\psi}=2,4$, since
\begin{equation}\label{eq:star}
2^{(h_{v}-h_{v'})\big(2-\frac{|\Gamma_{v}^{\psi}|}{2}-z_{v}\big)} =\begin{cases}
2^{h_{v}-h_{v'}}&\text{if}\ |\Gamma^{\psi}_{v}|=2, |\Gamma_{v}^{A}|>1\\
1 & \text{if}\ |\Gamma^{\psi}_{v}|=2, |\Gamma_{v}^{A}|=1\ \text{or}\ |\Gamma_{v}^{\psi}|=4,
\end{cases}\ \ \geq 1;
\end{equation}
this is obviously due to us extracting the contribution relative to the $A$-type external fields. In order to prove summability in the scale jumps $h_{v}-h_{v'}$, we would like to show that these instances cannot occur, \emph{i.e.} that
\[
\{v\in\gamma_{*\to**}\ |\ |\Gamma^{\psi}_{v}|\leq 4\}=\varnothing\;.
\]
We proceed by \emph{reductio ad absurdum}: suppose there exists $\bar v\in\gamma_{*\to**}$ such that $|\Gamma^{\psi}_{\bar v}|\leq 4$, then Lemma \ref{lemma:boundedscale} would imply $h_{\bar v}\geq h_{\eta} - 4$, but this contradicts $h_{v}\leq h_{\eta} - 5$ for all $v\in\gamma_{*\to**}$. Therefore, we have that
\begin{equation}\label{eq:scalejumps}
\prod_{v\in\gamma_{*\to **}} 2^{(h_{v}-h_{v'})\big(2-\frac{|\Gamma_{v}^{\psi}|}{2}-z_{v}+c|\lambda|\big)} = \prod_{v\in\gamma_{*\to **}} 2^{(h_{v}-h_{v'})\big(2-\frac{|\Gamma_{v}^{\psi}|}{2}-z_{v}+c|\lambda|\big)}\mathbbm 1_{|\Gamma^{\psi}_{v}|\geq 6}\;.
\end{equation}
Inserting this estimate (\ref{eq:scalejumps}) into (\ref{eq:vstarstar}), we are led to
\begin{equation}\label{eq:vstarstarstar}
\begin{split}
\frac{1}{\beta L}|\widehat W^{(h)}_{\bf \Gamma}[\tau]| \leq  2^{5m}|L[\tau]|\prod_{v\in\gamma_{v_{0}\to v^{*}}}& 2^{-(h_{v}-h_{v'})|\Gamma^{\psi}_{v}|/2} \prod_{v\notin\gamma_{0\to **}} 2^{(h_{v}-h_{v'})D_{v}}\,\,\times\\
\times\,\,&  2^{2h^{*}-m h_{\eta}} \prod_{v\in\gamma_{*\to **}} \Big[2^{(h_{v}-h_{v'})\big(2-\frac{|\Gamma_{v}^{\psi}|}{2}-z_{v}+c|\lambda|\big)}\mathbbm 1_{|\Gamma^{\psi}_{v}|\geq 6}\Big]\;.
\end{split}
\end{equation}
The first factor in the second line is used to sum over $h^{*}\leq h_{\eta} - 5< h_{\eta}$, and it allows to obtain the correct dimensional scaling in $\eta$. Furthermore, summability over $\mathbf{\Gamma}\in\mathcal L_{\tau}$ and $\tau\in \mathcal T_{h,n}^{m}$ is ensured, since all summands involve now factors lesser than 1. In particular, since $A$-leaves are constrained to attach at scale at least $h_{\eta}-2$, choosing any $\varepsilon\in(0,1-c|\lambda|)$, we may extract a gain from the minimum depth of $\tau$ (which amounts to $h_{\eta}-2-h$) by employing the short-memory property of GN trees; we have
\begin{equation} \label{eq:shortmem}
\frac{1}{\beta L}\sum_{\tau\in \mathcal T_{h,n}^{m}}' \sum_{\mathbf\Gamma\in \mathcal L_{\tau}} |\widehat W^{(h)}_{\bf \Gamma}[\tau]| \leq m!\, C_{\varepsilon}^{m+n}   |\lambda|^{n}\, \eta^{2-m}\, 2^{\varepsilon(h-h_{\eta}+2)}\;,
\end{equation}
where the prime denotes the constraint that all $A$-endpoints on scale $<0$ multiply fermions with the same chirality. Summing (\ref{eq:shortmem}) over $n$ and combining the resulting bound with (\ref{eq:bigscales}), for $\lambda$ small enough one gets the following estimate for the correlators, for $m\geq 3$:
\begin{equation}\label{eq:dcorr}
\begin{split}
&\frac{1}{\beta L}|\langle \mathbf T\hat n_{\ul p_{1}};\cdots; \hat \jmath_{\mu,\ul p_{m}}\rangle_{\beta,L}^{(\text{i.r.}), (\text{d})}| \\
&\qquad \leq m!\, \eta^{2-m}\left[C_{\varepsilon}^{m}\sum_{h=h_{\beta}}^{h_{\eta}-5} 2^{\varepsilon(h-h_{\eta}+2)} + \tilde C^{m}\sum_{h=h_{\eta}-4}^{0} 2^{(m-2)(h-h_{\eta}+4)}\right]\\
&\qquad \leq m!\, C^{m} \eta^{2-m}\;,
\end{split}
\end{equation}
for $C=\mathrm{max}\{C_{\varepsilon},\tilde C\}$. 

Consider now the second term in (\ref{eq:dod}). Recall that this contribution is given by trees with one off-diagonal vertex renormalization, corresponding to the external field $\hat A^{\nu_{m}}_{-\underline{p}_{m}}$. The associated running coupling constant satisfies the bound:
\begin{equation}\label{eq:odrcc}
|\mathsf Z^{\mu_{v}}_{h_{v},\omega_{v},\omega'_{v}}| \leq C2^{-c|\lambda| h_{v}}\;.
\end{equation}
By Remark \ref{rem:od}, the scale $h_{v}$ satisfies $h_{v} \geq \lfloor \log_{2}  m \eta \rfloor - 2 =: h(m,\eta)$; this scale constraint will ultimately allow to to show that the sum of all these trees gives a subleading contribution. 

We distinguish three different cases for the scale $h$ of the root: the first is $h\geq h(m,\eta)$; the second is $h(m,\eta) > h \geq h_{\eta} - 5$; the last is $h\leq h_{\eta} - 5$. Consider the first case. Denoting by a double prime the condition that there is exactly one off-diagonal endpoint:
\begin{equation}
\frac{1}{\beta L}\sum_{\tau\in \mathcal T_{h,n}^{m}}'' \sum_{\mathbf\Gamma\in \mathcal L_{\tau}} |\widehat W^{(h)}_{\bf \Gamma}[\tau]| \leq m!\, C^{m+n}   |\lambda|^{n}\, 2^{h(2-m - c|\lambda|)}\qquad \text{for $h\geq h(m,\eta)$.}
\end{equation}
This bound is a direct consequence of (\ref{eq:vstar}), taking into account the bound (\ref{eq:odrcc}) for the off-diagonal running coupling constant. Observe that, for $m\geq 3$:
\begin{equation}
\sum_{h\geq h(m,\eta)}^{0} 2^{h(2-m - c|\lambda|)}\leq C 2^{h(m,\eta)(2-m - c|\lambda|)} \leq K^{m} m^{2-m-c|\lambda|} \eta^{2-m - c|\lambda|}\;,
\end{equation}
where we used that $h(h,m) \geq \log_{2} m + \log_{2} \eta - c$. In particular, since $2-m\leq -1$:
\begin{equation}\label{eq:dod1}
\frac{1}{\beta L} \sum_{h\geq h(m,\eta)}^{0}\sum_{\tau\in \mathcal T_{h,n}^{m}}'' \sum_{\mathbf\Gamma\in \mathcal L_{\tau}} |\widehat W^{(h)}_{\bf \Gamma}[\tau]| \leq m!\, K^{m+n}   |\lambda|^{n} \bigg(\frac{1}{m}\bigg)^{1 - c|\lambda|} \eta^{2-m - c|\lambda|}\;.
\end{equation}
Next, consider the second case. Here we proceed as in the previous case, with the only addition of using the short-memory property of the GN trees, Remark \ref{rem:sm}, to take into account the fact that the off-diagonal endpoint is on scale $\geq h(m,\eta)$, which might be much larger than the root scale $h$. We have, for $h(m,\eta) > h \geq h_{\eta} - 5$:
\begin{equation}
\frac{1}{\beta L}\sum_{\tau\in \mathcal T_{h,n}^{m}}'' \sum_{\mathbf\Gamma\in \mathcal L_{\tau}} |\widehat W^{(h)}_{\bf \Gamma}[\tau]| \leq m!\, C_{\varepsilon}^{m+n}   |\lambda|^{n}\, 2^{h(2-m)} 2^{\varepsilon(h - h(m,\eta))} 2^{-c|\lambda| h(m,\eta)}\;.
\end{equation}
Observe that, for $m\geq 3$:
\begin{equation}
\begin{split}
\sum_{h = h_{\eta}}^{h(m,\eta)} 2^{h(2-m)} 2^{\varepsilon (h - h(m,\eta))} &= 2^{h_{\eta}(2-m)} \sum_{h = h_{\eta}}^{h(m,\eta)} 2^{(h - h_{\eta})(2-m)} 2^{\varepsilon (h - h(m,\eta))} \\
&\leq C^{m} \eta^{2-m} 2^{(\varepsilon/2) (h_{\eta} - h(m,\eta))} \\
&\leq K^{m} \eta^{2-m} \bigg(\frac{1}{m}\bigg)^{\varepsilon/2}\;.
\end{split}
\end{equation}
In particular,
\begin{equation}\label{eq:dod2}
\frac{1}{\beta L} \sum_{h = h_{\eta}-5}^{h(m,\eta)} \sum_{\tau\in \mathcal T_{h,n}^{m}}'' \sum_{\mathbf\Gamma\in \mathcal L_{\tau}} |\widehat W^{(h)}_{\bf \Gamma}[\tau]| \leq m!\, K^{m+n} |\lambda|^{n} \eta^{2-m} \bigg(\frac{1}{m}\bigg)^{\varepsilon/2} \bigg( \frac{1}{\eta m} \bigg)^{c|\lambda|}\;.
\end{equation}
For the third case, we proceed as in (\ref{eq:shortmem}), extracting the short memory factor introduced by the off-diagonal endpoint. We have:
\begin{equation}
\frac{1}{\beta L}\sum_{\tau\in \mathcal T_{h,n}^{m}}'' \sum_{\mathbf\Gamma\in \mathcal L_{\tau}} |\widehat W^{(h)}_{\bf \Gamma}[\tau]| \leq m!\, C_{\varepsilon}^{m+n}   |\lambda|^{n}\, \eta^{2-m}\, 2^{\varepsilon(h-h(m,\eta))} 2^{-c|\lambda| h(m,\eta)}\qquad \text{for $h < h_{\eta} - 5$.}
\end{equation}
Observe that:
\begin{equation}
\sum_{h\leq h_{\eta}} 2^{\varepsilon (h - h(m,\eta))} \leq C 2^{\varepsilon(h_{\eta} - h(m,\eta))} \leq C \bigg( \frac{1}{m} \bigg)^{\varepsilon}\;.
\end{equation}
Therefore, we get:
\begin{equation}\label{eq:dod3}
\frac{1}{\beta L} \sum_{h < h_{\eta}-5} \sum_{\tau\in \mathcal T_{h,n}^{m}}'' \sum_{\mathbf\Gamma\in \mathcal L_{\tau}} |\widehat W^{(h)}_{\bf \Gamma}[\tau]| \leq m!\, K^{m+n}   |\lambda|^{n}\, \eta^{2-m} \bigg( \frac{1}{m} \bigg)^{\varepsilon} \bigg( \frac{1}{\eta m} \bigg)^{c|\lambda|}\;.
\end{equation}
The estimate for the second contribution in (\ref{eq:dod}) is given by the sum of the estimates (\ref{eq:dod1}), (\ref{eq:dod2}), (\ref{eq:dod3}). Next, we observe that, in order for the off-diagonal vertex to be nonzero, we need to have that the spatial component of the momentum $\underline{p}_{m}$ satisfies $|p_{m}| \geq c\min_{\omega,\omega': \omega\neq \omega'} | k_{F}^{\omega} - k_{F}^{\omega'}|$, which is only possible for $m\geq K |\theta|^{-1} \equiv m(\theta)$, since $|p_{i}| \leq B|\theta|$ for all $i=1,\ldots, m-1$. We then have, for $m\geq m(\theta)$:
\begin{equation}
| \langle \mathbf T\hat n_{\ul p_{1}};\cdots; \hat \jmath_{\mu,\ul p_{m}}\rangle_{\beta,L}^{(\text{i.r.}), (\text{od})} | 
\leq m!\, C^{m} \eta^{2-m} |\theta|^{\varepsilon/2 + c|\lambda|} \eta^{-c|\lambda|}\;.
\end{equation}
This bound, combined with (\ref{eq:uvcorr}) and (\ref{eq:dcorr}), concludes the proof of (\ref{eq:estcorr}).
\end{proof}
%
%
Proposition \ref{prop:CorrDimBound} provides bounds uniform in $\beta,L$ large, and with the correct $\eta$-dependence, however they are not strong enough to ensure summability of the original (Wick-rotated) Duhamel series, Eq. (\ref{eq:fullrespint}). Summability could be ensured by assuming that $\|\hat \mu\|_{1}$ is small enough. Remarkably, this is not needed; in the next section, we will improve the $m$-dependence of these bounds, in a way that will allow to prove summability of the Duhamel series (\ref{eq:fullrespint}) without any extra assumption on the strength of the external perturbation.
\subsubsection{Improved estimates on the Euclidean density-current correlations}
Keeping in mind the GN representation (\ref{eq:GNexp}), in order to get bounds which are summable in $m$, we will consider separately the cases with few and many (with respect to $m$) interactions. Indeed, setting
\begin{equation}\label{eq:perturbtrees}
\widehat W_{m;n}(\ul p_{1},\dots,\ul p_{m})= \sum_{h=h_{\beta}}^{0} \sum_{\tau\in \widetilde{\mathcal T}_{h,n}^{m}} \sum_{\mathbf\Gamma\in \mathcal L_{\tau}} \widehat W^{(h)}_{\bf \Gamma}[\tau]\;,
\end{equation}
where $\widetilde{\mathcal T}_{h,n}^{m}$ is the set of trees that extend in the ultraviolet regime, recall the end of Section \ref{sec:GNtrees}, we rewrite Eq. (\ref{eq:GNexp}) as:
\begin{equation}\label{eq:GNsplit}
\begin{split}
\langle \mathbf T\hat n_{\ul p_{1}};\cdots; \hat \jmath_{\mu,\ul p_{m}}\rangle_{\beta,L} &= \langle \mathbf T\hat n_{\ul p_{1}};\cdots; \hat \jmath_{\mu,\ul p_{m}}\rangle_{\beta,L}^{(\text{u.v.})} + \sum_{n=0}^{n_{*}} \widehat W_{m;n}(\ul p_{1},\dots,\ul p_{m})\\&\quad + \sum_{n=n_{*}+1}^{\infty} \widehat W_{m;n}(\ul p_{1},\dots,\ul p_{m})\;,
\end{split}
\end{equation}
with $n_{*}\equiv n_{*}(m):=\lfloor\alpha m\rfloor$, $\alpha\in(0,1)$ constant to be chosen appropriately.
Let us start by considering the first sum in (\ref{eq:GNsplit}). We will show that this term admits an estimate with an improved $m$-dependence, with respect to the dimensional bounds in Proposition \ref{prop:CorrDimBound}. 

Fixing a root scale $h$, a GN tree $\tau\in \widetilde{\mathcal T}^{m}_{h,n}$ and a labelling $\mathbf \Gamma\in \mathcal L_{\tau}$, and recalling (\ref{eq:kernexptree})-(\ref{eq:labelexpren}), we can write:
\begin{equation}\label{eq:WT}
\widehat W^{(h)}_{\mathbf \Gamma}[\tau] = \sum_{{\bf T} = \bigcup_{v\notin V_{f}(\tau)} T_{v}} \widehat W^{(h)}_{\mathbf \Gamma}[\tau; {\bf T}]
\end{equation}
where $\widehat W^{(h)}_{\mathbf \Gamma}[\tau; {\bf T}]$ is constructed via the tree expansion described in Section \ref{sec:GNtrees}, keeping fixed the spanning trees $T_{v}$'s. Recall that each $T_{v}$ arises from the use of the BBF formula, Eq. (\ref{eq:BBF}). In the estimate for $ \widehat W^{(h)}_{\mathbf \Gamma}[\tau; {\bf T}]$, the propagators associated with the edges of ${\bf T}$ are bounded in $L^{1}$ norm; in what follows, we shall discuss a modification of this part of the estimate of $\widehat W^{(h)}_{\mathbf \Gamma}[\tau; {\bf T}]$, that will allow us to extract a summable factor in $m$.  

To this end, let us analyse the structure of $\mathbf T$ more in detail. Since it is given by the union of the anchored trees $(T_{v})_{v\notin V_{f}(\tau)}$, it follows that ${\bf T}$ is itself a tree spanning all endpoints of $\tau$, and $|\mathbf T|=m+n-1$. Since $n\leq n_{*}(m)< m$ by assumption, we see that any spanning tree $\mathbf T$ will have a certain number of \emph{chains} contained in it: by chain we mean the graph that is formed by only connecting $A$-vertices. If $n=0$, the only admissible graphs are (closed) chains. In general, if $n\neq 0$, there exists a set of chains $\mathbf C_{\mathbf T}\subseteq \mathbf T$ with $N_{c}$ connected components such that
\begin{equation}
\mathbf T = \mathbf C_{\mathbf T}\cup (\mathbf T\setminus\mathbf C_{\mathbf T})\qquad \qquad \mathbf C_{\mathbf T}:= \coprod_{j=1}^{N_{c}} \mathbf C_{j},
\end{equation}
and where each chain $\mathbf C_{j}$ is obtained by considering the maximum number of consecutive edges between only $A$-type leaves along $\mathbf T$ (see Fig. \ref{fig:chains} for an example). Thus, the number of propagators in a chain with $m_{j}$ $A$-vertices is $|\mathbf C_{j}| = m_{j}-1$. It should be noted that the number of chains is bounded as $1\leq N_{c}\leq \max\{2n,1\}$.

\begin{figure}[ht]
\centering
\begin{tikzpicture}
\begin{feynman}
\vertex (b);
\node at (b) [square dot];
\vertex (b1) at ($(b)-(0,0.4)$) {\( \ul x_{1}^{j}\)};
\vertex [right=of b] (c);
\node at (c) [square dot];
\vertex (c1) at ($(c)-(0,0.4)$) {\( \ul x_{2}^{j}\)};
\vertex [right= of c](d);
\vertex (z) at ($(d)+(0.5,0)$);
\draw[black,fill=black] (z) circle (.05 cm);
\vertex (z1) at ($(z)+(0.35,0)$);
\draw[black,fill=black] (z1) circle (.05 cm);
\vertex (z2) at ($(z1)+(0.35,0)$);
\draw[black,fill=black] (z2) circle (.05 cm);
\vertex (r) at ($(z2)+(0.5,0)$);
\vertex [right=of r] (s);
\node at (s) [square dot];
\vertex (s1) at ($(s)-(0,0.4)$) {\;\;\( \ul x_{m_{j}}^{j}\)};
\vertex [above=of s] (t);
\vertex [above=of b] (e);
\vertex [above=of c] (f);
\diagram*[large]{
(b) -- [fermion, edge label'=\( h_{1}\)] (c);
(c) -- [fermion, edge label'=\( h_{2}\)] (d);
(r) -- [fermion, edge label'=\( h_{m_{j}-1}\)] (s);
(t) -- [purple,boson, edge label'=\(\ul p_{m_{j}}^{j}\)] (s);
(e) -- [purple,boson, edge label'=\( \ul p_{1}^{j} \)] (b);
(f) -- [purple, boson, edge label'=\( \ul p_{2}^{j}\)] (c);
};
\end{feynman}
\end{tikzpicture}
\caption{An example of a chain $\mathbf C_{j}$ with $m_{j}$ $A$-leaves: external momenta injections are denoted by wiggly lines, and the scale of the $i$-th edge $\ell_{i}$ on the chain is indicated with $h_{i}$. We have indicated also the position variables associated to each density leaf.}
\label{fig:chains}
\end{figure}
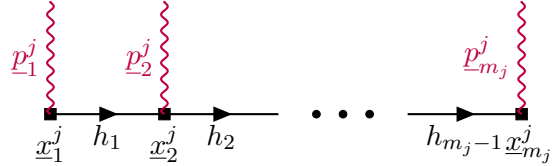


Notice that the number of chains $N_{c}$ and their lengths $m_{j}-1$ depend on the triple $(\tau,\mathbf\Gamma, \mathbf T)$, with the following additional constraint: by construction we have that
\[
|\mathbf C_{\mathbf T}|=\sum_{j=1}^{N_{c}}|\mathbf C_{j}|\equiv\sum_{j=1}^{N_{c}} (m_{j}-1) = m - N_{c},
\]
and therefore $|\mathbf T\setminus \mathbf C_{\mathbf T}|= n -1 + N_{c}$.
Each $\mathbf C_{j}$ may be parameterized (see Fig. \ref{fig:chains}) by the ordered set of leaves connected by it, and these in turn are identified by the ordered external momenta $m_{j}$-tuple $(\ul p_{1}^{j},\dots, \ul p_{m_{j}}^{j})$ and directed edges $(\ell_{i})_{i=1}^{m_{j}-1}$, with $\ell_{i}\equiv(\ul x_{i}^{j},\ul x_{i+1}^{j})$.

Therefore, the value of a chain may be written (up to the action of $\mathfrak R$) as
\begin{equation} \label{eq:chainvalue}
\prod_{\ell\in \mathbf C_{j}} g^{(h_{\ell})}_{\ell} \equiv \prod_{i=1}^{m_{j}-1} g^{(h_{i})}(\ul x_{i}^{j}-\ul x_{i+1}^{j})
\end{equation}
with $h_{i}\equiv h_{\ell_{i}}$. As we will show below, for the special choice of external momenta that we are considering, in which the temporal part of all but one external momenta are the same and equal to $\eta$, the presence of ``long chains'' will introduce a gain in the final estimate which is \emph{factorial} in their length. In turn, this will imply an improvement in the bounds for the kernels $\widehat W_{m;n}$ with $n<n_{*}(m)$, providing overall summability in $m$.
\begin{lemma}[Bounding chains] \label{lemma:factorial}
Let $r\in \N$ and let $\mathbf C$ be a chain as described above, with $r$ oriented edges labelled as $\ell_{1},\dots, \ell_{r}$, and scale labels $h_{1}, \ldots, h_{r}$. Let us indicate with $h^{\mathrm{min}}:=\min_{i} h_{i}$ the lowest scale; then, for any given choice of scales $\{h_{i}\}_{i=1}^{r}$ compatible with the values of the external momenta, and for any subset $G\subseteq \mathbf C$ containing at least one line with scale $h^{\mathrm{min}}$, we have
\begin{equation}\label{eq:bdfact}
\prod_{\ell\in G} 2^{-h_{\ell}} \leq \left(\frac{C}{\eta}\right)^{|G|-1} \frac{2^{-h^{\mathrm{min}}}}{(|G|-1)!}\;.
\end{equation}
In particular, by picking $G=\mathbf C$ we obtain
\[
\prod_{i=1}^{r} 2^{-h_{i}} \leq \left(\frac{C}{\eta}\right)^{r-1} \frac{2^{-h^{\mathrm{min}}}}{(r-1)!}\;.
\]
\end{lemma}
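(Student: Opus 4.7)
The intuition is that the temporal momenta flowing along a chain form an essentially rigid arithmetic progression with step $\eta_\beta$, so the propagator scales along $\mathbf C$ are forced to spread out, producing a factorial gain for any chosen subset $G$.

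The first step is to parametrize the temporal momenta along the chain. Labelling the edges consecutively as $i=1,\ldots,r$, momentum conservation at each $A$-leaf gives $k_{0,i+1} - k_{0,i} = p^{j}_{i+1,0}$, and each $p^{j}_{k,0}$ equals $\eta_\beta$ or, at most once along the chain (at the leaf injecting $\underline{p}_m$, if present), $-(m-1)\eta_\beta$. Iterating, $k_{0,i} = k_{0,\mathrm{in}} + a_i\,\eta_\beta$ for suitable integers $a_i$. A short case analysis shows the $\{a_i\}$ are pairwise distinct: without $\underline{p}_m$ on the chain they form $1,2,\ldots,r$, while with $\underline{p}_m$ injected at some position $k^{\star}$ they form $\{1,\ldots,k^{\star}-1\}\cup\{k^{\star}-m,\ldots,r-m\}$, and the two sets are disjoint since $r<m$ (a coincidence $a_{i_1}=a_{i_2}$ would force $i_2-i_1=m$, which is impossible).

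Next, turn the scale-compatibility condition into an arithmetic inequality. The cutoff forces $|k_{0,\ell}| \leq c_0\, 2^{h_\ell}$ on every edge, and $h_\ell \geq h^{\min}$ by definition of $h^{\min}$. Pick an edge $\ell^{\min}\in G$ achieving $h^{\min}$ (which exists by assumption). For any other $\ell \in G$, the triangle inequality gives
\[
|a_\ell - a_{\ell^{\min}}|\,\eta_\beta \;=\; |k_{0,\ell}-k_{0,\ell^{\min}}| \;\leq\; c_0(2^{h_\ell}+2^{h^{\min}}) \;\leq\; 2c_0\, 2^{h_\ell},
\]
so $2^{-h_\ell} \leq K/(|b_\ell|\,\eta)$ with $b_\ell := a_\ell - a_{\ell^{\min}}$ a nonzero integer.

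Finally, extract the factorial from distinctness. Since $\{a_i\}$ is injective, so is $\{b_\ell\}_{\ell\neq\ell^{\min}}$, and each positive value among $\{|b_\ell|\}$ therefore occurs at most twice (once per sign). Ordering $|b_\ell|$ nondecreasingly as $d_1\leq\cdots\leq d_{|G|-1}$ yields $d_k\geq\lceil k/2\rceil$, hence
\[
\prod_{\ell\in G\setminus\{\ell^{\min}\}} |b_\ell| \;\geq\; \bigl(\lfloor (|G|-1)/2\rfloor!\bigr)^{2} \;\geq\; \frac{(|G|-1)!}{4^{(|G|-1)/2}},
\]
using the elementary bound $(p!)^{2}\geq (2p)!/4^{p}$ coming from $\binom{2p}{p}\leq 4^{p}$. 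Combining,
\[
\prod_{\ell\in G} 2^{-h_\ell} \;=\; 2^{-h^{\min}} \!\!\prod_{\ell\in G\setminus\{\ell^{\min}\}} \!\! 2^{-h_\ell} \;\leq\; 2^{-h^{\min}} \Bigl(\frac{C}{\eta}\Bigr)^{|G|-1} \frac{1}{(|G|-1)!},
\]
after absorbing the factor $2^{|G|-1}$ from $4^{(|G|-1)/2}$ into the new constant $C$; the trivial case $|G|=1$ reduces to $2^{-h^{\min}}$, consistent with $0!=1$. The main delicate point is verifying distinctness of the $a_i$ across the $\underline{p}_m$-injection site: this is precisely what $r<m$ buys us, and once it is in hand the rest of the argument is a linear algebraic triangle inequality plus a standard binomial estimate.
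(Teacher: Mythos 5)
Your proof is correct in its essential reasoning and takes a genuinely different route from the paper's. The paper partitions the $k_0$ integration region into strips $\mathcal{A}_s := \{k_0 : |k_0 + s\eta| < \eta/2\}$ plus their complement, estimates the propagator denominators from below on each strip via $\|\ul k' + \ul p_{\leq l}\| \geq |k_0 + l\eta| \geq |l - s|\eta/2$, and performs a case analysis on whether the strip index $s$ matches one of the lines in $G$. You instead compare each edge directly to $\ell^{\min}$ via the triangle inequality on temporal momenta: the support constraint caps $|k_{0,\ell}|$ by $c_0 2^{h_\ell}$, hence $|b_\ell|\eta \leq 2c_0 2^{h_\ell}$, and the rest is counting from the injectivity of the progression $\{a_i\}$. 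This works pointwise in $\ul k'$, avoids the partition and the case analysis entirely, and is more elementary; your treatment of the exceptional $\ul p_m$-injection via disjointness of $\{1,\dots,k^\star-1\}$ and $\{k^\star-m,\dots,r-m\}$ (exploiting $r<m$) is also cleaner than the paper's splitting of $G$ into two sub-chains. The trade-off is that you only use the upper cutoff on $|k_{0,\ell}|$, never the lower bound $\|\ul k'+\ul p_{\leq s}\|\geq 2^{h_s-1}$, which is precisely why your version is shorter.

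One small arithmetic slip: the intermediate inequality $\bigl(\lfloor(|G|-1)/2\rfloor!\bigr)^2 \geq (|G|-1)!/4^{(|G|-1)/2}$ fails for odd $|G|-1$; already at $|G|-1=7$ one has $(3!)^2 = 36 < 7!/4^{7/2} = 39.375$. Your final conclusion nevertheless stands, because $\lceil k/2\rceil \geq k/2$ gives directly
\[
\prod_{\ell \in G\setminus\{\ell^{\min}\}} |b_\ell| \;\geq\; \prod_{k=1}^{|G|-1} \Big\lceil \frac{k}{2}\Big\rceil \;\geq\; \prod_{k=1}^{|G|-1}\frac{k}{2} \;=\; \frac{(|G|-1)!}{2^{|G|-1}} \;=\; \frac{(|G|-1)!}{4^{(|G|-1)/2}}\;,
\]
so the detour through $(\lfloor\cdot\rfloor!)^2$ and the binomial estimate is unnecessary and can simply be dropped. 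With this one-line repair the argument is complete.
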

\begin{proof}
We start by considering the value of the chain in momentum space. We start by assuming that all propagators in the chain are on scale $\leq 0$. We will comment at the end on the case of chains extending in the ultraviolet regime.
Recalling (\ref{eq:chainvalue}), the value of the chain $\mathbf C$ (up to the action of $\mathfrak{R}$) is given by
\begin{equation} \label{eq:chainval}
\prod_{\ell\in \mathbf C} g^{(h_{\ell})}_{\ell} \equiv \prod_{s=1}^{r} g_{\omega_{s}}^{(h_{s})}(\ul x_{s}-\ul x_{s+1})
\end{equation}
with $h_{s}\equiv h_{\ell_{s}}$. Observe that the chain is attached to an incoming and an outgoing line; taking into account these lines, the value of the extended chain is:
\begin{equation}\label{eq:inout}
g^{(h_{in})}_{\ell_{in}}(\ul x_{in} - \ul x_{1}) \Big[ \prod_{\ell\in \mathbf C} g^{(h_{\ell})}_{\ell} \Big] g^{(h_{out})}_{\ell_{out}}(\ul x_{r+1} - \ul x_{out})\;.
\end{equation}
Observe that the coordinates $\ul x_{1},\dots, \ul x_{r+1}$, which are the coordinates of the $A$-leaves attached to the chain, only appear in (\ref{eq:inout}), and nowhere else in the evaluation of the GN tree. Let us consider (\ref{eq:inout}), multiplied by the phase factors $e^{-i\underline{p}_{i}\cdot \underline{x}_{i}}$ attached to the external fields, $i=1,\ldots, r+1$. Using the Fourier representation for the single-scale propagator
\begin{equation}
g_{\omega}^{(h)}(\ul x-\ul x')=\int_{\beta,L} d\ul k'\, e^{i\ul k'\cdot(\ul x-\ul x')} \hat g_{\omega}^{(h)}(\ul k')\;,
\end{equation}
and introducing the short-hand
\[
\ul p_{\leq s} := \sum_{s'\leq s} \ul p_{s'}\qquad \forall s\in\{1,\dots, r+1\},
\]
the Fourier transform of (\ref{eq:inout}) is:
\begin{equation}\label{eq:chain}
\begin{split}
&\int_{\beta,L} d\ul x_{1}\cdots d\ul x_{r+1}\, e^{-i\sum_{s=1}^{r+1} \ul p_{s}\cdot \ul x_{s}} g^{(h_{in})}_{\ell_{in}}(\ul x_{in} - \ul x_{1}) \Big[ \prod_{\ell\in \mathbf C} g^{(h_{\ell})}_{\ell} \Big] g^{(h_{out})}_{\ell_{out}}(\ul x_{r+1} - \ul x_{out}) \\
&= e^{- i\ul p_{\leq r+1}\cdot \ul x_{out}}\int_{\beta,L} d\ul k'\, e^{i \ul k'\cdot (\ul x_{in}-\ul x_{out})} \hat g^{(h_{in})}_{\omega_{in}}(\underline{k}') \Big[\prod_{s=1}^{r} \hat g^{(h_{s})}_{\omega_{s}}(\ul k'+\ul p_{\leq s})\Big] \hat g^{(h_{out})}_{\omega_{out}}(\underline{k}' + \underline{p}_{\leq r+1})\;.
\end{split}
\end{equation}
Observe that, by momentum conservation, the momenta of the propagators on the chain are completely determined by the ingoing momentum $\ul k'$ and the external momenta of the source terms. By the support properties of the single-scale propagators $\hat g^{(h_{i})}$ we see that
\begin{equation} \label{eq:scalebound}
\|\ul k' + \ul p_{\leq s}\|\geq 2^{h_{s}-1}\qquad\Longrightarrow\qquad 2^{-h_{s}}\leq \frac{1}{2\|\ul k' + \ul p_{\leq s}\|}\;.
\end{equation}
In what follows, we shall assume that $p_{i,0}=\eta$ for all $i=1,\dots, r$. Later we will discuss the adaptation to the case in which one of the momenta among $\ul p_{1},\dots,\ul p_{r}$ corresponds to $\ul p_{m}$, \emph{i.e.} it has temporal component equal to $-(m-1)\eta$.

Notice that the first propagator in the chain is supported for the following values of $k_{0}$, $|k_{0} + \eta| \leq 2^{h_{1}+1}$. We will partition this interval as follows. Let us define:
\begin{equation}\label{eq:As}
\mathcal A := \coprod_{s=1}^{\bar n} \mathcal A_{s}\qquad\qquad \mathcal A_{s}:= \left\{k_{0}\in\mathbb M_{\beta}\ |\ |k_{0}+s\eta|<\eta/2\right\}
\end{equation}
with\footnote{Since the width of the intervals is $\eta$, the number of disjoint intervals needed to cover the region $|k_{0} + \eta| \leq 2^{h_{1}+1}$ is less than $\lfloor2^{h_{1}+1-h_{\eta}}\rfloor$.} $\bar n=\min\{r, \lfloor2^{h_{1}+1-h_{\eta}}\rfloor\}$. Then, for all $k_{0}\in\mathcal A_{s}$ and for all $l\neq s$, we have:
\begin{equation} \label{eq:lowbd}
\|\ul k' + \ul p_{\leq l}\| \geq |k_{0}+l\eta|\geq \frac{|l-s|\eta}{2}.
\end{equation}
Let us number the lines appearing in $G$ with labels $l_{\alpha}$, $\alpha=1,\dots,|G|$. Consider the estimate for the chain in momentum space, in the support of $\mathcal{A}_{s}$:
\begin{equation}
1_{\mathcal A_{s}} \prod_{\alpha=1}^{|G|} \frac{1}{\|\ul k' + \ul p_{\leq l_{\alpha}}\|}\;,
\end{equation}
for $\mathcal{A}_{s}$ as in (\ref{eq:As}). If $s=l_{\bar \alpha}$ for some $\bar \alpha = 1,\ldots, |G|$, applying (\ref{eq:lowbd}) to all the other lines with indices $l_{\alpha}$ with $\alpha\neq\bar\alpha$, we obtain:
\begin{equation} \label{eq:fact1}
1_{\mathcal A_{s}}\prod_{\alpha=1}^{|G|} \frac{1}{\|\ul k' + \ul p_{\leq l_{\alpha}}\|}\leq \left(\frac{2}{\eta}\right)^{|G|-1} \frac{2^{-h_{l_{\bar\alpha}}}}{(\bar\alpha-1)!(|G|-\bar\alpha)!}\leq \left(\frac{4}{\eta}\right)^{|G|-1} \frac{2^{-h^{\mathrm{min}}}}{(|G|-1)!};
\end{equation}
in the first step we used that $|l_{\alpha}-l_{\bar\alpha}|\geq |\alpha-\bar\alpha|$, and in the last step we used the binomial theorem combined with the fact that $h_{l_{\bar\alpha}}\geq h^{\mathrm{min}}$. Instead, if $s\neq l_{\alpha}$ for all $\alpha = 1, \ldots, |G|$, then $\|\underline{k}' + \underline{p}_{\leq l_{\alpha}} \| \geq \eta/2$ for $k_{0}$ in $\mathcal{A}_{s}$. We write:
\begin{equation}
1_{\mathcal A_{s}}\prod_{\alpha=1}^{|G|} \frac{1}{\|\ul k' + \ul p_{\leq l_{\alpha}}\|} = 1_{\mathcal A_{s}} \Bigg[\prod_{\substack{\alpha=1 \\ l_\alpha \leq s}}^{|G|} \frac{1}{\|\ul k' + \ul p_{\leq l_{\alpha}}\|}\Bigg] \Bigg[\prod_{\substack{\alpha=1 \\ l_\alpha > s}}^{|G|} \frac{1}{\|\ul k' + \ul p_{\leq l_{\alpha}}\|}\Bigg]\;. 
\end{equation}
In the first term, we bound the denominator from below as $|k_{0} + l_{\alpha} \eta| \geq \eta (s - l_{\alpha} - 1/2)$, while in the second term we bound the denominator from below as $|k_{0} + l_{\alpha} \eta| \geq \eta (l_{\alpha} - s - 1/2)$. Consider the first product, and let us denote by $\bar \alpha$ the largest $\alpha$ in the product. We have: $(s - l_{\alpha} - 1/2) = (s - l_{\bar \alpha} - 1/2) + (l_{\bar \alpha} - l_{\alpha}) \geq 1/2 + |\alpha - \bar \alpha|$. Concerning the second product, we have $(l_{\alpha} - s - 1/2) \geq 1/2 + (\alpha - \bar \alpha)$. Therefore, using these bounds for all $\alpha\neq\bar\alpha$, and estimating the term with $\bar\alpha$ as
\begin{equation}
\|\ul k' +\ul p_{\leq l_{\bar\alpha}}\| \geq 2^{h_{l_{\bar\alpha}}-1} \geq 2^{h^{\mathrm{min}}-1}\;,
\end{equation}
we get:
\begin{equation}
\begin{split}
1_{\mathcal A_{s}}\prod_{\alpha=1}^{|G|} \frac{1}{\|\ul k' + \ul p_{\leq l_{\alpha}}\|} &\leq \left(\frac{C}{\eta}\right)^{|G|-1} \frac{2^{-h^{\mathrm{min}}}}{(\bar \alpha-1)!( |G| - \bar \alpha)!} \\
&\leq \left(\frac{K}{\eta}\right)^{|G|-1} \frac{2^{-h^{\mathrm{min}}}}{(|G|-1)!}\;.
\end{split}
\end{equation}

Next, let us consider the case in which $k_{0}\in \mathcal A^{c}$. In this case, we have:
\begin{equation}  \label{eq:facgain}
\|\ul k' + \ul p_{\leq l}\| \geq |k_{0}+l\eta|\geq \begin{cases}
 (l - 1/2)\eta & \text{if $k_0\geq -\eta/2$}\\
(r+1/2 - l)\eta & \text{if $k_0\leq-\eta r -\eta/2$,}
\end{cases}
\end{equation}
where the second case holds only if $\bar n= r$, \emph{i.e.} if $r <\lfloor2^{h_{1}+1-h_{\eta}}\rfloor$. Let us consider the first case. We apply the bound (\ref{eq:facgain}), together with $l_{\alpha} - 1/2 \geq \alpha - 1/2 \geq \alpha /2$, to all the lines with $\alpha \neq 1$ and we estimate the line with $\alpha = 1$ by $2^{-h^{\mathrm{min}}}$. We have:
\begin{equation} \label{eq:fact2}
1_{\mathcal A^{c}}\prod_{\alpha=1}^{|G|} \frac{1}{\|\ul k' + \ul p_{\leq l_{\alpha}}\|} \leq \left(\frac{2}{\eta}\right)^{|G|-1} \frac{2^{-h^{\mathrm{min}}}}{(|G|-1)!}\;.
\end{equation}
Instead, in the second case we bound the propagator with $\alpha = |G|$ by $2^{-h^{\mathrm{min}}}$ and we estimate the product over all the other lines in $G$ using the second of (\ref{eq:facgain}). We obtain: 
\begin{equation}
1_{\mathcal A^{c}} \prod_{\alpha=1}^{|G|} \frac{1}{\|\ul k' + \ul p_{\leq l_{\alpha}}\|} \leq 2^{-h^{\mathrm{min}}} \prod_{\alpha=1}^{|G|-1} \frac{1}{\eta(r+1/2 - l_{\alpha})} \leq  \left(\frac{2}{\eta}\right)^{|G|-1} \frac{2^{-h^{\mathrm{min}}}}{(|G|-1)!}\;.
\end{equation}
Combining the bounds (\ref{eq:fact1})-(\ref{eq:fact2}) together with equation (\ref{eq:scalebound}) only for lines in $G$, we are then led to
\begin{equation}
\prod_{\alpha=1}^{|G|} 2^{-h_{l_{\alpha}}} \leq \left(\frac{C}{\eta}\right)^{|G|-1} \frac{2^{-h^{\mathrm{min}}}}{(|G|-1)!}
\end{equation}
which is the claim, equation (\ref{eq:bdfact}).

Let us now briefly discuss the case in which one of the $A$-lines carries a momentum $-(m-1)\eta$. Here, we divide the chain $G$ in two parts, called $G_{1}$ and $G_{2}$, $G_{1}$ being on the left of the $A$-line carrying momentum $-(m-1)\eta$ and $G_{2}$ on the right. We extract the factorial gain separately for each part of the chain. We obtain:
\begin{equation}
\prod_{\alpha=1}^{|G|} 2^{-h_{l_{\alpha}}} \leq \left(\frac{C}{\eta}\right)^{|G|-1}\frac{2^{-h^{\text{min}}}}{(|G_{1}|-1)!(|G_{2}| - 1)!} \leq \left(\frac{K}{\eta}\right)^{|G|-1} \frac{2^{-h^{\text{min}}}}{(|G|-1)!}\;.
\end{equation}
To conclude, let us comment on the case of chains extending in the ultraviolet regime. Here, the chirality of the propagators is not defined, so in the above argument the single-scale propagator $g_{\omega_{s}}^{(h_{s})}(\underline{k}')$ is replaced by $g^{(h_{s})}(\underline{k}' + \underline{k}_{F}^{\omega})$ with $h_{s} > 0$. The analysis applies unchanged, using $2^{-h_{s}} \leq \| k_{0} + (\underline p_{\leq s})_{0} \|$. This concludes the proof of Lemma \ref{lemma:factorial}.
\end{proof}
Next, we will use Lemma \ref{lemma:factorial} to obtain an improved bound for the first term in (\ref{eq:GNsplit}).
\begin{proposition}[Improved estimate for low perturbative orders] \label{prop:SumBoundLow}
For any $m\geq 16$,
\begin{equation}
\frac{1}{\beta L}|\widehat W_{m;0}(\ul p_{1},\dots,\ul p_{m})| \leq m!\, C^{m}\, \frac{1}{\lfloor m/4-4\rfloor!}\,\eta^{2-m} (1 + \mathbbm 1_{m\geq m(\theta)} |\theta|^{\varepsilon/2 + c|\lambda|} \eta^{-c|\lambda|})\;,
\end{equation}
and for any $\alpha\in(0,1/34)$, and any $1\leq n\leq n_{*}\equiv \lfloor\alpha m\rfloor$
\begin{equation}
\begin{split}
&\frac{1}{\beta L}|\widehat W_{m;n}(\ul p_{1},\dots,\ul p_{m})| \\
&\quad \leq m!\, C^{m+n} \left|\lambda\right|^{n}\, \left(\frac{2n}{m-32n}\right)^{m(1/4 - 8\alpha)}\, \eta^{2-m}(1 + \mathbbm 1_{m\geq m(\theta)} |\theta|^{\varepsilon/2 + c|\lambda|} \eta^{-c|\lambda|})\;.
\end{split}
\end{equation}
\end{proposition}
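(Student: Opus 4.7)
The plan is to extract from the Gallavotti-Nicolò expansion (\ref{eq:perturbtrees}) a factorial improvement in $m$ that is invisible to the dimensional bound of Proposition \ref{prop:CorrDimBound}. The starting observation is that each of the $(m+n-1)$ edges of a spanning tree $\mathbf T$ produced by the BBF formula (Remark \ref{rmk:BBF}) is a single-scale propagator $g^{(h_\ell)}_\omega$, and in the naive bound on $\widehat W^{(h)}_{\bf\Gamma}[\tau;\mathbf T]$ each of these is estimated by its $L^1$ norm $\sim C 2^{-h_\ell}$. However, the edges connecting consecutive $A$-type endpoints form chains $\mathbf C_j$, and Lemma \ref{lemma:factorial} shows that for the special external momenta $p_{i,0} = \eta_\beta$ the product $\prod_{\ell \in G} 2^{-h_\ell}$ over any subset $G \subseteq \mathbf C_j$ is reduced by a factorial $(|G|-1)!$ compared with the naive bound, at the price of only $(C/\eta)^{|G|-1}$ extra factors that are absorbed by the chain itself since $2^{-h_\ell} \gtrsim \eta^{-1}$ for $\ell$ on any relevant scale.

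The combinatorial control of the chains is the core of the argument. For a spanning tree with $n$ interaction endpoints the chains satisfy $N_c \leq \max\{2n,1\}$ and $\sum_j (m_j - 1) = m - N_c$. For $n = 0$ there is essentially one chain whose length is at least $m-1$, but it is split into two sub-chains by the single $A$-endpoint carrying the non-collinear momentum $-(m-1)\eta$ (cf.\ Remark \ref{rem:od}), and within each half we can only apply the lemma to a subset of size $\sim (m-1)/4$, where the factor $1/4$ arises because $G$ must avoid clusters at which $\mathfrak R$ acts in a way incompatible with the momentum-conservation chain argument. After absorbing the $O(1)$ boundary clusters this produces the announced factor $1/\lfloor m/4 - 4\rfloor!$. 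For $1 \leq n \leq n_*$, by convexity of $r \mapsto \log (r-2)!$ the total gain $\prod_j 1/(m_j - 2)!$ is minimized when the $N_c$ chains have equal length $(m - N_c)/N_c$; applying Lemma \ref{lemma:factorial} to sub-chains of length roughly $(m - 32n)/(8n)$ (the constant $32$ absorbing the $-4$ offset per chain and the $1/4$ loss already encountered for $n=0$) and using Stirling's formula then converts $\bigl[\bigl((m-32n)/(8n)\bigr)!\bigr]^{-N_c}$ into the factor $\bigl(2n/(m-32n)\bigr)^{m(1/4 - 8\alpha)}$. The hypothesis $\alpha < 1/34$ guarantees that $m - 32 n > 0$, so the exponent is well-defined and the base of the power is less than one.

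The main obstacle is to show that inserting Lemma \ref{lemma:factorial} into the inductive bound of Section \ref{sec:dimest} does not destroy the $\eta^{2-m}$ scaling obtained in (\ref{eq:vstarstarstar}) nor the uniformity in $\beta, L$. The $(C/\eta)^{|G|-1}$ factors are absorbed against those chain propagators that remain in the usual $L^1$ estimate, using that for $\ell \in \mathbf C_j$ on scale $h_\ell$ one has $2^{-h_\ell} \leq 2^{-h_\eta} = C/\eta$. The action of $\mathfrak R$ along the chain is handled via the continuity property of Remark \ref{rem:cont}: replacing each chain propagator by its localized momentum-space version produces only a harmless $2^{ah_\ell}$ suppression that can be summed over scales. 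Finally, the off-diagonal vertex renormalization attached to $\hat A^{\mu}_{-\underline p_m}$ is dealt with exactly as in the proof of Proposition \ref{prop:CorrDimBound}, producing unchanged the correction factor $\mathbbm 1_{m\geq m(\theta)}|\theta|^{\varepsilon/2 + c|\lambda|}\eta^{-c|\lambda|}$. Summing the resulting improved bounds over spanning trees, labellings, tree topologies, and scales, and using the bounds of Proposition \ref{prp:flow} on the effective couplings to extract the factor $|\lambda|^n C^{n+m}$, concludes the proof.
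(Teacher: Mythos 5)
Your proposal correctly identifies the overall strategy — applying Lemma \ref{lemma:factorial} to the chains of $A$-type leaves in the spanning tree to extract a factorial gain in $m$ — and your combinatorics for the chain-length sums are morally in line with what the paper does via the multinomial theorem. However, there is a genuine gap at the central technical step: the control of the $(C/\eta)^{|G|-1}$ factors.

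You claim these factors are ``absorbed against those chain propagators that remain in the usual $L^1$ estimate, using that $\dots 2^{-h_\ell} \leq 2^{-h_\eta} = C/\eta$'' (while also writing $2^{-h_\ell} \gtrsim \eta^{-1}$, which is the opposite inequality — this internal inconsistency is a symptom of the problem). With $h_\ell \geq h_\eta$, the factor $(C/\eta)\cdot 2^{h_\ell} = C\,2^{h_\ell - h_\eta} \geq C$ is potentially $\emph{large}$, so there is no automatic absorption along the chain. This is precisely the difficulty the paper flags in Eq.\ (\ref{eq:chainL1naive}): applying the lemma to the $\emph{entire}$ chain would destroy summability over internal scale labels. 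The paper's resolution is the introduction of good versus bad lines (Definition \ref{def:badlines}), where each non-trivial cluster $v$ with $|\Gamma^{A,j}_v|\geq 4$ contributes up to three bad lines. Good lines are then collected per cluster into the sets $\frak G_v$, and a telescoping argument rewrites $\prod_{\ell \in G^j\setminus\{\ell_{s_j}\}} 2^{h_\ell}$ as $2^{h|\frak G_{v_0}|}\prod_v 2^{(h_v-h_{v'})|\frak G_v|}$. The crucial point is the bound $|\frak G_v|\leq \Theta(|\Gamma^A_v|-5)(|\Gamma^A_v|-4)$, which keeps the modified scaling dimension $\tilde D_v$ strictly negative and thereby preserves summability over $(h,\tau,\mathbf\Gamma,\mathbf T)$. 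This entire mechanism — the reason only a fraction $\sim 1/4$ of the chain lines can be declared good, and the source of the $-4$ offset and the $-32n$ in the exponent — is missing from your proposal. You gesture at it when you say that ``$G$ must avoid clusters at which $\mathfrak R$ acts in a way incompatible with the momentum-conservation chain argument,'' but this attribution is incorrect: the paper explicitly states that $\mathfrak R$ does not hinder the application of Lemma \ref{lemma:factorial}. The loss of a factor $1/4$ comes purely from the good/bad line bookkeeping designed to keep the renormalized dimensions negative, not from $\mathfrak R$. Without this piece, the proof does not close.

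Your convexity argument for the $n\geq 1$ combinatorics (equal-length chains minimize the factorial gain) is a plausible alternative to the paper's multinomial-theorem step and could likely be made rigorous, but it sits downstream of the missing scaling-dimension argument and cannot rescue it.
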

\begin{proof} The starting point is the identity (\ref{eq:WT}). Let us focus on the propagators that belong to the chains $\mathbf C_{j}$.  One would be tempted to employ Lemma \ref{lemma:factorial} with $G=\mathbf C_{j}$ to extract a full factorial gain along the chain, namely
\begin{equation} \label{eq:chainL1naive}
\prod_{\ell\in \mathbf C_{j}} \|g^{(h_{\ell})}_{\ell}\|_{L^{1}}\leq C^{m_{j}} \prod_{i=1}^{m_{j}-1} 2^{-h_{i}} \leq \left(\frac{K}{\eta}\right)^{m_{j}-2} \frac{2^{-h^{\mathrm{min}}_{j}}}{(m_{j}-2)!}\;
\end{equation}
with $h^{\mathrm{min}_{j}}$ the minimum scale on $\mathbf C_{j}$; however, as it will become clear later in the proof, this will not work, due to the fact that the factor $(1/\eta)^{m_{j}-2}2^{-h^{\mathrm{min}}}$ might be much bigger than the factor $\prod_{i=1}^{m_{j}-1} 2^{-h_{i}}$. Estimating in this way the propagators that belong to ${\bf T}$ would not allow to obtain an estimate for the corresponding GN tree that is summable over the internal scale labels. 

In order to deal with this issue, we shall apply the previous reasoning to a proper subset $G^{j}\subset \mathbf C_{j}$ of \emph{good} lines, defined as follows.
\begin{definition}[Good and bad lines] \label{def:badlines}
Let $\widehat W^{(h)}_{\mathbf \Gamma}[\tau; {\bf T}]$ as in (\ref{eq:WT}), and suppose that $\mathbf C_{j}$ is a chain contained in ${\bf T}$. We define the set $B^{j}_{v}$ of {\it bad lines} associated to a non-trivial, internal vertex $v\in V(\tau)$ as follows. Let $\Gamma^{A,j}_{v}\subseteq \Gamma^{A}_{v}$ be the set of $A$-lines corresponding to vertices belonging to the chain $\mathbf C_{j}$ (see Fig. \ref{fig:chains}), contained in the cluster labelled by $v$. Recall that $T_{v}$ is the spanning tree of the cluster $v$.
\begin{itemize}
\item[a)] Suppose that $v$ has no sub-clusters, \emph{i.e.} it is followed only by leaves. If $|T_{v}\cap\mathbf C_{j}| < 3$, then $B^{j}_{v} = T_{v}\cap\mathbf C_{j}$. Otherwise, $B^{j}_{v} = \{\ell,\ell',\ell''\}$ for three arbitrary lines in $T_{v}\cap\mathbf C_{j}$.
\item[b)] Suppose that $v$ has at least one sub-cluster. The definition is given inductively, as follows:
\begin{itemize}
\item if $|\bigcup_{\bar v \succ v} B^{j}_{\bar v}| \geq 3$, then $B^{j}_{v} = \varnothing$;
\item if $|\bigcup_{\bar v \succ v} B^{j}_{\bar v}| = 3-k$ and $|T_{v}\cap\mathbf C_{j}| < k$, then $B^{j}_{v} = T_{v}\cap\mathbf C_{j}$. Otherwise, if $|T_{v}\cap\mathbf C_{j}| \geq k$ then $B^{j}_{v} = \{\ell_{1}, \ldots, \ell_{k}\}$ for $k$ arbitrary lines in $T_{v}\cap\mathbf C_{j}$.
\end{itemize}
\end{itemize}
We define the set of bad lines on the chain $\mathbf C_{j}$ as:
\begin{equation}
B^{j} := \bigcup_{v\notin V_{f}(\tau)} B^{j}_{v}\;,
\end{equation}
and the set of \emph{good lines} in the cluster $v$ and on the whole chain as:
\begin{equation}
G_{v}^{j}:= (T_{v}\cap\mathbf C_{j}) \setminus B_{v}^{j}\;,\qquad G^{j} := \bigcup_{v\notin V_{f}(\tau)} G^{j}_{v}\;.
\end{equation}
\end{definition}
\begin{remark}\label{rem:gb} Observe that if $|G_{v}^{j}| \neq 0$ then $|\bigcup_{\bar v\succeq v}B_{\bar v}^{j}| \geq 3$. 
\end{remark}
For the moment, let us ignore the action of the $\mathfrak{R}$ operator. We shall apply an estimate similar to (\ref{eq:chainL1naive}), but only on the good lines. Consider a chain $\mathbf C_{j}$ such that $m_{j}\geq 5$, and let $s_{j}\in\{1,\dots, m_{j}-1\}$ be a scale minimiser on the chain, \emph{i.e.} $h_{s_{j}}=h^{\mathrm{min}}_{j}$. Then:
\begin{equation} \label{eq:chainL1}
\prod_{\ell\in \mathbf C_{j}} \|g^{(h_{\ell})}_{\ell}\|_{L^{1}}\leq C^{m_{j}} \prod_{\ell\in \mathbf C_{j}} 2^{-h_{\ell}} \leq C^{m_{j}} \left(\frac{C}{\eta}\right)^{|G^{j}|-1} \frac{2^{-h_{s_{j}}}}{(|G^{j}|-1)!}\prod_{\ell\notin G^{j}} 2^{-h_{\ell}}\;;
\end{equation}
we can further rewrite the right-hand side of (\ref{eq:chainL1}) as:
\begin{equation} \label{eq:chainL1gain}
\prod_{\ell\in \mathbf C_{j}} \|g^{(h_{\ell})}_{\ell}\|_{L^{1}}\leq C^{m_{j}} \Big[ \prod_{\ell\in \mathbf C_{j}} 2^{-h_{\ell}}\Big] \left(\frac{C}{\eta}\right)^{|G^{j}|-1} \frac{1}{(|G^{j}| - 1)!} \prod_{\substack{\ell\in G^{j}\\ \ell\neq \ell_{s_{j}}}} 2^{h_{\ell}} \;.
\end{equation}
Observe that the line $\ell_{s_{j}}$ can be chosen to belong to $G^{j}$. This follows from the fact that, since we are assuming that $m_{j} \geq 5$, the largest cluster ({\it i.e.} the cluster on the smallest scale) containing the chain $\mathbf C_{j}$ must contain at least one good line. The action of $\mathfrak R$ on propagators of a chain is dimensionally equivalent to taking derivatives in real space of the propagators and multiplying them by weights, and it does not hinder the application of Lemma \ref{lemma:factorial}. Its only effect is to change the first factor product in (\ref{eq:chainL1gain}), ensuring that all cluster scaling dimensions are eventually negative.

The rewriting (\ref{eq:chainL1gain}) allows us to recover the standard estimate, up to an extra multiplicative factor. Indeed, we have:
\begin{equation} \label{eq:dimfac}
\frac{1}{\beta L}|W^{(h)}_{\bf \Gamma}[\tau]| \leq |L[\tau]| C^{m}\sum_{\mathbf T} \left[2^{h (2 - m)}\prod_{v\notin V_{f}(\tau)} \frac{2^{(h_{v}-h_{v'})D_{v}}}{s_{v}!} \right]\prod_{\substack{j=1\\ m_{j}\geq 5}}^{N_{c}}\frac{(C/\eta)^{|G^{j}|-1}}{(|G^{j}|-1)!} \prod_{\substack{\ell\in G^{j}\\ \ell\neq \ell_{s_{j}}}} 2^{h_{\ell}}\;,
\end{equation}
where, as usual, $D_{v}$ indicates the internal dimension of the cluster $v$, namely
\begin{equation}
\begin{split}
D_{v}&=2-\frac{|\Gamma_{v}^{\psi}|}{2}-|\Gamma_{v}^{A}|-z_{v} + c|\lambda|\qquad \text{if $h_{v}\leq 0$} \\
D_{v} &= 1 - n_{v} - |\Gamma_{v}^{A}|\qquad \text{if $h_{v} > 0$.}
\end{split}
\end{equation}
The last two factors in equation (\ref{eq:dimfac}) are the main difference with respect to the previous estimate (\ref{eq:treebd}). Let us focus on the factorial gain. By the multinomial theorem, we have that
\begin{equation} \label{eq:boundfact}
\prod_{\substack{j=1\\ m_{j}\geq 5}}^{N_{c}}\frac{1}{(|G^{j}|-1)!} \leq \frac{(N_{c}-N_{c}^{\leq 4})^{\sum_{j}^{*}(|G^{j}|-1)}}{\big[\sum_{j}^{*}(|G^{j}|-1)\big]!}\leq \left(\frac{eN_{c}}{\sum_{j}^{*}(|G^{j}|-1)}\right)^{\sum_{j}^{*}(|G^{j}|-1)}
\end{equation}
where $N_{c}^{\leq 4}$ is the number of chains with $m_{j}\leq 4$ and the star in the sum enforces the constraint $m_{j}\geq 5$. We shall derive a lower bound for $|G^{j}|$ in terms of $m_{j}$; we start by noticing that $|G^{j}| = m_{j} -1 - |B^{j}|$, so it is enough to find an upper bound for $|B^{j}|$. The worst possible case is when all $m_{j}$ leaves are distributed in clusters with $|\Gamma^{A,j}|=4$, each containing $3$ bad lines. Since the number of such clusters is bounded by $\lfloor m_{j}/4\rfloor$, we have
\begin{equation}
|B^{j}| \leq 3\lfloor m_{j} / 4 \rfloor + 3
\end{equation}
where the $+3$ takes into account the case in which $m_{j}$ is not divisible by $4$. Thus,
\[
|G^{j}|-1 \geq m_{j}  - 3\lfloor m_{j} / 4 \rfloor - 4\;.
\]
Hence, $\sum_{j}^{*}(|G^{j}|-1)$ is bounded below as:
\begin{equation} \label{eq:lowercomb}
\begin{split}
\sum_{\substack{j=1\\m_{j}\geq 5}}^{N_{c}} \Big(m_{j}  - 3\lfloor m_{j} / 4 \rfloor - 4\Big)
&\geq \sum_{\substack{j=1\\m_{j}\geq 5}}^{N_{c}} \Big(\frac{m_{j}}{4} - 4\Big)\\
&= \frac{m}{4} - \sum_{\substack{j=1\\m_{j}\leq 4}}^{N_{c}} \frac{m_{j}}{4}- 4 N_{c}^{\geq 5}\\
&\geq \frac{m}{4} -  N_{c}^{\leq 4} - 4 N_{c}^{\geq 5}\\
&\geq \frac{m}{4} - 4 N_{c}\;,
\end{split}
\end{equation}
where $N_{c}^{\leq b}$ (resp. $N_{c}^{\geq b}$) indicates the number of chains with length at least (resp. at most) $b$. Insertion of (\ref{eq:lowercomb}) into (\ref{eq:boundfact}) yields then
\begin{equation}
\prod_{\substack{j=1\\ m_{j}\geq 5}}^{N_{c}}\frac{1}{(|G^{j}|-1)!} \leq \Bigg( \frac{4eN_{c}}{m - 16N_{c}} \Bigg)^{\sum_{j}^{*} (|G^{j}| - 1)}\;.
\end{equation}
Observe that the number of chains is bounded as $N_{c} \leq 2n$, and we are assuming that $n\leq \lfloor\alpha m\rfloor$. Thus, for $\alpha$ small enough, the argument of the bracket is less than $1$. Hence, we get an upper bound if we replace the exponent by the lower bound in (\ref{eq:lowercomb}):
\begin{equation}
\prod_{\substack{j=1\\ m_{j}\geq 5}}^{N_{c}}\frac{1}{(|G^{j}|-1)!} \leq (4e)^{m}\Bigg( \frac{N_{c}}{m - 16N_{c}} \Bigg)^{\frac{m}{4} - 4 N_{c}}\;.
\end{equation}
We then observe that for $N_{c}<m/17$ the right-hand side is an increasing function of $N_{c}$. Thus, since $N_{c}\leq 2\lfloor\alpha m\rfloor$, for $\alpha$ small enough we get:
\begin{equation}
\prod_{\substack{j=1\\ m_{j}\geq 5}}^{N_{c}}\frac{1}{(|G^{j}|-1)!}\leq f(n;m):=\begin{cases}\displaystyle \frac{1}{\lfloor m/4 - 4\rfloor!} & n=0\\
\displaystyle \left(\frac{2n}{m-32n}\right)^{\frac{m}{4}-8\alpha m} & \mathrm{else\;,}
\end{cases}
\end{equation}
reducing  (\ref{eq:dimfac}) to
\begin{equation} \label{eq:finexp}
\frac{1}{\beta L}|W^{(h)}_{\bf \Gamma}[\tau]| \leq \tilde C^{m}|L[\tau]| f(n;m)\sum_{\mathbf T} \left[2^{h (2 - m)}\prod_{v\notin V_{f}(\tau)} \frac{2^{(h_{v}-h_{v'})D_{v}}}{s_{v}!} \right] \prod_{\substack{j=1\\ m_{j}\geq 5}}^{N_{c}} \prod_{\substack{\ell\in G^{j}\\ \ell\neq \ell_{s_{j}}}} 2^{h_{\ell}-h_{\eta}}\;,
\end{equation}
with $\tilde C=4eC$. Next, let us consider the last factor in the right-hand side. This factor is a priori dangerous, due to the fact that $h_{\ell} \geq h_{\eta}$. We will take care of this bad factor using the negativity of the scaling dimensions. Let us define the set:
\begin{equation}
\frak{G}^{j}_{v}:=\Big\{\ell\in\bigcup_{v\preceq \bar v} T_{\bar v}\ \Big|\,\ \ell\in\mathbf C_{j}\setminus(B^{j}\cup\{\ell_{s_{j}}\})\Big\}\;,\qquad \frak{G}_{v} = \bigcup_{j} \frak{G}^{j}_{v}\;.
\end{equation}
Observe that $\frak{G}^{j}_{v} = \bigcup_{v\preceq \bar v} (G^{j}_{\bar v} \setminus \{\ell_{s_{j}}\})$. Also, observe that, as a consequence of Definition \ref{def:badlines}, $|\frak{G}^{j}_{v}| = 0$ if $|\Gamma^{A,j}_{v}| \leq 4$, and in particular $|\frak{G}^{j}_{v}| = 0$ if $m_{j} \leq 4$. In fact, if $|\frak{G}^{j}_{v}|\neq 0$ then $|B^{j}_{v}| \geq 3$, which means that $|\Gamma^{A,j}_{v}|$ must be at least $5$. Also, observe that $\frak{G}_{v_{0}}=\bigcup_{j : m_{j}\geq 5} (G^{j}\setminus\{\ell_{s_{j}}\})$. 

By a telescopic argument, we rewrite the last factor in (\ref{eq:finexp}) as:
\begin{equation}
\begin{split}
\prod_{\substack{j=1\\ m_{j}\geq 5}}^{N_{c}} \prod_{\ell\in G^{j}\setminus\{\ell_{s_{j}}\}} 2^{h_{\ell}} &= \prod_{\substack{j=1\\ m_{j}\geq 5}}^{N_{c}} \prod_{v\notin V_{f}(\tau)} 2^{h_{v}| G^{j}_{v} \setminus \{\ell_{s_{j}}\} |} \\
&= 2^{h |\frak{G}_{v_{0}}|} \prod_{v \notin V_{f}(\tau)} 2^{(h_{v}-h_{v'})|\frak{G}_{v}|}\;,
\end{split}
\end{equation}
which allows to rewrite (\ref{eq:finexp}) as
\begin{equation}
\frac{1}{\beta L}|W^{(h)}_{\bf \Gamma}[\tau]| \leq \tilde C^{m}|L[\tau]| f(n;m)\sum_{\mathbf T} \frac{2^{h \tilde D_{v_{0}}}}{\eta^{|\mathfrak G_{v_{0}}|}}\prod_{v\notin V_{f}(\tau)} \frac{2^{(h_{v}-h_{v'})\tilde D_{v}}}{s_{v}!}\;,
\end{equation}
with modified scaling dimension
\begin{equation}
\begin{split}
\tilde D_{v} &= 2 - \frac{|\Gamma^{\psi}_{v}|}{2} - |\Gamma^{A}_{v}|+|\frak{G}_{v}| -z_{v} + c|\lambda|\qquad \text{if $h_{v} \leq 0$}\\
\tilde D_{v} &= 1 - n_{v} - |\Gamma^{A}_{v}| + |\frak{G}_{v}|\qquad \text{if $h_{v} >0$,}
\end{split}
\end{equation}
and modified external dimension
\begin{equation}
\begin{split}
\tilde D_{v_{0}} &= 2 - m + |\frak{G}_{v_{0}}| \qquad \text{if $h \leq 0$}\\
\tilde D_{v_{0}} &= 1 - n - m + |\frak{G}_{v_{0}}| \qquad \text{if $h > 0$}\;.
\end{split}
\end{equation}
The key feature that prompted us to extract a factorial gain only from the good lines becomes now clear: since by construction we have that
\begin{equation}
|\frak{G}_{v}| \leq \Theta(|\Gamma^{A}_{v}| - 5) \times \Big(|\Gamma^{A}_{v}| - 4\Big)\;,
\end{equation}
with $\Theta$ the Heaviside function, it follows that $\tilde D_{v}$ is always strictly negative, hence we have preserved summability over the internal scales and labellings of the GN tree. Had we taken as sets of good lines the whole chains, this might not have been the case.

In particular, the external scaling dimension $\tilde D_{v_{0}}$ is also strictly negative: therefore, proceeding as in the proof of Proposition \ref{prop:CorrDimBound}, but with modified scaling dimensions, one is able to show that summation over choices of $(h,\tau,\mathbf \Gamma, \mathbf T)$ yields
\begin{equation}\label{eq:modif}
\begin{split}
&\sum_{h=h_{\beta}}^{0} \sum_{\tau\in \widetilde{\mathcal T}_{h,n}^{m}} \sum_{\mathbf\Gamma\in \mathcal L_{\tau}}|L[\tau]|\sum_{\mathbf T}  \frac{2^{h(2-m+|\frak{G}_{v_{0}}|)}}{\eta^{|\frak{G}_{v_{0}}|}}\prod_{v\notin V_{f}(\tau)} \frac{2^{(h_{v}-h_{v'})\tilde D_{v}}}{s_{v}!} \\
&\qquad \leq C^{n+m} |\lambda|^{n} \,\eta^{2-m} (1 + \mathbbm 1_{m\geq m(\theta)} |\theta|^{\varepsilon/2 + c|\lambda|} \eta^{-c|\lambda|})\;.
\end{split}
\end{equation}
The bound (\ref{eq:modif}) is proved as (\ref{eq:estcorr}), replacing $|\Gamma^{A}_{v}|$ in the proof of (\ref{eq:estcorr}) with $|\Gamma^{A}_{v}| - |\frak{G}_{v}|$. This concludes the proof of Proposition \ref{prop:SumBoundLow}.
\end{proof}
This improved bound is enough to show a summable estimate for the $m$-point cumulants.
\begin{proposition}[Improved estimate for density correlators] \label{thm:small}
For $m\geq 16$, there exists $C>0$, such that for $\lambda$ sufficiently small (but fixed), and for $\alpha \in(0,1/34)$, we have that
\begin{equation}\label{eq:small}
\begin{split}
&\frac{1}{\beta L}|\langle\mathbf T \hat n_{\ul p_{1}};\cdots; \hat n_{\ul p_{m-1}}; \hat \jmath_{\mu,\ul p_{m}}\rangle_{\beta,L}| \\
&\quad \leq m!\, \eta^{2-m}\,C^{m}\left[\frac{1}{\lfloor m/4-4\rfloor!}+\left(\frac{2\alpha}{1-32\alpha}\right)^{m(1/4-8\alpha)} + \lambda^{\alpha m}\right] R_{m}(\eta,\theta)
\end{split}
\end{equation}
with
\begin{equation}
R_{m}(\eta,\theta) = (1 + \mathbbm 1_{m\geq m(\theta)} |\theta|^{\varepsilon/2 + c|\lambda|} \eta^{-c|\lambda|})\;.
\end{equation}
\end{proposition}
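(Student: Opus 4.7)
The plan is to start from the Gallavotti-Nicolò decomposition \eqref{eq:GNsplit} and to treat the three pieces separately: the ultraviolet contribution, the low-order part $\sum_{n=0}^{n_{*}}\widehat W_{m;n}$ with $n_{*}=\lfloor\alpha m\rfloor$, and the high-order tail $\sum_{n>n_{*}}\widehat W_{m;n}$. The ultraviolet term is immediately bounded by $m!\,C^{m}$ via \eqref{eq:uvcorr}, and since $\eta\leq 1$ this is trivially absorbed into $m!\,C^{m}\eta^{2-m}$, contributing nothing to the bracket.

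For the low-order part we just invoke Proposition \ref{prop:SumBoundLow}: the $n=0$ contribution directly produces the first summand $1/\lfloor m/4-4\rfloor!$. For $1\leq n\leq n_{*}$, we exploit that the map $n\mapsto 2n/(m-32n)$ is increasing on $[1,n_{*}]$ (since $32n_{*}<m$, which is ensured by $\alpha<1/32$, automatic from $\alpha<1/34$), so its maximum over this range is attained at $n=n_{*}\leq\alpha m$ and is bounded by $2\alpha/(1-32\alpha)$. Factoring this out of the bound of Proposition \ref{prop:SumBoundLow} leaves a geometric series $\sum_{n\geq 1}(C|\lambda|)^{n}$, bounded by a constant for $|\lambda|$ sufficiently small; this yields the second summand $(2\alpha/(1-32\alpha))^{m(1/4-8\alpha)}$, which is genuinely sub-exponential in $m$ precisely because $\alpha<1/34$ makes the base strictly less than $1$.

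For the tail $n\geq n_{*}+1$ we cannot invoke the factorial improvement of Proposition \ref{prop:SumBoundLow} (its combinatorics required $n$ small compared to $m$), but the per-tree dimensional bound that underlies the proof of Proposition \ref{prop:CorrDimBound} suffices: it delivers $(\beta L)^{-1}|\widehat W_{m;n}|\leq m!\,C^{m+n}|\lambda|^{n}\eta^{2-m}R_{m}(\eta,\theta)$. For $|\lambda|$ small the tail becomes a geometric series whose first term is $(C|\lambda|)^{\lfloor\alpha m\rfloor+1}$; absorbing the constant $C^{\alpha m}$ into the prefactor $C^{m}$ produces the third summand $|\lambda|^{\alpha m}$.

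The main obstacle is the careful bookkeeping needed to ensure that all three summands are genuinely sub-exponential in $m$, so that Proposition \ref{thm:small} can ultimately be summed against the $1/(m-1)!$ factor in the Duhamel expansion \eqref{eq:fullrespint}. In particular, the constraint $\alpha<1/34$ is tuned precisely so that $2\alpha/(1-32\alpha)<1$; without it, the middle summand would fail to be sub-exponential and the strategy would collapse. No genuinely new estimate is required: the proof amounts to a reorganization of the per-tree bounds already established in Propositions \ref{prop:CorrDimBound} and \ref{prop:SumBoundLow}, with the optimization of $\alpha$ making the trade-off between factorial gain (small $n$) and smallness of $|\lambda|^{n}$ (large $n$) effective.
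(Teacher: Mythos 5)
Your proposal is correct and follows essentially the same route as the paper: split via (\ref{eq:GNsplit}), invoke Proposition \ref{prop:SumBoundLow} for $n\leq n_{*}$ (with the $n=0$ term giving the factorial piece and the monotonicity of $n\mapsto 2n/(m-32n)$ on $[1,n_{*}]$ giving the $(2\alpha/(1-32\alpha))^{m(1/4-8\alpha)}$ piece), and use the per-order dimensional bound underlying Proposition \ref{prop:CorrDimBound} to collapse the tail $n>n_{*}$ into $|\lambda|^{\alpha m}$. One small imprecision: the ultraviolet term $m!\,C^{m}$ is not dominated by $m!\,C^{m}\eta^{2-m}\cdot[\text{bracket}]$ merely because $\eta\leq 1$, since the bracket is exponentially small in $m$; its absorption genuinely needs $\eta$ small relative to $\lambda^{\alpha}$ (or a $\lambda$-dependent enlargement of $C$), a point the paper's own proof glosses over as well.
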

\begin{proof} The starting point is Eq. (\ref{eq:GNsplit}). As already commented, the first term in the right-hand side is bounded by $C^{m} m!$, which is much smaller than the right-hand side of (\ref{eq:small}). Then, to estimate the first sum in the right-hand side of (\ref{eq:GNsplit}) we use Proposition \ref{prop:SumBoundLow}. The contribution with $n=0$ reproduces the first term in the right-hand side of (\ref{eq:small}). The other terms in the sum are bounded as: 
\begin{equation}
\frac{1}{\beta L}\left|\sum_{n=1}^{n_{*}} \widehat W_{m;n}(\ul p_{1},\dots,\ul p_{m})\right|\leq m!\, C^{m}\, \left(\frac{2\alpha}{1-32\alpha}\right)^{m(1/4-8\alpha)}\, \eta^{2-m} R_{m}(\eta,\theta)\sum_{n=1}^{n_{*}} C^{n}\left|\lambda\right|^{n}\;.
\end{equation}
and the thesis is obtained by estimating the sum with $C|\lambda|/(1-C|\lambda|)$. Finally, consider the second sum in the right-hand side of (\ref{eq:GNsplit}). Here it is sufficient to proceed as in the proof of Proposition \ref{prop:CorrDimBound}:
\begin{equation}
\begin{split}
\frac{1}{\beta L}\left|\sum_{n>n_{*}} \widehat W_{m;n}(\ul p_{1},\dots,\ul p_{m})\right| &\leq m!\, C^{m}\,\eta^{2-m} R_{m}(\eta,\theta) \sum_{n>n_{*}} C^{n}|\lambda|^{n} \\&\leq m!\, C^{m(1+\alpha)}|\lambda|^{\alpha m}\,\eta^{2-m}R_{m}(\eta,\theta)\;.
\end{split}
\end{equation}
This concludes the proof of the proposition.
\end{proof}
\begin{remark}
\begin{itemize}
\item[(i)] Recall that the presence of the non-trivial factor $R_{m}(\eta,\theta)$ is due to the estimate for the GN trees with one off-diagonal $A$ endpoint. If we only restrict to the contribution of these trees, we have:
\begin{equation}\label{eq:estodimpro}
\begin{split}
&\frac{1}{\beta L}|\langle\mathbf T \hat n_{\ul p_{1}};\cdots; \hat n_{\ul p_{m-1}}; \hat \jmath_{\mu,\ul p_{m}}\rangle^{(\text{i.r.}), (\text{od})}_{\beta,L}|  \leq C^{m} m!\, \eta^{2-m} \\&\quad \cdot \left[\frac{1}{\lfloor m/4-4\rfloor!}+\left(\frac{2\alpha}{1-32\alpha}\right)^{m(1/4-8\alpha)} + \lambda^{\alpha m}\right] \mathbbm 1_{m\geq m(\theta)} |\theta|^{\varepsilon/2 + c|\lambda|} \eta^{-c|\lambda|}\;.
\end{split}
\end{equation}
Plugging this estimate in (\ref{eq:fullrespint}), we see that the corresponding contribution is summable in $m$ and it is vanishing as $\eta \to 0^{+}$, for $|\theta| \leq K\eta$ and for $|\lambda|, \eta$ small enough, as assumed in Theorem \ref{thm:main}.
\item[(ii)] Proposition \ref{thm:small} allows to prove that the contribution of infrared trees with diagonal $A$ end-points is summable, but so far the bounds are not strong enough to show that the linear response dominates. Actually, without a smallness condition on the perturbation, the sum of the higher order terms is of the same order of the linear response. In order to prove that the higher order terms are indeed subleading as $\eta \to 0$, we will prove a cancellation for the scaling limit of the density-density correlation functions, which is reminiscent of bosonization.
\end{itemize}
\end{remark}
\section{The reference model}\label{Sec:RefMod}
Here we discuss a $1+1$ dimensional quantum field theory, called the reference model, that describes the infrared scaling limit of the correlation functions of the lattice model. We will follow the work \cite{MPmulti}, to which we refer for further details. The key feature of the model is that its correlation functions can be explicitly determined, combining Schwinger-Dyson equations and Ward identities, associated with the (anomalous) chiral gauge symmetry of the QFT. Also, the low-energy behavior of the lattice correlation functions can be quantitatively captured by the reference model, for a suitable choice of parameters.
\begin{remark}
All bare parameters of this reference model QFT should carry a `ref' superscript to distinguish them from the corresponding lattice model ones. In order to avoid cluttering an already heavy notation, we shall omit such superscript for the rest of this section (unless otherwise stated). We will reintroduce it in Section \ref{sec:final}, when comparing lattice and reference model correlations.
\end{remark}
\subsection{Sets and cutoffs}
Let $\beta>0$, and let $L, \mathfrak N_{0}, \mathfrak N_{1} \in \mathbb{N}$ such that $\mathfrak a:=\beta/\mathfrak N_{0}=L/\mathfrak N_{1}$. We define the space-time lattice of sides $\beta, L$ and mesh $\Ma$ as:
\begin{equation}\label{LambdaRM}
\La:=\left\{\underline x=(n_0\Ma,n_1\Ma) \;|\;0\le n_\mu\le \mathfrak N_{\mu}-1\right\}.
\end{equation}	 
Also, we define the set of momenta compatible with antiperiodic boundary conditions as:
\begin{equation}\label{RM-FS}
\D := \left\{\underline k=\left(\frac{2\pi}{\beta}\left(m_0+\frac{1}{2}\right),\frac{2\pi}{L}\left(m_1+\frac{1}{2}\right)\right)\;\Bigg|\;0\le m_\mu\le \mathfrak N_{\mu} -1\right\}.
\end{equation}
We will consider Grassmann fields defined on $\La$, satisfying antiperiodic boundary conditions. Let $\chi$ be the smooth cutoff function defined in equation (\ref{eq:cutoff}). We define its $\varepsilon$-deformation as:
\begin{equation}
\label{chi-def}
\chi^{\varepsilon}(t) = C_\varepsilon\int_0^{\infty}ds e^{-|t-s|^2/\varepsilon}\chi(s)\;,	
\end{equation}	
where $C_{\varepsilon}$ is chosen such that $C_{\varepsilon}\int dt \, e^{-|t-s|^2/\varepsilon} = 1$. We have that $\lim_{\varepsilon\to 0} \chi^{\varepsilon}(t) = \chi(t)$ and $\chi^{\varepsilon}(t) > 0$ for all $t$. Furthermore, given $v_\omega\in\mathbb R\setminus\{0\}$ for $\omega=1,\dots,N_{f}$, we define the norm:
\begin{equation}
|\underline k|_\omega:=\sqrt{k_0^2+v_\omega^2k_1^2}
\end{equation}
where the collection $\{v_\omega\}_{\omega}$ will play the role of bare velocities for the reference model. The set $\D$ is naturally equipped with periodic boundary conditions. Thus, it is convenient to adapt this norm to $\D$ by introducing:
\begin{equation}
\|\underline k\|_\omega:=\inf_{a_1,a_2\in\mathbb Z}|\underline k-a_1\underline G_1-a_2\underline G_2|_\omega\;,	
\end{equation}	 
with $\underline{G}_{1} = (2\pi / \Ma, 0)$, $\underline{G}_{2} = (0, 2\pi / \Ma)$. Finally, we shall introduce a function encoding an ultraviolet cutoff, as follows:
\begin{equation}\label{CutOff_hN}
\chi^{\omega,\varepsilon}_{N}(\underline k):=\chi^{\varepsilon}(2^{-N}\|\underline k\|_{\omega})\;.
\end{equation}
Observe that the function $\chi^{\omega}_{N}(\underline k) = \lim_{\varepsilon \to 0} \chi^{\omega,\varepsilon}_{N}(\underline k)$ is supported for momenta $\underline k$ such that $\|\underline k\|_\omega\le \frac{3}{2}2^{N}$. The limit $N\to \infty$ will be referred to as the ultraviolet limit.
\subsection{Fields and propagators} 
To each $\underline k \in\D$, and for each $\omega=1,\ldots, N_{f}$, we associate Grassmann variables $\hat \psi^{\pm}_{\underline k,\omega}$. They are extended periodically to the whole $\frac{2\pi}{\beta}(\Z+1/2)\times\frac{2\pi}{L} (\mathbb{Z} + 1/2)$ as:
\begin{equation}\label{hatPhi}
\hat{\psi}^{\pm}_{\underline k+a_{1}\underline G_1+a_{2}\underline G_2,\omega}:=\hat{\psi}^{\pm}_{\underline k,\omega}\qquad \forall (a_{1},a_{2})\in\Z^{2}\;.	
\end{equation}
The configuration-space Grassmann field is then:
\begin{equation}\label{eq:xspace}
{\psi}^{\pm}_{\underline x,\omega}:=\frac{1}{\beta L}\sum_{\underline k\in\D}e^{\mp i\underline k\cdot\underline x}\hat{\psi}^{\pm}_{\underline k,\omega}\;,
\end{equation}
and we notice that it satisfies $(\beta, L)$-antiperiodicity in space-time. The relation (\ref{eq:xspace}) can be inverted as:
\begin{equation}
\hat{\psi}^{\pm}_{\underline k,\omega}=\Ma^2\sum_{\underline x\in\La}e^{\pm i\underline k\cdot \underline x}\psi^{\pm}_{\underline x,\omega}\;.	
\end{equation}	
The momentum-space propagator is set to be, for $\ul k\in\D$,
\begin{equation}\label{propagator_hN}
\hProp(\underline k) :=\frac{1}{Z_\omega}\frac{\chi^{\omega,\varepsilon}_{N}(\underline k)}{\mathcal D_{\omega,\Ma}(\underline k)}\;,\quad \Den(\underline k) :=\frac{i}{\Ma}\sin(\Ma k_0)+\frac{v_\omega}{\Ma}\sin (\Ma k_1)\;,		
\end{equation}		 
where $Z_\omega,v_\omega$ are free parameters, to be suitably chosen later on. Observe that the denominator in \eqref{propagator_hN} vanishes for $\ul k=(0,0),(\pi/\Ma,0),(0,\pi/\Ma),(\pi/\Ma,\pi/\Ma)$ (and their translations by $\underline{G}_{i}$). However, for $\frak{a}>0$ these points do not belong $\D$, which means that the propagator is well-defined. Let us define: 
\begin{equation}
\hat g^{(\leq N)}_{\omega}(\underline k) := \lim_{\varepsilon \to 0}\lim_{\Ma\to 0}\hProp (\ul k)=\frac{1}{Z_{\omega}}\frac{\chi^{\omega}_{N}(\underline k)}{ik_0+v_\omega k_1}\;.
\end{equation}
This limiting propagator is only singular at $(0,0)$ (and its translations). For non-zero $\varepsilon, \frak{a}$, the propagator is arbitrarily small at $(\pi/\Ma,0),(0,\pi/\Ma),(\pi/\Ma,\pi/\Ma)$, provided $\frak{a}$ is small enough. Thus, in the regime $\frak{a} \ll \varepsilon$, the main singular contribution is due to the point $(0,0)$. Finally, the real-space propagator is:
\begin{equation}
\Prop(\underline x):=\frac{1}{\beta L}\sum_{k\in\D}e^{i\underline k\cdot \underline x}\hProp(\underline k)\;,\qquad \forall \underline x\in\Ma\mathbb Z^2\;.	
\end{equation}	
\subsection{Generating functional of correlations}
Let $P_N[d\psi]$ be the Grassmann Gaussian integration with the propagators $\hat g^{(\leq N)}_{\omega, \varepsilon, \Ma}(\underline{k})$:
\begin{equation}
\int P_N[d\psi]\hat\psi^-_{\underline k,\omega}\hat\psi^+_{\underline q,\omega'}=\beta L\delta_{\underline k,\underline q}\delta_{\omega,\omega'}\,\hProp(\underline k)\;;
\end{equation}	
equivalently, 
\begin{equation}
	\int P_N[d\psi]\psi^-_{\underline x,\omega}\psi^+_{\underline y,\omega'}=\delta_{\omega,\omega'}\,\Prop(\underline x-\underline y)\;.	
\end{equation}	
The interaction of the Grassmann QFT is defined as:
\begin{equation}\label{eq:intgra}
V(\psi) := \frac{\Ma^4}{2}\sum_{\underline x,\underline y\in \La}\sum_{\omega,\omega '}\lambda_{\omega\omega '}Z_{\omega}Z_{\omega '} v(\underline x-\underline y) n_{\underline x,\omega}n_{\underline y,\omega '}	
\end{equation}	
with $\lambda_{\omega\omega'}\in\mathbb R,\;\lambda_{\omega\omega'}=\lambda_{\omega'\omega},\;\lambda_{\omega\omega}=0$. The function $v$ satisfies the periodicity condition:
\begin{equation}\label{PC}
v(\underline x+a_0\beta\underline e_0+a_1 L\underline e_1)=v(\underline x)\qquad \qquad \forall (a_{0},a_{1})\in \Z^{2},	
\end{equation}	
and also $\hat v(0)=1$. In equation (\ref{eq:intgra}), $n_{\underline x,\omega}=\psi^+_{\underline x,\omega}\psi^-_{\underline x,\omega}$ is the Grassmann counterpart of the density operator, as usual. Let $\Dp$ be the set of momenta compatible with periodic boundary conditions:
\begin{equation}
\Dp:=\left\{\underline p=\left(\frac{2\pi}{\beta}n_0,\frac{2\pi}{L}n_1\right)\;\Bigg|\;(n_{0},n_{1})\in\mathbb Z^{2}\right\}.
\end{equation}
Then, we can write:
\begin{equation}
v(\underline x)=\frac{1}{\beta L}\sum_{\underline p\in \Dp}e^{i\underline p\cdot\underline x}\,\hat v(\underline p)\;,\qquad \hat v(\underline p)=\Ma^2\sum_{\underline x\in\Lambda_{\beta,L,\Ma}}e^{-i\underline p\cdot\underline x}\,v(\underline x)\;.	
\end{equation}	
We will suppose the function $\hat v$ to be the restriction to $\Dp$ of an even, smooth and compactly supported function on $\mathbb R^2$, so that 
\begin{equation}		
\|\underline x\|^n|v(\underline x)|\le C_n\qquad \forall n\in\mathbb N,		
\end{equation}
where $\|\underline x\|:=\inf_{a_1,a_2\in\mathbb Z}|\underline x-a_1L\underline e_1-a_2L\underline e_2|$. In Fourier space, equation (\ref{eq:intgra}) reads:
\begin{equation}\label{V}
V(\psi)=\frac{1}{2\beta L}\sum_{\underline p\in \Dp}\sum_{\omega,\omega'}\lambda_{\omega\omega'}Z_{\omega}Z_{\omega'}\hat v(\underline p)\hat n_{-\underline p,\omega}\hat n_{\underline p,\omega'}	
\end{equation}	
where
\begin{equation}\label{n}
\hat n_{\underline p,\omega}=\Ma^2\sum_{\underline x\in\La}e^{-i\underline p\cdot\underline x}n_{\underline x,\omega}=\frac{1}{\beta L}\sum_{\underline k\in\D}\hat\psi^+_{\underline k-\underline p,\omega}\hat \psi^-_{\underline k,\omega}.	
\end{equation}			
Let us define the source terms as:
\begin{equation}\label{Gamma}\begin{split}
B(\psi;\phi)&:=\sum_{\omega=1}^{N_{f}}\Ma^2\sum_{\underline x\in \La}\left[\psi^+_{\underline x,\omega}\mathsf Q_{\omega}\phi^-_{\underline x,\omega}+\phi^+_{\underline x,\omega}\bar{\mathsf Q}_{\omega}\psi^-_{\underline x,\omega}\right]\\
C(\psi;A)&:=\sum_{\omega=1}^{N_{f}}\sum_{\mu=0}^{1}\Ma^2\sum_{\underline x\in \La}\mathsf Z^{\mu}_{\omega}A^{\mu}_{\underline x,\omega}n_{\underline x,\omega}=\sum_{\omega=1}^{N_{f}}\sum_{\mu=0}^{1}\frac{1}{\beta L}\sum_{\underline p\in \Dp}\mathsf Z^{\mu}_{\omega}\hat A^{\mu}_{-\underline p,\omega}\hat n_{\underline p,\omega}
\end{split}	
\end{equation}
where the external fields $A_{\underline x,\omega},\phi^{\pm}_{\underline x,\omega}$ take the form
\begin{equation}\label{A}
A^{\mu}_{\underline x,\omega}=\frac{1}{\beta L}\sum_{\underline p\in\Dp}e^{i\underline p\cdot \underline x}\hat A^{\mu}_{\underline p,\omega}\;,\qquad \phi^{\pm}_{\underline x,\omega}:=\frac{1}{\beta L}\sum_{\underline k\in \D}e^{\mp i\underline k\cdot\underline x}\hat \phi^{\pm}_{\underline k,\omega}\;.
\end{equation}	
Finally, we define the partition function with source terms as
\begin{equation}\label{PartFunct}
\mathcal Z_{\beta,L,\Ma,\varepsilon,N}(\phi,A):=\int P_N[d\psi]e^{-V(\psi)+B(\psi;\phi)+C(\psi;A)}\;,	
\end{equation}
and also the generating functional of correlations as
\begin{equation}\label{Gen}
\mathcal{W}_{\beta,L,\Ma,\varepsilon,N}(\phi,A):=\log\frac{\mathcal Z_{\beta,L,\Ma,\varepsilon,N}(\phi,A)}{\mathcal Z_{\beta,L,\Ma,\varepsilon,N}(0,0)}\;. 	
\end{equation}
Then, the $m$-point density-current correlation functions are set to be
\begin{equation}\label{m-corr}
\prod_{i=1}^{m}[\mathsf Z^{\mu_{i}}_{\omega_{i}}]\langle \hat n_{\underline p_1,\omega_1};\cdots;\hat n_{\underline p_{m},\omega_m}\rangle_{\beta,L,\mathfrak a,\varepsilon,N}:=(\beta L)^{m}\frac{\partial^m\; \mathcal{W}_{\beta,L,\Ma,\varepsilon,N}(\phi,A)}{\partial \hat A^{\mu_{1}}_{-\underline p_1,\omega_1}\cdots\partial\hat A^{\mu_{m}}_{-p_{m},\omega_m}}\Bigg|_{A=\phi=0}	\;,
\end{equation}
with $\underline p_{m} := -\sum_{i=1}^{m-1}\underline p_i$. At fixed cut-offs, all expressions are well-defined, for $|\lambda_{\omega\omega'}|$ small enough, thanks to the convergence of fermionic cluster expansion. Renormalization group methods are then used to prove a result that is uniform in all cut-offs.

\subsection{Sketch of RG analysis of the reference model}
The generating functional admits a multi-scale expansion similar to the one presented for the lattice model in Section \ref{sec:RGLatticeModel}, but now the scales will run from $h_{\beta}$ to $N$; we shall distinguish two different regimes, according to whether the scale is positive or negative.
\paragraph{Ultraviolet regime.} We rewrite the total propagator as its infre-red part plus the sum of single (positive) scale propagators $\hat g^{(h)}_{\omega,\varepsilon,\mathfrak a}(\underline k)$, that is 
\begin{equation} \label{eq:propUV}
\hat g^{(\leq N)}_{\omega,\varepsilon,\mathfrak a}(\underline k) =	\hat g^{[h_{\beta},0]}_{\omega,\varepsilon,\mathfrak a}(\underline k)+\sum_{h=1}^N\hat g^{(h)}_{\omega,\varepsilon,\mathfrak a}(\underline k)\;,\qquad \hat g^{(h)}_{\omega,\varepsilon,\mathfrak a}(\underline k) =\frac{1}{Z_{\omega}} \frac{\chi_{h}^{\omega,\varepsilon}(\underline k)-\chi_{h-1}^{\omega,\varepsilon}(\underline k)}{\Den(\ul k)}\;,
\end{equation}	 
and we iteratively integrate them out, obtaining
\begin{equation}\label{eq:leqw}
\mathcal{W}_{\beta,L,\Ma, \varepsilon,N}(\phi,A) = \mathcal{W}^{(h)}_{\beta,L,\Ma, \varepsilon,N}(\phi,A) + \log \int P_{h}[d\psi^{(\leq h)}] e^{V^{(h)}(\sqrt{Z}\psi^{(\leq h)}; \phi,A)}
\end{equation}
with $P_{h}$ being the Grassmann Gaussian measure associated to $\{\hat g^{(\leq h)}_{\omega,\varepsilon,\Ma}\}_{\omega}$ and $\sqrt{Z}\psi^{(\leq h)}$ a short-hand for $\sqrt{Z_{\omega}}\psi^{(\leq h)}_{\omega}$. The effective interaction $V^{(h)}$ takes the form
\begin{equation} \label{eq:VhUV}
V^{(h)}(\psi; \phi,A) = \sum_{\Gamma=(\Gamma_{\psi},\Gamma_{\phi},\Gamma_{A})} \int_{\beta, L} d \underline{X}\,\psi_{\Gamma_{\psi}}(\underline{X}_{\psi}) \phi_{\Gamma_{\phi}}({\ul X_{\phi}}) A_{\Gamma_{A}}({\ul X_{A}})  W_{\Gamma}^{(h)}(\ul X)
\end{equation}
and $\mathcal W^{(h)}_{\beta,L,\Ma,\varepsilon,h, N}$ admits an analogous expansion.
For $k\in \mathbb N$, if we define the intensive weighted $k$-norms
\begin{equation}\label{Norm}
\|W_\Gamma^{(h)}\|_{1,k}:=\sum_{\substack{\{m_{ij}\}_{i,j=1}^{|\Gamma|}\\\sum_{ij}m_{i,j}=k}}\frac{\mathfrak a^{2|\Gamma|}}{|\Lambda_{\beta,L,\mathfrak a}|}\sum_{\underline X} |W^{(h)}_\Gamma(\underline X)|	\prod_{i<j}\|\underline x_i-\underline x_j\|^{m_{ij}}_{\beta,L}\;,	
\end{equation}
then, for the case of relevant and marginal kernels, these norms satisfy bounds which are better than the standard ones given in (\ref{eq:boundeffpot}).
\begin{proposition}[Improved estimates for UV kernels, Proposition 8.1 of \cite{MPmulti}]
Let us call $\lambda_{\mathrm{max}}:=\max_{\omega,\omega'}|\lambda_{\omega\omega'}|$. Then, for $\lambda_{\mathrm{max}}$ small enough uniformly in all cut-off parameters, for all $h\in[0,N]$, there exists $C>0$ independent of $\beta$, $L$, $\Ma$, $\varepsilon$, $N$, such that:
\begin{equation}\label{eq:UVest}
\|W^{(h)}_\Gamma-W^{(N)}_\Gamma\|_{1,k}\le C\lambda_{\mathrm{max}}2^{-kh}2^{h(D_{\Gamma}-\vartheta_{\Gamma})}\bigg[\prod_{f\in\Gamma_A}\big|\mathsf Z^{\mu(f)}_{\omega(f)}\big|\bigg]\bigg[\prod_{f\in\Gamma_{\phi}}\big|\mathsf Q^{\varepsilon(f)}_{\omega(f)}\big|\bigg]\;,
\end{equation}	  
where $D_{\Gamma}$ is the scaling dimension of $W^{(j)}_{\Gamma}$, namely
\[
D_{\Gamma}:= 2 - \frac{|\Gamma_{\psi}|}{2} -\frac{3|\Gamma_{\phi}|}{2} - |\Gamma_{A}|
\]
and $\vartheta_{\Gamma}$ is given by
\begin{equation}
\vartheta_{\Gamma}=\begin{cases}
2 & \text{if } (|\Gamma_{\psi}|,|\Gamma_{\phi}|,|\Gamma_{A}|) = (2,0,0)\\
1 & \text{if } (|\Gamma_{\psi}|,|\Gamma_{\phi}|,|\Gamma_{A}|) = (4,0,0), (1,0,1), (2,0,1)\\
0 & \text{otherwise.}
\end{cases}
\end{equation}
\end{proposition}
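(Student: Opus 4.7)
The plan is to mimic the ultraviolet multiscale analysis that was summarized at the end of Section \ref{sec:GNtrees} for the lattice model, carrying out a Gallavotti--Nicolò tree expansion relative to the decomposition \eqref{eq:propUV}. Starting from the generating functional \eqref{Gen}, I would iteratively integrate the single scale fields $\psi^{(h)}_{\omega,\varepsilon,\mathfrak a}$ from scale $N$ down to scale $h$, arriving at the representation
\begin{equation*}
W^{(h)}_{\Gamma}(\underline X)=\sum_{n\ge 0}\sum_{\tau\in \widetilde{\mathcal T}^{\,|\Gamma|}_{h,n}}\sum_{\mathbf\Gamma\in\mathcal L_{\tau},\,\Gamma_{v_0}=\Gamma}\int d\underline X_{I_{v_0}\setminus\Gamma_{v_0}}\,W^{(h)}_{\mathbf\Gamma}[\underline X;\tau]\;,
\end{equation*}
where each leaf of $\tau$ carries either the quartic vertex \eqref{V}, or one of the external source monomials in \eqref{Gamma}. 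The difference $W^{(h)}_\Gamma-W^{(N)}_\Gamma$ then amounts to restricting the sum to those trees whose root $v_0$ satisfies $h_{v_0}\le N$, which in particular forces $n\ge 1$ (the $n=0$ contribution is independent of the scale below $N$ and cancels in the difference). This is the mechanism responsible for the factor $\lambda_{\mathrm{max}}$.

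Next I would establish the fundamental propagator estimate in the UV regime: for every $h\in[0,N]$, $m\in\mathbb N$, and uniformly in $\varepsilon,\mathfrak a$,
\begin{equation*}
|Z_\omega\, \partial_{0}^{n_0}\mathrm d_1^{n_1} g^{(h)}_{\omega,\varepsilon,\mathfrak a}(\underline x)|\le \frac{C_{m,n_0,n_1}\,2^{h(1+n_0+n_1)}}{1+(2^h\|\underline x\|_{\beta,L})^m}\;,
\end{equation*}
together with a Gram representation analogous to that used in the infrared. This relies only on the smoothness and compact support of the cutoff profile \eqref{chi-def} and on the fact that on the support of $\chi_h^{\omega,\varepsilon}-\chi_{h-1}^{\omega,\varepsilon}$ with $h\ge 1$, the denominator $\mathcal D_{\omega,\mathfrak a}(\underline k)$ is bounded from below by $c\,2^h$ (here the smallness of $\mathfrak a$ relative to $\varepsilon$ is used to rule out the spurious zeros at $\underline k=(\pi/\mathfrak a,0),(0,\pi/\mathfrak a),(\pi/\mathfrak a,\pi/\mathfrak a)$). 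Combining these bounds with the Brydges--Battle--Federbush formula (Remark \ref{rmk:BBF}) and a Gram--Hadamard estimate on the resulting determinants exactly as in Section \ref{sec:GNtrees}, I obtain the naive dimensional bound $\|W^{(h)}_\Gamma\|_{1,k}\le C\,\lambda_{\mathrm{max}}\,\prod_f|\mathsf Z^{\mu(f)}_{\omega(f)}|\prod_f|\mathsf Q^{\varepsilon(f)}_{\omega(f)}|\,2^{-kh}2^{hD_\Gamma}\prod_{v\notin V_f(\tau)}2^{(h_v-h_{v'})D_v}$, where the extra factors $2^{-kh}$ and $2^{-\sum n_{ij}h}$ are extracted from the $m$-th power decay of the single scale propagators against the weights in the definition \eqref{Norm}.

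To obtain the refined exponent $D_\Gamma-\vartheta_\Gamma$, I would implement a localization--renormalization procedure for the relevant and marginal UV substructures $(|\Gamma_\psi|,|\Gamma_\phi|,|\Gamma_A|)\in\{(2,0,0),(4,0,0),(1,0,1),(2,0,1)\}$, as already sketched around \eqref{eq:dvuv}. In the UV regime, these substructures can be reduced to ``tadpole'' subgraphs with only the quartic interaction in their endpoints; their local part can be reabsorbed as a finite counterterm, and the corresponding $\mathfrak R$ operation produces one extra derivative on an internal propagator at scale $h_v$. By the short memory property (Remark \ref{rem:sm}), this derivative gives an additional factor $2^{-h_v\vartheta_\Gamma}$ with the value of $\vartheta_\Gamma$ claimed; summing over scales and labellings along the lines of Section 3.14 of \cite{BM} yields the stated estimate.

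The main obstacle is keeping all estimates uniform in the regulators $\mathfrak a$ and $\varepsilon$ simultaneously. The propagator \eqref{propagator_hN} has a genuine singularity only at $\underline k=\underline 0$ in the $\mathfrak a\to 0$ limit; for $\mathfrak a>0$ the spurious zeros of $\sin(\mathfrak a k_\mu)$ at the Brillouin zone corners are removed by the cutoff only if the support of $\chi_N^{\omega,\varepsilon}$ is well separated from $\pi/\mathfrak a$, which forces the hierarchy $\mathfrak a\ll \varepsilon\ll 1$. A second technical point is the precise tracking of the external source factors $\mathsf Q^{\pm}_\omega$ and $\mathsf Z^\mu_\omega$: these are not rescaled in the UV regime (as opposed to the infrared analysis of Section \ref{sec:RGLatticeModel}), so they factor out cleanly from the tree estimates as displayed in \eqref{eq:UVest}.
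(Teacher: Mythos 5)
The framework in your first two paragraphs is sound: the GN tree expansion relative to \eqref{eq:propUV}, the BBF formula with Gram--Hadamard estimates, the single-scale propagator bound, and the observation that subtracting $W^{(N)}_\Gamma$ eliminates the trivial tree and so forces at least one quartic endpoint, producing the prefactor $\lambda_{\mathrm{max}}$. This yields the naive dimensional bound with exponent $D_\Gamma$, uniformly in the cutoffs.

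The gap is in the third paragraph, and it is genuine. You propose to obtain $\vartheta_\Gamma$ by localizing the relevant/marginal substructures, ``reabsorbing the local part as a finite counterterm'', and taking the gain from the $\mathfrak R$-remainder as in Section 3.14 of \cite{BM}. That is the infrared strategy, and it is precisely what cannot be done in the reference-model ultraviolet. First, there are no counterterms to absorb anything into: the bare parameters $(Z^{\mathrm{ref}}_\omega, v^{\mathrm{ref}}_\omega, \lambda^{\mathrm{ref}}_{\omega\omega'})$ are fixed (later matched to the lattice IR via \eqref{eq:CouplingMatching}), and, as stated immediately after \eqref{eq:locUV}, they are ``not updated for positive scales''. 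Second, the UV localization \eqref{eq:locUV} is the \emph{identity} on the relevant and marginal kernels, so $\mathfrak R$ annihilates them and there is no Taylor remainder to estimate; the logical order in the text is the opposite of the one you assume, since \eqref{eq:UVest} is established first and it is precisely that bound which licenses the choice \eqref{eq:locUV}. Deriving \eqref{eq:UVest} from $\mathfrak R$-gains would be circular, because $\mathfrak L W^{(h)}_\Gamma$ (the full kernel, for these $\Gamma$) is exactly what \eqref{eq:UVest} has to control. What actually produces $\vartheta_\Gamma$ is the special structure of the model, exploited in a direct diagrammatic estimate in \cite{MPmulti} (following \cite{BFM}): (i) the interaction \eqref{V} is density--density with $\hat v$ compactly supported, so every interaction line carries an $O(1)$ momentum transfer even when the fermion lines are on scale $\sim 2^h$; and (ii) the propagators are purely chiral, $\hat g^{(h)}_\omega(\underline k)\propto (ik_0 + v_\omega k_1)^{-1}$, so the dangerous loop integrals enjoy angular cancellations — e.g.\ the single-scale bubble $\int d\underline q\, \hat g^{(h)}_\omega(\underline q)\,\hat g^{(h)}_\omega(\underline q - \underline p)$ vanishes at $\underline p = \underline 0$ together with its first $\underline p$-derivative because the angular averages of $z^{-2}$ and $z^{-3}$ vanish, so the first nonzero Taylor coefficient is $O((\|\underline p\|/2^h)^2)$, which combined with (i) is the factor $2^{-2h}$ behind $\vartheta_\Gamma=2$ for $(2,0,0)$ (and analogously for the $\vartheta_\Gamma=1$ cases). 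Your sketch contains none of this mechanism, and would not close.
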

These improved estimates suggest, following \cite{BFM}, that it is convenient to localize the relevant and marginal kernels wholly: indeed, for any $h\geq 0$, we set
\begin{equation} \label{eq:locUV}
\mathfrak L V^{(h)}_{n,m,l}(\psi,\phi,A) := \begin{cases}
V^{(h)}_{n,m,l} & \text{if } (n,m,l)=(2,0,0), (4,0,0), (1,0,1), (2,0,1)\\
0 & \text{else.}
\end{cases}
\end{equation}
Reintroducing temporarily the `ref' superscript, we see that (\ref{eq:propUV}) implies that the bare couplings $(Z^{\mathrm{ref}}_{\omega},v^{\mathrm{ref}}_{\omega})$ do not need to be updated for positive scales. The relevant and marginal kernels are controlled by (\ref{eq:UVest}), and all scaling dimensions in the GN tree expansion are negative. 
\paragraph{Infrared regime.}  For negative scales $h\in[h_{\beta},-1]$, one proceeds in the same exact fashion as with the infrared leg of the lattice model described in Section \ref{sec:RGLatticeModel}; one localizes relevant and marginal kernels according to Definition \ref{def:loc}, thus obtaining effective couplings $(Z^{\mathrm{ref}}_{h,\omega}, v^{\mathrm{ref}}_{h,\omega}, \lambda^{\mathrm{ref}}_{h,\omega\omega'}, \mathsf Z^{\mu,\mathrm{ref}}_{h,\omega})$, but no coupling $\nu^{\mathrm{ref}}_{h,\omega}$ is created due to the reference model propagator being odd. The flow is then controlled thanks to the following Proposition, which is a consequence of Theorem 8.5 of \cite{MPmulti}, in turn based on the analysis for the case of two chiralities \cite{BMWI, BMchiral}.
\begin{proposition}[Bounds for the reference model running coupling constants.]\label{prp:flowref} There exists $\bar\lambda>0$ such that for all $\lambda_{\mathrm{max}}:= \max_{\omega,\omega'}|\lambda^{\mathrm{ref}}_{\omega\omega'}| < \bar \lambda$
\begin{equation} \label{eq:BdsEffCoupref}
\begin{split}
\left|\frac{Z^{\mathrm{ref}}_{h-1,\omega}}{Z^{\mathrm{ref}}_{h,\omega}}\right|\leq e^{c\lambda_{\mathrm{max}}}\;,&\qquad |v^{\mathrm{ref}}_{h,\omega}-v^{\mathrm{ref}}_{\omega}|\leq C\lambda_{\mathrm{max}}\;,\\
|\mathsf Z^{\mu,\mathrm{ref}}_{h,\omega}|\leq C|\mathsf Z^{\mu,\mathrm{ref}}_{\omega}|\;,&\qquad |\lambda^{\mathrm{ref}}_{h,\omega\omega'}-\lambda^{\mathrm{ref}}_{\omega\omega'}|\leq C\lambda_{\mathrm{max}}^{2}\;,
\end{split}
\end{equation}
for some $c,C>0$ independent of $\beta,L,\Ma,\varepsilon,h, N$, for all $\omega,\omega'=1,\dots, N_{f}$, $\mu=0,1$, and for all $h_{\beta}\leq h<0$.
\end{proposition}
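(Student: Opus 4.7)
The plan is to adapt the strategy of Theorem 8.5 of \cite{MPmulti}, which generalizes the two-chirality analysis of \cite{BMWI, BMchiral} to $N_{f}$ Fermi points. The argument is an induction on the scale $h$, descending from $h = -1$ down to $h_{\beta}$: assuming the bounds (\ref{eq:BdsEffCoupref}) hold at all scales $k\in[h,-1]$, one shows they hold at scale $h-1$. Under this inductive hypothesis, the Gallavotti-Nicol\`o tree expansion (entirely analogous to Section \ref{sec:GNtrees}, but simpler because the integration by parts creating $\nu_{h,\omega}$ is absent, since the reference propagator is odd in $\underline k$) yields a convergent representation of each beta function $\beta^{z}_{h,\omega}, \beta^{v}_{h,\omega}, \beta^{\lambda}_{h,\omega\omega'}, \beta^{\mathsf z,\mu}_{h,\omega}$.

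The naive tree bound only gives $|\beta^{\lambda}_{h,\omega\omega'}|\le C\lambda_{\max}^{2}$, which, summed over scales, would produce a linear divergence in $|h|$, incompatible with the stated uniform boundedness of $\lambda^{\mathrm{ref}}_{h,\omega\omega'}$. The heart of the matter is therefore the extraction of a \emph{dominant part} of each marginal beta function and the proof of its \emph{vanishing}. Concretely, for any $h$ one writes
\begin{equation}
\beta^{\lambda}_{h,\omega\omega'}=\beta^{\lambda,*}_{h,\omega\omega'}+\tilde r^{\lambda}_{h,\omega\omega'}\;,\qquad \beta^{v}_{h,\omega}=\beta^{v,*}_{h,\omega}+\tilde r^{v}_{h,\omega}\;,
\end{equation}
where the starred objects are the beta functions of an idealized infrared chiral theory obtained by replacing the single-scale propagators by their $\mathfrak a,\varepsilon\to 0$ limits, by setting all velocities to their scale-$h$ values, and by removing the ultraviolet cut-off; the remainders $\tilde r^{\lambda},\tilde r^{v}$ are bounded by $C\lambda_{\max}^{2}2^{\vartheta h}$ for some $\vartheta>0$, via the short-memory property of GN trees (Remark \ref{rem:sm}), and are therefore summable in $h$. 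One then proves $\beta^{\lambda,*}_{h,\omega\omega'}=0$ and $\beta^{v,*}_{h,\omega}=0$ for all $h$. The vanishing is established by combining the emergent chiral gauge symmetry of the idealized theory (one rotation per Fermi point) with its anomalous Ward identities, which in turn close via the Schwinger-Dyson equation for the four-point function; this is the multichannel version of the cancellation pioneered in \cite{BMWI, BMchiral} for $N_{f}=2$, and carried out for general $N_{f}$ in \cite{MPmulti}. The resulting bounds on $\lambda^{\mathrm{ref}}_{h,\omega\omega'}$ and $v^{\mathrm{ref}}_{h,\omega}$ follow by telescopic summation of the $\tilde r^{\#}_{h}$.

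The remaining bounds are then consequences of the same input. For the wavefunction renormalization, the analogous decomposition $\beta^{z}_{h,\omega}=\beta^{z,*}_{h,\omega}+\tilde r^{z}_{h,\omega}$ applies: here the dominant part does \emph{not} vanish, but is nearly constant in $h$ and of order $\lambda_{\max}^{2}$, yielding the anomalous power law $Z^{\mathrm{ref}}_{h,\omega}\sim 2^{\xi_{\omega}h}$ with $\xi_{\omega}=O(\lambda_{\max}^{2})$, and hence the uniform bound $|Z^{\mathrm{ref}}_{h-1,\omega}/Z^{\mathrm{ref}}_{h,\omega}|\le e^{c\lambda_{\max}}$. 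For the diagonal vertex renormalization $\mathsf Z^{\mu,\mathrm{ref}}_{h,\omega}$, one exploits the anomalous chiral Ward identity associated with the $\omega$-th current, which ties the three-point vertex function to the dressed two-point function; the result is an identity of the schematic form $\mathsf Z^{\mu,\mathrm{ref}}_{h-1,\omega}=\mathsf Z^{\mu,\mathrm{ref}}_{h,\omega}(1+r_{h})$ with $|r_{h}|\le C 2^{\vartheta h}$, so that $\mathsf Z^{\mu,\mathrm{ref}}_{h,\omega}$ admits a finite limit as $h\to -\infty$ and is uniformly comparable to its scale-0 value.

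The main obstacle is the proof of vanishing of the dominant marginal beta functions $\beta^{\lambda,*}_{h,\omega\omega'}$ and $\beta^{v,*}_{h,\omega}$ in the multichannel setting. In the two-chirality case $N_{f}=2$, the traditional route uses the exact solvability of the Luttinger model \cite{ML, BGPS, BoM}, which is not available for general $N_{f}$. The non-perturbative route of \cite{BMdensity, BMchiral} replaces exact solvability with a self-consistent bootstrap built on the chiral Ward identities and the Schwinger-Dyson equation; the delicate point is the quantitative control of the chiral anomaly (a scheme-dependent but universal correction to the naive Ward identity), which must be computed at the regulator level and shown to combine coherently across all pairs $(\omega,\omega')$ with the coupling matrix $\lambda^{\mathrm{ref}}_{\omega\omega'}$. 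This is carried out in detail in Section 8 of \cite{MPmulti}, and we import the conclusion verbatim here.
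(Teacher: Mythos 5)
Your proposal is correct and follows essentially the same route as the paper, which states this proposition without proof as a direct consequence of Theorem~8.5 of \cite{MPmulti} (in turn building on \cite{BMWI, BMchiral}). Your sketch of the inductive scheme, the decomposition of the marginal beta functions into a dominant scaling-limit part plus a short-memory remainder, the vanishing of the dominant part via the anomalous chiral Ward identities closed through the Schwinger--Dyson equation, and the control of the vertex and wavefunction renormalizations accurately describes what the cited argument in \cite{MPmulti} entails.
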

\begin{remark}
Notice that the independence of these bounds from $\beta$ allows to control the dynamical system also in the zero temperature limit $h_{\beta}\to-\infty$.
\end{remark}

All in all, we may summarize these finding for the ultraviolet and infrared regimes in a GN tree expansion for the generating functional of the reference model: we have,
\begin{equation}
\begin{split}
\mathcal W_{\beta,L,\Ma,\varepsilon,N}(\phi,A) &= \sum_{h=h_{\beta}}^{N} \sum_{n,m\geq 0}\sum_{\tau\in\mathcal T^{m,\mathrm{ref}}_{h,n}} \sum_{\substack{\mathbf\Gamma\in\mathcal L_{\tau}\\ \Gamma^{\psi}_{v_{0}}=\varnothing}} \mathcal W^{(h)}_{\mathbf\Gamma}[\phi,A;\tau]\\
\mathcal W^{(h)}_{\mathbf\Gamma}[\phi,A;\tau] &= \int_{\beta,L,\Ma} d\ul X_{I_{v_{0}}}\, \phi_{\Gamma^{\phi}_{v_{0}}} A_{\Gamma^{A}_{v_{0}}} W^{(h)}_{\mathbf\Gamma}[\ul X_{\Gamma_{v_{0}}};\tau]\;,
\end{split}
\end{equation}
with
\[
 \int_{\beta,L,\Ma} d\ul X_{\Gamma} := \prod_{f\in\Gamma} \Ma^{2}\sum_{\ul x(f)\in\Lambda_{\beta,L,\Ma}}\;,
\]
and where $\mathcal T^{m,\mathrm{ref}}_{h,m}$ denotes the set of GN trees with root scale $h$, $m$ external field leaves and $n$ leaves of type $\lambda$ that can attach on any scale between $h$ and $N$. The set of labellings $\mathcal L_{\tau}$ respects the two different localization protocols for positive and negative scales.
The tree-dependent kernels $W^{(h)}_{\mathbf \Gamma}$ satisfy the GN-type estimate
\begin{equation}
\|W^{(h)}_{\mathbf\Gamma}[\tau]\|_{1,0}\leq C^{n+m} \lambda_{\mathrm{max}}^{n}\, 2^{h\big(2-\frac{3}{2}|\Gamma^{\phi}_{v_{0}}|-|\Gamma^{A}_{v_{0}}|\big)} \prod_{v\notin V_{f}(\tau)} 2^{(h_{v}-h_{v'})D_{v}}\;,
\end{equation}
with internal dimension
\[
D_{v}=\begin{cases}
\displaystyle 2-\frac{|\Gamma^{\psi}_{v}|}{2}-\frac{3|\Gamma^{\phi}_{v}|}{2}-|\Gamma^{A}_{v}| & \text{if } h_{v}\geq 0\\
\displaystyle 2-\frac{|\Gamma^{\psi}_{v}|}{2}-\frac{3|\Gamma^{\phi}_{v}|}{2}-|\Gamma^{A}_{v}| -z_{v} + c\lambda_{\mathrm{max}} & \text{if } h_{v}\leq -1\;.
\end{cases}
\]
We stress that $D_{v}$ is always strictly negative for all vertices $v\notin V_{f}(\tau)$: while this is manifest for vertices at negative scale, it is true also for the case $h_{v}\geq0$, since thanks to (\ref{eq:locUV}) we have localized all the relevant and marginal kernels.
\subsection{Ward identities for the $m$-point correlation functions}
In this section we will prove identities for the $m$-point correlation functions of the reference model, that allow to prove the vanishing of the density-density correlations with $m\geq 3$ after removing all cutoffs. This will follow from the anomalous chiral gauge symmetry of the QFT, to be discussed here. In what follows, we shall use the following short-hand notation:
\begin{equation}
\begin{split}
\langle\, \cdot\,\rangle_N &\equiv \langle\, \cdot\,\rangle_{\beta,L,\mathfrak a, \varepsilon, N} \\
\mathcal{Z}_N(\phi,A) &\equiv \mathcal Z_{\beta,L,\mathfrak a, \varepsilon, N}(\phi,A)\\
\mathcal{W}_N(\phi,A) &\equiv \mathcal{W}_{\beta,L,\mathfrak a, \varepsilon, N}(\phi,A). 	
\end{split}
\end{equation}	   
We notice that, thanks to the presence of the antiperiodic boundary conditions, the propagator is not singular for $\Ma>0$, and the Grassmann algebra is finite dimensional.

Let us consider the chiral local gauge transformation:
\begin{equation}
	\label{transf}
	\psi^{\pm}_{\underline x,\omega}\mapsto e^{\pm i\alpha_\omega(\underline x)}	\psi^{\pm}_{\underline x,\omega}
\end{equation}	
with $\{\alpha_\omega(\underline x)\}$ a collection of functions on $(\Ma\mathbb Z)^2$ such that $\alpha_\omega(\underline x+n_1L\underline e_1+n_2L\underline e_2)=\alpha_\omega(\underline x)$ for all chiralities $\omega$.
Then, it is possible to check that, given any monomial $Q(\psi^+,\psi^-)$, denoting with $Q_\alpha (\psi^+,\psi^-)$ the same monomial under the transformation \eqref{transf}:
\begin{equation}
\int d\psi\, Q(\psi^+,\psi^-)=\int d \psi\, Q_\alpha (\psi^+,\psi^-).
\end{equation}	
Thus, by linearity and by finiteness of the Grassmann algebra, for any function $f(\psi)$ we have:
\begin{equation}\label{eq:inv}
\int d\psi\, f(\psi)=\int d\psi\, f_\alpha(\psi)\;.	
\end{equation}	
Let:
\begin{equation}\begin{split}\label{Dp}
		\Delta_{\underline p,\omega}(\psi)&:=\frac{1}{\beta L}\sum_{\underline k\in \mathbb D^a_{\beta,L,\mathfrak a}}\hat\psi^+_{\underline k-\underline p,\omega}\Delta_{\omega}(\underline k,\underline p)\hat \psi^-_{\underline k, \omega}		\\
		\Delta_{\omega}(\underline k,\underline p)&:= -Z_\omega\frac{\Den(\underline k -\underline p)}{\chi_{N}^{\omega,\varepsilon}(\underline k-\underline p)}+Z_\omega\frac{\Den(\underline k)}{\chi_{N}^{\omega,\varepsilon}(\underline k)}+Z_\omega\Den(\underline p)\;,
\end{split}
\end{equation}	
with $\Den(\underline k)=i\mathfrak a^{-1}\sin(\mathfrak a k_0)+v_\omega\mathfrak a^{-1} \sin(\mathfrak a k_1)$. Applying the identity (\ref{eq:inv}) to the partition function, and differentiating with respect to $\hat \alpha_{\omega}(\underline{p})$, we get:
\begin{equation}
	\label{GenWId}
	0=\int P_{N}[d\psi] e^{-V(\psi)+C(\psi;A)}\left[Z_\omega\Den(\underline p)\hat n_{\underline p,\omega}-\Delta_{\underline p,\omega}(\psi)\right],
\end{equation}
where $P_{N}$ is the Grassmann integration given by:
\begin{equation} \label{eq:refcov}
\begin{split}
P_{N}[d\psi] &\propto d\psi\, e^{-K_{N}(\psi)} \\
K_{N}(\psi) &= \frac{1}{\beta L}\sum_{\omega}\sum_{\underline k\in\mathbb D^{\mathrm a}_{\beta,L,\mathfrak a}}\hat\psi^+_{\underline k,\omega}\left[\hat g^{(\leq N)}_{\omega,\varepsilon,\Ma}(\underline k)\right]^{-1}\hat\psi^-_{\underline k,\omega}\;,
\end{split}
\end{equation}
with
\begin{equation}\label{PropN}
\NProp(\underline k) =\frac{1}{Z_\omega}\frac{\chi^{\omega,\varepsilon}_N(\underline k)}{\mathcal D_{\omega,\mathfrak a}(\underline k)}\;,\qquad \chi^{\omega,\varepsilon}_N(\underline k) =\chi^{\varepsilon}(2^{-N}\|\underline k\|_\omega)\;.
\end{equation}	
From now on, we shall use the short-hand notation $\hat g^{(\le N)}_{\omega}(\underline k) \equiv \hat g^{(\leq N)}_{\omega,\varepsilon,\Ma}(\underline k)$.
\begin{proposition}[Ward identity for the density-density correlators]\label{prp:ref} The following identity holds true:
\begin{equation}\label{eq:WIm}
\begin{split}
&Z_{\omega_1}\DenI(\underline p_1) \langle \hat n_{\underline{p}_{1},\omega_{1}};\cdots;\hat n_{\underline{p}_{m},\omega_{m}}\rangle_N \\
&\qquad =-\DenI(\underline p_1) \mathfrak B^N_{\omega_1}(\underline p_1)\sum_{\bar\omega}\lambda_{\omega_{1}\bar\omega}Z_{\bar\omega}\hat v(\underline p_{1})\langle \hat n_{\underline p_1,\bar\omega};\cdots;\hat n_{\underline{p}_{m},\omega_{m}}\rangle_N\\
&\quad\qquad +\frac{1}{\beta L}\sum_{\underline k \in \D}\Delta_{\omega_1}(\underline k,\underline p_1)\hat g^{(\le N)}_{\omega_1}(\underline k-\underline p_1)\hat g^{(\le N)}_{\omega_1}(\underline k) \mathcal R^{N}_{\ul\omega}(\underline k;{\bf{\underline p}}_m)\;,
\end{split}
\end{equation}
where $\mathfrak B^N_{\omega}(\underline p)$ is the anomalous bubble diagram,
\begin{equation}\label{eq:Bolla}
\mathfrak B^N_{\omega}(\underline p):=-\frac{1}{\beta L}\frac{Z_{\omega}}{\Den(\underline p)}\sum_{\underline k\in\D}\Delta_{\omega}(\underline k,\underline p)\hat g^{(\le N)}_{\omega}(\underline k-\underline p)\hat g^{(\le N)}_{\omega}(\underline k)\;,
\end{equation}
$\underline{{\bf p}}_{m} = (\underline{p}_{1}, \ldots, \underline{p}_{m})$, and the error term $\mathcal R^{N}_{\ul\omega}$ satisfies the estimate, for $m\geq 3$:
\begin{equation}\label{eq:Rest}
\frac{1}{\beta L}\sup_{\underline{k}:\, c_{1} 2^{N} \leq \|\underline{k}\|_{\omega_{1}} \leq c_{2}2^{N}} \Big|\mathcal R^{N}_{\ul\omega}(\underline k;{\bf{\underline p}}_m)\Big| \leq C_{\beta,m}2^{-\gamma N}\;,
\end{equation}
where $\gamma>0$ and the constant $C_{\beta,m}$ is independent of $\Ma, \varepsilon, N$, provided $\frak{a}, \varepsilon$ and $\frak{a} / \varepsilon$ are small enough.
\end{proposition}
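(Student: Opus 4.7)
The plan is to combine the chiral gauge identity (\ref{GenWId}), differentiated in the external $A$-sources to yield an $m$-point relation, with a Schwinger--Dyson expansion that isolates the anomalous bubble $\mathfrak B^N_{\omega_1}$. I would first differentiate (\ref{GenWId}) $(m-1)$ times with respect to the sources $\hat A^{\mu_i}_{-\underline p_i,\omega_i}$ for $i=2,\ldots,m$, evaluate at $A=\phi=0$, convert the source-dependent one-point functions into connected cumulants using $\partial^{m-1}\langle \mathcal O\rangle_{N,A}/\partial \hat A^{\mu_2}_{-\underline p_2,\omega_2}\cdots|_{A=0}=\prod_{i=2}^m \mathsf Z^{\mu_i}_{\omega_i}\langle \mathcal O;\hat n_{\underline p_2,\omega_2};\cdots\rangle_N$, and cancel the common prefactor $\prod_{i=2}^{m}\mathsf Z^{\mu_i}_{\omega_i}$. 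This produces the bare Ward identity
\begin{equation*}
Z_{\omega_1}\DenI(\underline p_1)\langle \hat n_{\underline p_1,\omega_1};\cdots;\hat n_{\underline p_m,\omega_m}\rangle_N = \langle \Delta_{\underline p_1,\omega_1}(\psi);\hat n_{\underline p_2,\omega_2};\cdots;\hat n_{\underline p_m,\omega_m}\rangle_N\,.
\end{equation*}

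The core step is to rewrite the right-hand side by a Schwinger--Dyson expansion. Substituting (\ref{Dp}) for $\Delta_{\underline p_1,\omega_1}(\psi)$ and integrating each $\hat\psi^-_{\underline k,\omega_1}$ by parts against the Gaussian covariance of $P_N$ in (\ref{eq:refcov}), one uses
\begin{equation*}
\int P_N[d\psi]\,e^{-V}\,\hat\psi^-_{\underline k,\omega_1}\,F(\psi) = \hat g^{(\leq N)}_{\omega_1}(\underline k)\int P_N[d\psi]\,\partial_{\hat\psi^+_{\underline k,\omega_1}}\bigl[e^{-V}F(\psi)\bigr]
\end{equation*}
with $F(\psi)=\hat\psi^+_{\underline k-\underline p_1,\omega_1}\hat n_{\underline p_2,\omega_2}\cdots\hat n_{\underline p_m,\omega_m}$. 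The Grassmann derivative generates three kinds of contributions: (i) the action on the explicit $\hat\psi^+_{\underline k-\underline p_1,\omega_1}$, producing a Kronecker constraint $\underline p_1=0$ which is absent from our external momenta; (ii) the action on $e^{-V}$, which by the quartic structure (\ref{V}) contributes $-\partial_{\hat\psi^+_{\underline k,\omega_1}} V\propto \sum_{\bar\omega,\underline p}\lambda_{\omega_1\bar\omega}Z_{\omega_1}Z_{\bar\omega}\hat v(\underline p)\hat\psi^-_{\underline k-\underline p,\omega_1}\hat n_{\underline p,\bar\omega}$; and (iii) the action on the external density insertions $\hat n_{\underline p_j,\omega_j}$ with matching chirality $\omega_j=\omega_1$. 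In contribution (ii), contracting the remaining $\hat\psi^-_{\underline k-\underline p,\omega_1}\hat\psi^+_{\underline k-\underline p_1,\omega_1}$ pair against the second propagator $\hat g^{(\leq N)}_{\omega_1}(\underline k-\underline p_1)$ forces $\underline p=\underline p_1$ and yields the bubble factor $\frac{1}{\beta L}\sum_{\underline k}\Delta_{\omega_1}(\underline k,\underline p_1)\hat g^{(\leq N)}_{\omega_1}(\underline k-\underline p_1)\hat g^{(\leq N)}_{\omega_1}(\underline k)$ multiplied by $\sum_{\bar\omega}\lambda_{\omega_1\bar\omega}Z_{\bar\omega}\hat v(\underline p_1)\langle\hat n_{\underline p_1,\bar\omega};\hat n_{\underline p_2,\omega_2};\cdots\rangle_N$. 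By the definition (\ref{eq:Bolla}) of $\mathfrak B^N_{\omega_1}$, the bracketed sum equals $-\DenI(\underline p_1)\mathfrak B^N_{\omega_1}(\underline p_1)/Z_{\omega_1}$, and the resulting expression exactly reproduces the first term in the claim. The contributions of type (iii), together with the sub-leading pieces of (ii) that do not factorise as bubble times density cumulant, are gathered into the residual kernel $\mathcal R^N_{\underline\omega}(\underline k;{\bf\underline p}_m)$, which by construction appears multiplied by $\Delta_{\omega_1}\hat g^{(\leq N)}_{\omega_1}(\underline k-\underline p_1)\hat g^{(\leq N)}_{\omega_1}(\underline k)$ in the final identity.

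The main obstacle is the estimate (\ref{eq:Rest}) on $\mathcal R^N_{\underline\omega}$. The key algebraic observation is the identity $\Delta_{\omega_1}(\underline k,\underline p_1)=[\hat g^{(\leq N)}_{\omega_1}(\underline k)]^{-1}-[\hat g^{(\leq N)}_{\omega_1}(\underline k-\underline p_1)]^{-1}+Z_{\omega_1}\DenI(\underline p_1)$: when the first two terms are multiplied by the pair of propagators $\hat g^{(\leq N)}_{\omega_1}(\underline k)\hat g^{(\leq N)}_{\omega_1}(\underline k-\underline p_1)$ they telescope to $-\hat g^{(\leq N)}_{\omega_1}(\underline k)+\hat g^{(\leq N)}_{\omega_1}(\underline k-\underline p_1)$, whose sum over $\underline k\in\D$ vanishes by translation invariance, leaving only the bulk bubble contribution. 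The non-telescoping remnant that survives in $\mathcal R^N$ is effectively localised near the ultraviolet shell $\|\underline k\|_{\omega_1}\sim 2^N$ where $\chi^{\omega_1,\varepsilon}_N$ varies. On this shell the single-scale propagators decay as $2^{-N}$ and the interacting $(m{+}1)$-point correlators generated by the Schwinger--Dyson iteration are controlled uniformly in $\mathfrak a,\varepsilon,N$ by the RG bounds of Proposition~\ref{prp:flowref} and the GN tree expansion of the reference model; combining these with the short-memory property of Remark~\ref{rem:sm}, one converts the UV localisation of $\underline k$ into an overall power-law decay $2^{-\gamma N}$, $\gamma>0$. The delicate step is to preserve this decay uniformly in $\mathfrak a,\varepsilon,N$ while controlling the cumulants at $m{+}1$ points, for which the assumption $m\geq 3$ is used to avoid the marginal scaling dimensions and the hypotheses on $\mathfrak a,\varepsilon,\mathfrak a/\varepsilon$ being small ensure the uniformity of the tree-level estimates.
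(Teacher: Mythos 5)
Your proposal follows essentially the same route as the paper: differentiate the chiral Ward identity in the external sources to obtain $Z_{\omega_1}\mathcal D_{\omega_1,\mathfrak a}(\underline p_1)\langle\hat n;\cdots;\hat n\rangle_N=\langle\Delta_{\underline p_1,\omega_1};\hat n;\cdots;\hat n\rangle_N$, then perform a Schwinger--Dyson-type integration by parts to split this into a bubble-times-density term plus a remainder $\mathcal R^N_{\underline\omega}$, and finally control the remainder by the GN tree expansion and the short-memory property. Your organization of the Grassmann integration by parts (single integration on $\hat\psi^-_{\underline k}$ followed by a second contraction of $\hat\psi^+_{\underline k-\underline p_1}$) is just a bookkeeping variant of the paper's simultaneous double integration by parts, which yields the Leibniz decomposition into a second-derivative term (the bubble) and a product-of-first-derivatives term ($\mathcal R^N_{\underline\omega}$, equation \eqref{eq:remderiv}); the two give the same remainder once all terms are gathered.

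Two points in your estimate argument deserve a caution. First, your telescoping identity for $\Delta_{\omega_1}$ is correct and relevant, but the way you phrase it suggests that $\mathcal R^N_{\underline\omega}$ is the ``non-telescoping remnant'' of the bubble; in fact $\mathcal R^N_{\underline\omega}$ is the full product-of-first-derivatives contribution, and the telescoping serves only to show that the combination $\Delta_{\omega_1}\hat g^{(\leq N)}_{\omega_1}(\underline k-\underline p_1)\hat g^{(\leq N)}_{\omega_1}(\underline k)$ that multiplies $\mathcal R^N_{\underline\omega}$ in \eqref{eq:WIm} is supported on the ultraviolet shell $\|\underline k\|_{\omega_1}\sim 2^N$ after removing the $\mathfrak a,\varepsilon$ cutoffs. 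This is why an estimate on $\mathcal R^N_{\underline\omega}$ restricted to that shell suffices. Second, your sketch of the bound \eqref{eq:Rest} does not isolate the crucial intermediate step: the correlators appearing in $\mathcal R^N_{\underline\omega}$ (the paper's $\mathcal F^{2,m+1}$, $\mathcal F^{2,m-1}$, $\mathcal F^{2,m-3}$) are not connected, and must first be decomposed into products of connected pieces. Only the factors carrying two $\hat\psi$ fields with momenta on the UV shell acquire the $2^{-\tilde\gamma N}$ suppression (because the $\phi$-leaves of the GN trees are forced to attach at scale $N$, and short memory converts this into decay at the root); the purely density factors are $O(m!\,\beta^{m-1}+N)$, not small. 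It is only after this decomposition, and after absorbing the logarithmic $N$ growth of the marginal two-point density piece into a smaller $\gamma<\tilde\gamma$, that one arrives at the claimed $C_{\beta,m}2^{-\gamma N}$ bound. Your proposal asserts the conclusion but skips this decomposition, which is the place where the argument would actually fail if attempted directly on the disconnected $(m{+}1)$-point correlator.
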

\begin{corollary}[Vanishing of the density-density correlators] \label{cor:vanishingcorr} Let $m\geq 3$. Then, there exists $\gamma>0$ such that density-density correlators satisfy the following estimate:
\begin{equation}\label{eq:bdmdensity}
\lim_{\varepsilon\to0}\lim_{\Ma\to0}\frac{1}{\beta L}\big| \langle \hat n_{\underline{p}_{1},\omega_{1}};\cdots;\hat n_{\underline{p}_{m},\omega_{m}}\rangle_N \big| \leq C_{\beta, m} N  2^{-\gamma N}
\end{equation}
with a positive constant $C_{\beta,m}$ independent of $N$.
\end{corollary}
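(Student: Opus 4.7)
The plan is to deduce the corollary directly from the Ward identity of Proposition \ref{prp:ref}, by viewing it as a linear system in the chiral label $\omega_1$ and controlling its right-hand side as $N\to\infty$.

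First I would rewrite the Ward identity (\ref{eq:WIm}) as an equation in the $N_f$-dimensional vector $M(\omega_1;\omega_2,\dots,\omega_m):=\langle \hat n_{\ul p_1,\omega_1};\cdots;\hat n_{\ul p_m,\omega_m}\rangle_N$, with the external labels $\omega_2,\dots,\omega_m$ and momenta $\ul p_2,\dots,\ul p_m$ kept fixed. Dividing by $\mathcal D_{\omega_1}(\ul p_1)$ (which is non-zero for generic external momenta compatible with the $m$-point function; otherwise one removes it first from both sides, exploiting the factorization already present in (\ref{eq:WIm})), the identity takes the matrix form
\begin{equation*}
\sum_{\bar\omega}\Big(Z_{\omega_1}\delta_{\omega_1\bar\omega}+\mathfrak B^N_{\omega_1}(\ul p_1)\,\lambda_{\omega_1\bar\omega}Z_{\bar\omega}\hat v(\ul p_1)\Big)\,M(\bar\omega;\omega_2,\dots,\omega_m)=\mathcal E^N_{\omega_1}(\ul{\mathbf p}_m),
\end{equation*}
where $\mathcal E^N$ collects the contribution of $\mathcal R^N$ in (\ref{eq:WIm}).

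The next step is to invert the linear operator on the left. The diagonal part is $Z_{\omega_1}$, and the off-diagonal part has size $O(|\lambda|)$ provided $\mathfrak B^N_{\omega_1}(\ul p_1)$ is bounded uniformly in $N$. This uniform boundedness is exactly the content of the rigorous computation of the chiral anomaly for the reference model: in the limits $\mathfrak a\to 0$ then $\varepsilon\to 0$, $\mathfrak B^N_{\omega_1}(\ul p_1)$ converges to the universal constant $1/(4\pi|v_{\omega_1}|)$ up to corrections vanishing as $N\to\infty$ (see e.g.\ the bubble computation in \cite{BFM1, MPmulti} used to prove Proposition \ref{prp:flowref}); in particular it stays bounded for all $N$ uniformly in the cutoffs. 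Hence for $\lambda_{\max}$ small enough, the matrix has a bounded inverse uniformly in $N$, and
\begin{equation*}
|M(\omega_1;\omega_2,\dots,\omega_m)|\leq C\sum_{\omega_1'}|\mathcal E^N_{\omega_1'}(\ul{\mathbf p}_m)|.
\end{equation*}

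It then remains to show $|\mathcal E^N|\leq C_{\beta,m}\,N\,2^{-\gamma N}$. By construction, $\Delta_{\omega_1}(\ul k,\ul p_1)$ measures the breaking of the chiral conservation law due to the ultraviolet cutoff: if both $\chi_N^{\omega_1,\varepsilon}(\ul k)$ and $\chi_N^{\omega_1,\varepsilon}(\ul k-\ul p_1)$ equal $1$, then $\Delta_{\omega_1}$ vanishes identically in the scaling limit $\mathfrak a\to 0$. Consequently $\Delta_{\omega_1}\,\hat g^{(\le N)}_{\omega_1}(\ul k-\ul p_1)\hat g^{(\le N)}_{\omega_1}(\ul k)$ is supported, up to negligible boundary terms, on the UV shell $c_1 2^N\leq \|\ul k\|_{\omega_1}\leq c_2 2^N$. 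On that shell we invoke the pointwise estimate (\ref{eq:Rest}) for $\mathcal R^N$; combined with $|\Delta_{\omega_1}|\lesssim 2^N$, $|\hat g^{(\le N)}|\lesssim 2^{-N}$, and the shell volume $\sim 2^{2N}$, with a residual logarithmic integration in the momentum variable transverse to the anomaly direction (which produces the factor $N$), one obtains the claimed bound. Taking $\mathfrak a\to 0$ and then $\varepsilon\to 0$, the bound persists thanks to the uniformity in the cutoffs stated in Proposition \ref{prp:ref}, and (\ref{eq:bdmdensity}) follows.

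The main obstacle will be the last step: even though (\ref{eq:Rest}) gives a clean pointwise decay of $\mathcal R^N$ on the UV shell, the weight $\Delta_{\omega_1}\hat g^{(\le N)}\hat g^{(\le N)}$ is itself large in absolute value there, and only becomes small because of oscillation and cancellation built into $\Delta_{\omega_1}$. Quantifying this cancellation requires exploiting the structure of $\Delta_{\omega_1}$ as a discrete derivative of $Z_{\omega_1}\mathcal D_{\omega_1}/\chi_N^{\omega_1,\varepsilon}$, similarly to the proof of the anomaly formula for $\mathfrak B_{\omega}^N$, and ensuring that the resulting constant $C_{\beta,m}$ is independent of $\mathfrak a,\varepsilon,N$ (but allowed to depend on $\beta$ and $m$, as the statement permits).
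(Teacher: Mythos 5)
Your argument follows essentially the same approach as the paper's proof: read the Ward identity of Proposition \ref{prp:ref} as a linear system in the chiral index $\omega_1$, invert the matrix $T_N^{-1}(\ul p_1)$ of (\ref{eq:Tdef}) for small $\lambda_{\mathrm{max}}$ using the uniform boundedness of the anomalous bubble, and control the remainder via (\ref{eq:Rest}). One small imprecision: the bubble $\mathfrak B^N_{\omega_1}(\ul p_1)$ does not converge to the constant $1/(4\pi|v_{\omega_1}|)$ but to the $\ul p$-dependent expression (\ref{eq:relbubble}); only its modulus is that constant, which is what the inversion step actually needs.

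The only real gap is the step you yourself flag as the obstacle. The raw dimensional count you write ($|\Delta_{\omega_1}|\lesssim 2^N$, $|\hat g^{(\le N)}_{\omega_1}|\lesssim 2^{-N}$, shell volume $\sim 2^{2N}$, plus a logarithmic factor) gives $N2^{N}$, not $N$, so multiplied by the $2^{-\gamma N}$ from (\ref{eq:Rest}) it does not close for $\gamma<1$; the cancellation inside $\Delta_{\omega_1}$ must be made quantitative, as you correctly anticipate. The paper handles this by passing to the $\mathfrak a\to 0$, $\varepsilon\to 0$ limit first, which is permissible because $\mathcal R^N_{\ul\omega}$ is bounded uniformly in $\mathfrak a,\varepsilon$ (Remark \ref{rem:bdR}). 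In that limit, $\Delta_{\omega_1}(\ul k,\ul p_1)\,\hat g^{(\le N)}_{\omega_1}(\ul k-\ul p_1)\hat g^{(\le N)}_{\omega_1}(\ul k)$ collapses to a difference of cutoff functions divided by $D_{\omega_1}(\ul k)$, a kernel that is manifestly supported on the shell $\|\ul k\|_{\omega_1}\sim 2^N$ and whose $\ell^1$ sum in $\ul k$ is $O(N)$. Feeding this into the sup bound (\ref{eq:Rest}) yields the factor $N2^{-\gamma N}$ directly, and the inversion of $T_N^{-1}$ finishes the argument.
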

\begin{remark} Notice that the bounds (\ref{eq:Rest}), (\ref{eq:bdmdensity}) are not optimal in their $\beta$-dependence. With quite some extra effort, one could establish an estimate uniform in $\beta$, and not uniform in the external momenta. This improvement will however not be needed in our application, where the external momenta are bounded, and where the $N\to \infty$ limit is performed at $\beta$ fixed.
\end{remark}
\begin{proof}[Proof of Corollary \ref{cor:vanishingcorr}] Let us start by estimating the last term in the right-hand side of (\ref{eq:WIm}). Since $\mathcal R^{N}_{\ul\omega}$ is bounded uniformly in $\Ma$, $\varepsilon$ small (see Remark \ref{rem:bdR}), we may take the limit $\Ma\to 0$ and then $\varepsilon\to 0$ as well. In this limit, the function $\Delta_{\omega_{1}}\hat g^{(\leq N)}_{\omega_{1}}\hat g^{(\leq N)}_{\omega_{1}}$ becomes, by \eqref{Dp} and \eqref{PropN}:
\begin{equation}\label{Rewr}	
\frac{\chi^{\omega_{1}}_{N}(\underline{k} + \underline{p}_{1}) - \chi^{\omega_{1}}_{N}(\underline{k} - \underline{p}_{1}) }{Z_{\omega_{1}}^{2}D_{\omega_{1}}(\underline{k})}\chi^{\omega_{1}}_{N}(\underline{k})
\end{equation}
with $D_{\omega_1}(\underline k):=ik_0+v_{\omega_1}k_1$ for $\ul k\in\mathbb D^{\mathrm a}_{\beta,L}=\frac{2\pi}{\beta}\big(\Z+\frac{1}{2}\big)\times \frac{2\pi}{L}\big(\Z+\frac{1}{2}\big)$, so that 
\begin{equation}\label{Rewr2}
\begin{split}
\frac{1}{\beta L}\sum_{\underline k \in \D}\Delta_{\omega_1}&(\underline k,\underline p_1)\hat g^{(\le N)}_{\omega_1}(\underline k-\underline p_1)\hat g^{(\le N)}_{\omega_1}(\underline k) \mathcal R^{N}_{\ul\omega}(\underline k;{\bf{\underline p}}_m)\\
&\underset{\Ma,\varepsilon\to0}{\longrightarrow}\frac{1}{\beta L}\sum_{\ul k\in\mathbb D^{\mathrm{a}}_{\beta,L}}\frac{\chi^{\omega_{1}}_{N}(\underline{k} + \underline{p}_{1}) - \chi^{\omega_{1}}_{N}(\underline{k} - \underline{p}_{1}) }{Z_{\omega_{1}}^{2}D_{\omega_{1}}(\underline{k})}\chi^{\omega_{1}}_{N}(\underline{k}) \mathcal R^{N}_{\ul \omega}(\ul k;\ul{\mathbf p}_{m})\;,
\end{split}
\end{equation}
where with a slight abuse of notation we renamed $\mathcal R^{N}_{\ul \omega}$ its $\Ma,\varepsilon\to0$ limit.

Thanks to the properties of the cut-off function $\chi^{\omega_{1}}_{N}$, we notice that, for fixed $\ul p_{1}\neq\ul 0$, the function \eqref{Rewr} is supported on very high momenta, \emph{i.e.} for $c_{1}2^{N}\leq \|\ul k\|_{\omega_{1}}\leq c_{2}2^{N}$.
Hence, we have:
\begin{equation}
\begin{split}
&\Bigg| \frac{1}{\beta L}\sum_{\ul k\in\mathbb D^{\mathrm{a}}_{\beta,L}}\frac{\chi^{\omega_{1}}_{N}(\underline{k} + \underline{p}_{1}) - \chi^{\omega_{1}}_{N}(\underline{k} - \underline{p}_{1}) }{Z_{\omega_{1}}^{2}D_{\omega_{1}}(\underline{k})}\chi^{\omega_{1}}_{N}(\underline{k}) \mathcal R^{N}_{\ul \omega}(\ul k;\ul{\mathbf p}_{m}) \Bigg| \\
&\quad \leq C_{\beta,m}2^{-\gamma N} \frac{1}{\beta L}\sum_{\ul k\in\mathbb D^{\mathrm{a}}_{\beta,L}} \Bigg|\frac{\chi^{\omega_{1}}_{N}(\underline{k} + \underline{p}_{1}) - \chi^{\omega_{1}}_{N}(\underline{k} - \underline{p}_{1}) }{Z_{\omega_{1}}^{2}D_{\omega_{1}}(\underline{k})}\chi^{\omega_{1}}_{N}(\underline{k}) \Bigg| \\
&\quad \leq \tilde C_{\beta,m} 2^{-\gamma N} N\;,
\end{split}
\end{equation}
where in the first step we employed (\ref{eq:Rest}); the last factor $N$ comes from the logarithmic divergence of the sum. Therefore, dividing left-hand side and right-hand side of (\ref{eq:WIm}) by $Z_{\omega_1}D_{\omega_{1}}(\underline p_1)$, we have:
\begin{equation}
\lim_{\varepsilon\to0}\lim_{\Ma\to0}\Big|\sum_{\bar \omega} (T^{-1}_{N}(\underline{p}_{1}))_{\omega_{1}\bar \omega} \langle \hat n_{\underline{p}_{1},\bar \omega};\cdots;\hat n_{\underline{p}_{m},\omega_{m}} \rangle_{N}\Big| \leq  \frac{\tilde C_{\beta, m}}{\|\underline{p}_{1}\|_{\omega_{1}}} 2^{-\gamma N} N,
\end{equation}
where $T^{-1}_{N}(\underline{p})$ is the matrix:
\begin{equation}\label{eq:Tdef}
(T^{-1}_{N}(\underline{p}))_{\omega\omega'} = \delta_{\omega,\omega'} + \mathfrak B^N_{\omega}(\underline p) Z_{\omega}^{-1}\lambda_{\omega\omega'}Z_{\omega'}\hat v(\underline p)\;.
\end{equation}
The final bound (\ref{eq:bdmdensity}) follows from the invertibility of the matrix $T^{-1}_{N}(\underline{p})$ defined in (\ref{eq:Tdef}), for $\lambda_{\mathrm{max}}:=\max_{\omega,\omega'}|\lambda_{\omega\omega'}|$ small enough.
\end{proof}
We refer the reader to Appendix \ref{app:ref} for the proof of Proposition \ref{prp:ref}. The identity (\ref{eq:bdmdensity}) will play a key role in the proof of exactness of linear response for the lattice model. In the bosonization approach to the chiral Luttinger liquid, the identity (\ref{eq:bdmdensity}) as $N\to \infty$ is a (formal) consequence of the fact that the density operator is described by (the derivative of) a bosonic Gaussian free field. In our setting, this identity is proved rigorously starting from a regularized QFT. The advantage of this setting is that it is robust enough to be used to derive consequences about the original lattice model, at  low energies. This will be discussed in the next section.

To conclude this section, we report briefly the structure of the density-density and vertex correlation functions, when all cut-offs are removed and when $\beta,L$ are sent to infinity.
\begin{proposition}[Density-density correlation and vertex function; Proposition 9.11 of \cite{MPmulti}] \label{prop:2ptref}
Let us set
\[
\langle\,\cdot\,\rangle=\lim_{\beta,L\to\infty}\lim_{N\to\infty}\lim_{\varepsilon\to0}\lim_{\Ma\to0}\langle\,\cdot\,\rangle_{N}\;;
\]
then, for any $\ul p\neq 0$, the density two-point function takes the form
\begin{equation}
\langle\hat n_{\ul p,\omega};\hat n_{-\ul p,\omega'}\rangle = T_{\omega\omega'}(\ul p) \frac{\mathfrak B_{\omega'}(\ul p)}{Z_{\omega'}^{2}}\;,
\end{equation}
and the vertex function is given by
\begin{equation} \label{eq:refVertexWI}
\langle\hat n_{\ul p,\omega};\hat\psi^{-}_{\ul k,\omega'};\hat\psi^{+}_{\ul k +\ul p,\omega'}\rangle = \frac{T_{\omega\omega'}(\underline{p})}{Z_{\omega} D_{\omega}(\underline{p})} \big[\langle\hat\psi^{-}_{\ul k,\omega'}\hat\psi^{+}_{\ul k,\omega'}\rangle  - \langle\hat\psi^{-}_{\ul k+\ul p,\omega'}\hat\psi^{+}_{\ul k+\ul p,\omega'}\rangle]\;.
\end{equation}
Here
\begin{equation} \label{eq:relbubble}
\mathfrak B_{\omega}(\ul p) = \lim_{\beta,L\to\infty}\lim_{N\to\infty}\lim_{\varepsilon\to0}\lim_{\Ma\to0} \mathfrak B^{N}_{\omega}(\ul p) = \frac{1}{4\pi |v_{\omega}|}\frac{-ip_{0}+ v_{\omega}p_{1}}{ip_{0}+v_{\omega}p_{1}}\;,
\end{equation}
and 
\begin{equation}
T(\ul p) := \lim_{\beta,L\to\infty}\lim_{N\to\infty}\lim_{\varepsilon\to0}\lim_{\Ma\to0} T_{N}(\ul p) = \frac{1}{1+ \mathfrak B(\ul p)\Lambda_{Z}\hat v(\ul p)}\;,
\end{equation}
where
\[
\Lambda_{Z}:= Z^{-1}\Lambda Z\;,
\]
with $Z=\mathrm{diag}\, Z_{\omega}$, and $\Lambda_{\omega\omega'}=\lambda_{\omega\omega'}$.
\begin{proof}
We refer to Section 9 of \cite{MPmulti} for the proof. It should be noted that the expression (\ref{eq:relbubble}) of the relativistic bubble $\mathfrak B$ takes a (slightly) different form with respect to the one in \cite{MPmulti}, due to different Fourier conventions adopted in this paper.
\end{proof}

\end{proposition}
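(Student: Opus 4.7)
The plan is to derive both identities from specializations of the chiral Ward identity (\ref{GenWId}), combined with the explicit evaluation of the chiral anomaly $\mathfrak{B}^N_\omega(\underline p)$ in the cutoff-removal limit. The structure is completely analogous to Proposition \ref{prp:ref}, but with one crucial difference: for the special correlators considered here the ``remainder'' term (the analogue of $\mathcal R^N$ in (\ref{eq:WIm})) does not vanish in the limit -- instead, it is precisely what drives the inhomogeneous right-hand side of the formulas.

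For the density-density correlator, I would differentiate (\ref{GenWId}) with respect to $\hat A^0_{-\underline p,\omega'}$ and extract the two-point connected cumulant. The key observation is that the direct Gaussian pairing of $\Delta_{\underline p,\omega}$ with $\hat n_{-\underline p,\omega'}$, which for $m\geq 3$ generates only subleading contractions, for $m=2$ produces exactly $-\delta_{\omega\omega'}D_\omega(\underline p)\mathfrak B^N_\omega(\underline p)/Z_\omega$ by the very definition (\ref{eq:Bolla}) of $\mathfrak B^N_\omega$. After dividing by $Z_{\omega}D_{\omega}(\underline p)$, the resulting closed equation reads
\begin{equation*}
\sum_{\bar\omega}\bigl[T^{-1}_N(\underline p)\bigr]_{\omega\bar\omega}\,\langle\hat n_{\underline p,\bar\omega};\hat n_{-\underline p,\omega'}\rangle_N \;=\; \delta_{\omega\omega'}\,\frac{\mathfrak B^N_\omega(\underline p)}{Z_\omega^2} + \mathcal E^N_{\omega\omega'}(\underline p)\;,
\end{equation*}
with $T^{-1}_N$ as in (\ref{eq:Tdef}) and $\mathcal E^N$ vanishing jointly in the limits $\Ma,\varepsilon\to 0$, $N\to\infty$, $\beta,L\to\infty$, by the same type of estimate as (\ref{eq:Rest}). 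Inverting $T^{-1}_N$ (possible for $\lambda_{\max}$ small enough, uniformly in all cutoffs) and passing to the limit gives the first formula.

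The vertex function is obtained in an identical fashion, this time differentiating (\ref{GenWId}) with respect to $\hat\phi^{+}_{\underline k+\underline p,\omega'}$ and $\hat\phi^{-}_{\underline k,\omega'}$ in place of $A$. The direct Wick pairing of $\Delta$ with the two external fermion legs, combined with the algebraic identity
\begin{equation*}
\Delta_\omega(\underline k',\underline p)\,\hat g^{(\le N)}_\omega(\underline k')\,\hat g^{(\le N)}_\omega(\underline k'-\underline p) \;=\; \hat g^{(\le N)}_\omega(\underline k'-\underline p)-\hat g^{(\le N)}_\omega(\underline k') + \text{(anomaly)}
\end{equation*}
(which can be verified directly from (\ref{Dp}) and (\ref{PropN}), exactly as in the passage from (\ref{Rewr}) to (\ref{Rewr2})), reconstructs -- after the appropriate summations over $\underline k'$ and the dressing by interactions -- the bracketed two-point function difference appearing on the right-hand side of (\ref{eq:refVertexWI}). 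The same matrix $T^{-1}_N(\underline p)$ arises on the left, so inversion together with the cutoff-removal limits yields the second formula.

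The remaining ingredient is the explicit evaluation of $\mathfrak B_\omega(\underline p)$. Using the rewriting (\ref{Rewr}), after $\Ma,\varepsilon\to 0$ the bubble (\ref{eq:Bolla}) reduces to a momentum integral supported on a thin shell where $\chi_N(\underline k+\underline p)-\chi_N(\underline k-\underline p)$ is non-zero; a Taylor expansion in $\underline p$ combined with the chirality of the denominator $D_\omega(\underline k)=ik_0+v_\omega k_1$ produces a universal, cutoff-profile-independent limit, and direct computation of the residual integral yields (\ref{eq:relbubble}). I expect the main obstacle to be not so much the algebra but the careful control of the several cutoff-removal limits -- in particular, verifying that $\mathcal E^N_{\omega\omega'}$ and its vertex analogue genuinely vanish, which follows the strategy used for Proposition \ref{prp:ref} in Appendix \ref{app:ref}, and we refer to Section 9 of \cite{MPmulti} for the full implementation.
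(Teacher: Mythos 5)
Your sketch reproduces the Ward-identity route followed in Section~9 of \cite{MPmulti}, which is where the paper defers the proof, and parallels the paper's own Appendix~\ref{app:ref} treatment of the $m\geq 3$ case: specialize the chiral Ward identity \eqref{GenWId} to $m=2$ (for the density--density correlator) and to two external $\phi$ legs (for the vertex), integrate by parts via Schwinger--Dyson, identify the inhomogeneous term coming from the $-\hat A_{\underline p_1,\omega_1}$ contribution in \eqref{eq:ap1} (which vanishes identically for $m\geq 3$ by momentum conservation, but for $m=2$ yields $\delta_{\omega\omega'}\mathfrak B^N_\omega(\underline p)/Z_\omega^2$ after dividing by $Z_\omega D_\omega(\underline p)$), invert $T^{-1}_N$, and evaluate the anomalous bubble directly. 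This is the correct and essentially the intended argument.
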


\section{Proof of Main Theorem} \label{sec:final}
\subsection{Matching the lattice and reference models}
In this section we shall prove that lattice correlators are well-approximated by the corresponding ones in the reference model, with summable bounds on the deviations.
\begin{remark}
In this section, we shall denote the reference model expectation, having already performed the $\Ma\to0$ and $\varepsilon\to0$ limits, with $\langle\,\cdot\,\rangle^{\mathrm{ref}}_{\beta,L,N}$.
\end{remark}
\begin{proposition} [Comparison of correlations] \label{prop:CorrComparison}
Let $\beta, L, N$ be large enough. Calling $\lambda$ the bare coupling of the lattice model, there exists $\bar\lambda>0$ independent of $\beta,L$ such that, for all $|\lambda|<\bar\lambda$, one may find a choice of the bare couplings of the reference model, satisfying
\begin{equation}
|Z^{\mathrm{ref}}_{\omega}-1|= C|\lambda|\;,\qquad |v^{\mathrm{ref}}_{\omega}- v_{\omega}|= C|\lambda|\;, \qquad 
|\lambda^{\mathrm{ref}}_{\omega\omega'}-\lambda| = C|\lambda|^{2}\;,
\end{equation}
for all $\omega,\omega'$, such that the following points hold.

Let $m\geq 3$, $\ul p_{i}=(\eta_{\beta},p_{i})$ for $i=1,\dots, m-1$, and set $\ul p_{m}=-\sum_{i=1}^{m-1} \ul p_{i}$. Assume that $|p_{i}| \leq B|\theta|$, and that $|\theta| \leq K \eta$, as in Theorem \ref{thm:main}. Then, for $\eta$ small enough, the lattice density-current $m$-point correlations can be written as
\begin{equation} \label{eq:CorrComparison}
\begin{split}
\langle \mathbf T \hat n_{\ul p_{1}};\cdots; \hat n_{\ul p_{m-1}}; \hat \jmath_{\mu,\ul p_{m}}\rangle_{\beta,L} = \sum_{\ul \omega} \Big[\prod_{i=1}^{m-1} \mathsf Z^{0,\mathrm{ref}}_{\omega_{i}}\Big] \mathsf Z^{\mu,\mathrm{ref}}_{\omega_{m}}\,\langle \hat n_{\ul p_{1},\omega_{1}};\cdots;\hat n_{\ul p_{m},\omega_{m}}\rangle^{\mathrm{ref}}_{\beta, L,N} \\
 + R^{\beta,L, N}_{m;\mu} (\ul p_{1},\dots, \ul p_{m})\;;
\end{split}
\end{equation}
the error term admits the following estimate:
\begin{equation} \label{eq:EstRem}
\frac{1}{\beta L}|R^{\beta,L, N}_{m;\mu} (\ul p_{1},\dots, \ul p_{m})| \leq m! C^{m}\, \eta^{2-m+\gamma} F(m;\lambda,\alpha)\;,
\end{equation}
where
\begin{equation} \label{eq:improv}
F(m;\lambda,\alpha):=\begin{cases}
1 & \text{if } m<16\\
\displaystyle\frac{1}{\lfloor m/4-4\rfloor!} + \Big(\frac{2\alpha}{1-32\alpha}\Big)^{(1/4-8\alpha)m} + |\lambda|^{\alpha m} & \text{if } m\geq 16
\end{cases}
\end{equation}
with $\gamma\in(0,1)$ and any $\alpha\in(0,1/34)$, and $C$ uniform in $\beta, L, N$.
\end{proposition}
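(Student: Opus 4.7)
My plan is to compare the Gallavotti--Nicol\`o tree expansions of the lattice and reference models scale by scale, choosing the bare parameters of the reference model so as to match the effective couplings of the lattice model at the onset of the infrared regime.

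\textbf{Matching.} First, I fix $Z^{\mathrm{ref}}_\omega, v^{\mathrm{ref}}_\omega, \lambda^{\mathrm{ref}}_{\omega\omega'}$, and $\mathsf Z^{\mu,\mathrm{ref}}_\omega$ to equal the effective couplings of the lattice model on scale $0$, namely $Z_{0,\omega}=1$, $v_{0,\omega}$, $\lambda_{0,\omega\omega'}$, and $\mathsf Z^\mu_{0,\omega}$, respectively. By Proposition \ref{prp:flow} these differ from their $\lambda=0$ values by at most $O(\lambda)$, as required. With this choice, combining Propositions \ref{prp:flow} and \ref{prp:flowref} with the continuity property of GN trees (Remark \ref{rem:cont}) and the vanishing of the beta function on the chiral manifold, the effective couplings of the two models coincide on every scale $h<0$ up to differences bounded by $C_\vartheta|\lambda|\,2^{\vartheta h}$ for some $\vartheta\in(0,1)$, and the running vertex renormalizations satisfy $\mathsf Z^{\mu}_{h,\omega} = \mathsf Z^{\mu,\mathrm{ref}}_{h,\omega}(1+O(2^{\vartheta h}))$.

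\textbf{Expansion of the difference.} Next, I write each lattice GN contribution as the sum of a ``reference part'', in which every single-scale propagator, every running coupling constant, and the scale-$0$ UV input are replaced by their reference-model counterparts, plus a remainder. The remainder is a sum over trees where at least one object carries a lattice-minus-reference difference. Three families of discrepancies arise. First, the single-scale propagator differences $g^{(h)}_{\omega,\mathrm{lat}}-g^{(h)}_{\omega,\mathrm{ref}}$: thanks to the dispersion expansion $e_\omega(k_F^\omega+k')-\mu = v_\omega k' + O(|k'|^2)$ and to the bounds (\ref{eq:rcc}) on $r_{h,\omega}$, these are bounded by $C\,2^{-h}\cdot 2^{\vartheta h}$ and admit a Gram representation. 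Second, the differences of running coupling constants, bounded by $C|\lambda|\,2^{\vartheta h_v}$ by the matching step. Third, structural terms absent in the reference model, namely the quadratic counterterms $\nu_{h,\omega}$ (which appear at all scales in the lattice flow but are absent in the reference model) and the off-diagonal vertex renormalizations $\mathsf Z^\mu_{h,\omega\omega'}$ with $\omega\neq\omega'$: the former are handled by the finite-volume choice of $\nu_\omega$ making $\nu_{h,\omega}=O(2^{\vartheta h})$, the latter by the momentum-conservation constraint of Remark \ref{rem:od} combined with the restriction $|\theta|\leq K\eta$. Additionally, the Bloch-function convolutions $\check\xi^\omega$ dressing the lattice fields (equation (\ref{eq:conv})) differ from the identity kernel of the reference model by an amount that carries an $\eta^\vartheta$ gain against the reference vertex, by the smoothness of $\xi^\omega(k)$ near the Fermi points. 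In each case the continuity property (Remark \ref{rem:cont}) yields an extra factor $2^{\vartheta h_v}$ per insertion while leaving the renormalized scaling dimension (\ref{eq:dvir}) strictly negative, so summability over scales and labellings is preserved.

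\textbf{Summing the remainder and factorial gain.} By a telescopic argument in the scale labels, each extra factor $2^{\vartheta h_v}$ is transferred to the root scale, yielding an overall $2^{\vartheta h}$; since external momenta have temporal component $\geq\eta$, the relevant root scales satisfy $h\geq h_\eta-O(1)$, so the remainder inherits an additional factor $\eta^\vartheta$ relative to the bare dimensional estimate $\eta^{2-m}$. The factorial improvement $F(m;\lambda,\alpha)$ derives from the chain-bounding argument of Lemma \ref{lemma:factorial} and Proposition \ref{prop:SumBoundLow}, which depends only on the temporal structure of the external $A$-momenta (shared by lattice and reference) and on the Gram representation of the single-scale propagators (available for the difference as well). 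Hence it applies verbatim to the remainder, producing the bound (\ref{eq:EstRem}) with $\gamma=\vartheta$. The $N\to\infty$ limit on the reference side is taken exploiting the uniformity of the GN estimates of Section \ref{Sec:RefMod}.

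\textbf{Main obstacle.} The principal difficulty is the systematic and uniform classification of all structural discrepancies between the two models---Bloch-function dressings, irrelevant single-scale corrections $r_{h,\omega}$, the wavefunction and vertex rescalings, the $\nu_{h,\omega}$ counterterms, and the off-diagonal $\mathsf Z^\mu_{h,\omega\omega'}$---so that each contribution simultaneously acquires an $\eta^\vartheta$ gain and preserves the factorial $F(m;\lambda,\alpha)$, without disturbing the short-memory property (Remark \ref{rem:sm}) and the chain-summation arguments that underpin the improved dimensional estimate.
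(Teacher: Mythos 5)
Your decomposition of the error into $E_1$, $E_2$, $E_3$-type pieces and the use of the short-memory and continuity properties to extract the $\eta^\gamma$ factor are in the right spirit, and your treatment of the factorial gain $F(m;\lambda,\alpha)$ (which depends only on the temporal structure of the $A$-momenta and on the Gram property of the propagators) is correct. However, the crucial \textbf{Matching} step is wrong, and the error is fatal to the $\eta^\gamma$ gain in the ``$E_2$-type'' term.

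You fix the bare reference parameters so that the two flows agree \emph{at scale $0$}. This does \emph{not} yield $|\lambda_{h,\omega\omega'}-\lambda^{\mathrm{ref}}_{h,\omega\omega'}|\leq C|\lambda|\,2^{\vartheta h}$ (nor the analogous bounds for $Z_{h}/Z_{h-1}$, $v_h$, $\mathsf Z^\mu_h$) for $h<0$. Writing $\Delta\lambda_h := \lambda_h - \lambda^{\mathrm{ref}}_h$ and using the flow recursion, $\Delta\lambda_h = \sum_{h'=h+1}^{0}\big(\Delta\beta^{\lambda}_{h'} + \epsilon_{h'}\big)$, where $\epsilon_{h'}$ collects the structural discrepancies (irrelevant $r_{h',\omega}$ corrections to the propagator, $\nu_{h',\omega}$ counterterms, Bloch-function dressings, etc.), each of which is $O(2^{\vartheta h'})$. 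Because the sum runs \emph{from $h$ up to $0$}, it is dominated by the $h'\approx 0$ terms, so $\Delta\lambda_h = O(|\lambda|^2)$ uniformly in $h$ rather than $O(|\lambda|^2 2^{\vartheta h})$; equivalently, the two flows drift to distinct infrared fixed points that differ by $O(|\lambda|^2)$. Consequently the continuity property of GN trees cannot supply the $2^{\gamma h}$ per-insertion factor that you need, and the ``$E_2$-type'' remainder is only bounded by $m!\,C^m\eta^{2-m}F(m;\lambda,\alpha)$, which is of the \emph{same order} as the leading term and does not vanish as $\eta\to 0$. The correct fix is to impose the matching at the deepest infrared scale $h_\beta$, namely $\lambda^{\mathrm{ref}}_{h_\beta,\omega\omega'}=\lambda_{h_\beta,\omega\omega'}$, and likewise for $Z_{h_\beta,\omega}$, $v_{h_\beta,\omega}$, $\mathsf Q^{\pm}_{h_\beta,\omega}$, $\mathsf Z^\mu_{h_\beta,\omega}$; this condition can be enforced via the implicit function theorem on the bare couplings of the reference model. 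With the boundary placed at $h_\beta$, the telescoping sum becomes $\Delta\lambda_h = -\sum_{h'=h_\beta+1}^{h}\big(\Delta\beta^\lambda_{h'}+\epsilon_{h'}\big)$, which \emph{is} $O(|\lambda|^2 2^{\vartheta h})$ by a standard bootstrap, and the rest of your argument goes through.
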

\begin{proof} As in the proof of Proposition \ref{prop:CorrDimBound}, we start by writing:
\begin{equation}
\begin{split}
\langle \mathbf T\hat n_{\ul p_{1}};\cdots; \hat \jmath_{\mu,\ul p_{m}}\rangle_{\beta,L} &= \langle \mathbf T\hat n_{\ul p_{1}};\cdots; \hat \jmath_{\mu,\ul p_{m}}\rangle_{\beta,L}^{(\text{u.v.})} \\
&\quad +\langle \mathbf T\hat n_{\ul p_{1}};\cdots; \hat \jmath_{\mu,\ul p_{m}}\rangle_{\beta,L}^{(\text{i.r.}), (\text{d})} + \langle \mathbf T\hat n_{\ul p_{1}};\cdots; \hat \jmath_{\mu,\ul p_{m}}\rangle_{\beta,L}^{(\text{i.r.}), (\text{od})}\;;
\end{split}
\end{equation}
as noticed already in (\ref{eq:uvcorr}), (\ref{eq:estodimpro}), the first and the last term already satisfy the bound (\ref{eq:EstRem}), for $\eta$ small enough. Let us focus on the second term. We begin by recalling the GN expansion of the density correlations, Eq. (\ref{eq:GNsplit}):
\begin{equation}\label{eq:GNExpLatt}
\langle \mathbf T\hat n_{\ul p_{1}};\cdots; \hat \jmath_{\mu,\ul p_{m}}\rangle_{\beta,L}^{(\text{i.r.}), (\text{d})} = \sum_{h=h_{\beta}}^{0}\sum_{n=0}^{\infty}\sum'_{\tau\in \widetilde{\mathcal T}_{h,n}^{m}} \sum_{\mathbf\Gamma\in \mathcal L_{\tau}} \widehat W^{(h)}_{\bf \Gamma}[\tau]
\end{equation}
with $\widetilde{\mathcal T}^{m}_{h,n}$ the set of scale-labelled GN trees extending up to the scale of the UV cutoff, with root-scale $h$; $m-1$ leaves of $A^{0}$-type, and one $A^{\mu}$-leaf; $\mathcal L_{\tau}$ the set of labellings of such trees selecting momenta $\ul p_{i}$ on such leaves, as described in equations (\ref{eq:GNexp})-(\ref{eq:fourierkernel}). The prime in the tree sum enforces the constraint that all $A$-leaves on scale $<0$ are diagonal, that is they involve the same fermionic chiralities.


Splitting $\widetilde{\mathcal T}^{m}_{h,n}$ into $\mathcal T^{m,(\leq -1)}_{h,n}$, the set of labelled trees $\tau$ with all leaves at scales $\leq -1$, and $\widetilde{\mathcal T}^{m,[0,N_{0}]}_{h,n}$, accounting for trees having at least one endpoint on a scale between $0$ and $N_{0}$, we may rewrite (\ref{eq:GNExpLatt}) as
\begin{equation} \label{eq:split1}
\begin{split}
\langle \mathbf T\hat n_{\ul p_{1}};\cdots; \hat \jmath_{\mu,\ul p_{m}}\rangle_{\beta,L}^{(\text{i.r.}), (\text{d})}&=\sum_{h=h_{\beta}}^{-1}\sum_{n=0}^{\infty}\sum'_{\tau\in \mathcal T_{h,n}^{m,(\leq -1)}} \sum_{\mathbf\Gamma\in \mathcal L_{\tau}} \widehat W^{(h)}_{\bf \Gamma}[\tau] + E^{\beta,L}_{m;\mu,1}(\ul p_{1},\dots,\ul p_{m})\\
E^{\beta,L}_{m;\mu,1}(\ul p_{1},\dots,\ul p_{m}) &= \sum_{h=h_{\beta}}^{0}\sum_{n=0}^{\infty}\sum'_{\tau\in \widetilde{\mathcal T}_{h,n}^{m,[0,N_{0}]}} \sum_{\mathbf\Gamma\in \mathcal L_{\tau}} \widehat W^{(h)}_{\bf \Gamma}[\tau]\;.
\end{split}
\end{equation}
Now, we substitute the first term on right-hand side with the reference model correlation, obtaining
\begin{equation} \label{eq:lattrefsplit}
\begin{split}
\langle \mathbf T\hat n_{\ul p_{1}};\cdots; \hat \jmath_{\mu,\ul p_{m}}\rangle^{(\text{i.r.})}_{\beta,L} &= \sum_{\ul \omega} \Big[\prod_{i=1}^{m-1} \mathsf Z^{0,\mathrm{ref}}_{\omega_{i}}\Big] \mathsf Z^{\mu,\mathrm{ref}}_{\omega_{m}}\,\langle \hat n_{\ul p_{1},\omega_{1}};\cdots;\hat n_{\ul p_{m},\omega_{m}}\rangle^{\mathrm{ref}}_{\beta, L,N}\\ &\quad + \sum_{i=1}^{3} E^{\beta,L,N}_{m;\mu,i}(\ul p_{1},\dots,\ul p_{m})
\end{split}
\end{equation}
where (observe that the sum over the GN trees of the reference model only involves diagonal $A$-leaves):
\begin{equation}
\begin{split}
E^{\beta,L,N}_{m;\mu,2}(\ul p_{1},\dots,\ul p_{m}) &:= \sum_{h=h_{\beta}}^{-1}\sum_{n=0}^{\infty}\sum'_{\tau\in \mathcal T_{h,n}^{m,(\leq -1)}} \sum_{\mathbf\Gamma\in \mathcal L_{\tau}} \Big(\widehat W^{(h)}_{\bf \Gamma}[\tau] - \widehat W^{(h),\mathrm{ref}}_{\bf \Gamma}[\tau]\Big)\\
E^{\beta,L,N}_{m;\mu,3}(\ul p_{1},\dots,\ul p_{m}) &:= -\sum_{h=0}^{N}\sum_{n=0}^{\infty}\sum_{\tau\in \mathcal T_{h,n}^{m,[0,N]}} \sum_{\mathbf\Gamma\in \mathcal L_{\tau}} \widehat W^{(h),\mathrm{ref}}_{\bf \Gamma}[\tau]\;.
\end{split}
\end{equation}
Notice that $E^{\beta,L,N}_{m;\mu,1}\equiv E^{\beta,L}_{m;\mu,1}$ does not depend on the reference model, hence it is independent of the reference model cut-off $N$; also it will satisfy estimates which are uniform in the lattice model ultraviolet cut-off $N_{0}$.
Furthermore, here $\mathcal T^{m,[0,N]}_{h,n}$ indicates the set of reference model GN trees which have at least one leaf attaching on a scale between $0$ and $N$. Finally, since no counterterm is needed for the reference model, \emph{i.e.} $\nu^{\mathrm{ref}}_{h,\omega}=0$, we use the convention $\widehat W^{(h),\mathrm{ref}}_{\mathbf \Gamma}[\tau]=0$ whenever there is at least one leaf of type $\nu$.

Thus, our goal is to estimate:
\begin{equation}
R^{\beta,L,N}_{m;\mu} := \sum_{i=1}^{3} E^{\beta,L,N}_{m;\mu,i}\;. 
\end{equation}
We follow the reasoning in (\ref{eq:GNsplit}) by splitting this remainder into a term with few leaves of $\lambda$ or $\nu$ type, and one with many:
\begin{equation}
\begin{split}
R^{\beta,L,N}_{m;\mu} &= \sum_{n=0}^{n_{*}} R^{\beta,L,N}_{m,n;\mu} + \sum_{n=n_{*}+1}^{\infty} R^{\beta,L,N}_{m,n;\mu}\equiv R^{\beta,L,N}_{m,\leq n_{*};\mu} + R^{\beta,L,N}_{m,> n_{*};\mu}\\
R^{\beta,L,N}_{m,n;\mu} &= \sum_{i=1}^{3} E^{\beta,L,N}_{m,n;\mu,i}\;,
\end{split}
\end{equation}
with $E^{\beta,L,N}_{m,n;\mu,i}$ collecting the contribution to $E^{\beta,L,N}_{m,n;\mu,i}$ coming from trees with $n$ leaves of type $\lambda,\nu$, and with $n_{*}=\lfloor\alpha m\rfloor$, for $\alpha\in(0,1/34)$ to be chosen in the proof of the main Theorem.

Consider the $i=1$ error term. Since trees $\tau$ contributing to $E^{\beta,L,N}_{m,n;\mu,1}$ have at least one vertex on scale $0$, by the short-memory property of GN trees the associated kernels will have an extra factor $2^{\gamma(h-0)}$ at the root, for $\gamma\in(0,1)$; combining this scaling gain with Proposition \ref{prop:SumBoundLow}, we are led to
\begin{equation} \label{eq:E1bound}
\frac{1}{\beta L}|E^{\beta,L,N}_{m,n;\mu,1}(\ul p_{1},\dots,\ul p_{m})| \leq m!\, \eta^{2-m+\gamma}\, C^{m}_{\gamma}
\begin{cases}
\displaystyle \frac{1}{\lfloor m/4-4\rfloor!} & n=0\\
\displaystyle C_{\gamma}^{n}|\lambda|^{n} \left(\frac{2n}{m-32n}\right)^{m/4-8n} & 0< n\leq  n_{*}\;,
\end{cases}
\end{equation}
which implies (for $|\lambda|$ small enough)
\begin{equation} \label{eq:E1few}
\frac{1}{\beta L}|E^{\beta,L,N}_{m,\leq n_{*};\mu,1}(\ul p_{1},\dots,\ul p_{m})| \leq m!\, \eta^{2-m+\gamma} C_{\gamma}^{m} \left[\frac{1}{\lfloor m/4-4\rfloor!} + \left(\frac{2\alpha}{1-32\alpha}\right)^{(1/4-8\alpha)m}\right]\;.
\end{equation}
Analogously, one obtains
\begin{equation} \label{eq:E1many}
\frac{1}{\beta L}|E^{\beta,L,N}_{m,> n_{*};\mu,1}(\ul p_{1},\dots,\ul p_{m})| \leq m!\, \eta^{2-m+\gamma} C_{\gamma}^{m} |\lambda|^{\alpha m}\;,
\end{equation}
which proves the claim for $E^{\beta,L,N}_{m;\mu,1}$.

One may reason analogously for the case $i=3$: indeed, we know that trees contributing to $E^{\beta,L,N}_{m,n;\mu,3}$ have at least one leaf at a non-negative scale, so let us call such scale $h_{*}\geq0$; by the short-memory property of GN trees we have a gain $2^{\gamma (h-h_{*})}$ at the root. Repeating the procedure\footnote{To be very precise, in the reference model, for $A$-leaves attaching at positive scales, the corresponding propagator $g^{(h_{i})}_{\mathrm{ref}}$ appearing on a chain, see (\ref{eq:chain}), is also multiplied by a kernel $W^{(h_{i})}_{2,0,1}$; this is due to the localization operator $\mathfrak L$ acting as the identity on marginal kernels at positive scales. Nevertheless, this does not play any role in the proof of the bounds we need.} outlined in (\ref{eq:E1bound}) we may obtain an estimate analogous to (\ref{eq:E1few}) (respectively, (\ref{eq:E1many})) for $E^{\beta,L,N}_{m,\leq n_{*};\mu,3}$ (respectively, $E^{\beta,L,N}_{m,> n_{*};\mu,3}$ ).

Finally, we move on to the $i=2$ error term. We start by noticing that the splitting (\ref{eq:lattrefsplit}) obtained above does not require a specific choice of the bare parameters of the reference model, but in order to interpret $E^{\beta,L,N}_{m;\mu,2}$ as an error term we need to tune such bare parameters in the infrared region: indeed, we choose $(Z^{\mathrm{ref}}_{\omega}, v^{\mathrm{ref}}_{\omega}, \lambda^{\mathrm{ref}}_{\omega\omega'}, \mathsf Q^{\mathrm{ref}}_{\omega}, \mathsf Z^{\mu, \mathrm{ref}}_{\omega})$ so that
\begin{equation} \label{eq:CouplingMatching}
Z^{\mathrm{ref}}_{h_{\beta},\omega} = Z_{h_{\beta},\omega}\qquad v^{\mathrm{ref}}_{h_{\beta},\omega}= v_{h_{\beta},\omega}\qquad \lambda^{\mathrm{ref}}_{h_{\beta},\omega\omega'} =  \lambda_{h_{\beta},\omega\omega'}\;,
\end{equation}
and
\begin{equation} \label{eq:vertexCoupMatching}
\mathsf Q^{\pm,\mathrm{ref}}_{h_{\beta},\omega} = \mathsf Q^{\pm}_{h_{\beta},\omega}\qquad \mathsf Z^{\mu,\mathrm{ref}}_{h_{\beta},\omega} = \mathsf Z^{\mu}_{h_{\beta},\omega}\;.
\end{equation}
To do this, one can use the implicit function theorem, since all effective couplings are non-constant, differentiable functions of their bare counterparts. Therefore, thanks to Propositions \ref{prp:flow} and \ref{prp:flowref}, we obtain that
\begin{equation} \label{eq:boundcoupdiff}
\begin{split}
\left|\frac{Z^{\mathrm{ref}}_{h,\omega}}{Z^{\mathrm{ref}}_{h-1,\omega}} - \frac{Z_{h,\omega}}{Z_{h-1,\omega}}\right|\leq C |\lambda|^{2} 2^{\gamma h}\qquad |v^{\mathrm{ref}}_{h,\omega}- v_{h,\omega}|\leq C |\lambda| 2^{\gamma h}\\
|\lambda^{\mathrm{ref}}_{h,\omega\omega'} -  \lambda_{h,\omega\omega'}|\leq C|\lambda|^{2} 2^{\gamma h}\qquad |\mathsf Z^{\mu,\mathrm{ref}}_{h,\omega}-\mathsf Z^{\mu}_{h,\omega}|\leq C |\lambda| 2^{\gamma h}
\end{split}
\end{equation}
for all $h=h_{\beta},\dots, -1$, with $\gamma\in(0,1)$, and $C$ uniform in $\beta, L, N$. Recall also that
\begin{equation} \label{eq:boundnu}
|\nu_{h,\omega}|\leq C 2^{\gamma h}\;;
\end{equation}
furthermore, comparing the lattice and reference model single-scale propagators, we can write $\hat g^{(h)}_{\omega}= \hat g^{(h),\mathrm{ref}}_{\omega} + \delta \hat g^{(h)}_{\omega}$, and one can check that the deviation satisfies the bound
\begin{equation} \label{eq:propdiffbd}
|\delta \hat g^{(h)}_{\omega}(\ul k')|\leq C 2^{\gamma h} 2^{-h}\;.
\end{equation}
Since kernels contributing to $E^{\beta,L,N}_{m,n;\mu,2}$ are given by differences of the lattice and reference model ones, an estimate for them boils down to the control of bounds of the kind (\ref{eq:boundcoupdiff})-(\ref{eq:propdiffbd}). Hence, by the continuity property of GN trees, estimates for the contributions to $E^{\beta,L,N}_{m,n;\mu,2}$ carry an extra $2^{\gamma h}$ at the root, so that, proceeding as for the case of $E^{\beta,L,N}_{m;\mu,1}$, one obtains
\begin{equation}
\begin{split}
\frac{1}{\beta L}|E^{\beta,L,N}_{m,\leq n_{*};\mu,2}(\ul p_{1},\dots,\ul p_{m})| &\leq m!\, \eta^{2-m+\gamma} C_{\gamma}^{m} \left[\frac{1}{\lfloor m/4-4\rfloor!} + \left(\frac{2\alpha}{1-32\alpha}\right)^{(1/4-8\alpha)m}\right]\\
\frac{1}{\beta L}|E^{\beta,L,N}_{m,> n_{*};\mu,2}(\ul p_{1},\dots,\ul p_{m})|&\leq m!\, \eta^{2-m+\gamma} C_{\gamma}^{m} |\lambda|^{\alpha m}\;.
\end{split}
\end{equation}
Adding together all the bounds above produces the claimed estimate for $|R^{\beta,L,N}_{m;\mu}|$.
\end{proof}
Next, we report for later use the comparisons of the 2-point and vertex functions.
\begin{proposition}[Comparison of 2-point and vertex functions] \label{prop:PropVertexComparison}
Under the same hypotheses of Proposition \ref{prop:CorrComparison}, we may compare the 2-point and vertex functions of the lattice model to the corresponding reference model quantities.\\

\noindent{\underline{\emph{Lattice $2$-point function.}}} There exist $\mathsf Q^{\pm,\mathrm{ref}}$ satisfying
\[
\mathsf Q^{-,\mathrm{ref}}_{\omega}= \overline{\mathsf Q^{+,\mathrm{ref}}_{\omega}}\;,\qquad \qquad \|\mathsf Q^{\mathrm{+,ref}}_{\omega}\|^{2} := \sum_{\rho\in S_{M}} |\mathsf Q^{+,\mathrm{ref}}_{\omega,\rho}|^{2}=1+O(\lambda)\;,
\]
such that, for $\ul k':=\ul k - \ul k^{\omega,\lambda}_{F,L}$ small enough we have:
\begin{equation}\label{eq:2pt2ref}
\langle \mathbf T \hat a_{\ul k,\rho} \hat a^{*}_{\ul k,\rho'}\rangle_{\beta,L} = \mathsf Q^{+,\mathrm{ref}}_{\omega,\rho} \mathsf Q^{-,\mathrm{ref}}_{\omega,\rho'}\,\langle \hat \psi^{-}_{\ul k',\omega} \hat \psi^{+}_{\ul k',\omega}\rangle^{\mathrm{ref}}_{\beta,L,N} + R^{\beta, L,N}_{2;\omega, \rho\rho'}(\underline{k}')\;.
\end{equation}
with the remainder term being more regular for small $\ul k'$, namely
\begin{equation} \label{eq:2pterr}
\frac{1}{\beta L}|R^{\beta, L,N}_{2;\omega, \rho \rho'}(\underline{k}')|\leq C_{\gamma}\|\ul k'\|^{\gamma-1}
\end{equation}
with $\gamma\in(0,1)$, and $C_{\gamma}$ independent of $\beta, L, N$.\\

\noindent{\underline{\emph{Lattice vertex function.}}}  Furthermore, for $\|\underline{k}'\|$ small enough, and $\|\underline{k}' + \underline{p}\| = \|\underline{k}'\|$, we have that:
\begin{equation}\label{eq:vertex2ref}
\begin{split}
\langle \mathbf T \hat \jmath_{\mu,\ul p}; \hat a_{\ul k,\rho}; \hat a^{*}_{\ul k+\ul p,\rho'}\rangle_{\beta,L}
= \sum_{\omega'} \mathsf Z^{\mu,\mathrm{ref}}_{\omega'}\mathsf Q^{+,\mathrm{ref}}_{\omega,\rho} \mathsf Q^{-,\mathrm{ref}}_{\omega,\rho'}\, \langle \hat n_{\ul p,\omega'};\hat \psi^{-}_{\ul k',\omega}; \hat \psi^{+}_{\ul k'+\ul p,\omega}\rangle^{\mathrm{ref}}_{\beta,L,N} \\+ R^{\beta, L,N}_{\mu;\omega,\rho\rho'}(\underline{p}, \underline{k}')\;,
\end{split}
\end{equation}
with the error term again being more regular, that is
\begin{equation} \label{eq:Zdec}
\frac{1}{\beta L}|R^{\beta, L,N}_{\mu;\omega,\rho\rho'}(\underline{p}, \underline{k}')|\leq C_{\gamma} \|\ul k'\|^{\gamma-2}\;,
\end{equation}
with $\gamma\in (0,1)$ and $C_{\gamma}$ independent of $\beta, L, N$.\\

\noindent{\underline{\emph{Density-current correlation function.}}} Finally, we have that
\begin{equation} \label{eq:Match2ptDens}
\langle\mathbf T \hat n_{\ul p}; \hat\jmath_{\mu,-\ul p}\rangle_{\beta,L} = \sum_{\omega,\omega'}\mathsf Z^{0,\mathrm{ref}}_{\omega} \mathsf Z^{\mu,\mathrm{ref}}_{\omega'} \langle\hat n_{\ul p,\omega};\hat n_{-\ul p,\omega'}\rangle^{\mathrm{ref}}_{\beta,L,N} + R^{\beta,L,N}_{2;\mu}(\ul p)
\end{equation}
where $R^{\beta,L,N}_{2;\mu}$ is H\"older-continuous at $\ul 0$, for an exponent $\gamma>0$; that is,
\begin{equation} \label{eq:Match2ptDensBd}
\frac{1}{\beta L}|R^{\beta,L,N}_{2;\mu}(\ul p)-R^{\beta,L,N}_{2;\mu}(\ul 0)|\leq C \|\ul p\|^{\gamma}\;,
\end{equation}
with $C$ independent of $\beta, L, N$.
\end{proposition}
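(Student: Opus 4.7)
The plan is to apply the strategy of Proposition \ref{prop:CorrComparison} to the Grassmann correlators in \eqref{eq:2pt2ref}, \eqref{eq:vertex2ref} and \eqref{eq:Match2ptDens}, treating external $\phi^{\pm}$ fields in the GN tree expansion on the same footing as the external $A^{\nu}$ fields. The bare reference parameters $(Z^{\mathrm{ref}}_{\omega}, v^{\mathrm{ref}}_{\omega}, \lambda^{\mathrm{ref}}_{\omega\omega'})$ are chosen exactly as in Proposition \ref{prop:CorrComparison} through \eqref{eq:CouplingMatching}, while $\mathsf Q^{\pm,\mathrm{ref}}_{\omega}$ and $\mathsf Z^{\mu,\mathrm{ref}}_{\omega}$ are determined by imposing the matching conditions \eqref{eq:vertexCoupMatching} at scale $h_{\beta}$ and running the reference-model flow equations backwards; again by the implicit function theorem, such a choice exists and is analytic in $\lambda$ for $|\lambda|$ small enough. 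The bound $\|\mathsf Q^{+,\mathrm{ref}}_{\omega}\|^{2} = 1 + O(\lambda)$ is then a consequence of \eqref{eq:Qh}, which gives $\mathsf Q^{+}_{h,\omega}=\xi^{\omega}(1+O(\lambda))$ uniformly in $h$, combined with $\|\xi^{\omega}\|=1$.

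For the two-point function \eqref{eq:2pt2ref}, I would write the GN tree expansion of $\langle \mathbf T \hat a_{\ul k,\rho}\hat a^{*}_{\ul k,\rho'}\rangle_{\beta,L}$ generated by differentiating twice in $\phi$ the source term $B(\Psi;\phi)$ of \eqref{eq:sources}; the factorization of the source kernels $W^{(h)}_{1,1,0}$ through the Bloch eigenvectors $\xi^{\omega}_{\rho}(k)$, via \eqref{eq:conv}--\eqref{eq:checkxi} at the leading order in $\ul k - \ul k^{\omega}_{F}$ and via \eqref{eq:Qh} on both lattice and reference sides, produces the coefficients $\mathsf Q^{\pm,\mathrm{ref}}_{\omega,\rho}$ in \eqref{eq:2pt2ref}. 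The root cluster has $(|\Gamma^{\psi}_{v_0}|,|\Gamma^{\phi}_{v_0}|,|\Gamma^{A}_{v_0}|)=(0,2,0)$, hence root scaling dimension $2-(3/2)\cdot 2=-1$, so that the tree value at root scale $h$ is bounded by $2^{-h}$ and the correlator is dominated by trees with root scale $h\simeq \lfloor\log_{2}\|\ul k'\|\rfloor$. Splitting $R^{\beta,L,N}_{2;\omega,\rho\rho'}$ as in \eqref{eq:split1}--\eqref{eq:lattrefsplit} into an ultraviolet part, a difference of low-scale lattice and reference trees, and a reference-model contribution with at least one UV endpoint, and applying the continuity property (Remark \ref{rem:cont}) together with \eqref{eq:boundcoupdiff}--\eqref{eq:propdiffbd} for the second piece and the short-memory property (Remark \ref{rem:sm}) for the other two, each contribution acquires an extra factor $2^{\gamma h}$ at the root; summing $\sum_{h \geq \lfloor\log_{2}\|\ul k'\|\rfloor}^{0} 2^{(\gamma-1)h}$ yields the bound $\|\ul k'\|^{\gamma-1}$ of \eqref{eq:2pterr}. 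The vertex-function estimate \eqref{eq:Zdec} is obtained along the same lines, the only modification being that the root cluster carries the additional label $|\Gamma^{A}_{v_{0}}|=1$ producing a root scaling dimension $2-3-1=-2$, and hence the bound $\|\ul k'\|^{\gamma-2}$. Here it is important that the $A$-endpoint associated with $\hat\jmath_{\mu,\ul p}$ be diagonal in chirality, a fact guaranteed by momentum conservation together with the assumption $\|\ul k'+\ul p\|=\|\ul k'\|$ and the fact that off-diagonal vertex renormalizations require $|p_{1}|\gtrsim \min_{\omega\neq\omega'}|k^{\omega}_{F}-k^{\omega'}_{F}|$.

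Finally, the density-current decomposition \eqref{eq:Match2ptDens} is simply the case $m=2$ of Proposition \ref{prop:CorrComparison} applied without the special assumption $\ul p = (\eta_{\beta},p)$. To establish the H\"older-continuity estimate \eqref{eq:Match2ptDensBd}, I would use the translation invariance of the remainder to write
\begin{equation*}
\frac{1}{\beta L}\big[R^{\beta,L,N}_{2;\mu}(\ul p) - R^{\beta,L,N}_{2;\mu}(\ul 0)\big] = \int d\underline z\,\big(e^{-i\ul p\cdot \underline z} - 1\big)\, K^{R}(\underline z)\;,
\end{equation*}
with $K^{R}$ the configuration-space remainder kernel, and bound the phase difference by $|e^{-i\ul p\cdot\underline z} - 1|\leq C_{\gamma}\|\ul p\|^{\gamma}\|\underline z\|^{\gamma}$; the weighted $L^{1}$-norm of $K^{R}$ with weight $\|\underline z\|^{\gamma}$ is bounded uniformly in $\beta, L, N$ thanks to the scaling-dimension improvement $2^{\gamma' h}$ (for some $\gamma'>\gamma$) of the tree kernels contributing to $K^{R}$, inherited from the matching between lattice and reference models through \eqref{eq:boundcoupdiff}--\eqref{eq:propdiffbd}, combined with the negativity of all internal scaling dimensions. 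The main obstacle in the whole argument is the careful bookkeeping of the dependence of the source kernels on the external Grassmann fields $\phi^{\pm}$, in particular the factorization through the Bloch eigenvectors $\xi^{\omega}$ which is responsible for the coefficients $\mathsf Q^{\pm,\mathrm{ref}}_{\omega}$ in \eqref{eq:2pt2ref}--\eqref{eq:vertex2ref}; the correct identification of the root scaling dimensions then automatically delivers the improved singular exponents $\gamma-1$, $\gamma-2$, and the H\"older exponent $\gamma$.
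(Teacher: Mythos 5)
Your proposal is correct and takes essentially the same approach as the paper's. The paper's own ``proof'' simply delegates to Proposition~6.1 of \cite{AMP} and Proposition~10.2 of \cite{MPmulti}, and remarks only that the strategy mirrors that of Proposition~\ref{prop:CorrComparison} but is simpler because the number of external fields is bounded by three, so that the $n\leq n_*$ vs. $n>n_*$ splitting and the chain factorials are not needed; your reconstruction fills in exactly that: the matching \eqref{eq:CouplingMatching}--\eqref{eq:vertexCoupMatching} at scale $h_\beta$, the identification of the root scaling dimensions $-1$, $-2$ and $0$ for the three correlators, the extraction of the $2^{\gamma h}$ gain via the short-memory and continuity properties of GN trees together with \eqref{eq:boundcoupdiff}--\eqref{eq:propdiffbd}, and the Fourier/weighted-$L^1$ argument for the H\"older bound. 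The only place where your exposition is a bit loose is the attribution of the vector-valued coefficients $\mathsf Q^{\pm,\mathrm{ref}}_{\omega,\rho}$: these come from the factorization of the lattice $\phi$-source kernels through the Bloch eigenvector $\xi^\omega_\rho$ via \eqref{eq:Qh}, while the reference model of Section~\ref{Sec:RefMod} carries only a scalar $\mathsf Q_\omega$ labeled by $\omega$; it is the lattice kernel that absorbs the $\rho$-dependence, and the matching identifies the magnitude of the reference $\mathsf Q_\omega$ so that $\|\mathsf Q^{+,\mathrm{ref}}_\omega\|^2=1+O(\lambda)$ at the scale $h_\beta$. This is precisely the content of the cited propositions.
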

\begin{proof}
We refer to Proposition 6.1 of \cite{AMP}, where a similar statement was proven for the case of a single, spin-degenerate edge mode on a single boundary of a 2$d$ topological insulator, and Proposition 10.2 of \cite{MPmulti}, which generalized the former to the case of arbitrarily many edge modes. In any case, the proof strategy is similar to Proposition \ref{prop:CorrComparison}, yet simpler: since the number of points is at most three, one does not have to distinguish between the $n\leq n_{*}$ and $n> n_{*}$ cases, and dimensional bounds analogous to Proposition \ref{prop:CorrDimBound} (combined with the short-memory and continuity properties of GN trees) are enough.
\end{proof}

\subsection{Evaluation of the linear response}
In this section, we shall evaluate explicitly the linear term ($m=2$) in the expansion for $\chi^{\beta,L}_{\nu}$, for $\beta,L$ large. A similar computation has been performed in \cite{MPmulti}, for the edge conductance of $2d$ topological insulators. We have
\begin{equation} \label{eq:lowestDuhamel}
\chi^{\beta,L,\mathrm{lin}}_{\nu} (x;\eta,\theta) = -\frac{1}{L}\sum_{p\in B_{L}} \hat\mu_{\theta}(-p) e^{-ipx} \frac{1}{\beta L}\langle\mathbf T \hat n_{\ul p}; \hat\jmath_{\nu,-\ul p}\rangle_{\beta,L}\;;
\end{equation}
noticing that equations (\ref{eq:Match2ptDens})-(\ref{eq:Match2ptDensBd}) prove boundedness of the density-current 2-point correlation uniformly in $\beta,L$, and $N$, and recalling that the perturbation $\mu$ is given by the $L$-periodization of a compactly supported smooth function $\mu_{\infty}$, we may rewrite (\ref{eq:lowestDuhamel}) as
\begin{equation}
\chi^{\beta,L,\mathrm{lin}}_{\nu} (x;\eta,\theta) = \chi^{\mathrm{lin}}_{\nu} (x;\theta,\eta) + O(\beta^{-1}) + O(L^{-1})\;,
\end{equation}
with $\chi^{\mathrm{lin}}_{\nu}$ encoding the zero-temperature, infinite-volume response:
\begin{equation} \label{eq:chilin2}
\chi^{\mathrm{lin}}_{\nu} (x;\eta,\theta)=-\int_{\R} \hat\mu_{\infty}(-p/\theta) e^{-ipx} \langle\mathbf T \hat n_{\ul p}; \hat\jmath_{\nu,-\ul p}\rangle\frac{dp}{2\pi\theta}
\end{equation}
with $\ul p =(\eta,p)$, and with $\langle\mathbf T \hat n_{\ul p}; \hat\jmath_{\nu,-\ul p}\rangle =\lim_{\beta\to\infty}\lim_{L\to\infty}\langle\mathbf T \hat n_{\ul p}; \hat\jmath_{\nu,-\ul p}\rangle_{\beta,L}/(\beta L)$ being the zero-temperature, infinite volume correlation function.
\begin{proposition}[Final form of linear response] \label{prop:linrespint}
Let $\theta=a\eta$. For $\nu=0,1$, the linear response $\chi^{\mathrm{lin}}_{\nu}$ of the system, \emph{i.e.} the $m=2$ contribution in the Duhamel expansion for $\chi^{\beta,L}_{\nu}$, is given by
\begin{equation}
\chi^{\mathrm{lin}}_{\nu}(x;\eta,\theta)= -\int_{\R} \hat\mu_{\infty}(q) e^{i\theta qx} (\vec 1, \mathcal K^{\nu}(q)\vec 1) \frac{dq}{2\pi} + O(\eta^{\gamma}) 
\end{equation}
for some $\gamma\in(0,1)$, $(\cdot,\cdot)$ being the scalar product on $\R^{N_{f}}$ and with $\vec 1$ being the vector with all components equal to 1. The matrix $\mathcal K^{\nu}$ given by
\begin{equation}
\mathcal K^{\nu}(q) = \widetilde{\mathcal K}(q)\frac{v^{\mathrm{ref}}_{\nu}}{2\pi|v^{\mathrm{ref}}|}\frac{v^{\mathrm{ref}}q}{-i/a+v^{\mathrm{ref}}q}\mathfrak K^{\nu}\;,
\end{equation}
with $v^{\mathrm{ref}}_{\nu}=\delta_{\nu,0} \mathbbm 1+ \delta_{\nu,1} v^{\mathrm{ref}}$, and $\widetilde{\mathcal K}$, $\mathfrak K^{\nu}$ defined to be
\begin{equation}
\begin{split}
\widetilde{\mathcal K}(q) &= \left[1-\frac{1}{4\pi|v^{\mathrm{ref}}|}\Lambda\right]\frac{1}{1+\mathfrak B(-1/a,q)\Lambda}\\
\mathfrak K^{0} &\equiv \mathbbm 1\\
\mathfrak K^{1} &\equiv \mathrm{sgn\,}v^{\mathrm{ref}}\,\frac{1+(4\pi|v^{\mathrm{ref}}|)^{-1}\Lambda}{1-(4\pi|v^{\mathrm{ref}}|)^{-1}\Lambda}\,\mathrm{sgn\,}v^{\mathrm{ref}}\;.
\end{split}
\end{equation}
\end{proposition}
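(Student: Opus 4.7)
My plan is to combine the matching identity (\ref{eq:Match2ptDens}) with the explicit form of the reference-model density two-point function from Proposition \ref{prop:2ptref}, and then use emergent Ward identities to identify the non-universal coefficients $\mathsf Z^{\nu,\mathrm{ref}}_{\omega}$ in terms of the renormalized velocities and the quartic coupling matrix $\Lambda$.

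Concretely, starting from (\ref{eq:chilin2}), I would first change variables $p = \theta q$, so that $\ul p = (\eta, a\eta q)$, and then insert (\ref{eq:Match2ptDens}). The H\"older regularity (\ref{eq:Match2ptDensBd}) of the remainder (together with $\hat\mu_\infty$ compactly supported and bounded in $L^1$), combined with the fact that $R^{\beta,L,N}_{2;\mu}(\ul 0)$ produces only a $\delta$-contribution at $x=0$ after integration in $q$ times $\hat\mu_\infty(q)$ and will be removed using translation invariance in the $\beta, L\to\infty$ limit, yields an overall $O(\eta^{\gamma})$ error. The main term becomes
\[
-\sum_{\omega,\omega'}\mathsf Z^{0,\mathrm{ref}}_{\omega} \mathsf Z^{\nu,\mathrm{ref}}_{\omega'}\int_{\R} \hat\mu_{\infty}(q) e^{i\theta q x}\,\langle\hat n_{\ul p,\omega};\hat n_{-\ul p,\omega'}\rangle^{\mathrm{ref}}\,\frac{dq}{2\pi}\;,
\]
after interchanging the limits $\beta,L\to\infty$ with the $q$-integration (justified by dominated convergence and the uniform bounds from the RG analysis).

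Next, I would substitute the formula $\langle\hat n_{\ul p,\omega};\hat n_{-\ul p,\omega'}\rangle^{\mathrm{ref}}=T_{\omega\omega'}(\ul p)\mathfrak B_{\omega'}(\ul p)/Z^{\mathrm{ref},2}_{\omega'}$ with $T(\ul p)=[1+\mathfrak B(\ul p)\Lambda_Z\hat v(\ul p)]^{-1}$. For the rescaled momentum $\ul p=(\eta,a\eta q)$, an explicit computation gives $\mathfrak B_\omega(\ul p)=\frac{1}{4\pi|v^{\mathrm{ref}}_\omega|}\frac{v^{\mathrm{ref}}_\omega q - i/a}{v^{\mathrm{ref}}_\omega q + i/a}$, which is $\eta$-independent; moreover $\hat v(\ul p) = 1 + O(\eta)$ by smoothness of $\hat v$, and similarly $Z^{\mathrm{ref}}_\omega$-factors can be absorbed by conjugation. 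After some algebra this identifies the matrix $\widetilde{\mathcal K}(q)$ up to the prefactors $\mathsf Z^{\nu,\mathrm{ref}}_\omega/Z^{\mathrm{ref}}_\omega$.

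The genuinely delicate step is the identification of the vertex prefactors $\mathsf Z^{\nu,\mathrm{ref}}_\omega/Z^{\mathrm{ref}}_\omega$, which fix $\mathfrak K^\nu$. The natural route, following \cite{MPmulti}, is to exploit two Ward identities: (i) the lattice continuity equation (\ref{eq:cons1d}), which in the scaling limit implies a relation between the density and current vertex renormalizations; and (ii) the emergent chiral Ward identity for the reference model (of the form (\ref{eq:refVertexWI})). Matching these two at the Fermi points, using the comparison (\ref{eq:vertex2ref}) of the lattice vertex function with the reference one, produces algebraic equations that pin down $\mathsf Z^{0,\mathrm{ref}}=\mathbbm 1$ (density vertex trivially renormalized, up to an irrelevant scalar reabsorbable into $Z^{\mathrm{ref}}$) and
\[
\mathsf Z^{1,\mathrm{ref}}_{\omega} = v^{\mathrm{ref}}_\omega\,\mathrm{sgn}\,v^{\mathrm{ref}}_\omega\cdot\frac{1+(4\pi|v^{\mathrm{ref}}|)^{-1}\Lambda}{1-(4\pi|v^{\mathrm{ref}}|)^{-1}\Lambda}\bigg|_{\omega\omega'}\,\mathrm{sgn}\,v^{\mathrm{ref}}_{\omega'}\;,
\]
which is precisely the matrix $v^{\mathrm{ref}}\,\mathfrak K^1$. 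This is the hard part: one has to show that the naive $\mathsf Z^{1,\mathrm{ref}} = v\cdot \mathsf Z^{0,\mathrm{ref}}$ relation gets corrected by an anomalous factor encoded in $\mathfrak K^1$, and this requires carefully combining the lattice Ward identity with the anomalous Ward identity of the chiral reference model, exactly in the spirit of \cite{MPmulti}.

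Finally, collecting the main term and absorbing all $O(\eta^\gamma)$ errors (from the remainder in (\ref{eq:Match2ptDensBd}), from $\hat v(\ul p)-1$, from the matching of $\mathsf Z^{\nu,\mathrm{ref}}$ to $\mathsf Z^{\nu,\mathrm{ref}}_{h_\beta,\omega}$, and from sending $N\to\infty$ against the reference-model RG estimates) produces the claimed expression for $\chi^{\mathrm{lin}}_\nu(x;\eta,\theta)$. The control of the $N\to\infty$ limit inside the integral is uniform thanks to the bounds on the reference-model correlation functions, so no further technical work is needed there.
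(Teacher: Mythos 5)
Your outline goes off the rails at two critical junctures, both of which alter the final answer.

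\textbf{The constant $R_{2;\nu}(\ul 0)$ is neither a delta contribution nor removable.} You claim this term "produces only a $\delta$-contribution at $x=0$" and can be discarded. In fact $R_{2;\nu}(\ul 0)$ is a $\ul p$-independent constant, so its contribution to $\chi^{\mathrm{lin}}_\nu(x)$ is $-R_{2;\nu}(\ul 0)\,\mu_\infty(\theta x)$ (up to normalization) — an $O(1)$ term, not $O(\eta^\gamma)$, and not a delta function. The paper instead \emph{determines} $R_{2;\nu}(\ul 0)$ by evaluating the lattice Ward identity (\ref{eq:latticeWardFou}) at $\underline p = (p_0,0)$, where $[\sum_x n_x, j_{\nu,0}]=0$ forces $\langle\mathbf T\hat n_{(p_0,0)};\hat\jmath_{\nu,(-p_0,0)}\rangle=0$; hence $R_{2;\nu}(\ul 0)$ equals minus the reference-model contribution at $\underline p\to(0^+,0)$. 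This produces the subtracted quantity $\mathcal Q(q;a)$, whose two-term structure $\mathfrak B(1,aq)$ minus $\mathfrak B(1,0)$ is precisely what generates the $\frac{v^{\mathrm{ref}}q}{-i/a+v^{\mathrm{ref}}q}$ factor (vanishing at $q=0$) and the $\frac{1}{1-(4\pi|v^{\mathrm{ref}}|)^{-1}\Lambda}$ block. Dropping $R_{2;\nu}(\ul 0)$ would give a response that does not vanish as $q\to0$ and thus violates charge conservation.

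\textbf{The vertex renormalizations are not what you assert.} You claim $\mathsf Z^{0,\mathrm{ref}}=\mathbbm 1$ (density vertex "trivially renormalized") and $\mathsf Z^{1,\mathrm{ref}}=v^{\mathrm{ref}}\mathfrak K^1$. The correct outcome of combining the lattice vertex Ward identity (\ref{eq:latvertWI}) with the reference-model vertex Ward identity (\ref{eq:refVertexWI}) is equation (\ref{eq:Z0Z1}): $\vec{\mathsf Z}^{\mu,\mathrm{ref}} = [\,1 - (-1)^{\mu}\Lambda_Z^T(4\pi|v^{\mathrm{ref}}|)^{-1}\,]\,v^{\mathrm{ref}}_\mu\vec Z^{\mathrm{ref}}$. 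So the density vertex carries the \emph{non-trivial} factor $[1-\Lambda_Z^T(4\pi|v^{\mathrm{ref}}|)^{-1}]$, which after the $Z$-conjugation algebra supplies the $[1-(4\pi|v^{\mathrm{ref}}|)^{-1}\Lambda]$ prefactor of $\widetilde{\mathcal K}$. Your formula for $\mathsf Z^{1,\mathrm{ref}}$ mixes the genuine vertex factor $[1+\Lambda^T(4\pi|v^{\mathrm{ref}}|)^{-1}]$ with the factor $\frac{1}{1-(4\pi|v^{\mathrm{ref}}|)^{-1}\Lambda}$ that actually comes from the $\mathfrak B(1,0)$ subtraction you threw away. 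In other words, your two errors are entangled: because you discard the lattice-Ward-fixed constant, you are then forced to "stuff" the missing algebraic factor into the vertex, which is not what the Ward identities actually say.

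Everything else — the rescaling $\ul p=\theta\ul q$, the use of the scale-invariance $\mathfrak B(\varepsilon\ul p)=\mathfrak B(\ul p)$, the $O(\eta^\gamma)$ estimate of $\tilde R_{2;\nu}$ and $\hat v(\ul p)-1$, and the $Z$-independence of the final answer — is correctly outlined and matches the paper.
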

\begin{proof}
Since equations (\ref{eq:Match2ptDens})-(\ref{eq:Match2ptDensBd}) hold uniformly in $\beta,L,N$, we may write
\begin{equation} \label{eq:2ptSplitting}
\begin{split}
\langle\mathbf T \hat n_{\ul p}; \hat\jmath_{\mu,-\ul p}\rangle &= \sum_{\omega,\omega'}\mathsf Z^{0,\mathrm{ref}}_{\omega} \mathsf Z^{\nu,\mathrm{ref}}_{\omega'} \langle\hat n_{\ul p,\omega};\hat n_{-\ul p,\omega'}\rangle^{\mathrm{ref}} + R_{2;\nu}(\ul 0) + \tilde R_{2;\nu}(\ul p)\\
\tilde R_{2;\nu}(\ul p)&= R_{2;\nu}(\ul p)-R_{2;\nu}(\ul 0)\;,\qquad\qquad |\tilde R_{2;\nu}(\ul p)|\leq C \|\ul p\|^{\gamma}\;.
\end{split}
\end{equation}
Next, we determine $R_{2;\nu}(\ul 0)$ in terms of the reference model correlations. The lattice continuity equation (\ref{eq:cons1d}) implies the following lattice Ward identity,
\begin{equation}
i\partial_{x_{0}}\langle\mathbf T n_{\ul x};j_{\nu,\ul y}\rangle + \mathrm d_{x} \langle\mathbf T j_{\ul x};j_{\nu,\ul y}\rangle = i\delta(x_{0}-y_{0}) \langle [n_{x},j_{\nu,y}]\rangle\;,
\end{equation}
where as usual $\mathcal O_{\ul x}:=\gamma_{x_{0}}(\mathcal O_{x})$, and where the contact term on the right-hand side arises due to the definition of time-ordering. Taking the Fourier transform on both sides, we obtain:
\begin{equation} \label{eq:latticeWardFou}
p_{0}\langle\mathbf T \hat n_{\ul p};\hat\jmath_{\nu,-\ul p}\rangle + (1-e^{-ip})\langle\mathbf T \hat\jmath_{\ul p};\hat\jmath_{\nu,-\ul p}\rangle = i\sum_{x\in\Z} e^{-ipx}\langle[n_{x},j_{\nu,0}]\rangle\;.
\end{equation}
Evaluating (\ref{eq:latticeWardFou}) at $\underline{p} = (p_{0}, 0)$, and using that $[\sum_{x\in\Z} n_{x},j_{\nu,0}]=[\mathcal N,j_{\nu,0}]=0$, we are then led to
\[
\langle\mathbf T \hat n_{(p_{0},0)};\hat\jmath_{\nu,(-p_{0},0)}\rangle = 0\;.
\]
Hence, by (\ref{eq:2ptSplitting}), we obtain
\begin{equation}
R_{2;\nu}(\ul 0) = -\sum_{\omega,\omega'} \mathsf Z^{0,\mathrm{ref}}_{\omega} \mathsf Z^{\nu,\mathrm{ref}}_{\omega'} \lim_{p_{0}\to 0}\langle\hat n_{(p_{0},0),\omega};\hat n_{(-p_{0},0),\omega'}\rangle^{\mathrm{ref}}\;.
\end{equation}
We set $S^{\mathrm{ref}}_{2}(p_{0},p)\in\mathcal M_{N_{f}}(\C)$ to be the matrix of the reference model density correlations, \emph{i.e.}
\begin{equation}
S^{\mathrm{ref}}_{2;\omega\omega'}(p_{0},p) := \langle\hat n_{\ul p,\omega};\hat n_{-\ul p,\omega'}\rangle^{\mathrm{ref}}\;.
\end{equation}
By Proposition \ref{prop:2ptref}, it takes the form
\begin{equation}
S^{\mathrm{ref}}_{2}(p_{0},p) = \frac{1}{1+\hat v(\ul p)\mathfrak B(\ul p)\Lambda_{Z}} \frac{\mathfrak B(\ul p)}{Z^{2}}= \frac{1}{1+\mathfrak B(\ul p)\Lambda_{Z}} \frac{\mathfrak B(\ul p)}{Z^{2}} + O(\|\ul p\|^{\gamma})
\end{equation}
with
\begin{equation}
\mathfrak B(\ul p) = \frac{1}{4\pi|v^{\mathrm{ref}}|}\frac{-ip_{0}+v^{\mathrm{ref}}p}{ip_{0}+v^{\mathrm{ref}}p}\;,\qquad\qquad \Lambda_{Z}=Z^{-1}\Lambda Z
\end{equation}
and $v^{\mathrm{ref}} = \mathrm{diag\;}{v^{\mathrm{ref}}_{\omega}}$, $\Lambda=(\lambda^{\mathrm{ref}}_{\omega\omega'})_{\omega\omega'}$, and $Z=\mathrm{diag}\,Z^{\mathrm{ref}}_{\omega}$.
In the second step we approximated the reference model interaction $\hat v(\ul p)$ with $\hat v(\ul 0)\equiv1$; also, we used the convenient short-hand of writing $AB^{-1}\equiv \frac{A}{B}$ whenever $A,B$ commute.

Next, realising that $\mathfrak B(\varepsilon\ul p)=\mathfrak B(\ul p)$ for all $\varepsilon\in \R^{\times}$, recalling that $\theta=a\eta$, and reparameterizing $\ul p=\theta\ul q$, with $\ul q=(a^{-1},q)$, we get
\begin{equation} \label{eq:scalcorr}
\langle\mathbf T \hat n_{\theta\ul q}; \hat\jmath_{\nu,-\theta\ul q}\rangle = (\vec{\mathsf Z}^{0,\mathrm{ref}},\mathcal Q(q;a)\vec{\mathsf Z}^{\nu,\mathrm{ref}}) + O(\eta^{\gamma}(1+a^{2} q^{2})^{\gamma/2})
\end{equation}
with $\vec{\mathsf Z}^{\nu,\mathrm{ref}}$ the vector in $\R^{N_{f}}$ with components $\mathsf Z^{\nu,\mathrm{ref}}_{\omega}$, and $(\cdot\,,\cdot)$ the standard scalar product on $\R^{N_{f}}$, and with the matrix-valued function $\mathcal Q$ being
\begin{equation}
\begin{split}
\mathcal Q(q;a) &= \frac{1}{1+\mathfrak B(1,aq)\Lambda_{Z}}\frac{\mathfrak B(1,aq)}{Z^{2}} - \frac{1}{1+\mathfrak B(1,0)\Lambda_{Z}}\frac{\mathfrak B(1,0)}{Z^{2}}\\
&= \frac{1}{1+\mathfrak B(1,aq)\Lambda_{Z}} \frac{v^{\mathrm{ref}}aq}{i+v^{\mathrm{ref}}aq} \frac{1}{1-(4\pi|v^{\mathrm{ref}}|)^{-1}\Lambda_{Z}}\frac{1}{2\pi|v^{\mathrm{ref}}|Z^{2}}\;.
\end{split}
\end{equation}
Then, inserting of (\ref{eq:scalcorr}) into (\ref{eq:chilin2}), and using the fact that $\hat\mu_{\infty}$ is of Schwartz class, we are led to
\begin{equation} \label{eq:linK}
\chi^{\mathrm{lin}}_{\nu} (x;\eta,\theta) = -\int_{\R} \hat\mu_{\infty}(q) e^{i\theta qx} (\vec{\mathsf Z}^{0,\mathrm{ref}},\mathcal Q(-q;a)\vec{\mathsf Z}^{\nu,\mathrm{ref}})\frac{dq}{2\pi} + O(\eta^{\gamma})\;.
\end{equation}
In the case in which $a(\eta):=\theta/\eta=o(1)$, we see that $\|\mathcal Q(q;a)\|=O(a)$, hence the response is trivial, namely
\[
\chi^{\mathrm{lin}}_{\nu} (x;\eta,\theta) = O(a) + O(\eta^{\gamma})\;;
\]
therefore we may focus on the Euler scaling case, \emph{i.e.} we shall assume $a$ to be constant.
 
We shall now evaluate $\vec{\mathsf Z}^{\nu,\mathrm{ref}}$ in terms of the other bare parameters of the reference model, namely $Z$, $v^{\mathrm{ref}}$, and $\Lambda$.

\paragraph{Computing the vertex renormalizations.}
To avoid cluttering of the notation, we set, for $\mu=0,1$:
\begin{equation}
\begin{split}
\Sigma_{\rho\rho'}(\ul k) &:= \langle\mathbf T \hat a_{\ul k,\rho}; \hat a^{*}_{\ul k,\rho'}\rangle\qquad\qquad S_{1,2;\mu,\rho\rho'}(\ul p,\ul k) := \langle\mathbf T \hat \jmath_{\mu,\ul p}; a_{\ul k,\rho}; a^{*}_{\ul k+\ul p,\rho'}\rangle\\
\Sigma^{\mathrm{ref}}_{\omega}(\ul k') &:= \langle \hat \psi^{-}_{\ul k',\omega} \hat \psi^{+}_{\ul k',\omega}\rangle^{\mathrm{ref}} \qquad\qquad S^{\mathrm{ref}}_{1,2;\omega\omega'}(\ul p,\ul k'):= \langle\hat n_{\ul p}; \hat \psi^{-}_{\ul k',\omega}; \hat \psi^{+}_{\ul k'+\ul p,\omega}\rangle^{\mathrm{ref}}
\end{split}
\end{equation}
The lattice vertex function satisfies the following lattice vertex Ward identity:
\begin{equation}\label{eq:latvertWI}
\sum_{\mu=0}^{1}p_{\mu}Y_{\mu}(p) S_{1,2;\mu,\rho\rho'}(\ul p,\ul k) = \Sigma_{\rho\rho'}(\ul k) -\Sigma_{\rho\rho'}(\ul k+\ul p)\;.
\end{equation}
with $Y_{0}(p)\equiv i$, $p_{1}=p$, and $Y_{1}(p)=(1-e^{-ip})/ip=1+O(p)$. By (\ref{eq:vertex2ref}), for $\underline{k} = \underline{k}^{\omega}_{F}(\lambda) + \underline{k}'$:
\begin{eqnarray}
S_{1,2;\mu,\rho\rho'}(\ul p,\ul k)  = \sum_{\omega'} \mathsf Z^{\mu,\text{ref}}_{\omega'} \mathsf Q^{+,\text{ref}}_{\omega,\rho} \mathsf Q^{-,\text{ref}}_{\omega,\rho'}S^{\mathrm{ref}}_{1,2;\omega'\omega}(\ul p,\ul k') + R_{\mu,\omega,\rho\rho'}(\underline{p}, \underline{k}')\;,
\end{eqnarray}
and, recalling (\ref{eq:2pt2ref}):
\begin{equation}
\Sigma_{\rho\rho'}(\ul k) = \mathsf Q^{+,\mathrm{ref}}_{\omega,\rho} \mathsf Q^{-,\mathrm{ref}}_{\omega,\rho'}\,\Sigma^{\mathrm{ref}}_{\omega}(\ul k') + R_{2;\omega, \rho\rho'}(\underline{k}')\;.
\end{equation}
Since $\|\mathsf Q^{+,\mathrm{ref}}_{\omega}\|=1 + O(\lambda)$, we may multiply both sides of (\ref{eq:latvertWI}) by $\mathsf Q^{-,\text{ref}}_{\omega,\rho} \mathsf Q^{+,\text{ref}}_{\omega,\rho'}/\|\mathsf Q^{+,\mathrm{ref}}_{\omega}\|^{4}$, and we may sum over all $\rho,\rho'$, obtaining:
\begin{equation}\label{eq:WIlat2}
\begin{split}
\sum_{\mu = 0}^{1} p_{\mu} Y_{\mu}(p) \Big[ \sum_{\omega'} \mathsf Z^{\mu,\mathrm{ref}}_{\omega'} &S^{\mathrm{ref}}_{1,2;\omega'\omega}(\ul p,\ul k') + R_{\mu,\omega}(\underline{p}, \underline{k}') \Big] =\\
& \Sigma^{\mathrm{ref}}_{\omega}(\ul k')  - \Sigma^{\mathrm{ref}}_{\omega}(\ul k'+\ul p) 
 + R_{2;\omega}(\underline{k}') - R_{2;\omega}(\underline{k}' + \underline{p})\;,
\end{split}
\end{equation}
where we introduced the notations:
\begin{eqnarray}
R_{\mu,\omega}(\underline{p}, \underline{k}') &:=& \frac{1}{\|\mathsf Q^{+,\mathrm{ref}}_{\omega}\|^{4}}\sum_{\rho,\rho'} R_{\mu,\omega,\rho\rho'}(\underline{p}, \underline{k}') \mathsf Q^{-,\text{ref}}_{\omega,\rho} \mathsf Q^{+,\text{ref}}_{\omega,\rho'} \nonumber\\
R_{2;\omega}(\underline{k}') &:=& \frac{1}{\|\mathsf Q^{+,\mathrm{ref}}_{\omega}\|^{4}}\sum_{\rho,\rho'} R_{2;\omega, \rho\rho'}(\underline{k}') \mathsf Q^{-,\text{ref}}_{\omega,\rho} \mathsf Q^{+,\text{ref}}_{\omega,\rho'}\;.
\end{eqnarray}
Next, recall the explicit expression for the vertex function of the reference model, equation (\ref{eq:refVertexWI}):
\begin{equation}
S^{\mathrm{ref}}_{1,2;\omega'\omega}(\ul p,\ul k') = \frac{T_{\omega'\omega}(\underline{p})}{Z^{\text{ref}}_{\omega} D_{\omega}(\underline{p})} \big[\Sigma^{\mathrm{ref}}_{\omega}(\ul k')  - \Sigma^{\mathrm{ref}}_{\omega}(\ul k'+\ul p)]\;.
\end{equation}
Recalling the estimates (\ref{eq:2pterr}), (\ref{eq:Zdec}) for the error terms, and choosing $\|\underline{k}'\| = \kappa$, $\|\underline{k}' + \underline{p}\| = \kappa$, $\|\underline{p}\| = O(\kappa)$, we see that in (\ref{eq:WIlat2}) the $R$-terms give contributions that are suppressed by a factor $\kappa^{\gamma}$ with respect to the other quantities. Hence, we get:
\begin{equation}\label{eq:divide}
\sum_{\omega'} D^{\mathsf Z}_{\omega'}(\underline{p}) \frac{T_{\omega'\omega}(\underline{p})}{Z^{\text{ref}}_{\omega} D_{\omega}(\underline{p})}\big[ \Sigma^{\mathrm{ref}}_{\omega}(\ul k')  - \Sigma^{\mathrm{ref}}_{\omega}(\ul k'+\ul p) \big] = \Sigma^{\mathrm{ref}}_{\omega}(\ul k')  - \Sigma^{\mathrm{ref}}_{\omega}(\ul k'+\ul p) + O(\kappa^{\gamma - 1})
\end{equation}
with 
\begin{equation}
D^{\mathsf Z}_{\omega'}(\underline{p}) := ip_{0}\mathsf Z^{0,\text{ref}}_{\omega'} + p \mathsf Z^{1,\text{ref}}_{\omega'}\;.
\end{equation}
Dividing both sides of (\ref{eq:divide}) by $\big[ \Sigma^{\mathrm{ref}}_{\omega}(\ul k')  - \Sigma^{\mathrm{ref}}_{\omega}(\ul k'+\ul p) \big]/ [Z^{\text{ref}}_{\omega} D_{\omega}(\underline{p})] = O(\kappa^{-2})$, we find:
\begin{equation}\label{eq:DZD}
\sum_{\omega'}  D^{\mathsf Z}_{\omega'}(\underline{p}) T_{\omega'\omega}(\underline{p}) = Z^{\text{ref}}_{\omega} D_{\omega}(\underline{p}) + O(\kappa^{1+\gamma})\;.
\end{equation}
Recalling the expression (\ref{eq:Tdef}) for $T(\underline{p})$, we have that
\begin{equation}
\lim_{p_{0}\to 0} \lim_{p\to 0} T(\underline{p}) = \frac{1}{1-(4\pi|v^{\text{ref}}|)^{-1}\Lambda_{Z}}\;,\qquad \lim_{p\to 0} \lim_{p_{0}\to 0} T(\underline{p})  = \frac{1}{1+(4\pi|v^{\text{ref}}|)^{-1}\Lambda_{Z}}\;.
\end{equation}
Therefore, evaluation of equation (\ref{eq:DZD}) by taking $\ul p\to0$ horizontally and vertically implies the following two equalities:
\begin{equation}
\left(\frac{1}{1-(4\pi|v^{\text{ref}}|)^{-1}\Lambda_{Z}}\right)^{T} \vec{\mathsf Z}^{0,\text{ref}}  = \vec Z^{\text{ref}}\;,\qquad \left(\frac{1}{1+(4\pi|v^{\text{ref}}|)^{-1}\Lambda_{Z}}\right)^{T} \vec{\mathsf Z}^{1,\text{ref}} = v^{\text{ref}}\vec Z^{\text{ref}}\;,
\end{equation}
with $\vec Z^{\mathrm{ref}}$ the vector of components $Z^{\mathrm{ref}}_{\omega}$. Inverting, we obtain, for $\mu=0,1$,
\begin{equation}\label{eq:Z0Z1}
\vec{\mathsf Z}^{\mu,\text{ref}} = \big[ 1 - (-1)^{\mu} \Lambda_{Z}^{T} (4\pi|v^{\text{ref}}|)^{-1} \big] v^{\mathrm{ref}}_{\mu}\vec Z^{\text{ref}}\;,
\end{equation}
where $v^{\mathrm{ref}}_{0}:=\mathbbm 1_{N_{f}}$ and $v^{\mathrm{ref}}_{1}:=v^{\mathrm{ref}}$.
\paragraph{Final form of the response.}
Substituting equation (\ref{eq:Z0Z1}) into (\ref{eq:linK}), we obtain, for $\nu=0,1$,
\begin{equation} \label{eq:linsub}
\chi^{\mathrm{lin}}_{\nu} (x;\eta,\theta) = -\int_{\R} \hat\mu_{\infty}(q) e^{i\theta qx} (\vec{Z}^{\mathrm{ref}},\mathcal K^{\nu}_{Z}(q)\vec{Z}^{\mathrm{ref}})\frac{dq}{2\pi} + O(\eta^{\gamma})\;,
\end{equation}
with $\mathcal K^{\nu}_{Z}$ being
\begin{equation} \label{eq:KZ}
\begin{split}
\mathcal K^{\nu}_{Z}(q) &= \left[1-\frac{1}{4\pi|v^{\mathrm{ref}}|}\Lambda_{Z}\right]\mathcal Q(-q;a)  \left[1-(-1)^{\nu}\Lambda^{T}_{Z}\frac{1}{4\pi|v^{\mathrm{ref}}|}\right] v^{\mathrm{ref}}_{\nu}\\
&= \frac{1}{Z}\left[1-\frac{1}{4\pi|v^{\mathrm{ref}}|}\Lambda\right] Z\mathcal Q(-q;a)Z\left[1-(-1)^{\nu}\Lambda\frac{1}{4\pi|v^{\mathrm{ref}}|}\right] \frac{v^{\mathrm{ref}}_{\nu}}{Z}\;.
\end{split}
\end{equation}
Here, we employed the symmetry of $\Lambda$ and the fact that $Z$ commutes with $v^{\mathrm{ref}}$ (since both are diagonal) to bring the conjugation by $Z$ from just $\Lambda$ to the whole expression. Now we will show that $Z\mathcal Q Z$ is independent of $Z$: indeed, since $\mathfrak B$ is diagonal, we have
\begin{equation} \label{eq:ZQZ}
\begin{split}
Z\mathcal Q(-q) Z&= Z\frac{1}{1+\mathfrak B(-1,aq)\Lambda_{Z}} \frac{v^{\mathrm{ref}}q}{-i/a+v^{\mathrm{ref}}q} \frac{1}{1-(4\pi|v^{\mathrm{ref}}|)^{-1}\Lambda_{Z}}\frac{1}{2\pi|v^{\mathrm{ref}}|Z^{2}}Z\\
&=\frac{1}{1+\mathfrak B(-1,aq)\Lambda} \frac{v^{\mathrm{ref}}q}{-i/a+v^{\mathrm{ref}}q} \frac{1}{1-(4\pi|v^{\mathrm{ref}}|)^{-1}\Lambda}\frac{1}{2\pi|v^{\mathrm{ref}}|}\;.
\end{split}
\end{equation}
Inserting (\ref{eq:KZ}) and (\ref{eq:ZQZ}) into (\ref{eq:linsub}), we obtain at last
\begin{equation}
\chi^{\mathrm{lin}}_{\nu} (x) = -\int_{\R} \hat\mu_{\infty}(q) e^{i\theta qx} (\vec 1,\mathcal K^{\nu}(q)\vec 1\,)\frac{dq}{2\pi} + O(\eta^{\gamma})\;,
\end{equation}
with $\vec 1$ being the vector with all components equal to 1, and $\mathcal K^{\nu}$ a matrix-valued function depending only on $v^{\mathrm{ref}}$ and $\Lambda$, given as
\begin{equation}
\begin{split}
\mathcal K^{0}(q) &= \left[1-\frac{1}{4\pi|v^{\mathrm{ref}}|}\Lambda\right]\frac{1}{1+\mathfrak B(-1/a,q)\Lambda} \frac{1}{2\pi|v^{\mathrm{ref}}|}\frac{v^{\mathrm{ref}}q}{-i/a+v^{\mathrm{ref}}q}\\
\mathcal K^{1}(q) &= \left[1-\frac{1}{4\pi|v^{\mathrm{ref}}|}\Lambda\right]\frac{1}{1+\mathfrak B(-1/a,q)\Lambda}\frac{\mathrm{sgn\,}v^{\mathrm{ref}}}{2\pi}\frac{v^{\mathrm{ref}}q}{-i/a+v^{\mathrm{ref}}q} \mathfrak K \\
\mathfrak K &\equiv \mathrm{sgn\,}v^{\mathrm{ref}}\,\frac{1+(4\pi|v^{\mathrm{ref}}|)^{-1}\Lambda}{1-(4\pi|v^{\mathrm{ref}}|)^{-1}\Lambda}\,\mathrm{sgn\,}v^{\mathrm{ref}}\;.
\end{split}
\end{equation}
The proof is completed by recognising $\widetilde{\mathcal K}$, $\mathfrak K^{\nu}$ and $v^{\mathrm{ref}}_{\nu}$ in the expression above.
\end{proof}

\subsection{Proof of Theorem \ref{thm:main}}
\begin{proof}[Proof of Theorem \ref{thm:main}]
We begin by recalling the expansion for $\chi^{\beta,L}_{\nu}$, namely
\begin{equation}
\begin{split}
\chi^{\beta,L}_{\nu}(x;\eta,\theta)=-\sum_{m\geq 2}\frac{(-\theta)^{m-2}}{(m-1)!}\frac{1}{L^{m-1}}&\sum_{p\in B^{m-1}_{L}} \prod_{i=1}^{m-1}[\hat\mu_{\theta}(-p_{i})] e^{i p_{m}x}\\
&\cdot\frac{\langle\mathbf T\hat n_{\ul p_{1}};\cdots;\hat n_{\ul p_{m-1}};\hat\jmath_{\nu,\ul p_{m}}\rangle_{\beta,L}}{\beta L} + E^{\beta,L}_{\nu}(x;\eta,\theta)
\end{split}
\end{equation}
with $\ul p_{i}=(\eta_{\beta},p_{i})$, $\ul p_{m}=-\sum_{i=1}^{m-1} \ul p_{i}$, and with $\hat\mu_{\theta}(\cdot):=\hat\mu(\cdot/\theta)/\theta$. Thanks to (\ref{eq:errnuint}), we know that $E^{\beta,L}_{\nu}$ is bounded as
\[
|E^{\beta,L}_{\nu}(x;\eta,\theta)|\leq \frac{C}{\eta^{3}\beta}\;.
\]
Let us focus on the corrections to the linear order $m=2$. We estimate them as
\begin{equation} \label{eq:highord}
\sum_{m\geq 3}\frac{\theta^{m-2}}{(m-1)!}\frac{1}{L^{m-1}}\sum_{p\in B^{m-1}_{L}} \prod_{i=1}^{m-1}\big[|\hat\mu_{\theta}(-p_{i})|\big]\frac{\big|\langle\mathbf T\hat n_{\ul p_{1}};\cdots;\hat\jmath_{\nu,\ul p_{m}}\rangle_{\beta,L}\big|}{\beta L}\;.
\end{equation}
Thanks to Proposition \ref{prop:CorrComparison}, we rewrite the $m$-point correlation functions in terms of the corresponding ones for the reference model:
\begin{equation}\label{eq:match}
\begin{split}
\langle \hat n_{\ul p_{1}};\cdots; \hat \jmath_{\nu,\ul p_{m}}\rangle_{\beta,L} = \sum_{\ul \omega} \Big[\prod_{i=1}^{m-1} \mathsf Z^{0,\mathrm{ref}}_{\omega_{i}}\Big] \mathsf Z^{\nu,\mathrm{ref}}_{\omega_{m}}\,\langle \hat n_{\ul p_{1},\omega_{1}};\cdots;\hat n_{\ul p_{m},\omega_{m}}\rangle^{\mathrm{ref}}_{\beta, L,N} \\
 + R^{\beta,L, N}_{m;\nu} (\ul p_{1},\dots, \ul p_{m})\;.
\end{split}
\end{equation}
By Corollary \ref{cor:vanishingcorr}, the reference model density correlations are vanishing when the UV cut-off is removed, that is, uniformly in $N$ large,
\begin{equation} \label{eq:vanishing}
\frac{1}{\beta L}\left|\Big[\prod_{i=1}^{m-1} \mathsf Z^{0,\mathrm{ref}}_{\omega_{i}}\Big] \mathsf Z^{\nu,\mathrm{ref}}_{\omega_{m}}\,\langle \hat n_{\ul p_{1},\omega_{1}};\cdots;\hat n_{\ul p_{m},\omega_{m}}\rangle^{\mathrm{ref}}_{\beta, L,N}\right|\leq C_{\beta,m}\, N2^{-\gamma N}\;,
\end{equation}
with $\gamma>0$ and $C_{\beta,m}$ depending on $\beta,\eta$, but independent of $\ul\omega,L,N$; also, since the bounds in Proposition \ref{prop:CorrComparison} for $R^{\beta,L,N}_{m;\nu}$ are uniform in $N$ large, by combining these latter with (\ref{eq:vanishing}), in the $N\to\infty$ limit equation (\ref{eq:match}) yields
\begin{equation} \label{eq:latticebdgain}
\frac{1}{\beta L}\big|\langle\mathbf T\hat n_{\ul p_{1}};\cdots;\hat\jmath_{\nu,\ul p_{m}}\rangle_{\beta,L}\big| \leq m!\, C^{m} \eta^{2-m+\gamma} F(m;\lambda,\alpha)
\end{equation}
for some $\gamma\in(0,1-c|\lambda|)$, and for any $\alpha\in(0,1/34)$, with $C$ uniform in $\beta,L$ large. Here, $F$ is given by (\ref{eq:improv}), namely
\begin{equation}
F(m;\lambda,\alpha):=\begin{cases}
1 & \text{if } m<16\\
\displaystyle \frac{1}{\lfloor m/4-4\rfloor!} + \left(\frac{2\alpha}{1-32\alpha}\right)^{(1/4-8\alpha)m} + |\lambda|^{\alpha m} & \text{if } m\geq 16\;.
\end{cases}
\end{equation}
By employing (\ref{eq:latticebdgain}), we therefore get the following estimate for (\ref{eq:highord}):
\begin{equation}
\begin{split}
\sum_{m\geq 3}\frac{\theta^{m-2}}{(m-1)!}\frac{1}{L^{m-1}}&\sum_{\{p_{i}\}\in B^{m-1}_{L}} \prod_{i=1}^{m-1}\big[|\hat\mu_{\theta}(-p_{i})|\big]\frac{\big|\langle\mathbf T\hat n_{\ul p_{1}};\cdots;\hat\jmath_{\nu,\ul p_{m}}\rangle_{\beta,L}\big|}{\beta L}\\
& \leq \sum_{m\geq 3} m C^{m} \eta^{\gamma} F(m;\lambda,\alpha) a^{m-2} \frac{1}{L^{m-1}} \sum_{\{p_{i}\}\in B^{m-1}_{L}} \prod_{i=1}^{m-1} \frac{|\hat\mu(-p_{i}/\theta)|}{\theta}\\
&\leq \eta^{\gamma}\sum_{m\geq 3} mC^{m} F(m;\lambda,\alpha) a^{m-2} \|\hat\mu_{\infty}\|^{m-1}_{L^{1}}
\end{split}
\end{equation}
where we recall that $a:=\theta/\eta$, and where in the first step we employed (\ref{eq:latticebdgain}).

Now, if $a$ is small enough, we do not need the improvement guaranteed by $F(\cdot;\lambda,\alpha)$ for large inputs, since the sum over $m$ is convergent. The smallness of the sum is guaranteed by the $\eta^{\gamma}$ factor. Instead, for more general values of $a$, the aforementioned improvement is crucial to perform the sum. For given $a$, we choose $\alpha\in(0,1/34)$ small enough and $|\lambda|$ small enough so that:
\begin{equation}
2C\|\hat\mu_{\infty}\|_{L^{1}}a \left(\frac{2\alpha}{1-32\alpha}\right)^{1/4-8\alpha}<1\;,\qquad\qquad 2C\|\hat\mu_{\infty}\|_{L^{1}}a |\lambda|^{\alpha}<1\;.
\end{equation}
With this choice, we obtain:
\begin{equation}
\sum_{m\geq 3}\frac{\theta^{m-2}}{(m-1)!}\frac{1}{L^{m-1}}\sum_{\{p_{i}\}\in B^{m-1}_{L}} \prod_{i=1}^{m-1}\big[|\hat\mu_{\theta}(-p_{i})|\big]\frac{\big|\langle\mathbf T\hat n_{\ul p_{1}};\cdots;\hat\jmath_{\nu,\ul p_{m}}\rangle_{\beta,L}\big|}{\beta L}\leq \tilde C_{\lambda,\alpha}\eta^{\gamma}
\end{equation}
for some $\tilde C_{\lambda,\alpha}$ independent of $\beta,L$. This estimate for the higher order terms in the Duhamel expansion, together with Proposition \ref{prop:linrespint}, concludes the proof of Theorem \ref{thm:main}.
\end{proof}

\appendix

\section{Proof of Proposition \ref{prp:ref}}\label{app:ref}

Recall Eq. \eqref{GenWId},
\begin{equation}
	0=\frac{1}{\mathcal Z_N(A)}	\int P_N[d\psi]\, e^{-V(\psi)+\tilde C(\psi;A)}\left[Z_\omega\Den(\underline p)\hat n_{\underline p,\omega}-\Delta_{\underline p,\omega}(\psi)\right]\;,
\end{equation}
where the source term is:
\[
\tilde C(\psi;A) := \frac{1}{\beta L}\sum_{\omega=1}^{N_{f}}\sum_{\ul p\in\Dp} \hat A_{-\ul p,\omega} \hat n_{\ul p,\omega}\;.
\]
Differentiating with respect to $\hat A_{-\underline{p}_i,\omega_i},\;i=2,\dots,m$ and setting $A=0$ we get:
\begin{equation}\label{WId}
	Z_{\omega_1}\Den(\underline p_1)\langle \hat n_{\ul p_{1},\omega_{1}};\cdots;\hat n_{\ul p_{m},\omega_{m}}\rangle_N=\langle\Delta_{\underline p_1,\omega_1}(\psi); \hat n_{\ul p_{2},\omega_{2}};\cdots;\hat n_{\ul p_{m},\omega_{m}}\rangle_N\;.
\end{equation}	
The next lemma provides a useful rewriting of the right-hand side of (\ref{WId}).

\begin{lemma}[Schwinger-Dyson equation for density correlation functions]\label{TecLem1}
Let $m\geq 3$. For any collection of momenta $\{\ul p_{i}\neq \ul 0\}_{i=1}^{m}$ such that $p_{i,0} = p_{1,0} \neq 0$ for $i=1,\ldots, m-1$, $\underline{p}_{m}$ fixed by momentum conservation, and for any choice of chiralities $\{\omega_{i}\}_{i=1}^{m}$, we have:
\begin{equation}\label{eq:deltaWard}\begin{split}
\langle \Delta_{\underline p_1,\omega_1}(\psi);\hat n_{\ul p_{2},\omega_{2}};&\cdots;\hat n_{\ul p_{m},\omega_{m}}\rangle_N\\
=&-\DenI(\ul p_{1})\mathfrak B^N_{\omega_1}(\underline p_1)\hat v(\underline p_1)\sum_{\bar\omega}\lambda_{\omega_1\bar\omega}Z_{\bar\omega}\langle \hat n_{\underline p_1,\bar\omega};\hat n_{\ul p_{2},\omega_{2}};\cdots;\hat n_{\ul p_{m},\omega_{m}}\rangle_N\\
&+\frac{1}{\beta L}\sum_{\underline k\in \D}\Delta_{\omega_1}(\underline k,\underline p_1)\hat g^{(\le N)}_{\omega_1}(\underline k-\underline p_1)\hat g^{(\le N)}_{\omega_1}(\underline k)\,\mathcal R_{\ul\omega}^{N}(\underline k;{\bf{\underline p}}_m)\;,
\end{split}
\end{equation}	
where:
\begin{equation}\label{Resto}\begin{split}
\mathcal R_{\ul\omega}^{N}(\underline k;{\bf\underline p}_m):=&\sum_{\bar\omega,\bar\omega'} \lambda_{\omega_{1}\bar\omega}\lambda_{\omega_{1}\bar\omega'}Z_{\bar\omega}Z_{\bar\omega'}\frac{Z^{2}_{\omega_{1}}}{(\beta L)^{2}}\sum_{\ul q,\ul q'}\hat v_{\mathrm s}(\ul q)\hat v_{\mathrm s}(\ul q')\mathcal F^{2,m+1}_{\ul \omega}(\ul k,\ul q,\ul q';\ul{\mathbf p}_{m})\\
&-\sum_{i=2}^{m}\delta_{\omega_{1},\omega_{i}}\sum_{\bar\omega}\lambda_{\omega_{1}\bar\omega}Z_{\bar\omega}\frac{Z^{2}_{\omega_{1}}}{\beta L}\sum_{\ul q}\hat v_{\mathrm s}(\ul q)\mathcal F^{2,m-1}_{\ul \omega;i}(\ul k,\ul q;\ul{\mathbf p}_{m})\\
&+\sum_{\substack{i,j=2\\ i\neq j}}^{m} \delta_{\omega_{1},\omega_{i}}\delta_{\omega_{1},\omega_{j}}Z_{\omega_{1}}^{2}\mathcal F^{2,m-3}_{\ul \omega;i,j} (\ul k;\ul{\mathbf p}_{m})\;,
\end{split}
\end{equation}
with ${\bf{\underline p}}_m=(\underline p_1,\dots,\underline p_{m})$ and $\hat v_{\mathrm s}(\ul q):=[\hat v(q)+\hat v(-\ul q)]/2$ the even part of the interaction potential. The functions $\mathcal F$ are given by:
\begin{equation}\label{Resto1}
	\begin{split}
		\mathcal F^{2,m+1}_{\ul \omega}(\ul k,\ul q,\ul q';\ul{\mathbf p}_{m}):=&\langle \hat\psi^+_{\underline k-\underline p_1+\underline q,\omega_1}\hat\psi^-_{\underline k-\underline q',\omega_1}\hat n_{\underline q,\bar\omega}\hat n_{\underline q',\bar\omega'};\hat n_{\ul p_{2},\omega_{2}};\cdots;\hat n_{\ul p_{m},\omega_{m}}\rangle_N\;,\\
		\mathcal F^{2,m-1}_{\ul \omega;i}(\ul k,\ul q;\ul{\mathbf p}_{m}):=&\langle \hat\psi^+_{\underline k-\underline p_1+\underline q,\omega_1}\hat \psi^-_{\underline k+\underline p_i,\omega_1}\hat n_{\underline q,\bar\omega};\hat n_{\ul p_{2},\omega_{2}};\cdots;\widehat{\hat n_{\ul p_{i},\omega_{i}}};\cdots;\hat n_{\ul p_{m},\omega_{m}}\rangle_N\\
		&+\langle\hat \psi^-_{\underline k+\underline q,\omega_1}\hat\psi^+_{\underline k-\underline p_1+\underline p_{i},\omega_1}\hat n_{\underline q,\bar\omega};\hat n_{\ul p_{2},\omega_{2}};\cdots;\widehat{\hat n_{\ul p_{i},\omega_{i}}};\cdots;\hat n_{\ul p_{m},\omega_{m}}\rangle_N\;,\\
		\mathcal F^{2,m-3}_{\ul \omega;i,j} (\ul k;\ul{\mathbf p}_{m}):=&\langle \hat \psi^+_{\underline k-\underline p_1-\underline p_i,\omega_1}\hat \psi^-_{\underline k+\underline p_j,\omega_1};\hat n_{\ul p_{2},\omega_{2}};\cdots;\widehat{\hat n_{\ul p_{i},\omega_{i}}} ;\cdots;\widehat{\hat n_{\ul p_{j},\omega_{j}}} ;\cdots;\hat n_{\ul p_{m},\omega_{m}}\rangle_N\;,
	\end{split}	
\end{equation}
with the wide hat indicating omission of a density.
\end{lemma}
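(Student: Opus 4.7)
The plan is to regard $\Delta_{\ul p_1,\omega_1}(\psi)$ as the natural ``defect'' generated by the chiral transformation and to trade it, via Grassmann integration by parts, for insertions coming from the interaction $V$ and the source $\tilde C$. First I would split
\[
\Delta_{\omega_1}(\ul k,\ul p_1) = -\big[\hat g^{(\leq N)}_{\omega_1}(\ul k-\ul p_1)\big]^{-1} + \big[\hat g^{(\leq N)}_{\omega_1}(\ul k)\big]^{-1} + Z_{\omega_1}\mathcal D_{\omega_1}(\ul p_1)\,,
\]
which is a direct consequence of \eqref{propagator_hN}. The last, $\psi$-independent factor combines with $\hat n_{\ul p_1,\omega_1}$ to reconstruct, on the left-hand side of \eqref{eq:deltaWard}, a contribution proportional to $Z_{\omega_1}\mathcal D_{\omega_1}(\ul p_1)\langle\hat n_{\ul p_1,\omega_1};\hat n_{\ul p_2,\omega_2};\cdots;\hat n_{\ul p_m,\omega_m}\rangle_N$, which is exactly the left-hand side of (\ref{WId}) and will not appear in the Schwinger--Dyson output below.

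Next I would insert the first two pieces into the truncated expectation and use the Grassmann integration-by-parts identity associated with the Gaussian measure $P_N$: for any polynomial $F(\psi)$,
\[
\big\langle \hat\psi^+_{\ul k-\ul p_1,\omega_1}\,[\hat g^{(\leq N)}_{\omega_1}(\ul k)]^{-1}\hat\psi^-_{\ul k,\omega_1}\,F(\psi)\big\rangle_N^A
= \beta L\, \Big\langle \hat\psi^+_{\ul k-\ul p_1,\omega_1}\Big[\tfrac{\partial F}{\partial \hat\psi^+_{\ul k,\omega_1}} - F\,\tfrac{\partial V}{\partial \hat\psi^+_{\ul k,\omega_1}} + F\,\tfrac{\partial \tilde C}{\partial \hat\psi^+_{\ul k,\omega_1}}\Big]\Big\rangle_N^A\,,
\]
and an analogous formula for the $[\hat g^{(\leq N)}_{\omega_1}(\ul k-\ul p_1)]^{-1}$ term, obtained by IBP with respect to $\hat\psi^-_{\ul k-\ul p_1,\omega_1}$. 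The derivatives of $V$ produce density insertions weighted by $\lambda_{\omega_1\bar\omega} Z_{\omega_1} Z_{\bar\omega}\hat v$; the derivatives of $\tilde C$ produce $\hat A$ insertions at shifted momenta, which after taking the derivatives in $\hat A_{-\ul p_i,\omega_i}$ needed to realize the truncated expectation become density operators whenever $\omega_1=\omega_i$, thereby producing the $\delta_{\omega_1,\omega_i}$-factors of \eqref{Resto}.

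The reorganization step is the heart of the argument. Each of the two IBP outputs contains (i) a single-interaction term with one remaining $\hat g^{(\leq N)}_{\omega_1}$ propagator contracted with a density; summing the two outputs, the single-propagator pieces on the two sides do not cancel but instead combine — together with the factor $Z_{\omega_1}\mathcal D_{\omega_1}(\ul p_1)$ extracted from $\Delta$ through the identity
\[
\tfrac{1}{\beta L}\sum_{\ul k}\Delta_{\omega_1}(\ul k,\ul p_1)\,\hat g^{(\leq N)}_{\omega_1}(\ul k-\ul p_1)\hat g^{(\leq N)}_{\omega_1}(\ul k) \;=\;-\,\tfrac{\mathcal D_{\omega_1}(\ul p_1)}{Z_{\omega_1}}\,\mathfrak B^N_{\omega_1}(\ul p_1)\,,
\]
which is exactly the definition \eqref{eq:Bolla} — to yield the one-loop anomalous bubble $\mathfrak B^N_{\omega_1}(\ul p_1)$ multiplying the interaction insertion $\lambda_{\omega_1\bar\omega}Z_{\bar\omega}\hat v(\ul p_1)\hat n_{\ul p_1,\bar\omega}$ and the remaining densities. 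This is the principal term on the right-hand side of \eqref{eq:deltaWard}. The (ii) two-interaction terms, and (iii) single- and double-contact terms coming from $\tilde C$, respectively reproduce the three contributions $\mathcal F^{2,m+1}$, $\mathcal F^{2,m-1}$, $\mathcal F^{2,m-3}$ in the remainder $\mathcal R^N_{\ul\omega}$ of \eqref{Resto}--\eqref{Resto1}. The symmetrization $\hat v\mapsto \hat v_{\mathrm s}$ comes from the fact that the two $\psi$-factors inside a density can be permuted, which induces a symmetrization of the momentum of the interaction line.

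The main obstacle will be the bookkeeping of contact terms in step (iii): when several densities $\hat n_{\ul p_i,\omega_i}$ with $\omega_i=\omega_1$ are present, the IBP can ``open'' one or two of them into bare fermion bilinears at shifted momenta, and one has to verify that the resulting structure coincides precisely with the $\mathcal F^{2,m-1}$ and $\mathcal F^{2,m-3}$ expressions of \eqref{Resto1} (including the two symmetric terms in $\mathcal F^{2,m-1}$ generated by derivatives with respect to $\hat\psi^+$ and $\hat\psi^-$, and the double sum restricted to $i\neq j$ in $\mathcal F^{2,m-3}$). Once this combinatorial matching is carried out — and the trivial tadpole-type contributions of the form ``$\omega_1=\omega_i=\omega_j$, $i=j$'' are shown to be absorbed into the main term — the claim \eqref{eq:deltaWard} follows by rearrangement.
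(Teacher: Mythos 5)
Your high-level idea -- Grassmann integration by parts, with $V$-derivatives producing density insertions and $\tilde C$-derivatives producing contact terms that give the $\delta_{\omega_1,\omega_i}$ factors -- is correct and is indeed what the paper does. But the specific IBP scheme you propose has a genuine structural gap, and there is also an internal contradiction in how you treat the third piece of $\Delta$.

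The paper's essential mechanism is the \emph{double} integration-by-parts identity, equation \eqref{intbypart}, which trades the bilinear $\hat\psi^+_{\ul k-\ul p_1,\omega_1}\hat\psi^-_{\ul k,\omega_1}$ directly for $\hat g^{(\leq N)}_{\omega_1}(\ul k-\ul p_1)\hat g^{(\leq N)}_{\omega_1}(\ul k)\,\partial^2/\partial\hat\psi^+_{\ul k}\partial\hat\psi^-_{\ul k-\ul p_1}$, keeping the full $\Delta_{\omega_1}(\ul k,\ul p_1)$ factor untouched as a scalar weight. Summed over $\ul k$ against the mixed second derivative of $V-\tilde C$, this is precisely what produces $\Delta\cdot g\cdot g$, i.e.\ the bubble \eqref{eq:Bolla}, multiplying the density insertion from \eqref{eq:ap1}. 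By contrast, your scheme splits $\Delta$ into $-[\hat g(\ul k-\ul p_1)]^{-1} + [\hat g(\ul k)]^{-1} + Z_{\omega_1}\mathcal D_{\omega_1}(\ul p_1)$ and performs a \emph{single} IBP on each inverse-propagator piece. After a single IBP, the factor $[\hat g(\ul k)]^{-1}$ has been consumed by the Gaussian derivative and you are left with a dangling $\hat\psi^+_{\ul k-\ul p_1}$ against the $\hat\psi^-$ coming out of $\partial_+V$ or $\partial_+\tilde C$; contracting that pair yields only one explicit propagator, not two, and the scalar weight $\Delta_{\omega_1}(\ul k,\ul p_1)$ no longer appears anywhere. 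There is therefore no way to invoke the bubble identity as you do -- the combination $\Delta(\ul k,\ul p_1)\hat g(\ul k-\ul p_1)\hat g(\ul k)$ simply never forms in your scheme. Making it appear would require a second IBP on the dangling $\hat\psi^+_{\ul k-\ul p_1}$, at which point you have effectively undone your split of $\Delta$ and recovered the paper's one-step double IBP.

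There is also a contradiction in your handling of the $Z_{\omega_1}\mathcal D_{\omega_1}(\ul p_1)$ term: you first say it reconstructs the left-hand side of \eqref{WId} and ``will not appear in the Schwinger--Dyson output below'', but then you say it ``combines'' with the single-propagator pieces to yield the bubble -- both cannot hold. In fact, since Lemma~\ref{TecLem1} evaluates $\langle\Delta_{\ul p_1,\omega_1};\cdots\rangle_N$ itself (and the Ward identity \eqref{WId} equates this to $Z_{\omega_1}\mathcal D_{\omega_1}(\ul p_1)\langle\hat n_{\ul p_1,\omega_1};\cdots\rangle_N$), peeling off the $Z_{\omega_1}\mathcal D_{\omega_1}(\ul p_1)$ piece just regenerates a copy of what you are trying to compute and gives no leverage. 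Finally, your remark about tadpole contributions with $i=j$ being ``absorbed into the main term'' is a red herring: each source derivative $\partial/\partial\hat A_{-\ul p_i,\omega_i}$ is taken exactly once, so $i=j$ configurations simply never arise in \eqref{Resto}.
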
	

\begin{proof} We can write $\langle \Delta_{\underline p_1,\omega_1}(\psi);\hat n_{\underline{p}_{2},\omega_{2}};\dots;\hat n_{\underline{p}_{m},\omega_{m}}\rangle_N$ as follows:
	\begin{equation}\label{1}\begin{split}
			&\langle \Delta_{\underline p_1,\omega_1}(\psi);\hat n_{\ul p_{2},\omega_{2}};\dots;\hat n_{\ul p_{m},\omega_{m}}\rangle_N\\
			&=\frac{(\beta L)^{m}\partial^m}{\partial \hat B_{-\underline{p}_1,\omega_1}\partial \hat A_{-\underline{p}_{2},\omega_{2}}\cdots\partial \hat A_{-\underline p_{m},\omega_m}}\frac{1}{\mathcal Z_{N}(A)}\int P_{N}[d\psi] e^{-V(\psi)+\tilde C(\psi;A)+\Delta(\psi;B)}\bigg|_{A,B=0}\\
			&= \frac{1}{\beta L}\sum_{\underline k\in\D}\Delta_{\omega_1}(\underline k,\underline p_1)\frac{(\beta L)^{m-1}\partial^{m-1}}{\prod_{i=2}^{m}\partial \hat A_{-\underline{p}_i,\omega_i}}\frac{1}{\mathcal Z_{N}(A)}\int P_{N}[d\psi]\, e^{-V(\psi)+\tilde C(\psi;A)}\hat\psi^+_{\underline k-\underline p_1,\omega_1}\hat \psi^-_{\underline k,\omega_1}\bigg|_{A=0}
		\end{split}
	\end{equation}		
	where we used \eqref{Dp} and where
	\begin{equation}\label{extfieldB}
		\Delta(\psi;B):=\frac{1}{\beta L}\sum_{\omega=1}^{N}\sum_{\underline p\in\Dp}\hat B_{-\underline p,\omega}\Delta_{\underline p,\omega}(\psi)\;,	
	\end{equation}	
	with $\hat B_{\underline p,\omega}$ external fields defined as $\hat A_{\underline p,\omega}$ (see \eqref{A}).
	
	From the definition of the Grassmann measure given by (\ref{eq:refcov}), we have:
	\begin{equation}\label{intbypart}
		\frac{1}{(\beta L)^{2}}e^{-K_N(\psi)}\hat \psi^+_{\underline k-\underline p_1,\omega_1}\hat\psi^-_{\underline k,\omega_1}=\hat g^{(\le N)}_{\omega_1}(\underline k-\underline p_1)\hat g^{(\le N)}_{\omega_1}(\underline k)	\frac{\partial^2}{\partial \hat \psi^+_{\underline k,\omega_1}\partial\hat\psi^-_{\underline k-\underline p_1,\omega_1}}e^{-K_N(\psi)}
	\end{equation}	
	so that, integrating by parts the argument of the derivatives in \eqref{1}, we obtain:
	\begin{equation}\label{2}\begin{split}
			&\frac{1}{\mathcal Z_N(A)}\int P_N[d\psi]\, e^{-V(\psi)+\tilde C(\psi;A)}\hat\psi^+_{\underline k-\underline p_1,\omega_1}\hat \psi^-_{\underline k,\omega_1}	\\
			&\qquad=\hat g^{(\le N)}_{\omega_1}(\underline k-\underline p_1)\hat g^{(\le N)}_{\omega_1}(\underline k)\frac{(\beta L)^{2}}{\mathcal Z_N(A)}\int P_{N}[d\psi]\frac{\partial^2}{\partial\hat\psi^-_{\underline k-\underline p_1,\omega_1}\partial \hat \psi^+_{\underline k,\omega_1}} e^{-V(\psi)+\tilde C(\psi;A)}\\
			&\qquad=\hat g^{(\le N)}_{\omega_1}(\underline k-\underline p_1)\hat g^{(\le N)}_{\omega_1}(\underline k)\frac{(\beta L)^{2}}{\mathcal Z_N(A)}\int P_{N}[d\psi]\, e^{-V(\psi)+\tilde C(\psi;A)}\times\\
			&\qquad\times\Bigg[\frac{\partial^2[V(\psi)-\tilde C(\psi;A)]}{\partial\hat\psi^-_{\underline k-\underline p_1,\omega_1}\partial \hat \psi^+_{\underline k,\omega_1}}+\frac{\partial[V(\psi)-\tilde C(\psi;A)]}{\partial\hat\psi^-_{\underline k-\underline p_1,\omega_1}}\frac{\partial[V(\psi)-\tilde C(\psi;A)]}{\partial \hat \psi^+_{\underline k,\omega_1}}\Bigg].
		\end{split}
	\end{equation}	
	
	Let us consider the term in \eqref{2} with the second derivative of $V(\psi)-\tilde C(\psi;A)$. From \eqref{V}, \eqref{n}, \eqref{Gamma}, and employing the fact that $\lambda_{\omega\omega}=0$, we find:
	\begin{equation}\label{eq:ap1}
		\frac{\partial^2[V(\psi)-\tilde C(\psi;A)]}{\partial\hat\psi^-_{\underline k-\underline p_1,\omega_1}\partial \hat \psi^+_{\underline k,\omega_1}}=\frac{Z_{\omega_1}}{(\beta L)^{2}}\bigg[\sum_{\bar\omega}\lambda_{\omega_1\bar\omega}Z_{\bar\omega}\hat v(\underline p_1)\hat n_{\underline p_1,\bar\omega}-\hat A_{\underline p_1,\omega_1}\bigg]\;.	
	\end{equation}		
Using this identity in \eqref{2} and differentiating with respect to $\{\hat A_{- \underline p_i,\omega_i}\}_{i=2}^{m}$, the contribution of the first term in the right-hand side of (\ref{2}) to (\ref{1}) is:
	\begin{equation}\label{MainTerm}
		Z_{\omega_1}\hat g^{(\le N)}_{\omega_1}(\underline k-\underline p_1)\hat g^{(\le N)}_{\omega_1}(\underline k)\hat v(\underline p_1)\sum_{\bar \omega}\lambda_{\omega_1\bar \omega}Z_{\bar\omega}\langle \hat n_{\underline p_1,\bar \omega};\hat n_{\ul p_{2},\omega_{2}};\cdots;\hat n_{\ul p_{m},\omega_{m}}\rangle_N\;.
	\end{equation}	
Observe that, since the momenta $\underline{p}_{i}$ are all nonzero, by momentum conservation we see that the term corresponding to $\hat A_{\underline{p}_{1},\omega_{1}}$ in (\ref{eq:ap1}) does not contribute to the result. The identity (\ref{eq:deltaWard}) follows after inserting (\ref{MainTerm}) in \eqref{1}, recalling the definition (\ref{eq:Bolla}) for the anomalous bubble diagram $\mathcal B^{N}_{\omega_{1}}$, and setting:
	\begin{equation} \label{eq:remderiv}
	\begin{split}
	\mathcal R_{\ul\omega}^{N}(\ul k;\ul{\mathbf p}_{m}):= \frac{(\beta L)^{m-1}\partial^{m-1}}{\prod_{i=2}^{m}\partial \hat A_{-\underline{p}_i,\omega_i}}\frac{(\beta L)^{2}}{\mathcal Z_{N}(A)}\int &P_{N}[d\psi]\, e^{-V(\psi)+\tilde C(\psi;A)}\,\times\\ &\times\frac{\partial[V(\psi)-\tilde C(\psi;A)]}{\partial\hat\psi^-_{\underline k-\underline p_1,\omega_1}}\frac{\partial[V(\psi)-\tilde C(\psi;A)]}{\partial \hat \psi^+_{\underline k,\omega_1}}\Bigg|_{A=0}\;.
	\end{split}
	\end{equation}
	
	The remaining part of the proof is the explicit evaluation of (\ref{eq:remderiv}). We proceed as follows: first, an explicit computation yields
	\begin{equation} \label{eq:der1}
		\begin{split}
			&\frac{\partial[V(\psi)-\tilde C(\psi;A)}{\partial\hat\psi^-_{\underline k-\underline p_1,\omega_1}}\\
			&=\frac{Z_{\omega_1}}{(\beta L)^{2}}\sum_{\underline q\in\Dp}\bigg[\sum_{\bar\omega}\lambda_{\omega_1\bar\omega}Z_{\bar\omega}\frac{\hat v(\underline q)}{2}\Big(\hat\psi^+_{\underline k-\underline p_1-\underline q,\omega_1}\hat n_{-\underline q,\bar\omega}+\hat n_{\underline q,\bar\omega}\hat \psi^+_{\underline k-\underline p_1+\underline q,\omega_1}\Big)-\hat A_{-\underline q,\omega_1}\hat \psi^+_{\underline k-\underline p_1-\underline q,\omega_1}\bigg]\\
			&=\frac{Z_{\omega_1}}{(\beta L)^{2}}\sum_{\underline q\in\Dp}\bigg[\sum_{\bar\omega}\lambda_{\omega_1\bar\omega}Z_{\bar\omega}\hat v_{\mathrm s}(\underline q)\hat\psi^+_{\underline k-\underline p_1+\underline q,\omega_1}\hat n_{\underline q,\bar\omega}-\hat A_{-\underline q,\omega_1}\hat\psi^+_{\underline k-\underline p_1-\underline q,\omega_1}\bigg]\;,
		\end{split}	
	\end{equation}	
	where $\hat v_{\mathrm s}(\ul q):=[\hat v(\ul q) + \hat v(-\ul q)]/2$, and
	\begin{equation} \label{eq:der2}
		\begin{split}
			&\frac{\partial[V(\psi)-\tilde C(\psi;A)]}{\partial \hat \psi^+_{\underline k,\omega_1}}\\
			&=\frac{Z_{\omega_1}}{(\beta L)^{2}}\sum_{\underline q\in\Dp}\bigg[\sum_{\bar\omega}\lambda_{\omega_1\bar\omega}Z_{\bar\omega}\frac{\hat v(\underline q)}{2}\left(\hat\psi^-_{\underline k+\underline q,\omega_1}\hat n_{-\underline q,\bar\omega}+\hat n_{\underline q,\bar\omega}\hat \psi^-_{\underline k-\underline q,\omega_1}\right)-\hat A_{-\underline q,\omega_1}\hat \psi^-_{\underline k+\underline q,\omega_1}\bigg]\\
			&= \frac{Z_{\omega_1}}{(\beta L)^{2}}\sum_{\underline q\in\Dp}\bigg[\sum_{\bar\omega}\lambda_{\omega_1\bar\omega}Z_{\bar\omega}\hat v_{\mathrm s}(\underline q)\hat\psi^-_{\underline k-\underline q,\omega_1}\hat n_{\underline q,\bar\omega} -\hat A_{-\underline q,\omega_1}\hat\psi^-_{\underline k+\underline q,\omega_1}\bigg]\;.
		\end{split}	
	\end{equation}	
	Second, multiplying (\ref{eq:der1}) with (\ref{eq:der2}), inserting the result in (\ref{eq:remderiv}), deriving with respect to the external fields $\{\hat A_{-\ul p_{i},\omega_{i}}\}_{i=2}^{m}$ and setting $A=0$, we are led to a host of different terms contributing to $\mathcal R_{\ul\omega}^{N}$:
	\begin{enumerate}
		\item \label{en:11} the first term of (\ref{eq:der1}) with the first one of (\ref{eq:der2}) gives
		\begin{equation}
		\sum_{\bar\omega,\bar\omega'} \lambda_{\omega_{1}\bar\omega}\lambda_{\omega_{1}\bar\omega'}Z_{\bar\omega}Z_{\bar\omega'}\frac{Z^{2}_{\omega_{1}}}{(\beta L)^{2}}\sum_{\ul q,\ul q'}\hat v_{\mathrm s}(\ul q)\hat v_{\mathrm s}(\ul q')\mathcal F^{2,m+1}_{\ul \omega}(\ul k,\ul q,\ul q';\ul{\mathbf p}_{m})
		\end{equation}
		with
		\begin{equation*}
		\mathcal F^{2,m+1}_{\ul \omega}(\ul k,\ul q,\ul q';\ul{\mathbf p}_{m}):=\langle \hat\psi^+_{\underline k-\underline p_1+\underline q,\omega_1}\hat\psi^-_{\underline k-\underline q',\omega_1}\hat n_{\underline q,\bar\omega}\hat n_{\underline q',\bar\omega'};\hat n_{\ul p_{2},\omega_{2}};\cdots;\hat n_{\ul p_{m},\omega_{m}}\rangle_N\;;
		\end{equation*}
		\item \label{en:1221} the first (resp. second) term  of (\ref{eq:der1}) with the second (resp. first) one of (\ref{eq:der2}) produces
		\begin{equation}
		-\sum_{i=2}^{m}\delta_{\omega_{1},\omega_{i}}\sum_{\bar\omega}\lambda_{\omega_{1}\bar\omega}Z_{\bar\omega}\frac{Z^{2}_{\omega_{1}}}{\beta L}\sum_{\ul q}\hat v_{\mathrm s}(\ul q)\mathcal F^{2,m-1}_{\ul \omega;i}(\ul k,\ul q;\ul{\mathbf p}_{m})
		\end{equation}
		with
		\begin{equation*}
		\begin{split}
		\mathcal F^{2,m-1}_{\ul \omega;i}(\ul k,\ul q;\ul{\mathbf p}_{m}):=&\langle \hat\psi^+_{\underline k-\underline p_1+\underline q,\omega_1}\hat \psi^-_{\underline k+\underline p_i,\omega_1}\hat n_{\underline q,\bar\omega};\hat n_{\ul p_{2},\omega_{2}};\cdots;\widehat{\hat n_{\ul p_{i},\omega_{i}}};\cdots;\hat n_{\ul p_{m},\omega_{m}}\rangle_N\\
		&+\langle\hat \psi^-_{\underline k+\underline q,\omega_1}\hat\psi^+_{\underline k-\underline p_1+\underline p_{i},\omega_1}\hat n_{\underline q,\bar\omega};\hat n_{\ul p_{2},\omega_{2}};\cdots;\widehat{\hat n_{\ul p_{i},\omega_{i}}};\cdots;\hat n_{\ul p_{m},\omega_{m}}\rangle_N
		\end{split}
		\end{equation*} 
		where the wide hat indicates omission of the corresponding density;
		\item \label{en:22} finally, the second term of (\ref{eq:der1}) with the second one of (\ref{eq:der2}) yields
		\begin{equation}
		Z_{\omega_{1}}^{2}\sum_{\substack{i,j=2\\ i\neq j}}^{m} \delta_{\omega_{1},\omega_{i}}\delta_{\omega_{1},\omega_{j}}\mathcal F^{2,m-3}_{\ul \omega;i,j} (\ul k;\ul{\mathbf p}_{m})
		\end{equation}
		with
		\begin{equation*}
		\mathcal F^{2,m-3}_{\ul \omega;i,j} (\ul k;\ul{\mathbf p}_{m}):=\langle \hat \psi^+_{\underline k-\underline p_1-\underline p_i,\omega_1}\hat \psi^-_{\underline k+\underline p_j,\omega_1};\hat n_{\ul p_{2},\omega_{2}};\cdots;\widehat{\hat n_{\ul p_{i},\omega_{i}}} ;\cdots;\widehat{\hat n_{\ul p_{j},\omega_{j}}} ;\cdots;\hat n_{\ul p_{m},\omega_{m}}\rangle_N\;.
		\end{equation*} 
	\end{enumerate}
	This concludes the proof of Lemma \ref{TecLem1}.
\end{proof}
	The next lemma establishes the bound for the remainder $\mathcal R^{N}_{\ul \omega}$.
	\begin{lemma}\label{TecLem2}
	There exist $\gamma>0$, $C>0$ such that for all $\ul\omega$, and for $\underline{p}_{i}$ as in Lemma \ref{TecLem1}, with the supplementary requirement that $|p_{i}| \leq C|\theta|$ and $p_{i,0} = \eta$ for all $i\leq m-1$, we have:	
	\begin{equation}\label{eq:remest}
		\frac{1}{\beta L}\sup_{c_{1}2^{N}\leq\|\underline k\|_{\omega_{1}}\leq c_{2}2^{N}}\big|\mathcal R^{N}_{\ul \omega}(\ul k;\ul{\mathbf p}_{m})\big|\leq C^{m} m! \beta^{m+2} 2^{-\gamma N}\;.
\end{equation}
	\end{lemma}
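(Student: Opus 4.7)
The plan is to treat each of the three types of terms composing $\mathcal R^{N}_{\ul \omega}$, namely those containing $\mathcal F^{2,m+1}_{\ul \omega}$, $\mathcal F^{2,m-1}_{\ul \omega;i}$, and $\mathcal F^{2,m-3}_{\ul \omega;i,j}$, by interpreting each of them as a truncated correlation function of the reference model with two external fermion fields at momentum $\sim \ul k$ and several density insertions at low external momenta. I would then analyze them via the Gallavotti--Nicol\`o tree expansion for the reference model developed in Section \ref{Sec:RefMod}, following the strategy used in Section \ref{sec:dimest} for the lattice model.

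First, I would observe that each $\mathcal F$ can be expressed as a multi-derivative of the generating functional $\mathcal W_{\beta,L,\mathfrak a,\varepsilon,N}$ of the reference model, with respect to the external sources $\hat\phi^\pm$ (for the two fermion fields) and $\hat A$ (for the densities). Since $\ul k$ lies in the support of $\chi^{\omega_{1},\varepsilon}_{N}(\ul k) - \chi^{\omega_{1},\varepsilon}_{N-1}(\ul k)$ up to shifts by momenta of order $|\theta|$ or $\eta$, the two external fermion legs are supported at scale $2^{N}$, which is at the UV cutoff. By momentum conservation in the GN trees contributing to $\mathcal F$, this forces the two external $\phi$-lines to be contracted to an internal vertex at a scale $h^{*}$ comparable to $N$; hence each tree necessarily possesses at least one non-trivial vertex at scale $h^* \geq N - c$.

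Second, I would apply the dimensional bounds of Proposition~8.1 of \cite{MPmulti} in the UV regime and their IR counterparts, using Proposition \ref{prp:flowref} to control the running coupling constants, so as to obtain $L^1$-norm estimates of the tree-value in the same spirit as equation (\ref{eq:treebd}). Invoking the short-memory property of GN trees (Remark \ref{rem:sm}) along the path connecting the root at scale $h_\beta$ to the high-scale vertex at $h^* \sim N$, one extracts a decay factor $2^{\gamma(h_\beta - N)}$ for some $\gamma \in (0, 1 - c|\lambda|)$. Combined with the standard scaling bounds that produce the $m!\, C^m$ combinatorial factor from the sum over labellings (as in the proof of Proposition \ref{prop:CorrDimBound}), and with the crude estimate $2^{-h_\beta} \leq C\beta$ used to bound all low-energy scales, this yields the required $C^m m!\, \beta^{m+2}\, 2^{-\gamma N}$ form.

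The main obstacle I anticipate is the first summand, involving $\mathcal F^{2,m+1}_{\ul\omega}(\ul k,\ul q,\ul q';\ul{\mathbf p}_m)$, where one additionally integrates over two internal momenta $\ul q, \ul q'$ against $\hat v_s(\ul q)\hat v_s(\ul q')$. To handle this cleanly, I would move back to configuration space: the sums $(\beta L)^{-1}\sum_{\ul q}\hat v_s(\ul q)\hat n_{\ul q, \bar\omega}$ reconstruct the convolved density operator $(v*n)_{\ul x, \bar\omega}$ evaluated at the position of the external fermion leg, so that $\mathcal F^{2,m+1}$ becomes a standard $(m+1)$-point correlation of two fermions and $m-1$ densities with two extra \emph{local} density operators attached at the fermion vertex. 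The fast spatial decay of $v$ ensures convergence of these internal sums uniformly in $\beta, L, N$, and the subsequent GN analysis proceeds as above. Similar manipulations handle $\mathcal F^{2,m-1}_{\ul\omega;i}$, while $\mathcal F^{2,m-3}_{\ul\omega;i,j}$ is directly amenable to the GN analysis. Putting the three pieces together yields the claimed estimate (\ref{eq:remest}).
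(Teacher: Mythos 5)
Your overall plan captures the main mechanism correctly: the two external fermion legs are pinned at momentum $\sim 2^N$, forcing a non-trivial vertex at high scale, and the short-memory property of GN trees yields the $2^{-\gamma N}$ decay. You also correctly identify the source of the $\beta^{m+2}$ factor and the $m!\,C^m$ combinatorics. However, there is a genuine gap at the very first step. You claim that ``each $\mathcal F$ can be expressed as a multi-derivative of the generating functional $\mathcal W_{\beta,L,\mathfrak a,\varepsilon,N}$,'' but this is not true as written. Inspecting the definitions in (\ref{Resto1}), the objects $\mathcal F^{2,m+1}_{\ul\omega}$, $\mathcal F^{2,m-1}_{\ul\omega;i}$, $\mathcal F^{2,m-3}_{\ul\omega;i,j}$ are \emph{semi-truncated}: the first slot is a composite monomial such as $\hat\psi^+\hat\psi^-\hat n_{\ul q}\hat n_{\ul q'}$, treated as a single block in the truncation. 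Multi-derivatives of $\mathcal W$ only produce fully connected correlations with the fields $\hat\psi^+$, $\hat\psi^-$, $\hat n$ appearing as \emph{separate} truncated insertions. The two notions do not coincide.

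To make the GN tree machinery applicable, one must first disentangle the composite block, writing each $\mathcal F$ as a sum of products of fully connected correlations (using $\langle\hat n\rangle_N=0$, momentum conservation in each connected factor, and vanishing of odd-fermion expectations). This is what the paper does in Eqs. (\ref{eq:Fexp1})--(\ref{eq:Fexp3}). Once decomposed, the fully connected piece can be treated exactly as you propose; but the disconnected products require separate treatment: some factors are \emph{pure density} correlators carrying no $\hat\psi$ field, hence no $2^{-\gamma N}$ decay of their own — they are only bounded crudely by $C^m m!(\beta^{m+2}+N)$, uniformly in $N$. The exponential decay in $N$ then comes exclusively from the other factor in the product, which does contain the two high-momentum $\hat\psi$ legs (either the propagator $\langle\hat\psi^+\hat\psi^-\rangle_N$ at momentum $\sim 2^N$, which is $O(2^{-N})$, or a connected correlator containing them). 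Without the explicit decomposition you cannot organize the bound this way. Relatedly, your idea of passing to configuration space for the $\ul q,\ul q'$ integrals does not interact cleanly with the $\delta_{\ul q,\cdot}$ Kronecker deltas that appear precisely in these disconnected terms; the paper's momentum-space treatment, which bounds the $\ul q,\ul q'$ sums via the compact support of $\hat v$ and collapses sums using the $\delta$'s, is more adapted to the decomposed form.
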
	
Note that, from Lemma \ref{TecLem2} and from definition of $\mathfrak R^{N}_{\ul\omega}$ (see \eqref{Resto}), the bound (\ref{eq:Rest}) immediately follows.
\begin{proof}
In order to use the GN tree estimates introduced earlier, we rewrite all of the $\mathcal F$ in terms of \emph{connected} correlation functions, keeping in mind the following three rules:
	\begin{enumerate}
	\item momentum conservation must hold in each connected correlation function separately; 
	\item $\langle \hat n_{\ul q}\rangle_N=0$ for all $\ul q\in\Dp$;
	\item $\langle \hat \psi^{\pm};\hat n;\cdots ;\hat n\rangle_N=0$ by the fact that expectations of odd fermionic monomials vanish.
	\end{enumerate} 
Thus, we obtain
\begin{equation} \label{eq:Fexp1}
\begin{split}
\mathcal F^{2,m+1}_{\ul \omega}(\ul k,\ul q,\ul q';\ul{\mathbf p}_{m}) =& \langle \hat\psi^+_{\underline k-\underline p_1+\underline q,\omega_1};\hat\psi^-_{\underline k-\underline q',\omega_1};\hat n_{\underline q,\bar\omega};\hat n_{\underline q',\bar\omega'};\hat n_{\ul p_{2},\omega_{2}};\cdots;\hat n_{\ul p_{m},\omega_{m}}\rangle_N\\
&+\delta_{\ul q,\ul p_{1}-\ul q'}\langle \hat\psi^+_{\underline k-\ul q',\omega_1};\hat\psi^-_{\underline k-\underline q',\omega_1}\rangle_{N}\langle\hat n_{\underline p_{1}-\ul q',\bar\omega};\hat n_{\underline q',\bar\omega'};\hat n_{\ul p_{2},\omega_{2}};\cdots;\hat n_{\ul p_{m},\omega_{m}}\rangle_N\\
&+ \delta_{\ul q',\ul p_{1}}\langle \hat\psi^+_{\underline k-\ul p_{1}+\ul q,\omega_1};\hat\psi^-_{\underline k-\underline p_{1},\omega_1};\hat n_{\ul q,\bar\omega}\rangle_{N}\langle\hat n_{\underline p_{1},\bar\omega'};\hat n_{\ul p_{2},\omega_{2}};\cdots;\hat n_{\ul p_{m},\omega_{m}}\rangle_N\\
&+ \delta_{\ul q,\ul p_{1}}\langle \hat\psi^+_{\underline k,\omega_1};\hat\psi^-_{\underline k-\underline q',\omega_1};\hat n_{\ul q',\bar\omega'}\rangle_{N}\langle\hat n_{\underline p_{1},\bar\omega};\hat n_{\ul p_{2},\omega_{2}};\cdots;\hat n_{\ul p_{m},\omega_{m}}\rangle_N\\
&+ \delta_{\ul q,-\ul q'}\langle \hat n_{\ul q,\bar\omega};\hat n_{-\ul q,\bar\omega'}\rangle_{N}\langle\hat\psi^+_{\underline k-\underline p_{1}+\ul q,\omega_1};\hat\psi^-_{\underline k+\ul q,\omega_1};\hat n_{\ul p_{2},\omega_{2}};\cdots;\hat n_{\ul p_{m},\omega_{m}}\rangle_N
\end{split}
\end{equation}
and
\begin{equation} \label{eq:Fexp2}
\begin{split}
\mathcal F^{2,m-1}_{\ul \omega;i}&(\ul k,\ul q;\ul{\mathbf p}_{m}) = \langle \hat\psi^+_{\underline k-\underline p_1+\underline q,\omega_1};\hat \psi^-_{\underline k+\underline p_i,\omega_1};\hat n_{\underline q,\bar\omega};\hat n_{\ul p_{2},\omega_{2}};\cdots;\widehat{\hat n_{\ul p_{i},\omega_{i}}};\cdots;\hat n_{\ul p_{m},\omega_{m}}\rangle_N\\
&+\delta_{\ul q,\ul p_{1}+\ul p_{i}}\langle \hat\psi^+_{\underline k+\underline p_i,\omega_1};\hat \psi^-_{\underline k+\underline p_i,\omega_1}\rangle_{N}\langle\hat n_{\underline p_{1}+\ul p_{i},\bar\omega};\hat n_{\ul p_{2},\omega_{2}};\cdots;\widehat{\hat n_{\ul p_{i},\omega_{i}}};\cdots;\hat n_{\ul p_{m},\omega_{m}}\rangle_N\;.
\end{split}
\end{equation}
Instead, we see that the terms $\mathcal F^{2,m-3}_{\ul \omega;i,j}$ are already connected, namely
\begin{equation} \label{eq:Fexp3}
		\mathcal F^{2,m-3}_{\ul \omega;i,j} (\ul k;\ul{\mathbf p}_{m}) =\langle \hat \psi^+_{\underline k-\underline p_1-\underline p_i,\omega_1} ; \hat \psi^-_{\underline k+\underline p_j,\omega_1};\hat n_{\ul p_{2},\omega_{2}};\cdots;\widehat{\hat n_{\ul p_{i},\omega_{i}}} ;\cdots;\widehat{\hat n_{\ul p_{j},\omega_{j}}} ;\cdots;\hat n_{\ul p_{m},\omega_{m}}\rangle_N\;.
		\end{equation}
		
First, consider all of the correlations occurring in (\ref{eq:Fexp1})-(\ref{eq:Fexp2}) with no $\hat\psi$ field: these involve $m+1$, $m$, $m-1$ or 2 densities. For the sake of concreteness, let us consider a token example of $m$-point correlation functions, namely $\langle\hat n_{\underline p_{1}-\ul q',\bar\omega};\hat n_{\underline q',\bar\omega'};\hat n_{\ul p_{2},\omega_{2}};\cdots;\hat n_{\ul p_{m},\omega_{m}}\rangle_N$, whose GN-type estimate is
\begin{equation}
\begin{split}
\frac{1}{\beta L}\big|\langle\hat n_{\underline p_{1},\bar\omega};\hat n_{\ul p_{2},\omega_{2}};\cdots;\hat n_{\ul p_{m},\omega_{m}}\rangle_N\big|\leq\sum_{h=h_{\beta}}^{N}\sum_{n=0}^{\infty}\sum_{\tau,\mathbf \Gamma}\lambda_{\mathrm{max}}^{n}\,2^{h(2-m)}\prod_{v\notin V_{f}(\tau)} 2^{(h_{v}-h_{v'})D_{v}}\;;
\end{split}
\end{equation}
summing over $\mathbf\Gamma\in\mathcal L_{\tau}$, $\tau\in\mathcal T^{m,\mathrm{ref}}_{h,n}$, and $n\in \mathbb N$, for $\lambda_{\mathrm{max}}$ small enough and for $m\geq 3$, we obtain
\begin{equation}
\frac{1}{\beta L}\big|\langle\hat n_{\underline p_{1},\bar\omega};\hat n_{\ul p_{2},\omega_{2}};\cdots;\hat n_{\ul p_{m},\omega_{m}}\rangle_N\big|\leq m!\,C^{m} \sum_{h=h_{\beta}}^{N} 2^{h(2-m)} \leq C_{\beta,m}\;,
\end{equation}
with $C_{\beta,m}\propto m!\, \beta^{m-2}$. One may proceed in the same way for all the other correlators with $m$ or $m+1$ densities. This bound is of course not optimal, since it does not take into account the dependence on the external momenta, but it will be enough for our purposes.

For the density correlator with $m-1$ points appearing in the second row of (\ref{eq:Fexp2}), performing the same estimate we obtain
\begin{equation}
\begin{split}
&\frac{1}{\beta L}\big|\langle\hat n_{\underline p_{1}+\ul p_{i},\bar\omega};\hat n_{\ul p_{2},\omega_{2}};\cdots;\widehat{\hat n_{\ul p_{i},\omega_{i}}};\cdots;\hat n_{\ul p_{m},\omega_{m}}\rangle_N\big| \\&\qquad\qquad \leq C^{m} m! \sum_{h=h_{\beta}}^{N} 2^{h(3-m)} \leq C^{m} m! \cdot  \begin{cases}
\log \beta + N & m=3\\
\beta^{m-3} & m>3
\end{cases}
\end{split}
\end{equation}
and analogously for the two-point density correlation function in the last row of (\ref{eq:Fexp2}). Therefore, all correlation functions in (\ref{eq:Fexp1})-(\ref{eq:Fexp2}) involving only densities may be bounded by $C^{m} m!( \beta^{m-1} + N)$. Again, this is of course not optimal, but it will be enough for our purposes.

Consider now the correlation functions appearing in (\ref{eq:Fexp1})-(\ref{eq:Fexp3}) and having two $\hat\psi$ fields: we notice that, since $c_{1}2^{N}\leq\|\ul k\|_{\omega_{1}}\leq c_{2} 2^{N}$, for $N$ large enough the momenta of these $\hat\psi$ fields will be also of order $2^{N}$, since $\|\ul q\|_{\omega_{1}}, \|\ul q'\|_{\omega_{1}}, \|\ul p_{i}\|_{\omega_{1}}$ are of order 1 in $N$: recall that the momenta $\ul q$, $\ul q'$ are in the support of $\hat v$, that we are assuming $|p_{i}| \leq C|\theta|$, $p_{i,0} = \eta$ for all $i=1,\ldots, m-1$, and that, by momentum conservation, $|\underline{p}_{m}| \leq C(m-1) (\eta + |\theta|)$. For the sake of concreteness let us consider the first row of (\ref{eq:Fexp1}). Its GN expansion is:
\begin{equation}
\langle \hat\psi^+_{\underline k-\underline p_1+\underline q,\omega_1};\hat\psi^-_{\underline k-\underline q',\omega_1};\hat n_{\underline q,\bar\omega};\hat n_{\underline q',\bar\omega'};\hat n_{\ul p_{2},\omega_{2}};\cdots;\hat n_{\ul p_{m},\omega_{m}}\rangle_N= \sum_{h=h_{\beta}}^{N}\sum_{n=0}^{\infty}\sum_{\tau,\mathbf \Gamma} \widehat W^{(h)}_{\mathbf \Gamma}[\tau]\;.
\end{equation}
Since $\phi$-leaves attach only on the (couple of) scales that contain the momentum by which they are labelled, by the discussion above we see that the two $\phi$-leaves will necessarily be on scale $N$ (or $N-1$). Therefore, we may extract a short-memory factor at the root, namely
\begin{equation}
\frac{1}{\beta L}\sum_{n\geq0}\sum_{\tau\in \mathcal T^{m+4,\mathrm{ref}}_{h,n}}\sum_{\mathbf\Gamma\in\mathcal L_{\tau}}\big|\widehat W^{(h)}_{\mathbf\Gamma}[\tau]\big|\leq m!\,C^{m}\, 2^{h(-m-2)} 2^{\tilde\gamma(h-N)}\;;
\end{equation}
summation over the root scale then yields
\begin{equation}
\frac{1}{\beta L}\big|\langle \hat\psi^+_{\underline k-\underline p_1+\underline q,\omega_1};\hat\psi^-_{\underline k-\underline q',\omega_1};\hat n_{\underline q,\bar\omega};\hat n_{\underline q',\bar\omega'};\hat n_{\ul p_{2},\omega_{2}};\cdots;\hat n_{\ul p_{m},\omega_{m}}\rangle_N\big|\leq C^{m} m! \beta^{m+2} 2^{-\tilde \gamma N}\;.
\end{equation}
It is enough to proceed in the same way for all other correlators having two $\hat\psi$ fields. Combining all of the estimates that we have obtained, we may bound the $\mathcal F$ terms as
\begin{equation}\nonumber
\begin{split}
\big|\mathcal F^{2,m+1}_{\ul \omega}(\ul k,\ul q,\ul q';\ul{\mathbf p}_{m})\big|&\leq (\beta L) C^{m} m! (\beta^{m+2} + N) 2^{-\tilde\gamma N}\big[1+\beta L(\delta_{\ul q,\ul p_{1}}+\delta_{\ul q',\ul p_{1}}+\delta_{\ul q,\ul p_{1}}+\delta_{\ul q,-\ul q'})\big]\\
\big|\mathcal F^{2,m-1}_{\ul \omega;i}(\ul k,\ul q;\ul{\mathbf p}_{m})\big|&\leq (\beta L) C^{m} m! (\beta^{m+2} + N) 2^{-\tilde\gamma N} \big[1+\beta L\delta_{\ul q,\ul p_{1}+\ul p_{i}}\big]\\
\big|\mathcal F^{2,m-3}_{\ul \omega;i,j}(\ul k,\ul q;\ul{\mathbf p}_{m})\big|&\leq (\beta L) C^{m} m! (\beta^{m+2} + N) 2^{-\tilde\gamma N}
\end{split}
\end{equation}
Insertion of these bounds in equation (\ref{Resto}) yields the claim (\ref{eq:remest}), for some $\gamma<\tilde\gamma$.
\end{proof}
\begin{remark} \label{rem:bdR}
Actually, the bounds for the correlators of the reference model imply that:
\begin{equation}
|\mathcal R^{N}_{\ul\omega}(\ul k;\ul{\mathbf p}_{m})|\leq \beta L C^{m} m! \beta^{m+2}
\end{equation}
for all $\ul k$, uniformly in $\Ma$, $\varepsilon$ small (with $\Ma/\varepsilon$ small enough). This bound is used to take the limits in (\ref{Rewr}), (\ref{Rewr2}).
\end{remark}

\end{document}